\documentclass[sigconf]{aamas}

\usepackage{balance} 

\setcopyright{ifaamas}
\acmConference[AAMAS '26]{Proc.\@ of the 25th International Conference
on Autonomous Agents and Multiagent Systems (AAMAS 2026)}{May 25 -- 29, 2026}
{Paphos, Cyprus}{C.~Amato, L.~Dennis, V.~Mascardi, J.~Thangarajah (eds.)}
\copyrightyear{2026}
\acmYear{2026}
\acmDOI{}
\acmPrice{}
\acmISBN{}

\acmSubmissionID{320}

\title{Maximin Shares with Lower Quotas}

\author{Hirota Kinoshita}
\affiliation{
  \institution{Toyota Technological Institute at Chicago}
  \city{Chicago}
  \state{IL}
  \country{USA}}
\email{hirotak@ttic.edu}

\author{Ayumi Igarashi}
\affiliation{
  \institution{University of Tokyo}
  \city{Tokyo}
  \country{Japan}}
\email{igarashi@mist.i.u-tokyo.ac.jp}

\begin{abstract}
  We study the fair division of indivisible items among $n$ agents with heterogeneous additive valuations, subject to \textit{lower} and \textit{upper quotas} on the number of items allocated to each agent.
  Such constraints are crucial in various applications, ranging from personnel assignments to computing resource distribution.
  This paper focuses on the fairness criterion known as \textit{maximin shares (MMS)} and its approximations.
  Under arbitrary lower and upper quotas, we show that a $\left(\frac{2n}{3n-1}\right)$-MMS allocation of goods exists and can be computed in polynomial time, while we also present a polynomial-time algorithm for finding a $\left(\frac{3n-1}{2n}\right)$-MMS allocation of chores.
  Furthermore, we consider the generalized scenario where items are partitioned into multiple \textit{categories}, each with its own lower and upper quotas.
  In this setting, our algorithm computes an $\left(\frac{n}{2n-1}\right)$-MMS allocation of goods or a $\left(\frac{2n-1}{n}\right)$-MMS allocation of chores in polynomial time.
  These results extend previous work on the \textit{cardinality constraints}, i.e., the special case where only upper quotas are imposed.
\end{abstract}

\keywords{Fair division, Maximin shares, Approximation, Constraints, Quotas}

\usepackage[table]{xcolor}
\usepackage{algorithm}
\usepackage{algpseudocodex}
\usepackage{mathtools}
\usepackage{enumitem}
\usepackage{etoolbox}
\usepackage[capitalize]{cleveref}
\usepackage[normalem]{ulem}
\usepackage{booktabs}
\usepackage{multirow}
\usepackage{cellspace}
\usepackage{adjustbox}
\usepackage{subcaption}
\usepackage{tikz}
\usepackage{listings}
\usepackage{todonotes}
\usepackage{thmtools}
\usepackage{thm-restate}

\AddToHook{cmd/appendix/before}{\crefalias{section}{appendix}}

\setcitestyle{numbers,comma,square,nosort}

\makeatletter
\newcommand{\figcaption}[1]{\def\@captype{figure}\caption{#1}}
\newcommand{\tblcaption}[1]{\def\@captype{table}\caption{#1}}
\makeatother

\theoremstyle{plain}
\newtheorem{theorem}{Theorem}
\newtheorem{lemma}{Lemma}
\newtheorem{corollary}{Corollary}

\theoremstyle{definition}
\newtheorem{definition}{Definition}

\theoremstyle{remark}
\newtheorem{remark}{Remark}

\newtheoremstyle{case}{}{}{}{}{\bfseries}{:}{ }{}
\theoremstyle{case}
\newtheorem{case}{Case}
\newcounter{caseoffset}
\newcounter{casedisp}
\newcommand{\resetcase}{
  \setcounter{caseoffset}{\value{case}}
}
\AtBeginEnvironment{proof}{
  \resetcase
}
\AtBeginEnvironment{case}{
  \setcounter{casedisp}{\numexpr\thecase-\value{caseoffset}+1}
  \renewcommand{\thecase}{\arabic{casedisp}}
}
\crefname{case}{Case}{Cases}
\crefrangeformat{case}{Cases~#3#1#4 to~#5#2#6}

\makeatletter
\providecommand\theHALG@line{}
\renewcommand\theHALG@line{\thealgorithm.\arabic{ALG@line}}
\makeatother

\makeatletter
\newcommand{\leqnomode}{\tagsleft@true}
\newcommand{\reqnomode}{\tagsleft@false}
\makeatother

\newcommand{\bR}{\mathbb{R}}

\newcommand{\bZ}{\mathbb{Z}}

\newcommand{\cC}{\mathcal{C}}
\newcommand{\cF}{\mathcal{F}}
\newcommand{\cI}{\mathcal{I}}

\newcommand{\cS}{\mathcal{S}}

\newcommand{\floordiv}[2]{\left\lfloor\frac{#1}{#2}\right\rfloor}

\let\oldComment\Comment
\renewcommand{\Comment}[1]{\oldComment{\textcolor{gray}{#1}}}

\begin{document}


\pagestyle{fancy}
\fancyhead{}


\maketitle


\section{Introduction} \label{sec:intro}

Fairly allocating indivisible items among agents with differing preferences is a fundamental problem in society, arising in various important scenarios ranging from commodity distribution to property settlement to computing resource management~\citep{brams1996fair,moulin2004fair,moulin2019fair}.
Long-standing research on the allocation of divisible resources, known as ``cake-cutting'', has introduced compelling fairness notions such as \emph{envy-freeness~(EF)} \citep{foley1967resource} and \emph{proportionality (PROP)} \citep{steinhaus1948problem}.
However, both EF and PROP are scarcely attainable with indivisible items.%
\footnote{Neither is feasible even in the simplest case where two agents must divide a single indivisible item.}
While various relaxations have been explored \cite{amanatidis2023fair}, the concept of \textit{maximin shares~(MMS)}~\citep{budish2011combinatorial} has garnered significant attention within the class of share-based notions.

Generalizing the idea of the \textit{cut-and-choose} protocol \citep{steinhaus1948problem}, the maximin share (MMS) of an agent is defined as the maximum value that he can guarantee for himself if he were to freely partition all items into the same number of bundles as there are agents, and then choose his least desirable bundle.
An allocation is considered fair if every agent receives at least their own MMS.
However, such an \textit{MMS allocation} may not exist~\citep{kurokawa2018fair,feige2021tight}, and finding an MMS allocation is strongly NP-hard~\citep{bouveret2016characterizing,aziz2017algorithms}.
Consequently, substantial research has sought to achieve (multiplicatively) \textit{approximate} MMS allocations \citep{kurokawa2018fair}, leading to remarkable progress in recent years~\citep{barman2020approximation,amanatidis2017approximation,garg2021improved,akrami2023simplification,akrami2024breaking,aziz2017algorithms,huang2021algorithmic,huang2023reduction,heidari2026improved,huang2025fptas}.
Continuing in similar vein, this paper adopts approximate MMS guarantees as our fairness criterion.

In addition to fairness desiderata, real-world applications often involve various restrictions on resource allocation \citep{suksompong2021constraints}.
One of the simplest constraints, which has gained little attention in the literature, is to impose \textit{lower quotas} on the number of items assigned to each agent.
For instance, consider the following scenario in a company:
the directors of upcoming projects are planning to allocate employees to their respective projects.
Each director prefers employees who are well-suited and capable of performing well for their project, while each project requires a minimum number of employees to be carried out.
Then, how can these directors fairly distribute the employees among their teams?
Beyond this specific example, lower quotas on the sizes of allocated bundles can arise in various applications of fair division, such as allocating courses to college students~\citep{budish2017course}, distributing computing resources to tasks~\citep{moulin2019fair}, assigning conference papers to reviewers~\citep{garg2010assigning}, and dividing food donations to individuals or communities in need~\citep{aleksandrov2015online,mertzanidis2024automating}.

\subsection{Our Contribution} \label{sec:intro:contrib}

Building on the preceding motivations, we initiate the study of MMS guarantees under lower quotas, where agents are assumed to have heterogeneous additive valuations.
Specifically, an allocation is feasible if the number of items assigned to each agent is between lower and upper quotas.
Under these constraints, we establish that a $\left(\frac{2n}{3n - 1}\right)$-MMS allocation exists and can be computed in polynomial time for any instance of goods\footnote{An item is called a \textit{good} (resp.~\textit{chore}) if it bears a non-negative (resp.~non-positive) value for each agent.} with $n$ agents.
For the case of chores, we present a polynomial-time algorithm to obtain a $\left(\frac{3n - 1}{2n}\right)$-MMS allocation.
These results both generalize and improve (notably for small $n$) on the best-known results for the special case without lower quotas \citep{hummel2022maximin}, while also implying the first MMS guarantees for another subcase studied as \textit{balanced} allocations \citep{suksompong2021constraints}.%
\footnote{An allocation is said to be \textit{balanced} if the number of items assigned to each agent differs by at most one.}

Furthermore, we study a generalized setting where items are partitioned into categories, each with its own lower and upper quotas on the number of items assigned from the category to any single agent.
In this setup, we show that an $\left(\frac{n}{2n - 1}\right)$-MMS $\left(\mbox{resp. $\left(\frac{2n - 1}{n}\right)$-MMS}\right)$ allocation exists and can be found in polynomial time for any instance of goods (resp.~chores) with $n$ agents.
These extend the previous results for the special case in which all categories have only upper quotas \citep{biswas2018fair,hummel2022maximin}.

In both the former (single-category) and latter (multi-category) settings, our approach builds on \citet{hummel2022maximin}'s work for the special case where only upper quotas are imposed.
While the main routine of our algorithms for the multi-category setting is a natural extension of its counterpart in \citep{hummel2022maximin}, the technical novelty of this paper lies primarily in the single-category setting, to which \citet{hummel2022maximin}'s solutions do not extend in a straightforward manner;
see \cref{sec:single:goods:novelty} for a further discussion.

In addition to these results, formally stated in \cref{sec:main_results}, \cref{sec:special} explores specific classes of instances that admit (almost) exact solutions, while \cref{sec:inapprox} investigates inapproximability bounds.

\begin{table}[tb]
  \caption{%
    Our polynomial-time approximate MMS guarantees and the best-known inapproximability bounds (implied by the unconstrained problem) for $n$ heterogeneous agents with additive valuations.
  }
  \label{table:contrib}

  \centering
  \newcommand\fracscale{1.0}

  \begin{subtable}{0.5\textwidth}
    \centering
    \caption{Goods (non-negative item values).}

    \begin{tabular}{Sl Sc Sc}
      \toprule
      \multirow{1}{*}{Category} & Lower bound & Upper bound
      \\ \midrule
      Single & \scalebox{\fracscale}{$\dfrac{2n}{3n - 1}$} [Thm.~\ref{thm:single:goods:two-third}] & \multirow[b]{2.2}{*}{\scalebox{\fracscale}{$
          \begin{cases}
            \dfrac{39}{40} & \!\!\!\!(n = 3) \vspace{3pt} \\
            1 - n^{-4} & \!\!\!\!(n \ge 4)
      \end{cases}$} \citep{feige2021tight}} \\
      \addlinespace[6pt]
      Multiple & \scalebox{\fracscale}{$\dfrac{n}{2n - 1}$} [Thm.~\ref{thm:multi:goods:half}] & \\ \bottomrule \\
    \end{tabular}
  \end{subtable}

  \begin{subtable}{0.5\textwidth}
    \centering
    \caption{Chores (non-positive item values).}

    \begin{tabular}{Sl Sc Sc}
      \toprule
      \multirow{1}{*}{Category} & Upper bound & Lower bound
      \\ \midrule
      Single & \scalebox{\fracscale}{$\dfrac{2n}{3n - 1}$} [Thm.~\ref{thm:single:goods:two-third}] & \multirow[b]{2.2}{*}{\scalebox{\fracscale}{$\dfrac{44}{43}$ $(n = 3)$} \citep{feige2021tight}} \\
      \addlinespace[6pt]
      Multiple & \scalebox{\fracscale}{$\dfrac{2n - 1}{n}$} [Thm.~\ref{thm:multi:chores:two}] & \\ \bottomrule
    \end{tabular}
  \end{subtable}

\end{table}




\subsection{Related Work} \label{sec:related}

\paragraph{Related Constraints in Fair Division}
The earliest related studies include \citet{ferraioli2014regular}'s work on the constraint requiring every agent to receive the same number of items.
\citet{mackin2016allocating} consider the setting where items are partitioned into categories, and each agent must receive at least one item per category.
A subsequent series of work has studied the so-called \textit{cardinality constraints}, where items are categorized, and allocations are restricted by category-wise upper quotas.
\citet{biswas2018fair} reveal that, under cardinality constraints, both an EF1 allocation and a $1/3$-MMS allocation exist and can be found in polynomial time.
\citet{hummel2022maximin} improve the MMS approximation to $1/2$ and further establish $2/3$-MMS allocations for single-category instances, both in polynomial time.
They also study the case of chores, achieving a $2$-approximation for general instances and a $3/2$-approximation for single-category instances in polynomial time.
In the case of two agents, \citet{shoshan2023efficient} develop a polynomial-time algorithm that computes a feasible allocation satisfying \textit{Pareto optimality (PO)} and EF1 for either goods or chores, which has been generalized by \citet{igarashi2025fair}.
\citet{cookson2025constrained} consider the same model as ours, i.e., an extension of cardinality constraints endowed with lower quotas, where they prove that the \textit{maximum Nash welfare (MNW)} solutions achieve PO and approximate EF1.
Aside from the lower quotas we focus on, other generalizations of cardinality constraints include the \emph{budget constraints} \citep{wu2025approximate,barman2023finding,barman2023guaranteeing,garbea2023efx,dai2023maximum,deng2024budgeted,elkind2024fair,wang2025guaranteeing}, where items of varying sizes are assigned to agents subject to their capacities, and the \emph{matroid constraints} \citep{biswas2018fair,biswas2019matroid,dror2023fair,wang2024fairness,cookson2025constrained,akrami2026matroids}, which require assigned bundles to be independent sets.
These constraints are further subsumed under \emph{independence systems}, in which MMS approximation has been studied \citep{li2021fair-division,hummel2025maximin}.

\paragraph{Unconstrained MMS Approximation and Inapproximability}
Given the non-existence and computational intractability of exact MMS allocations even in the unconstrained setting~\citep{procaccia2014fair,aziz2017algorithms}, extensive work has studied approximate MMS guarantees in the unconstrained setting.
The best-known approximation ratio is $\frac{7}{9}$ for the case of goods~\citep{huang2025fptas} and $\frac{13}{11}$ for chores~\citep{huang2023reduction}.
On the negative side, \citet{feige2021tight} establish inapproximability bounds of $\frac{39}{40}$ for goods and $\frac{44}{43}$ for chores, which remain the best-known results even in our constrained setup; see \cref{sec:inapprox} for a computational approach to improving the bounds.

\paragraph{Lower Quotas in Two-Sided Matching}
In the many-to-one matching framework, commonly referred to as the \textit{hospitals/residents problem} or \textit{college admissions problem} \citep{gale1962college,gusfield1989stable}, one side often imposes lower quotas on the number of acceptable matches from the other side \citep{arulselvan2018matchings,aziz2022matching}.
The primary objective in this model is typically to ensure \emph{stability} or its suitable relaxations \citep{biro2010college,huang2010classified,hamada2016hospitals,kamada2017stability,nasre2017popular,yokoi2020envy}.

\section{Preliminaries} \label{sec:prelim}



Our problem instance is a tuple $\cI = \left(N,M,(v_i)_{i\in N},\cC,(q_{C}^-,q_{C}^+)_{C\in\cC}\right)$, where $N = \{1,2,\ldots,|N|\}$ is a finite set of \textit{agents}, $M$ is a finite set of \textit{items}, $(v_i:2^M\to\bR)_{i\in N}$ are the agents' respective \textit{valuations},
and $\cC$ is a partition of the item set $M$ into disjoint \textit{categories}, each endowed with a \textit{lower quota} $q^-_C\in \bZ_{\ge 0}$ and an \textit{upper quota} $q^+_C\in \bZ_{\ge 0}$, such that
\begin{align}
  &q_C^-|N| \le |C| \le q_C^+|N|. \label{eq:def:quotas}
\end{align}
Each valuation $v_i:2^M\to\bR$ is additive, i.e., $v_i(S) = \sum_{g \in S} v_i(g)$ for any $S\subseteq M$, where $g\in M$ is understood as $\{g\}\subseteq M$ and treated similarly throughout the paper.
A subset of items, $S\subseteq M$, is called a \emph{bundle} and said to be \emph{feasible} if $q^-_C \le |S\cap C| \le q^+_C$ for any category $C\in\cC$.
An \textit{allocation} for $\cI$ is an ordered partition $A = (A_i)_{i\in N}$ of $M$, and said to be \textit{feasible} if each bundle $A_i \subseteq M$ is feasible.
$\cF(\cI)$ denotes the set of all feasible allocations for $\cI$, which is non-empty due to \cref{eq:def:quotas}.

An instance $\cI$ is said to be \textit{of goods} (resp.~\textit{of chores}) if $v_i(g) \ge 0$ (resp.~$\le 0$) for every pair $(i, g)\in N\times M$, where each item is called a \textit{good} (resp.~\textit{chore}).
An instance $\cI$ is said to be \textit{ordered} \citep{hummel2022maximin} when each category $C$ has an ordering of its items $C = \left\{g^C_1,g^C_2,\ldots,g^C_{|C|}\right\}$ such that
\begin{align}
  &v_i\!\left(g^C_1\right) \ge v_i\!\left(g^C_2\right) \ge \cdots \ge v_i\!\left(g^C_{|C|}\right) &\forall i\in N.
\end{align}
Here, item $g^C_j\in C$ is said to be \textit{more valuable} than item $g^C_{j'}\in C$ if $j < j'$.
An instance $\cI$ is said to be \textit{single-category} if $\cC = \{M\}$, where we let $(q^-,q^+) \coloneqq \left(q^-_M, q^+_M\right)$ and equate $\cI$ with the tuple $\left(N, M, (v_i)_{i\in N}, (q^-, q^+)\right)$; when $\cI$ is also ordered, we let $g_j \coloneqq g^M_j$ for each $j\in\{1,2,\ldots,|M|\}$.

Note that cardinality constraints are the special cases when $q^-_C = 0$ for every $C\in\cC$, and that the unconstrained setting is represented by single-category instances with $(q^-,q^+) = (0,|M|)$.

Let us define the fairness criteria based on \textit{maximin shares (MMS)}.
We aim to design an algorithm that, given an arbitrary instance, computes an $\alpha$-MMS allocation for some $\alpha$ as close to $1$ as possible.

\begin{definition} \label{def:MMS}
  For an instance $\cI = \left(N,M,(v_i)_{i\in N},\cC,(q_{C}^-,q_{C}^+)_{C\in\cC}\right)$ and an agent $i\in N$, we define agent $i$'s \textit{maximin share (MMS)} as
  \begin{align*}
    \mu_i(\cI) \coloneqq \max \left\{\min_{k\in N}\, v_i(P_k) \mid (P_k)_{k\in N}\in\cF(\cI) 
    \right\},
  \end{align*}
  where any $(P_k)_{k\in N}$ achieving the maximum is called agent $i$'s \textit{MMS partition}.
  A feasible allocation $(A_i)_{i\in N}$ is said to be an \textit{$\alpha$-MMS allocation} for $\cI$ and some $\alpha\in\bR$ if
  $v_i(A_i) \ge \alpha\,\mu_i(\cI)$ for all $i\in N$.
  Particularly when these inequalities hold for $\alpha = 1$, $(A_i)_{i\in N}$ is said to be an \textit{MMS allocation}.
\end{definition}


It is shown that an arbitrary instance can be reduced to an ordered instance, which is well-known in the unconstrained setting \citep{bouveret2016characterizing} and also under cardinality constraints \citep{hummel2022maximin}.
Because the known reduction preserves an instance except the valuation profile and does not alter the cardinality of solutions, we can focus solely on ordered instances as input even in the presence of lower quotas.

\begin{restatable}[{Corollary of \citep[Theorem 3]{hummel2022maximin}}]{lemma}{OrderedInstance} \label{thm:common:ordered}
  For an arbitrary instance $\cI = \left(N,M,(v_i)_{i\in N},\cC,(q_{C}^-,q_{C}^+)_{C\in\cC}\right)$ and any $\alpha\in\bR$, one can compute an ordered instance $\tilde\cI$ in time $O(|N||M|\log |M|)$, such that an $\alpha$-MMS allocation for $\cI$ can be computed from any $\alpha$-MMS allocation for $\tilde\cI$ in time $O(|N||M|\log |M|)$.
\end{restatable}

\begin{algorithm}[tb]
  \centering

  \begin{algorithmic}[1]
    \Require An instance $\cI = \left(N,M,({v}_i)_{i\in N},\cC,(q_{C}^-,q_{C}^+)_{C\in\cC}\right)$. An allocation $(\tilde A_i)_{i\in N}$ for the ordered instance $\tilde\cI = \left(N,M,(\tilde{v}_i)_{i\in N},\cC,(q_{C}^-,q_{C}^+)_{C\in\cC}\right)$ defined from $\cI$ in the proof of \cref{thm:common:ordered}
    \Ensure An allocation $(A_i)_{i\in N}$ for the instance $\cI$.
    \State $A_i \gets \emptyset$\quad for each $i\in N$.
    \For{\textbf{each} $C\in\cC$}
    \For{$j = 1,2,\ldots,|C|$}
    \State Find $i^*\in N$ s.t. $g^C_j \in \tilde A_{i^*}$.
    \State Find $g^*\in\arg\max \left\{v_{i^*}(g) \mid g\in C\setminus \bigcup_{i\in N} A_i\right\}$. \label{alg:common:ordered:find-item}
    \State $A_{i^*}\gets A_{i^*}\cup\{g^*\}$. \label{alg:common:ordered:assign}
    \EndFor
    \EndFor
  \end{algorithmic}

  \caption{Obtain a solution of the original instance from that of its corresponding ordered instance \citep[Algorithm 2]{hummel2022maximin}.}
  \label{alg:common:ordered}

\end{algorithm}

\begin{proof}
  Let $\cI = \left(N,M,({v}_i)_{i\in N},\cC,(q_{C}^-,q_{C}^+)_{C\in\cC}\right)$ be an arbitrary instance.
  Fix any ordering in each category $C = \left\{g^C_1,g^C_2,\ldots,g^C_{|C|}\right\}$.
  Following \citep[Algorithm 1]{hummel2022maximin}, we define an ordered instance $\tilde\cI \coloneqq \left(N,M,(\tilde{v}_i)_{i\in N},\cC,(q_{C}^-,q_{C}^+)_{C\in\cC}\right)$ by
  \begin{align*}
    \tilde{v}_i\!\left(g^C_j\right) &\coloneqq \left(\mbox{the $j$-th largest value in $\left\{v_i(g)\mid g\in C\right\}$}\right) \\
    &\forall j\in\{1,2,\ldots,|C|\},\forall C\in\cC,\forall i\in N, 
  \end{align*}
  where ties are broken arbitrarily.
  Clearly, $\mu_i(\tilde\cI) = \mu_i(\cI)$ holds for every $i\in N$.
  Furthermore, given an allocation $\tilde A = (\tilde A_i)_{i\in N}$ for $\tilde\cI$, Line~\ref{alg:common:ordered} computes an allocation $A = (A_i)_{i\in N}$ for $\cI$ such that
  \begin{align*}
    |A_i\cap C| &= |\tilde A_i \cap C| &\forall C\in \cC,\forall i\in N, \\
    v_i(A_i) &\ge \tilde{v}_i(\tilde A_i) &\forall i\in N.
  \end{align*}
  Therefore, if $\tilde A$ is an $\alpha$-MMS allocation for $\tilde\cI$, then $A$ is also an $\alpha$-MMS allocation for $\cI$.
  The ordered instance $\tilde I$ can be obtained in time $O(|N||M|\log |M|)$ by sorting the item values $(v_i(g))_{g\in M}$ for each agent $i\in N$.
  In addition, \cref{alg:common:ordered} can be implemented to run in time $O(|N||M| \log |M|)$ by, for each $i\in N$, similarly computing the sorted array of item values in advance and using a one-directional pointer over it to avoid scanning the same item value more than once at Line~\ref{alg:common:ordered:find-item}.
\end{proof}

We also extend the concept of \textit{valid reduction}, which has been leveraged largely in the unconstrained problem \citep{bouveret2016characterizing,kurokawa2018fair,amanatidis2017approximation,ghodsi2021fair,garg2019approximating,garg2021improved,akrami2023simplification,akrami2024breaking,heidari2026improved,huang2025fptas} but also under constraints \citep{gourves2019maximin,li2021fair-division,hummel2022maximin,deng2024budgeted}.
\cref{thm:common:valid-reduction} allows us to assign the feasible bundle $B$ to any agent without decreasing other agents' MMS in the new instance $\cI'$.
This bundle $B$ is designed not only to ensure that $\cI'$ is well-defined, but also so that any agent's MMS partition contains an \emph{item-wise more valuable} bundle than $B$ by the pigeonhole principle, as formalized in the proof of \cref{thm:common:valid-reduction}.
\cref{thm:common:cor-valid-reduction} then suggests that, as long as some agent $i^*$ enjoys $v_{i^*}(B) \ge \alpha\,\mu_{i^*}(\cI)$, finding an $\alpha$-MMS allocation for the original instance $\cI$ can be reduced to finding one for $\cI'$.

\begin{restatable}{lemma}{LemValidReduction}
  \label{thm:common:valid-reduction}
  Let $\cI = \left(N,M,(v_i)_{i\in N},\cC,(q_{C}^-,q_{C}^+)_{C\in\cC}\right)$ be an ordered instance of goods, $i^*\in N$ be arbitrary, $C^*\in\cC$ be non-empty, and $d\in\bZ_{\ge 0}$ satisfy $|C^*| \ge d|N| + 1$.
  Let us also define the following:
  \begin{align*}
    \begin{split}
      B &\coloneqq \left\{g^{C^*}_{j} \mid d(|N| - 1) < j \le d|N| + 1 \right\} \\
      &\quad\cup \left\{g^{C^*}_{|C^*| - j} \mid 0\le j < \max\left\{q_{C^*}^-,|C^*|-q_{C^*}^+(|N|-1)\right\} - d - 1 \right\} \\
      &\quad\cup \bigcup_{C\in\cC\setminus\{C^*\}}\left\{g^{C}_{|C| - j} \mid 0\le j < \max\left\{q_{C}^-,|C|-q_{C}^+(|N|-1)\right\}\right\},
    \end{split} \\ 
    \begin{split}
      \cI' &\coloneqq \left(N\setminus\{i^*\}, M\setminus B, \left(v_{i}|_{2^{M\setminus B}}\right)_{i\in N\setminus \{i^*\}}, \right. \\
      & \hspace{64pt} \left. \{C\setminus B\mid C\in\cC\}, \left(q_{C}^-,q_{C}^+\right)_{C\in \cC}\right).
    \end{split} 
  \end{align*}
  Then $\cI'$ is well-defined as an ordered instance of goods and satisfies that
  \begin{align*}
    \mu_{i}(\cI') &\ge \mu_{i}(\cI) &\forall i\in N\setminus\{i^*\}. 
  \end{align*}
\end{restatable}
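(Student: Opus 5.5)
Throughout, write $n\coloneqq|N|$ and, for each category $C\in\cC$, $r_C\coloneqq\max\{q_C^-,|C|-q_C^+(n-1)\}$. The quantity $r_C$ is a lower bound on $|P\cap C|$ for every bundle $P$ of every feasible allocation. Indeed, $|P\cap C|\ge q_C^-$, and the other $n-1$ bundles hold at most $q_C^+(n-1)$ items of $C$. The plan has two parts: a counting part showing $\cI'$ is well-defined, and an exchange part showing agent $i$'s MMS does not drop.

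\emph{Well-definedness.} First I will pin down the sizes of $B$ in each category. For $C\neq C^*$ we get $|B\cap C|=r_C$. For $C^*$, $B$ consists of a top block of $d+1$ items (indices $d(n-1)+1,\dots,dn+1$) and a bottom block of $\max\{0,r_{C^*}-d-1\}$ least valuable items, so $|B\cap C^*|=\max\{d+1,r_{C^*}\}$. This requires the two blocks to be disjoint, i.e.\ $r_{C^*}\le |C^*|-d(n-1)$, which I will check by cases.
\begin{itemize}
\item If $r_{C^*}=q_{C^*}^-$: by \cref{eq:def:quotas}, $q_{C^*}^-\le |C^*|/n$, and $|C^*|/n\le |C^*|-d(n-1)$ because $|C^*|\ge dn$.
\item If $r_{C^*}=|C^*|-q_{C^*}^+(n-1)$: we need $q_{C^*}^+\ge d$, which follows from $dn+1\le|C^*|\le q_{C^*}^+n$.
\end{itemize}
Next, the residual quotas. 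For each $C$, the remaining count $|C|-|B\cap C|$ is at most $q_C^+(n-1)$, since $|B\cap C|\ge |C|-q_C^+(n-1)$. It is at least $q_C^-(n-1)$: when the max in $r_C$ is attained by $q_C^-$ this is $|C|\ge q_C^- n$; otherwise the remaining count equals $q_C^+(n-1)\ge q_C^-(n-1)$. For $C^*$ with $d+1>r_{C^*}$, the remaining count is still at least $|C^*|-(d+1)\ge d(n-1)$, which is enough because $d\ge q_{C^*}^-$ (as $q_{C^*}^-\le |C^*|/n$ is not needed beyond this). Orderedness is inherited from $\cI$. Although not needed here, I also note that $B$ itself is feasible, since $r_C\le q_C^+$ and $d+1\le q_{C^*}^+$.

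\emph{MMS monotonicity.} Fix $i\ne i^*$ and an MMS partition $(P_k)_{k\in N}$ of agent $i$ in $\cI$.
\begin{enumerate}
\item \emph{Pigeonhole.} Some bundle $P_k$ contains at least $d+1$ of the top $dn+1$ items of $C^*$. Its $t$-th most valuable such item has index at most $d(n-1)+t$, so these $d+1$ items item-wise dominate the top block of $B$.
\item \emph{Dominating set $D$.} Since $|P_k\cap C|\ge r_C$ for every $C$, I choose $D\subseteq P_k$ with $|D\cap C|=|B\cap C|$ for every $C$. In $C^*$, $D$ consists of the $d+1$ items from step 1 plus arbitrary further items of $P_k$. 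Any items of $P_k$ dominate the bottom blocks of $B$ under the sorted matching, because those blocks are the least valuable items of their category.
\item \emph{Bijection.} By the standard fact that removing common elements preserves sorted dominance, I obtain a category-preserving bijection $\sigma:B\setminus D\to D\setminus B$ with $v_i(\sigma(g))\ge v_i(g)$.
\item \emph{New partition of $M\setminus B$.} For $l\ne k$, set $P'_l\coloneqq(P_l\setminus B)\cup\sigma(P_l\cap B)$. These $n-1$ bundles keep their per-category counts, so they stay feasible, and each has $v_i(P'_l)\ge v_i(P_l)\ge\mu_i(\cI)$.
\item \emph{Leftover items.} The items of $P_k\setminus D$ are still unassigned. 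In each category, the total to be placed is $|C|-|B\cap C|\le q_C^+(n-1)$, so I can add them to bundles with spare capacity without violating upper quotas. Lower quotas are unaffected by adding items, and values only increase since these are goods.
\end{enumerate}
This yields a feasible $(n-1)$-partition of $M\setminus B$ in $\cI'$ with minimum value at least $\mu_i(\cI)$, hence $\mu_i(\cI')\ge\mu_i(\cI)$.

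\emph{Main obstacle.} I expect the delicate step to be the joint treatment of $C^*$. There the same chosen set $D$ must dominate both the top block (only via the pigeonhole items) and the bottom block, and the two blocks of $B$ must be disjoint. The index inequalities in the well-definedness part are exactly what this step relies on.
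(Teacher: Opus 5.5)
Your proposal is correct and follows the paper's proof: the same per-category count $|B\cap C|$ with the quota arithmetic for well-definedness, the same pigeonhole step on the top $d|N|+1$ items of $C^*$, and the same category-preserving, value-dominating exchange into one MMS bundle followed by redistribution of leftovers under the upper quotas. You spell out the exchange and leftover steps more explicitly than the paper's chain of $\nu$-inequalities does.

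There are two small slips, neither affecting validity. First, the items left unassigned after Step 4 are $(P_k\setminus B)\setminus\sigma(B\setminus P_k)$, not $P_k\setminus D$; your per-category capacity count in Step 5 handles them all the same. Second, in the residual-quota check, $d\ge q^-_{C^*}$ should be justified by $d+1>r_{C^*}\ge q^-_{C^*}$, since your parenthetical does not supply a reason.
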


\begin{proof}
  By the definition of $B$, each category $C \in \cC$ satisfies that
  \begin{align}
    |B \cap C| &=
    \begin{cases}
      \max\left\{d + 1, q^-_C, |C| - q_C^+(|N| - 1)\right\} & \text{ if }\ C = C^*; \\
      \max\left\{q^-_C, |C| - q_C^+(|N| - 1)\right\} & \text{ otherwise.}
    \end{cases}
    \label{eq:common:goods:valid-reduction:num-per-category}
  \end{align}
  \cref{eq:def:quotas} and that $|C^*| \ge d|N| + 1$ together imply
  \begin{align}
    d + 1 & \le
    \begin{cases}
      |C^*| - \floordiv{|C^*|}{|N|}(|N| - 1)  & \text{ if }\ d = \floordiv{|C^*|}{|N|}, \\
      \floordiv{|C^*|}{|N|} & \text{ otherwise}
    \end{cases} \\
    & \le |C^*| - q^-_{C^*}(|N| - 1). \label{eq:common:goods:valid-bound-(d+1)}
  \end{align}
  It follows from \cref{eq:def:quotas,eq:common:goods:valid-bound-(d+1),eq:common:goods:valid-reduction:num-per-category} that
  \begin{align}
    |C| - q_C^+(|N| - 1) &\le |B \cap C| \le |C| - q_C^-(|N| - 1) &\forall C\in\cC,
    \label{eq:common:goods:valid-remain-category}
  \end{align}
  which allows the instance $\cI'$ to be well-defined as an ordered instance of goods.

  Let $i\in N\setminus\{i^*\}$ be arbitrary.
  Let $(P_k)_{k=1}^{|N|}$ be agent $i$'s MMS partition, which can be assumed, without loss of generality, to satisfy that
  \begin{align}
    \left|P_1 \cap \left\{g^{C^*}_j\mid 1\le j\le d|N| + 1\right\}\right| \ge d + 1 \label{eq:common:goods:valid-reduce-pigeon}
  \end{align}
  by the pigeonhole principle and that $|C^*| \ge d|N| + 1$.
  Given \cref{eq:common:goods:valid-reduce-pigeon} and the definition of $B$, as well as that $(P_k)_{k=1}^{|N|}\in\cF(\cI)$, there is an injection $f\colon B\setminus P_1 \to P_1\setminus B$ that satisfies
  \begin{align}
    f(g) &\in C \ \text{ and } \ v_{i}(g) \le v_{i}(f(g)) &\forall g\in C\cap(B\setminus P_1),\forall C\in\cC, \label{eq:common:goods:injection}
  \end{align}
  Now, we define
  \begin{align}
    \begin{split}
      \nu(M') \coloneqq \mu_i\left(N\setminus\{i^*\}, M', \left(v_{i'}|_{2^{M'}}\right)_{i'\in N\setminus \{i^*\}}, \right. \\
      \left. \{C\cap M'\mid C\in\cC\}, \left(q_{C}^-,q_{C}^+\right)_{C\in\cC}\right)
    \end{split} \label{eq:common:goods:mu-reduced}
  \end{align}
  for any $M'\subseteq M$ that ensures the instance in the right-hand side is well-defined.
  Then from \cref{eq:common:goods:valid-remain-category,eq:common:goods:injection,eq:common:goods:mu-reduced}, as well as the definition of $(P_k)_{k=1}^{|N|}$, we finally obtain
  \begin{align*}
    \mu_{i}(\cI') &= \nu(M\setminus B) \\
    &\ge \nu(M\setminus ((B \cap P_1) \cup f(B\setminus P_1))) \\
    &\ge \nu(M\setminus P_1) \\
    &\ge \min \left\{v_{i}(P_{k}) \mid k \in \{2,3,\ldots,|N|\}\right\} \\
    &\ge \mu_{i}(\cI),
  \end{align*}
  concluding the proof of \cref{thm:common:valid-reduction}.
\end{proof}

\begin{corollary} \label{thm:common:cor-valid-reduction}
  Let $\alpha\in\bR$ be arbitrary.
  In \cref{thm:common:valid-reduction}, if we additionally have that $v_{i^*}(B) \ge \alpha\,\mu_{i^*}(\cI)$,
  and an $\alpha$-MMS allocation $\left(A'_i\right)_{i\in N\setminus\{i^*\}}$ for $\cI'$ is given, the following $(A_i)_{i\in N}$ is an $\alpha$-MMS allocation for $\cI$:
  \begin{align*}
    A_i &\coloneqq
    \begin{cases}
      B & \text{ if }\ i = i^*, \\
      A'_i & \text{ otherwise}
    \end{cases}
    &\forall i\in N.
  \end{align*}
\end{corollary}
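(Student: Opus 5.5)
The argument is a direct verification in two parts: feasibility of $(A_i)_{i\in N}$ for $\cI$, and the $\alpha$-MMS inequality for every agent.

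\emph{Feasibility.} By \cref{thm:common:valid-reduction}, $\cI'$ is a well-defined instance on the item set $M\setminus B$ with categories $\{C\setminus B\mid C\in\cC\}$. Since $(A'_i)_{i\in N\setminus\{i^*\}}$ is a feasible allocation for $\cI'$, it partitions $M\setminus B$. Adding $A_{i^*}=B$ therefore gives an ordered partition of $M$. For $i\neq i^*$ and each $C\in\cC$, feasibility in $\cI'$ gives
\[
q^-_C\le |A'_i\cap (C\setminus B)| = |A_i\cap C| \le q^+_C,
\]
because $A'_i\subseteq M\setminus B$. It remains to show that $B$ itself is feasible, i.e.\ $q^-_C\le |B\cap C|\le q^+_C$ for every $C\in\cC$. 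The lower bound is immediate from the count \cref{eq:common:goods:valid-reduction:num-per-category}, since each case takes a maximum that includes $q^-_C$. For the upper bound I would argue as follows. The remaining instance $\cI'$ has $|N|-1$ agents and is well-defined, so \cref{eq:def:quotas} holds for it: $q^-_C(|N|-1)\le |C\setminus B|\le q^+_C(|N|-1)$. Combined with $|C|\le q^+_C|N|$, the left inequality gives
\[
|B\cap C| = |C|-|C\setminus B| \le q^+_C|N| - q^-_C(|N|-1).
\]
This alone does not yield $|B\cap C|\le q^+_C$, so instead I bound each term of the maximum in \cref{eq:common:goods:valid-reduction:num-per-category} directly. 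First, $q^-_C\le q^+_C$, since $q^-_C|N|\le|C|\le q^+_C|N|$ whenever $N$ is non-empty. Second, $|C|-q^+_C(|N|-1)\le q^+_C$ by \cref{eq:def:quotas}. Third, for $C=C^*$, $d+1\le \lfloor |C^*|/|N|\rfloor$ or $d+1\le |C^*|-\lfloor |C^*|/|N|\rfloor(|N|-1)$ by the intermediate bound in the proof of \cref{thm:common:valid-reduction}.

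\emph{Main obstacle.} The delicate step is showing $d+1\le q^+_{C^*}$. In the first case, $\lfloor |C^*|/|N|\rfloor\le q^+_{C^*}$ holds directly. In the second case, $d=\lfloor |C^*|/|N|\rfloor$, and I would write $|C^*| = d|N|+r$ with $0<r<|N|$ (here $r>0$ because $|C^*|\ge d|N|+1$). Then $|C^*|>d|N|$, so $q^+_{C^*}|N|\ge |C^*|>d|N|$ forces $q^+_{C^*}\ge d+1$, as required.

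\emph{MMS guarantee.} For $i^*$, the hypothesis gives $v_{i^*}(A_{i^*})=v_{i^*}(B)\ge\alpha\,\mu_{i^*}(\cI)$. For $i\neq i^*$, we have $v_i(A_i)=v_i(A'_i)\ge \alpha\,\mu_i(\cI')$. To conclude $\alpha\,\mu_i(\cI')\ge \alpha\,\mu_i(\cI)$ from \cref{thm:common:valid-reduction}, I need $\alpha\ge 0$, which is the intended regime for goods. For goods all MMS values are non-negative, and if $\alpha<0$ the claimed bound is trivial anyway since $v_i(A_i)\ge0\ge \alpha\,\mu_i(\cI)$. Splitting on the sign of $\alpha$ this way handles the "arbitrary $\alpha\in\bR$" clause and completes the proof.
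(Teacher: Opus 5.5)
Your proof is correct and follows the paper's route. The paper's proof simply combines the definition of $\alpha$-MMS with the inequality $\mu_i(\cI')\ge\mu_i(\cI)$ from \cref{thm:common:valid-reduction}; your explicit check that $B$ is itself a feasible bundle and your treatment of $\alpha<0$ fill in details the paper leaves implicit. The bound $d+1\le q^+_{C^*}$ needs no case split, since $d|N|<|C^*|\le q^+_{C^*}|N|$ gives it directly.
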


\begin{proof}
  This is immediate from \cref{def:MMS,thm:common:valid-reduction}.
\end{proof}



\section{Main results} \label{sec:main_results}

We present a polynomial-time algorithm to compute a $\left(\frac{2n}{3n - 1}\right)$-MMS allocation for a single-category instance of goods with $n$ agents, thereby establishing \cref{thm:single:goods:two-third}.
This result extends and strictly improves upon \citet{hummel2022maximin}'s $2/3$-approximation for the special case without lower quotas.
\cref{sec:single:goods} is devoted to examining an existing approach and its issues, defining the proposed algorithm, discussing its key components, and establishing the desired guarantees through a careful scrutiny of the algorithm.

\begin{theorem} \label{thm:single:goods:two-third}
  For an arbitrary single-category instance of goods $\cI = (N, M, (v_i)_{i\in N}, (q^-,q^+))$, a $\left(\frac{2}{3 - (\max\left\{|N|, 1\right\})^{-1}}\right)$-MMS allocation for $\cI$ can be computed in time $O(|N||M|\log |M|)$.%
  \footnote{We assume the uniform cost model and access to constant-time valuation oracles.}
\end{theorem}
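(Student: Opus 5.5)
First apply \cref{thm:common:ordered} to reduce to an ordered single-category instance with $g_1,\dots,g_m$ non-increasing for every agent, where $m=|M|$ and $n=|N|$. Write $\alpha_n=\frac{2n}{3n-1}$. The algorithm is recursive: while some agent values a suitably chosen feasible bundle $B$ of the form in \cref{thm:common:valid-reduction} at least $\alpha_n\mu_i$, assign $B$ to such an agent and recurse on $\cI'$. By \cref{thm:common:cor-valid-reduction} and monotonicity $\alpha_{n-1}\le\alpha_n$, the recursion costs nothing in ratio. The candidate bundles will be those with $d=0$ and $d=1$. For $d=0$, $B$ is $g_1$ padded with the $\max\{q^-,m-q^+(n-1)\}-1$ least valuable items. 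For $d=1$, $B$ is $\{g_n,g_{n+1}\}$ plus the same kind of padding from the bottom.

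Once no such reduction applies, every agent satisfies $v_i(B_0)<\alpha_n\mu_i$ and $v_i(B_1)<\alpha_n\mu_i$. The remaining instance is handled by a bag-filling style procedure adapted to quotas. We first seed $n$ bags by pairing top items in the usual way, with bag $k$ receiving $g_k$ and $g_{2n+1-k}$. We then distribute the remaining items in round-robin/snake order so that every bag has size between $q^-$ and $q^+$; this is possible because $q^-n\le m\le q^+n$. Next, we let agents claim bags greedily: whenever some remaining agent values the current bag at least $\alpha_n\mu_i$, that agent takes it. We process bags from least to most valuable, or fill one bag at a time, padding it with low items up to $q^-$ and then adding items until someone claims it. Before claiming, padding up to $q^-$ must always leave enough items and respect $q^+$ for the remaining bags. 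This holds by counting, since each bag stays within its quota window.

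The key estimate is an upper bound on $\mu_i$ in terms of item values. Since $g_1$ together with the lowest $\max\{q^-,m-q^+(n-1)\}-1$ items gives less than $\alpha_n\mu_i$, the top item is small: $v_i(g_1)<\alpha_n\mu_i$. The failed $d=1$ reduction similarly gives $v_i(g_n)+v_i(g_{n+1})<\alpha_n\mu_i$. In addition, $n\mu_i\le v_i(M)$. A standard argument then shows no bag is overfilled. The argument runs as follows. When a bag first reaches value at least $\alpha_n\mu_i$ for some agent, the last item added is smaller than $\frac{1}{2}\alpha_n\mu_i$ or so, so each claimed bag has value below roughly $\frac{3}{2}\alpha_n\mu_i$ for any unclaimed agent $i$, computed via the ordering. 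The total leftover then stays at least $\alpha_n\mu_i$ for the last agent.

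The main obstacle is making this last counting tight enough to yield exactly $\frac{2n}{3n-1}$ rather than $2/3$. With lower quotas, the padding items forced into a bag also count toward its value, so the usual "each bag worth at most $\mu_i$ to the others" bound breaks. I would try to charge the padding to the bottom items of $i$'s MMS partition: every MMS bundle also has at least $q^-$ items, so it contains comparable low items. Done carefully, the total value removed by $n-1$ claimed bags is at most $(n-1)\mu_i+\frac{n-1}{2}\alpha_n\mu_i$. Requiring the remainder to be at least $\alpha_n\mu_i$ gives $n\mu_i-(n-1)\mu_i-\frac{n-1}{2}\alpha_n\mu_i\ge\alpha_n\mu_i$, which is $\alpha_n\le\frac{2}{n+1}$. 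That is too weak, so the charging must instead bound each claimed bag by $\mu_i$-relative quantities whose sum telescopes. The target is exactly $\alpha_n(\frac{3}{2}-\frac{1}{2n})=1$, which suggests bounding the total value consumed by bags other than $i$'s by $v_i(M)-\alpha_n\mu_i$ through an averaging over the $n$ MMS bundles, with the $-\frac{1}{2n}$ slack coming from one bundle's top item. The running time is dominated by the sorting in \cref{thm:common:ordered}, and the rest is linear per agent with prefix sums.
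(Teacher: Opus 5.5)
There is a genuine gap, and it is the step you flag yourself. Your skeleton matches the paper's: reduction to ordered instances, a $d=1$ valid reduction so that every item other than $g_1,\dots,g_n$ is worth less than $\frac12\alpha\mu_i$, a stopping threshold of $\frac32\alpha$, and the arithmetic $1-(n-1)(\frac32\alpha-1)\ge\alpha$. (In passing, $\alpha_{n-1}\ge\alpha_n$, not $\le$; that is the direction the recursion needs.) What is missing is the step that carries the proof.

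Under quotas you establish no valid upper bound on what a claimed bag is worth to the agents who did not claim it, so the total-value count cannot be closed. With $m'$ items and $t$ agents left, the bag's size is confined to $[\max\{q^-,m'-q^+(t-1)\},\min\{q^+,m'-q^-(t-1)\}]$. Your failed $d=0$ test only controls $g_1$ together with the first round's mandatory padding. In later rounds, the mandatory part of a bag may already be worth more than $\frac32\alpha\mu_i$ to an unclaimed agent. At the other end, a bag may reach its largest admissible size before anyone claims it, forcing an exchange step whose effect on the leftover items you do not analyze. Your charging attempt gives only $\frac{2}{n+1}$, and the suggested averaging over MMS bundles is not carried out. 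Separately, every test in your procedure compares with $\alpha\mu_i$, and computing $\mu_i$ is NP-hard. So the stated running time would not follow even from a correct version without a computable surrogate for $\mu_i$.

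The paper replaces the total-value count by a structural invariant. It greedily fixes a partition into bags $B_1,\dots,B_n$ with sizes $b_1\le\dots\le b_n$ in $[\max\{1,q^-\},q^+]$, where $B_k$ holds $g_k$ and higher-indexed bags receive the more valuable of the remaining items. It then uses the computable bound $\hat\mu_i=\min_r\frac{1}{n-r+1}v_i(\bigcup_{k\ge r}B_k)\ge\mu_i$ (shown by a domination argument against an MMS partition), both in the $d=1$ reduction and as the threshold. Your $d=0$ reduction is never needed, because the items $g_k$ never move.

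The invariant is imposed on every suffix: $v_i(\bigcup_{k=r}^tB^{(t)}_k)\ge(t-r+1-(n-t)(\frac32\alpha-1))\hat\mu_i$ for all $r$. In round $t$ the candidate bag starts as the top bag $B^{(t)}_t$. It is shrunk by sliding its items other than $g_t$ down through $B^{(t)}_{t-1},B^{(t)}_{t-2},\dots$, moving items while sizes differ and swapping once they are equal. This has two effects:
\begin{itemize}
\item The bag's size always lies between the sizes of two adjacent bags of the round's starting configuration, which makes both quotas automatic for it and for the rest.
\item Its value decreases in steps smaller than $\frac12\alpha\hat\mu_i$.
\end{itemize}
Sliding stops once the bag is worth less than $\frac32\alpha\hat\mu_i$ to every remaining agent. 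Suffixes starting at or below the index where sliding stopped then lose less than $\frac32\alpha\hat\mu_i$. For suffixes starting above it, the new suffix from $r$ item-wise dominates the old suffix from $r+1$, so the invariant passes to $t-1$ with no value lost. The same holds for all suffixes if sliding runs out while the bag is still worth at least $\frac32\alpha\hat\mu_i$ to someone.

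This second case is precisely what neutralizes an overly valuable forced bag, and it requires the suffix form of the invariant rather than a bound on the total. Applied with $r=t$, the invariant gives $v_i(B^{(t)}_t)\ge(1-(n-1)(\frac32\alpha-1))\hat\mu_i\ge\alpha\hat\mu_i$ in every round, exactly when $\alpha\le\frac{2n}{3n-1}$.
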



A similar approach leads to the following result for the case of chores, which we show in \cref{sec:single:chores}.
As in the case of goods, \cref{thm:single:chores:three-halves} extends and strictly improves upon \citet{hummel2022maximin}'s $3/2$-approximation for their special case.

\begin{restatable}{theorem}{SingleChoresMain} \label{thm:single:chores:three-halves}
  For an arbitrary single-category instance of chores $\cI = (N, M, (v_i)_{i\in N}, (q^-,q^+))$, a $\left(\frac{3 - (\max\left\{|N|, 1\right\})^{-1}}{2}\right)$-MMS allocation for $\cI$ can be computed in time $O(|N||M|\log |M|)$.
\end{restatable}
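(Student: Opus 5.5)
We prove \cref{thm:single:chores:three-halves} by mirroring the goods algorithm of \cref{thm:single:goods:two-third}, adapting each step to non-positive values. Write $n \coloneqq |N|$ and $\beta \coloneqq \frac{3n-1}{2n}$. The goal is a feasible allocation with $v_i(A_i) \ge \beta\,\mu_i(\cI)$ for every $i$; since $\mu_i(\cI)\le 0$, this asks each agent's cost $|v_i(A_i)|$ to be at most $\beta|\mu_i(\cI)|$. By \cref{thm:common:ordered}, it suffices to treat ordered instances. We index items so that $g_1$ is the least costly (value closest to $0$) and $g_{|M|}$ the most costly. After sorting, the whole algorithm should run in $O(|N||M|\log|M|)$ time.

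\textbf{Lower bounds on MMS.} For every agent $i$ we use three facts.
\begin{enumerate}[label=(\alph*)]
\item $\mu_i(\cI) \le v_i(M)/n$, by averaging.
\item $\mu_i(\cI) \le v_i(g_{|M|})$, since some bundle contains the worst chore.
\item Pigeonhole bounds respecting quotas. Any feasible partition places at least $q^-$ items in every bundle, and some bundle contains at least $\lceil (n+1)/n\rceil = 2$ of the $n+1$ costliest items. More generally, some bundle contains at least $k+1$ of the $kn+1$ costliest items, which gives $\mu_i(\cI)\le v_i(g_{|M|-kn}) + \dots$.
\end{enumerate}

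\textbf{Construction.} The natural chores analogue of the goods routine is a round-robin/"snake" assignment. Every agent receives exactly $\lfloor |M|/n\rfloor$ or $\lceil |M|/n\rceil$ items, which is feasible by \cref{eq:def:quotas}; the leftover $|M| \bmod n$ items go to agents whose upper quota allows it. The items are dealt in rounds of $n$ from costliest to cheapest, alternating direction, so each agent gets one item from each block $\{g_{|M|-(r+1)n+1},\dots,g_{|M|-rn}\}$. Lower quotas never bind here, because bundles are as balanced as possible. The one place they matter is the possible partial block, and that is exactly where the goods case needed the novel handling in \cref{sec:single:goods:novelty}; we reuse that handling.

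\textbf{Cost bound for a fixed agent $i$.} We compare agent $i$'s bundle against $\mu_i$ block by block. Agent $i$ receives one item from each full block, and there are $m'\coloneqq\lfloor |M|/n\rfloor$ full blocks. In the top block it receives at worst $g_{|M|}$, costing at most $|\mu_i|$ by (b). In every later block $r\ge 1$, the item it receives costs at most the costliest item of that block, which is at most the cheapest item of block $r-1$, hence at most the average of block $r-1$. Summing gives
\begin{equation*}
  |v_i(A_i)| \le |v_i(g_{|M|})| + \tfrac{1}{n}\,|v_i(M\setminus\{\text{last block}\})|,
\end{equation*}
and by (a) and (b) this is at most $2|\mu_i|$. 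To sharpen the factor $2$ to $\frac{3n-1}{2n}$, we use the snake direction: the agent that takes the costliest item in block $0$ takes the cheapest item in block $1$. Pairing consecutive blocks then yields roughly $\frac12|v_i(g_{|M|})| + \frac1n|v_i(M)| - \frac{1}{2n}|v_i(g_{|M|})|$, and since $\frac12 - \frac1{2n} + 1 = \frac{3n-1}{2n}$, this gives $\beta|\mu_i|$.

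\textbf{Main obstacle.} The difficult step is making the pairing argument exact for \emph{every} agent, not only the one holding $g_{|M|}$. For a general position $p$ in the snake order, we will prove that the cost of the two items the agent takes from blocks $2r$ and $2r+1$ is at most $|v_i(g_{|M|-2rn})|/2$ plus the average of those two blocks. The leftover odd block and the partial block, where quotas intervene, must be absorbed by (c). If this direct approach fails, the fallback is a valid reduction analogous to \cref{thm:common:valid-reduction} for chores. That reduction assigns a bundle consisting of one very costly item together with the $\max\{q^-,|M|-q^+(n-1)\}-1$ cheapest items to some agent for whom this bundle costs at most $\beta|\mu_i|$, then recurses on the remaining agents, and finishes with the snake bound once no single item costs more than $\frac{n-1}{2n}|\mu_i|$.
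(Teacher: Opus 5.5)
There is a genuine gap at the heart of your construction. Restricting every bundle to $\lfloor |M|/n\rfloor$ or $\lceil |M|/n\rceil$ items makes the ratio $\frac{3n-1}{2n}$ unattainable, whatever order the items are dealt in. Take identical agents, quotas $(q^-,q^+)=(0,|M|)$, one chore of cost $1$ and $(n-1)K$ chores of cost $1/K$. The MMS cost is exactly $1$: the big chore goes alone and the small ones are split evenly. Yet in any balanced allocation, the agent holding the big chore also gets at least $\lfloor |M|/n\rfloor-1$ small chores. Its cost therefore tends to $1+\frac{n-1}{n}=\frac{2n-1}{n}>\frac{3n-1}{2n}$ as $K\to\infty$, for every $n\ge 2$. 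In particular, your pairing estimate is false, because it charges only $\frac12|v_i(g_{|M|})|$ for the costliest chore. The holder of $g_{|M|}$ pays all of it plus roughly a $1/n$ share of the rest, and block averaging never does better than about $2-\frac1n$. Note also that quotas never bind for balanced sizes, since $q^-\le |M|/n\le q^+$. So the remark about a partial block and reusing the goods handling points to nothing concrete.

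The fallback does not rescue this, because the proposed reduction is not valid for chores: removing an agent together with a costly chore can make the other agents' MMS worse. Take $n=2$, identical costs $3,3,2,2,2$ and no quotas; the MMS cost is $6$. Every chore exceeds $\frac{n-1}{2n}\cdot 6$, so your rule fires and gives $\{3\}$ alone to one agent. That leaves the other agent with cost $9>\frac54\cdot 6$. The paper stresses that its chores algorithm is non-recursive precisely because such reductions are unavailable.

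What is missing are the paper's two ingredients.
\begin{enumerate}
\item[(i)] The proxy bound $\hat\mu_i=\min\bigl(\{2v_i(g_{|M|-|N|})\}\cup\{\text{suffix averages of the initial bags}\}\bigr)\ge\mu_i(\cI)$. Its first term comes from \cref{thm:common:bound-chores} with $d=1$, and it caps every chore outside the $n$ costliest at $\frac12|\hat\mu_i|$.
\item[(ii)] Deliberately \emph{unbalanced} initial bags. The bags holding the costliest chores are made as small as the quotas permit and are filled with the cheapest chores. They are then grown by moves and swaps only while some remaining agent still values the bag at least $(\alpha-\frac12)\hat\mu_i$. The suffix-sum invariant \eqref{cond:single:chores:value-sum} carries the lost slack forward.
\end{enumerate}
It is this freedom in bundle sizes, not a dealing order, that yields $\frac{3n-1}{2n}$.
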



For the more general setting with categorized items, we derive \cref{thm:multi:goods:half,thm:multi:chores:two} for goods and chores, respectively, in \cref{sec:multi} by leveraging \citet{hummel2022maximin}'s results for the special case without lower quotas.

\begin{restatable}{theorem}{MultiGoodsMain} \label{thm:multi:goods:half}
  For an arbitrary (multi-category) instance of goods $\cI = \left(N,M,(v_i)_{i\in N},\cC,(q_{C}^-,q_{C}^+)_{C\in\cC}\right)$, a $\left(\frac{1}{2 - \left(\max\left\{|N|,1\right\}\right)^{-1}}\right)$-MMS allocation for $\cI$ can be computed in time $O(|N||M|\log |M|)$.
\end{restatable}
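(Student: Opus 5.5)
By \cref{thm:common:ordered}, we may assume that $\cI$ is ordered. Let $n \coloneqq |N|$ and $\alpha \coloneqq \frac{n}{2n-1}$. We will extend \citet{hummel2022maximin}'s $1/2$-approximation for cardinality constraints. The plan has two phases: a reduction phase based on \cref{thm:common:valid-reduction,thm:common:cor-valid-reduction}, followed by a bag-filling or round-robin phase that handles the residual instance.

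\textbf{Step 1 (normalization and reductions).} Scale the valuations so that $\mu_i(\cI)=1$ for every agent $i$; agents with $\mu_i=0$ are trivially satisfied.
\begin{itemize}
\item With $d=0$ and some category $C^*$, the bundle $B$ of \cref{thm:common:valid-reduction} consists of the single most valuable item $g^{C^*}_1$ together with the least valuable items needed to meet the quotas (the lower quotas, and the ``forced'' amounts $|C|-q_C^+(n-1)$).
\item We therefore repeatedly look for an agent $i^*$ and a category $C^*$ with $v_{i^*}(B)\ge \alpha$. If one exists, we give $B$ to $i^*$ and recurse on $\cI'$ via \cref{thm:common:cor-valid-reduction}. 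Note that $\alpha$ is computed with respect to the \emph{original} $n$, and MMS values do not decrease.
\end{itemize}
When no such reduction applies, every single good satisfies $v_i(g)<\alpha$ for every remaining agent $i$.

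\textbf{Step 2 (residual allocation).} In the residual instance with $n'$ agents, first give every agent its lower-quota share from each category. To make this fair, assign the lower-quota items so that each agent receives an item-wise comparable share: split each category's items into consecutive blocks and allocate in a balanced way. The remaining items are then distributed respecting the upper quotas $q_C^+-q_C^-$. For this, use a category-wise round-robin in the style of \citet{biswas2018fair}, or \citet{hummel2022maximin}'s procedure, ordered so that each agent's loss relative to the best agent is bounded by one item per category.

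We want to show that each agent gets at least $\alpha$. The argument is a counting one. An agent's MMS of $1$ implies that the total value satisfies $v_i(M)\ge n'$ (taking the MMS with respect to the residual instance, which is at least $1$). Round-robin within each category, with rotated starting agents, yields $v_i(A_i)\ge \frac{1}{n'}\bigl(v_i(M)-(n'-1)\max_g v_i(g)\bigr)$, or a refined category-wise analogue of this bound. Combined with $\max_g v_i(g)<\alpha$, this gives
\[
v_i(A_i)\ge \frac{n'-(n'-1)\alpha}{n'} \ge 1-\alpha\frac{n-1}{n}=\alpha
\]
exactly when $\alpha=\frac{n}{2n-1}$. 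This computation is what explains the target ratio $\frac{1}{2-1/n}$.

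\textbf{Main obstacle.} The hardest step is making the round-robin guarantee hold per category under lower and upper quotas simultaneously, with the loss bounded by a single item overall rather than one item per category. My plan is to mimic \citet{hummel2022maximin}: process the categories sequentially, and in each category let agents pick in an order determined by an envy-graph or rotating order, so that the envy losses telescope across categories. Lower quotas are enforced by first running exactly $q_C^-$ full rounds in category $C$, and then continuing the rounds with the agents who have hit their upper quota $q_C^+$ skipped. Skipped agents pick last in any partial round, so they are never the ones harmed.

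Finally, the running time is dominated by the sorting in \cref{thm:common:ordered}. Each reduction check can be performed incrementally, which gives an overall running time of $O(|N||M|\log|M|)$.
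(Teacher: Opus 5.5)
Your reduction phase is the paper's ($d=0$ in \cref{thm:common:valid-reduction}), but the residual phase has a genuine gap. The bound $v_i(A_i)\ge\frac{1}{n'}\bigl(v_i(M')-(n'-1)\max_g v_i(g)\bigr)$ is what you get by summing an EF1 inequality $v_i(A_i)\ge v_i(A_j)-\max_g v_i(g)$ over $j\ne i$. So it needs envy bounded by one item per other agent \emph{in total}. You defer exactly this to the ``main obstacle'', and the mechanisms you list do not deliver it:
\begin{itemize}
\item A valuation-oblivious rotating order bounds envy only by one item per category. Take $n'=2$, identical agents, and quotas $(1,1)$. Let the categories where $i$ starts hold two worthless items, and those where the other agent starts hold items of value $a$ and $0$. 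With enough categories no single item triggers a reduction, yet $i$ ends with $0$ although $\mu_i>0$.
\item Pre-assigning lower-quota ``blocks'' can destroy EF1.
\item The multi-category algorithm of \citet{hummel2022maximin} is bag-filling, so there is no round-robin telescoping there to mimic.
\end{itemize}

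What works is \citet{biswas2018fair}'s scheme. In each category, run round-robin along a topological order of the envy graph (enviers first), then eliminate envy cycles. Two facts, absent from your plan, make it compatible with lower quotas. First, plain round-robin in a residual category $C'$ gives every agent $\lfloor |C'|/n'\rfloor$ or $\lceil |C'|/n'\rceil$ items. Both lie in $[q_C^-,q_C^+]$ by \cref{eq:def:quotas} for the residual instance, which is well-defined by \cref{thm:common:valid-reduction}. So no lower-quota phase or skipping is needed. Second, cycle elimination permutes whole bundles, so feasibility survives, and the EF1 induction is unchanged. If $i$ picks before $j$ in $C$, then $i$ does not envy $j$'s new items. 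If $j$ picks first, then $i$ did not envy $j$ beforehand, so only $j$'s first pick in $C$ can create envy.

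Step 1 also breaks the time bound: normalizing to $\mu_i(\cI)=1$ requires computing maximin shares, which is NP-hard. Use instead the paper's proxy $\hat\mu_i=v_i(M)/|N|\ge\mu_i(\cI)$, recomputed at every recursion level. Then:
\begin{itemize}
\item the reduction test becomes $v_{i^*}(B)\ge\alpha\hat\mu_{i^*}$;
\item the chain $\alpha\hat\mu_i\ge\alpha\mu_i(\cI')\ge\alpha\mu_i(\cI)$ carries the recursion;
\item your final computation only uses $v_i(M')=n'\hat\mu_i$ and $\max_g v_i(g)<\alpha\hat\mu_i$.
\end{itemize}
Your claim that sorting dominates the running time is likewise unsupported once an envy graph must be rebuilt and de-cycled after every category. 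This is polynomial, but fitting it into $O(|N||M|\log|M|)$ needs a separate argument.

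With these repairs your route is correct and genuinely different from the paper's. After the same reduction, the paper never invokes EF1. It runs the bag-filling of \cref{alg:multi:goods}, growing each bag category by category from the $\lfloor\cdot\rfloor$ least valuable toward the $\lceil\cdot\rceil$ most valuable items until someone values it at $\alpha\hat\mu_i$, and maintains the potential \eqref{cond:multi:goods:value-sum}. Your version yields an EF1 residual allocation as a by-product. The paper's avoids envy-graph bookkeeping and mirrors its single-category algorithm.
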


\begin{restatable}{theorem}{MultiChoresMain} \label{thm:multi:chores:two}
  For an arbitrary (multi-category) instance of chores $\cI = \left(N,M,(v_i)_{i\in N},\cC,(q_{C}^-,q_{C}^+)_{C\in\cC}\right)$, a $\left(2 - \frac{1}{\max\left\{|N|,1\right\}}\right)$-MMS allocation for $\cI$ can be computed in time $O(|N||M|\log |M|)$.
\end{restatable}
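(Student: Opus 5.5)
The plan is to reduce to ordered instances by \cref{thm:common:ordered}, and then run a single \emph{balanced, category-wise round-robin} whose turn order carries over from one category to the next. We then show that this allocation is feasible and ``envy-free up to one chore'' for every agent. That property gives the ratio $2 - 1/n$ through two standard bounds on $\mu_i(\cI)$. Throughout, let $n \coloneqq |N|$; the case $n \le 1$ is trivial.

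\emph{Step 1 (feasibility of balanced bundles).} By \cref{eq:def:quotas}, $q_C^- \le \lfloor |C|/n \rfloor \le \lceil |C|/n\rceil \le q_C^+$ for every $C\in\cC$. Hence any allocation in which every agent gets $\lfloor |C|/n\rfloor$ or $\lceil |C|/n\rceil$ items of each category $C$ is feasible, and the quotas play no further role. The algorithm fixes a cyclic order $1,2,\ldots,n$ of the agents and processes the categories one after another. Inside a category $C$ it hands out the chores from worst to least bad, i.e.\ $g^C_{|C|}, g^C_{|C|-1},\ldots,g^C_1$. Each chore goes to the next agent in the cyclic order, and the pointer is \emph{not} reset when the algorithm moves to the next category. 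Each category is then split into rounds of $n$ consecutive chores, so its allocation is balanced, and the whole procedure runs in $O(|N||M|\log|M|)$ time after the ordering step.

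\emph{Step 2 (EF1 for chores).} Fix agents $i$ and $j$; the goal is to show $v_i(A_i) \ge v_i(A_j) + v_i(h)$ for a single chore $h \in A_i$. Within a category, consider the subsequence of turns belonging to $i$ and $j$. Because the instance is ordered and chores are dealt from worst to best, whichever of $i,j$ moves first in a category receives the worse item in each of their paired turns. The carried-over pointer means that the agent who moves first among $\{i,j\}$ alternates between consecutive categories. The only exception is a category in which exactly one of them gets an extra item, and in that case the pointer shifts accordingly. The claim to establish is that, summed over all categories, the items $i$ receives can be matched injectively to items of $j$ that are no better for $i$, with exactly one unmatched item $h\in A_i$ left over. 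This is the global analogue of the classical round-robin argument, applied to the concatenated sequence of picks.

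\emph{Step 3 (bounding $\mu_i$).} From Step 2, summing over $j$ gives $n\,v_i(A_i) \ge v_i(M) + (n-1)v_i(h)$. Every feasible partition has some bundle of value at most $v_i(M)/n$, so $v_i(M) \ge n\mu_i(\cI)$. The bundle containing $h$ in agent $i$'s MMS partition has value at most $v_i(h)$, so $v_i(h) \ge \mu_i(\cI)$. Combining these yields $v_i(A_i) \ge \mu_i(\cI) + \frac{n-1}{n}\mu_i(\cI) = \left(2-\frac1n\right)\mu_i(\cI)$.

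The main obstacle is Step 2, i.e.\ making the cross-category injection rigorous. The risk is that, within one category, the number of rounds of $i$ and $j$ differ, and the ``extra'' burden could then accrue to $i$ in several categories. I would first verify this via the invariant that, after each category, the difference (number of $i$-turns) $-$ (number of $j$-turns) along the global sequence lies in $\{0,1\}$ or $\{-1,0\}$ depending only on whether $i$ precedes $j$ cyclically. If that invariant fails, the fallback is to build the global picking sequence directly as one long round-robin over the concatenation of categories, each sorted worst-first. That sequence is automatically balanced per category, and EF1 follows from the standard round-robin lemma.
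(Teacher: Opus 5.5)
\paragraph{Feedback on the proof proposal.} Your Steps~1 and~3 are fine. Balanced bundles are feasible. EF1 for chores does give $\bigl(2-\frac1n\bigr)$-MMS via $v_i(M)\ge n\,\mu_i(\cI)$ and $v_i(h)\ge\mu_i(\cI)$; these are exactly the two bounds the paper packs into $\hat\mu_i=\min\{v_i(M)/|N|,\min_g v_i(g)\}$.

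The gap is Step~2: the allocation you define is not EF1. The first mover among $\{i,j\}$ does not alternate between categories, because the pointer advances by $|C| \bmod n$. Whenever $n$ divides every $|C|$, the same agent opens every category and receives the worst chore of each.

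Concretely, let $n=2$ and take two categories of size $2$ with $q^-_C=q^+_C=1$. Each category contains one chore of value $-1$ and one of value $0$ for both agents. Your procedure gives agent~$1$ both $-1$ chores, so $v_1(A_1)=-2$. Giving each agent one $-1$ chore is feasible, so $\mu_1(\cI)=-1$ and $v_1(A_1)<\frac32\,\mu_1(\cI)$. EF1 fails as well: $A_1$ minus any chore is still worth $-1<0=v_1(A_2)$. With $n$ agents and $n$ categories of size $n$ (quotas $(1,1)$, one $-1$ chore per category), agent~$1$ gets $-n$ against $\mu_1(\cI)=-1$. So the ratio is not even bounded by a constant.

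Neither of your safeguards catches this. The turn-count invariant does hold here, since every agent gets exactly one chore per category, but it controls counts, not values. Your fallback is a single round-robin over the concatenation of the worst-first sorted categories, which is literally the same allocation. The standard round-robin lemma does not apply to it: that lemma needs the dealt sequence to be monotone in value, whereas yours jumps back to the worst chore at every category boundary. Sorting globally would restore monotonicity but destroy the per-category balance that Step~1 relies on.

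What is missing is a mechanism that stops the burden from piling up on one agent across categories. An EF1 route would need the order within each category to adapt to the current envy relations, as in Biswas and Barman's algorithm for goods under cardinality constraints. For chores, the accompanying cycle-elimination step must also be chosen so that it preserves EF1, which is not automatic.

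The paper avoids EF1 altogether and uses the category-wise bag-filling routine \textsc{ApproxCategorizedChores}. It starts each bag from the $\lceil |C\cap M^{(t)}|/t\rceil$ worst remaining chores of every category. It then improves the bag one chore at a time until some remaining agent values it at $\alpha\hat\mu_i$ or more. Throughout, it maintains the invariant $v_i(M^{(t)})\ge\bigl(t+(|N|-t)(2-\alpha)\bigr)\hat\mu_i$. Because a single update raises $v_i(B)$ by at most $-\hat\mu_i$, each bag handed out removes enough burden for the agents who remain.
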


\section{A $\left(\frac{2n}{3n-1}\right)$-MMS allocation algorithm for single-category goods} \label{sec:single:goods}

\subsection{An Existing Approach for Upper Quotas} \label{sec:single:goods:novelty}

Before discussing the proposed algorithm, we revisit an existing approach for approximate MMS guarantees only with upper quotas \cite{hummel2022maximin}.
Whether constrained or unconstrained, one of the most common techniques is a simple greedy algorithm known as \emph{bag-filling} \citep{ghodsi2021fair,garg2019approximating,garg2021improved,akrami2023simplification,akrami2024breaking,li2021fair-scheduling,li2023fair,deng2024budgeted,heidari2026improved}.
In each round of this algorithm, a new bag is created as an empty set, and remaining items are added one by one until an unassigned agent finds it sufficiently valuable; then the round ends with the bag assigned to this agent.
By adopting an additional type of update, \citet{hummel2022maximin} develops an algorithm that computes a $2/3$-MMS allocation under an upper quota.
In each round of their algorithm, a new bag is created with the least valuable remaining items, many enough to ensure that the other remaining items can be divided into bundles under the upper quota.
Then the other items are added one by one, as in the standard bag-filling, but not to violate the upper quota.
If the bag reaches the quota and is still not valuable enough for any unassigned agent, the algorithm repeatedly exchanges the least valuable item in the bag with a higher-valued remaining item.
This way of updates allows both the remaining items added to the bag and those not added to be sufficiently valuable and not too many (relative to the upper quota).
However, additionally imposing a lower quota would also need both of them not to become too few, thereby raising a non-trivial challenge.
Indeed, the correctness of their algorithm crucially relies on invariants concerning the values of certain subsets of remaining items, which would no longer hold consistently with lower quotas.

\subsection{Overview of Our Algorithm}

\begin{algorithm}[tb]
  \centering

  \begin{algorithmic}[1]
    \Function{ApproxGoods}{$\cI = \left(N,M,(v_i)_{i\in N},(q^-,q^+)\right),\alpha$} \label{alg:single:goods:main-start}

    \If{$|M| \le |N|$} \label{alg:single:goods:check-few-items}
    \Comment{A trivial case.}
    \State $A_i\gets
    \begin{cases}
      \{g_{i}\} & \text{if } i \le |M|, \\
      \emptyset & \text{otherwise}
    \end{cases}$ for each $i \in N = \{1,\ldots,|N|\}$. \label{alg:single:goods:bundle-few-items}
    \State \Return $(A_i)_{i\in N}$ \label{alg:single:goods:allocation-few-items}
    \EndIf \label{alg:single:goods:end-check-few-items}


    \For{$k \gets |N|,|N|-1,\ldots,1$} \label{alg:single:goods:B_k-init-for-start} \\
    \Comment{Initialize the bags; see \cref{fig:single:goods:init}.}
    \State $b_k \gets \min\left\{q^+, |M| - \sum_{k'=k+1}^{|N|}b_{k'} - (k - 1)\max\left\{q^-,1\right\}\right\}$. \label{alg:single:goods:b_k-init-n}
    \State $\begin{aligned}
      B^{(|N|)}_k \gets & \left\{g_k\right\} \cup \\
      & \left\{g_{|N| + j} \mid \sum_{k'=k+1}^{|N|}(b_{k'} - 1) < j \le \sum_{k'=k}^{|N|}(b_{k'} - 1) \right\}.
    \end{aligned}$ \label{alg:single:goods:B_k-init-n}
    \EndFor \label{alg:single:goods:B_k-init-for-end}

    \State $\hat\mu_i \gets \min \left\{ \frac{1}{|N|-r+1}\, v_i\left(\bigcup_{k = r}^{|N|} B^{(|N|)}_k\right) \mid 1\le r\le |N| \right\}$ \\ for each $i\in N$. \label{alg:single:goods:mu-hat}

    \If{$\exists i^*\in N$ s.t. $v_{i^*}\!\left(\left\{g_{|N|}, g_{|N|+1}\right\}\right) \ge \alpha\,\hat\mu_{i^*}$} \label{alg:single:goods:check-two-from-n-th} \\
    \Comment{Valid reduction.}
    \State $\begin{aligned}
      A_{i^*}\gets & \left\{g_{|N|}, g_{|N|+1}\right\} \cup \left\{g_{|M|-j} \mid \right. \\
      & \quad \left. 0\le j< \max\left\{q^-, |M| - q^+ (|N| - 1)\right\} - 2 \right\}.
    \end{aligned}$ \label{alg:single:goods:valid-reduce-assign}
    \State $\cI' \gets \left(N\setminus\left\{i^*\right\},M\setminus A_{i^*}, (v_{i}|_{2^{M\setminus A_{i^*}}})_{i\in N\setminus\{i^*\}},\left(q^-, q^+\right)\right)$. \hspace{-5pt} \label{alg:single:goods:valid-reduced-instance}
    \State $(A_{i})_{i\in N\setminus\{i^*\}}\gets\Call{\hyperref[alg:single:goods:main-start]{ApproxGoods}}{\cI',\alpha}$. \label{alg:single:goods:valid-reduce}
    \State \Return $(A_i)_{i\in N}$ \label{alg:single:goods:allocation-valid-reduction}
    \EndIf \label{alg:single:goods:valid-reduction-end}


    \State $N^{(|N|)}\gets N$. \label{alg:single:goods:var-init}

    \For{$t \gets |N|,|N|-1,\ldots,1$} \label{alg:single:goods:outer-for-start}
    \State $(B^{(t-1)}_1, B^{(t-1)}_2, \ldots, B^{(t-1)}_{t-1}, B) \gets (B^{(t)}_1, B^{(t)}_2, \ldots, B^{(t)}_{t})$. \label{alg:single:goods:bag-init}

    \For{$k \gets t - 1,t - 2,\ldots,1$} \label{alg:single:goods:inner-for-start}

    \While{$\left(\exists i\in N^{(t)}~\mbox{s.t.}~v_i(B) \ge \frac{3}{2}\,\alpha\,\hat\mu_i \right)$ and $B\setminus\{g_t\} \neq B^{(t)}_k\setminus\{g_k\}$} \label{alg:single:goods:inner-while-start}
    \Comment{The latter
    implies that some items in $B\setminus\{g_t\}$ have not moved in the current while-loop.}
    \State $g\gets$ the least valuable item in $B\setminus\{g_t\}$. \label{alg:single:goods:get-from-B}
    \If{$|B| > |B^{(t)}_k|$}
    \State $B\gets B\setminus\{g\}$, $B^{(t-1)}_k\gets B^{(t-1)}_k\cup\{g\}$. \label{alg:single:goods:move} \\
    \Comment{Move the item $g$; see \cref{fig:single:goods:move}.}
    \Else
    \Comment{$B$ and $B^{(t)}_k$ have become of equal size.}
    \State $h\gets$ the most valuable item in $B^{(t-1)}_k\setminus\{g_k\}$. \hspace{-5pt} \label{alg:single:goods:get-from-B_k}
    \State $B\gets (B\setminus\{g\})\cup\{h\}$, \\ $B^{(t-1)}_k\gets (B^{(t-1)}_k\setminus\{h\})\cup\{g\}$. \label{alg:single:goods:swap} \\
    \Comment{Swap the items $g$ and $h$; see \cref{fig:single:goods:swap}.}
    \EndIf
    \EndWhile \label{alg:single:goods:inner-while-end}
    \EndFor \label{alg:single:goods:inner-for-end}

    \State Find $i^{(t)}\in N^{(t)}$ s.t. $v_{i^{(t)}}(B) \ge \alpha\,\hat\mu_{i^{(t)}}$.\label{alg:single:goods:before-assign}
    \State $A_{i^{(t)}}\gets B$, $N^{(t - 1)}\gets N^{(t)}\setminus \left\{i^{(t)}\right\}$. \label{alg:single:goods:assign}

    \EndFor \label{alg:single:goods:outer-for-end}

    \State \Return $(A_i)_{i\in N}$. \label{alg:single:goods:allocation-final}
    \EndFunction
  \end{algorithmic}

  \caption{Compute a $\left(\frac{2}{3-(\max\{1,n\})^{-1}}\right)$-MMS allocation for a single-category ordered instance of goods with $n$ agents.}
  \label{alg:single:goods}
\end{algorithm}

\begin{figure}[tb]
  \centering

  \begin{subfigure}{0.45\textwidth}
    \centering
    \includegraphics[scale=0.3]{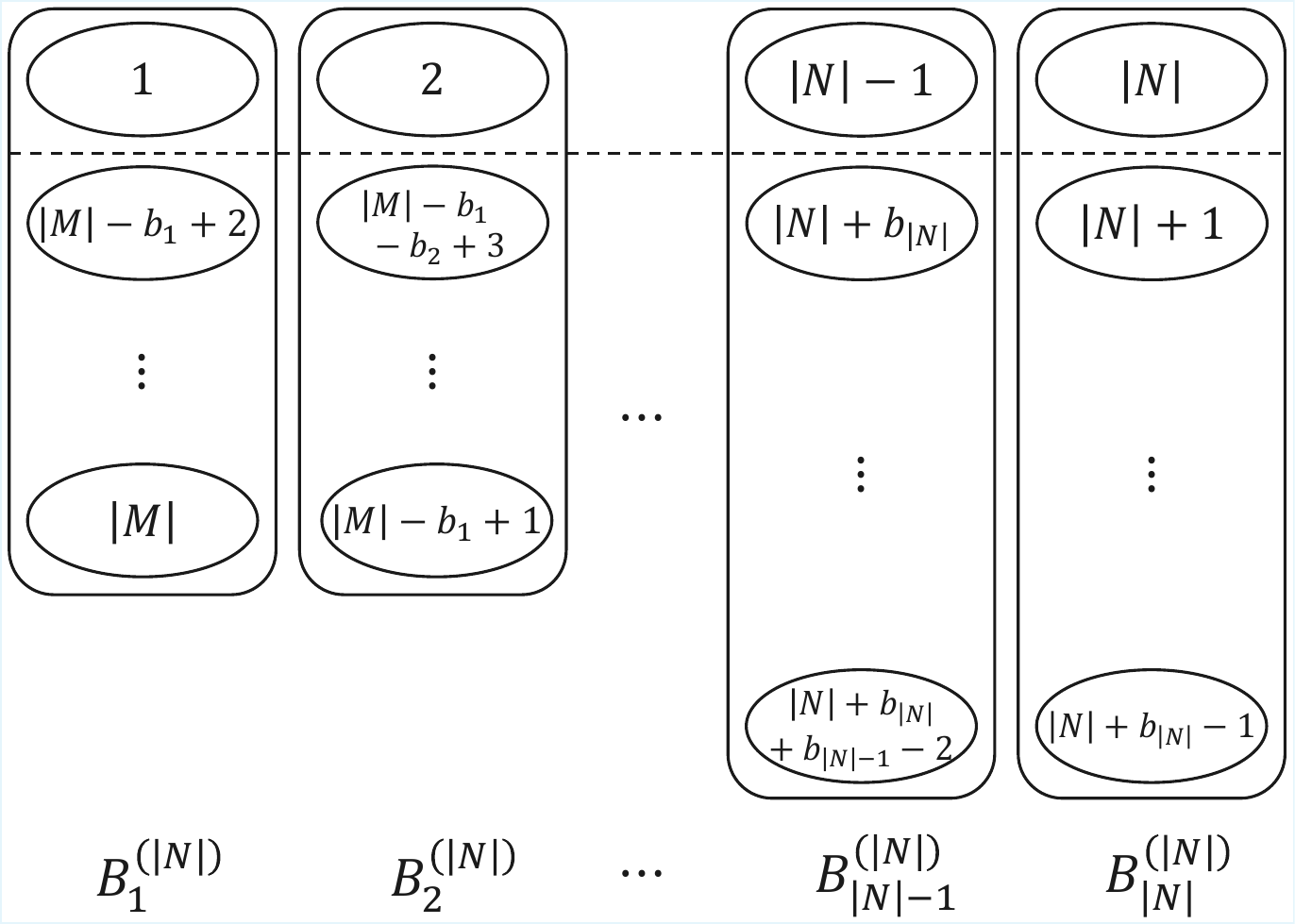}
    \caption{Initialization (Lines~\ref{alg:single:goods:B_k-init-for-start} to \ref{alg:single:goods:B_k-init-for-end}).}
    \label{fig:single:goods:init}
  \end{subfigure} \\ \vspace{10pt}
  \begin{subfigure}{0.22\textwidth}
    \centering
    \includegraphics[scale=0.3]{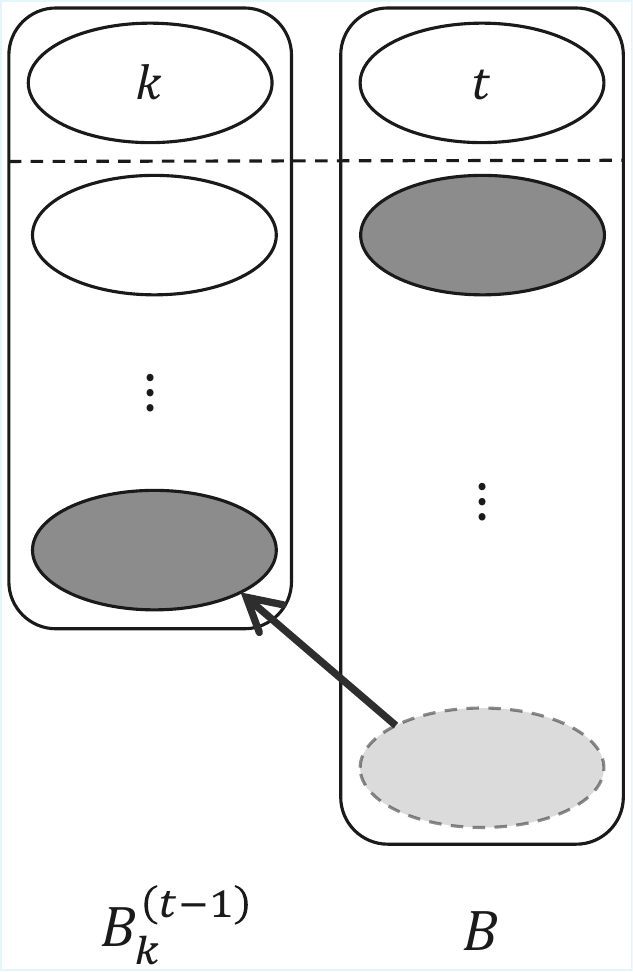}
    \caption{``Move'' (Line~\ref{alg:single:goods:move}).}
    \label{fig:single:goods:move}
  \end{subfigure}
  \begin{subfigure}{0.22\textwidth}
    \centering
    \includegraphics[scale=0.3]{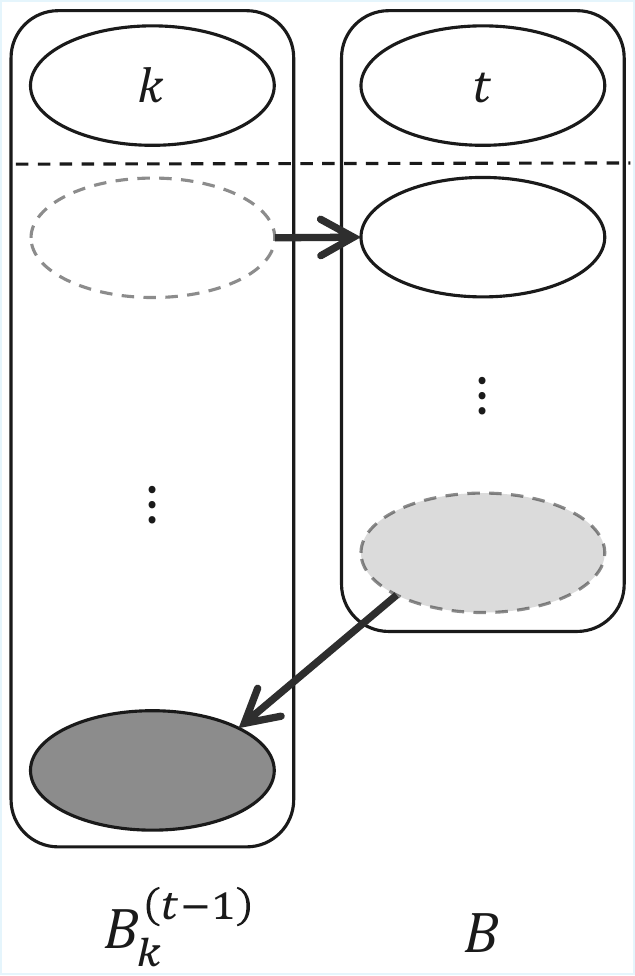}
    \caption{``Swap'' (Line~\ref{alg:single:goods:swap}).}
    \label{fig:single:goods:swap}
  \end{subfigure}

  \caption{%
    An illustration of key steps of \hyperref[alg:single:goods:main-start]{\textproc{ApproxGoods}} in \cref{alg:single:goods}, where each item $g_j\in M$ is denoted by its index~$j$.
    \subref{fig:single:goods:init}
    At Lines~\ref{alg:single:goods:B_k-init-for-start} to \ref{alg:single:goods:B_k-init-for-end}, the item set $M$ is split into the initial bags $B^{(|N|)}_1,B^{(|N|)}_2,\ldots,B^{(|N|)}_{|N|}$ in a greedy manner.
    \subref{fig:single:goods:move}
    At Line~\ref{alg:single:goods:move}, an item is moved from the bag $B$ to $B^{(t-1)}_k$; this continues until either
    the size of these bags are completely exchanged,
    or the bag value $v_i(B)$ falls below a threshold, $\frac{3}{2}\alpha\hat\mu_i$, for every remaining agent $i\in N^{(t)}$.
    \subref{fig:single:goods:swap}
    At Line~\ref{alg:single:goods:swap}, where $|B| = |B^{(t)}_k|$, an item from $B$ is exchanged with another from $B^{(t-1)}_k$; this continues until either
    all items in these bags, except $g_k$ and $g_t$, are completely exchanged from their initial state,
    or $v_i(B)$ falls below $\frac{3}{2}\alpha\hat\mu_i$ for every $i\in N^{(t)}$.
  }
  \label{fig:single:goods}
\end{figure}

We show that \hyperref[alg:single:goods:main-start]{\textproc{ApproxGoods}} defined in \cref{alg:single:goods} certifies \cref{thm:single:goods:two-third} together with the reduction to ordered instances.
To address the above issue, we maintain multiple bags so that at every step, remaining items are left both sufficiently valuable and neither too many nor too few for unassigned agents.
In its main routine (Lines \ref{alg:single:goods:var-init} to \ref{alg:single:goods:assign}), the algorithm incrementally updates a collection of bags to maintain certain invariant conditions, assigning one of them to an unassigned agent in each of the $|N|$ rounds.
Unlike the standard bag-filling method, the bag $B$ assigned in each round starts with items of sufficiently large size and value for any remaining agent and gradually becomes of less size and value through two types of updates: moving an item from $B$ to another bag (\cref{fig:single:goods:move}) and swapping an item in $B$ with one in another bag (\cref{fig:single:goods:swap}).
These updates are stopped when the value of this bag falls below a threshold for every agent.
As formally shown in \cref{sec:single:goods:proof}, valid reduction also helps preserve the invariants, while the polynomial runtime is attained by employing upper bounds on the agents' MMS values.

\paragraph{Initialization of the Bags}
All items in $M$ are partitioned by Lines~\ref{alg:single:goods:B_k-init-for-start} to \ref{alg:single:goods:B_k-init-for-end} into the initial bags $B^{(|N|)}_1,B^{(|N|)}_2,\ldots,B^{(|N|)}_{|N|}$.%
\footnote{Throughout \cref{alg:single:goods}, the superscript indicates the number of agents yet to be assigned a bag, decreasing from $|N|$ to $0$.}
Their respective bag sizes $b_1,b_2,\ldots,b_{|N|} \in \left\{\max\left\{1,q^-\right\},\ldots,q^+ - 1,q^+\right\}$ add up to $|M|$ and are lexicographically minimized among all such integer partitions; this is feasible since $|M| > |N|$ and $q^-|N| \le |M| \le q^+|N|$.
After the most valuable $|N|$ items, $g_1,g_2,\ldots,g_{|N|}$, are placed into distinct bags, the other items are greedily packed into the bags from $B^{(|N|)}_{|N|}$, $B^{(|N|)}_{|N|-1},\ldots,$ to $B^{(|N|)}_1$ in descending order of value, subject to their fixed sizes; see also \cref{fig:single:goods:init} and note that this order is identical across all agents as $\cI$ is ordered.
This configuration satisfies the properties stated in \cref{thm:single:goods:value-sum-MMS}, which serve as the starting point for the invariant conditions formalized in \cref{def:single:goods:invariants}.

\paragraph{Valid Reduction}
After initializing the bags, a valid reduction may occur in Lines~\ref{alg:single:goods:check-two-from-n-th} to \ref{alg:single:goods:valid-reduction-end} if the bundle $\{g_{|N|},g_{|N|+1}\}$ is valued at $\alpha\hat\mu_{i^*}$ or more by some agent $i^*$; recall \cref{thm:common:cor-valid-reduction} and also notice \cref{thm:single:goods:normalize} below.
By repeatedly applying this reduction, we eventually reach an irreducible instance in which every item $g_{|N|+1}, g_{|N|+2}, \ldots, g_{|M|}$ has a value less than $\frac{1}{2} \alpha \hat{\mu}_i$ for each agent $i$ (\cref{thm:single:goods:valid-reduce}), thereby providing a crucial setup for the main routine that follows.

\paragraph{Iterative Updates and Assignment of the Bags}
Once the instance becomes irreducible, the bags are iteratively updated and assigned to agents through the outer for-loop (Lines~\ref{alg:single:goods:outer-for-start} to \ref{alg:single:goods:outer-for-end}).
For each $t\in\{|N|,|N|-1,\ldots,1\}$, the inner for-loop (Lines~\ref{alg:single:goods:inner-for-start} to \ref{alg:single:goods:inner-for-end}) transforms the remaining $t$ bags $B^{(t)}_1, B^{(t)}_2, \ldots, B^{(t)}_{t}$ as a whole into the new $t - 1$ bags $B^{(t-1)}_1, B^{(t-1)}_2, \ldots, B^{(t-1)}_{t-1}$ plus the bundle $B$ that is finally assigned to a remaining agent $i^{(t)}$.
Let's take a look at its first iteration (with $k \gets t - 1$), where only $B$ and $B^{(t-1)}_{t - 1}$ are updated in two phases.
We first move items from $B$ to $B^{(t-1)}_{t - 1}$ until their sizes match each other's initial sizes; see \cref{fig:single:goods:move}.
Then we switch to swapping items between them until $B = (B^{(t)}_{t-1}\setminus\{g_{t-1}\})\cup\{g_t\}$ (and equivalently until $B^{(t-1)}_{t-1} = (B^{(t)}_{t}\setminus\{g_t\})\cup\{g_{t-1}\}$); see \cref{fig:single:goods:swap}.
Similar operations are performed with $k \gets t - 2, t - 3,\ldots,1$, gradually sliding the remaining items (except $g_1,g_2\ldots,g_t$) between the bags, and monotonically reducing both the value and the size of $B$, as long as $v_i(B) \ge \frac{3}{2}\alpha\hat\mu_i$ for some remaining agent $i$.
The validity of this entire routine relies on the invariant conditions and carefully examined in the proof of \cref{thm:single:goods:invariant}.

\subsection{Proof of \cref{thm:single:goods:two-third}} \label{sec:single:goods:proof}

In what follows, we show by induction on $|N|$ that \hyperref[alg:single:goods:main-start]{\textproc{ApproxGoods}} returns an $\alpha$-MMS allocation for any single-category ordered instance of goods $\cI = \left(N, M, (v_i)_{i\in N}, (q^-,q^+)\right)$ and any real constant $\alpha \in \left[\frac{2}{3}, \frac{2}{3 - (\max\left\{|N|, 1\right\})^{-1}}\right]$, in time $O(|N||M|)$.
This induction hypothesis is used to obtain \cref{thm:single:goods:valid-reduce}, which concerns the recursion in Line~\ref{alg:single:goods:valid-reduce}.
We eventually derive \cref{thm:single:goods:two-third} by combining \cref{thm:common:ordered,thm:single:goods:few-items,thm:single:goods:valid-reduce,thm:single:goods:invariant-all,thm:single:goods:runtime}; refer to \cref{thm:common:ordered} for the complexity of the reduction to an ordered instance, which proves dominant.

First, a trivial MMS allocation can be obtained when the number of items is at most that of agents, i.e., when $|M| \le |N|$.

\begin{restatable}{lemma}{SingleGoodsFewItems} \label{thm:single:goods:few-items}
  If Line~\ref{alg:single:goods:bundle-few-items} is reached, an MMS allocation $(A_i)_{i\in N}$ for $\cI$ is returned at Line~\ref{alg:single:goods:allocation-few-items}.
\end{restatable}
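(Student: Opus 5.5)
We must show two things: the allocation built at Line~\ref{alg:single:goods:bundle-few-items} is feasible, and each agent $i$ receives at least $\mu_i(\cI)$. The plan is to first pin down what the quotas can be when $|M|\le|N|$, then compare against an arbitrary MMS partition.

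\emph{Feasibility.} By \cref{eq:def:quotas}, $q^-|N|\le|M|\le|N|$, so $q^-\le 1$. If $q^-=1$, then $|M|\ge|N|$, hence $|M|=|N|$ and every agent receives exactly one item. If $q^-=0$, empty bundles are allowed. For the upper quota, if $M\neq\emptyset$ then $|M|\le q^+|N|$ forces $q^+\ge 1$, so singleton bundles are allowed. If $M=\emptyset$, all bundles are empty, and this respects $q^-=0$, which \cref{eq:def:quotas} forces in that case. Either way, every bundle $A_i$ has size in $\{0,1\}$ consistent with $[q^-,q^+]$, and since $q^-=1$ implies $|M|=|N|$, no bundle is empty when $q^-=1$. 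Thus $(A_i)_{i\in N}\in\cF(\cI)$.

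\emph{MMS guarantee.} Fix $i\in N$.
\begin{itemize}
\item If $|M|<|N|$, every feasible partition of $M$ into $|N|$ bundles has an empty bundle, so $\mu_i(\cI)\le v_i(\emptyset)=0$. Because values are non-negative, $v_i(A_i)\ge 0\ge\mu_i(\cI)$ holds trivially.
\item If $|M|=|N|$, a partition in which every bundle is non-empty must consist of singletons. Any other feasible partition has an empty bundle and hence minimum value $0$. So $\mu_i(\cI)\le\max\{0,\min_{g\in M}v_i(g)\}=v_i(g_{|N|})$, using that $\cI$ is ordered and consists of goods. Agent $i$ receives $g_i$, and $v_i(g_i)\ge v_i(g_{|N|})\ge\mu_i(\cI)$ by orderedness.
\end{itemize}

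There is no real obstacle. The only care needed is the quota bookkeeping, namely that $q^+\ge1$ whenever $M\neq\emptyset$, and that the case $q^-=1$ coincides exactly with $|M|=|N|$.
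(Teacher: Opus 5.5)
Your proof is correct and follows essentially the same route as the paper. Feasibility comes from the quota inequality \eqref{eq:def:quotas}, and the MMS guarantee comes from the case split $|M|<|N|$, where $\mu_i(\cI)=0$, versus $|M|=|N|$, where $\mu_i(\cI)=v_i(g_{|N|})$. You spell out the quota bookkeeping that the paper leaves implicit, and you only prove the upper bounds on $\mu_i(\cI)$, which is all the comparison needs.
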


\begin{proof}
  The allocation $(A_i)_{i\in N}$ at Line~\ref{alg:single:goods:allocation-few-items} is feasible for $\cI$ due to \cref{eq:def:quotas}.
  If $|M| < |N|$, $\mu_i(\cI) = 0$ holds for each $i\in N$; otherwise, $\mu_i(\cI) = v_i\!\left(g_{|N|}\right)$ holds for each $i\in N$.
  Therefore, it is clear in either case that $(A_i)_{i\in N}$ is an MMS allocation for $\cI$.
\end{proof}

As described before, the bags are initialized by Lines~\ref{alg:single:goods:B_k-init-for-start} to \ref{alg:single:goods:B_k-init-for-end} so that we can ensure the properties in \cref{thm:single:goods:value-sum-MMS}, setting the ground for the invariant conditions defined later in \cref{def:single:goods:invariants}.

\begin{restatable}{lemma}{SingleGoodsValueSumMMS} \label{thm:single:goods:value-sum-MMS}
  The bundles $B^{(|N|)}_1,B^{(|N|)}_2,\ldots,B^{(|N|)}_{|N|}$ defined by Lines~\ref{alg:single:goods:B_k-init-for-start} to \ref{alg:single:goods:B_k-init-for-end} are mutually disjoint and satisfy the following:
  \begin{align}
    &B^{(|N|)}_1 \cup B^{(|N|)}_2 \cup \cdots \cup B^{(|N|)}_{|N|} = M, \label{eq:single:goods:union-init} \\
    &q^- \le |B^{(|N|)}_1| \le |B^{(|N|)}_2| \le \cdots \le |B^{(|N|)}_{|N|}| \le q^+, \label{eq:single:goods:size-init} \\
    \begin{split}
    &v_i\left(B^{(|N|)}_r \cup B^{(|N|)}_{r+1} \cup \cdots \cup B^{(|N|)}_{|N|}\right) \ge \left(|N| - r + 1\right)\mu_i(\cI) \\
    &\quad\forall r\in\{1,2,\ldots,|N|\},\forall i\in N.
    \end{split} \label{eq:single:goods:value-sum-MMS}
  \end{align}
\end{restatable}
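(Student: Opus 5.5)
The plan is to first analyse the bag sizes $b_1,\dots,b_{|N|}$, then read off disjointness and coverage from the index ranges, and finally prove \cref{eq:single:goods:value-sum-MMS} by a dominance argument against an MMS partition. Throughout, $|M|>|N|$, since Line~\ref{alg:single:goods:check-few-items} was passed. Write $m\coloneqq\max\{q^-,1\}$ and $R_k\coloneqq |M|-\sum_{k'=k+1}^{|N|}b_{k'}$ for the number of items not yet sized when $b_k$ is chosen, so that $b_k=\min\{q^+,R_k-(k-1)m\}$.

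\emph{Step 1 (sizes).} I will show by downward induction on $k$ that $k\,m\le R_k\le k\,q^+$.
\begin{itemize}
\item Base case: $R_{|N|}=|M|$, and $|M|\ge |N|m$ follows from $|M|>|N|$ and $q^-|N|\le|M|$, while $|M|\le q^+|N|$ is \cref{eq:def:quotas}. This also forces $q^+\ge 1$, hence $m\le q^+$.
\item Inductive step: the hypothesis gives $m\le b_k\le q^+$, so $R_{k-1}=R_k-b_k\ge (k-1)m$. Moreover, either $b_k=q^+$, giving $R_{k-1}\le (k-1)q^+$, or $R_{k-1}=(k-1)m\le(k-1)q^+$.
\end{itemize}
Consequently every $b_k\in[m,q^+]\subseteq[q^-,q^+]$, and $b_1=R_1$, so $\sum_k b_k=|M|$.

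For monotonicity $b_k\le b_{k+1}$: if $b_{k+1}=q^+$ it is immediate. Otherwise $R_{k+1}-b_{k+1}=k\,m$, so $R_k=k\,m$ and $b_k=m\le b_{k+1}$. This gives \cref{eq:single:goods:size-init}.

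\emph{Step 2 (partition).} The bags receive the distinct items $g_1,\dots,g_{|N|}$. The index ranges $\bigl(\sum_{k'>k}(b_{k'}-1),\sum_{k'\ge k}(b_{k'}-1)\bigr]$ are consecutive and disjoint and exhaust $\{1,\dots,|M|-|N|\}$, because $\sum_k(b_k-1)=|M|-|N|$. Hence the bags are disjoint with union $M$, which is \cref{eq:single:goods:union-init}.

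\emph{Step 3 (value bound).} Let $U_r\coloneqq\bigcup_{k\ge r}B^{(|N|)}_k$. Because the bags are filled from $B_{|N|}$ downward with items in decreasing index order, $U_r=\{g_r,\dots,g_{|N|}\}\cup\{g_{|N|+1},\dots,g_{|N|+s}\}$, where $|U_r|=\sum_{k\ge r}b_k$. In other words, $U_r$ consists of the $|U_r|$ most valuable items of $M\setminus\{g_1,\dots,g_{r-1}\}$, and this holds simultaneously for every agent because the instance is ordered.

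Fix $i$ and an MMS partition $(P_k)_{k=1}^{|N|}$. If fewer than $r-1$ parts are nonempty, some part is empty, so $\mu_i(\cI)=0$ and the claim is trivial. Otherwise, choose a set $X$ of $r-1$ nonempty parts that includes every part containing one of $g_1,\dots,g_{r-1}$; this is possible since those parts number at most $r-1$ and are nonempty. Let $S$ be the remaining $|N|-r+1$ parts. Then:
\begin{itemize}
\item $\bigcup S\subseteq M\setminus\{g_1,\dots,g_{r-1}\}$;
\item $|\bigcup S|\le (|N|-r+1)q^+$, since each part is feasible;
\item $|\bigcup S|\le |M|-(r-1)m$, since each part in $X$ has at least $m$ items.
\end{itemize}
From Step 1, $|U_r|=|M|-R_{r-1}$ and $R_{r-1}=\max\{(r-1)m,\,R_r-q^+\}$, which unrolls to $|U_r|=\min\bigl\{|M|-(r-1)m,\ \max_{r'\ge r}(\ldots)\bigr\}$. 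The quantity we actually need is $|U_r|\ge\min\{(|N|-r+1)q^+,\ |M|-(r-1)m\}$. I will verify this directly from the recursion $b_k=\min\{q^+,R_k-(k-1)m\}$: either every $b_k$ with $k\ge r$ equals $q^+$, or the first $k\ge r$ (counting down from $|N|$) with $b_k<q^+$ sets $R_{k-1}=(k-1)m$, after which $b_{k'}=m$ for all $k'<k$, so that $R_{r-1}=(r-1)m$.

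Given this, $\bigcup S$ is a subset of $M\setminus\{g_1,\dots,g_{r-1}\}$ of size at most $|U_r|$, while $U_r$ is the set of the $|U_r|$ top items there. Since values are nonnegative, $v_i(U_r)\ge v_i(\bigcup S)\ge(|N|-r+1)\mu_i(\cI)$.

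The main obstacle is this last size comparison, namely matching the lexicographically minimal size profile to the two cardinality limits on $\bigcup S$, together with the degenerate case of empty parts. Both are handled by the case split above.
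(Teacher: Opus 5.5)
Your proof is correct and follows the paper's route: you compare $U_r=\bigcup_{k\ge r}B^{(|N|)}_k$, the top prefix of $M\setminus\{g_1,\dots,g_{r-1}\}$, against the last $|N|-r+1$ parts of an MMS partition whose first $r-1$ parts absorb $g_1,\dots,g_{r-1}$. You also spell out two things the paper leaves implicit, namely the cardinality bound $|U_r|\ge\min\{(|N|-r+1)q^+,\,|M|-(r-1)\max\{q^-,1\}\}$ and the empty-part case forced by the $\max\{q^-,1\}$ in the size rule. One minor wording slip: the bags are filled in increasing index order (i.e., decreasing value order), not ``decreasing index order,'' though your formula for $U_r$ is right.
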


\begin{proof}
  Given \cref{eq:def:quotas} and $|M| > |N|$, these bundles are well-defined by Lines~\ref{alg:single:goods:B_k-init-for-start} to \ref{alg:single:goods:B_k-init-for-end}, mutually disjoint, and satisfying \cref{eq:single:goods:union-init,eq:single:goods:size-init}.
  We fix any pair $i\in N$ and $r\in\{1,2,\ldots,|N|\}$, for which the inequality in \cref{eq:single:goods:value-sum-MMS} is shown below.
  Let $(P_k)_{k=1}^{|N|}$ be agent $i$'s MMS partition, which can be assumed, without loss of generality, to satisfy that
  \begin{align*}
    P_1\cup P_2\cup\cdots\cup P_{r-1} &\supseteq \{g_1,g_2,\ldots,g_{r-1}\} &\forall r\in\{1,2,\ldots,|N|\}.
  \end{align*}
  Combining this with $(P_k)_{k=1}^{|N|}\in\cF(\cI)$ and the construction of $B^{(|N|)}_1,B^{(|N|)}_2,\ldots,B^{(|N|)}_{|N|}$ yields that
  \begin{align*}
    v_i\left(B^{(|N|)}_r \cup B^{(|N|)}_{r+1} \cup \cdots \cup B^{(|N|)}_{|N|}\right)
    &\ge v_i\left(P_r \cup P_{r+1} \cup \cdots \cup P_{|N|}\right) \\
    &\ge \left(|N| - r + 1\right) \mu_i(\cI),
  \end{align*}
  which concludes the proof.
\end{proof}



\cref{eq:single:goods:value-sum-MMS} of \cref{thm:single:goods:value-sum-MMS} also provides useful upper bounds on MMS values, $(\hat\mu_i)_{i\in N}$, defined by Line~\ref{alg:single:goods:mu-hat}.

\begin{restatable}{corollary}{SingleGoodsNormalize} \label{thm:single:goods:normalize}
  At Line~\ref{alg:single:goods:mu-hat}, it holds for every $i\in N$ that $\hat\mu_i \ge \mu_i(\cI)$.
\end{restatable}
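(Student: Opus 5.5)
The claim follows directly from \cref{eq:single:goods:value-sum-MMS} of \cref{thm:single:goods:value-sum-MMS}. The plan is to bound each term of the minimum at Line~\ref{alg:single:goods:mu-hat} from below by $\mu_i(\cI)$. Note that Line~\ref{alg:single:goods:mu-hat} is reached only when $|M| > |N|$, since otherwise the algorithm returns at Line~\ref{alg:single:goods:allocation-few-items}. Hence the bags $B^{(|N|)}_1,\ldots,B^{(|N|)}_{|N|}$ are exactly those produced by Lines~\ref{alg:single:goods:B_k-init-for-start} to \ref{alg:single:goods:B_k-init-for-end}, and \cref{thm:single:goods:value-sum-MMS} applies to them.

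Fix $i\in N$ and any $r\in\{1,2,\ldots,|N|\}$. By \cref{eq:single:goods:value-sum-MMS},
\begin{align*}
  v_i\left(\bigcup_{k=r}^{|N|} B^{(|N|)}_k\right) \ge (|N|-r+1)\,\mu_i(\cI).
\end{align*}
Since $|N|-r+1\ge 1 > 0$, dividing by it gives
\begin{align*}
  \frac{1}{|N|-r+1}\, v_i\left(\bigcup_{k=r}^{|N|} B^{(|N|)}_k\right) \ge \mu_i(\cI).
\end{align*}

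The minimum is taken over the finite nonempty range $1\le r\le |N|$, and every term in it is at least $\mu_i(\cI)$. Therefore $\hat\mu_i\ge\mu_i(\cI)$. No step here is a real obstacle: all the substantive work, namely the exchange argument comparing the greedy bags with an MMS partition, was already done in \cref{thm:single:goods:value-sum-MMS}.
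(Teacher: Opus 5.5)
Your proof is correct and follows the same route as the paper: the paper also derives the claim directly from Line~\ref{alg:single:goods:mu-hat} and \cref{eq:single:goods:value-sum-MMS} of \cref{thm:single:goods:value-sum-MMS}. You divide each of those inequalities by $|N|-r+1>0$ and take the minimum over $r$. Your remark that Line~\ref{alg:single:goods:mu-hat} is reached only when $|M|>|N|$, so that \cref{thm:single:goods:value-sum-MMS} applies, is a correct detail that the paper leaves implicit.
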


\begin{proof}
  This follows from Line~\ref{alg:single:goods:mu-hat} and \cref{eq:single:goods:value-sum-MMS} of \cref{thm:single:goods:value-sum-MMS}.
\end{proof}

As $\frac{2}{3 - \max\left\{|N|,1\right\}^{-1}}$ is non-increasing in $|N|$, \cref{thm:common:cor-valid-reduction} can be applied together with the induction hypothesis and \cref{thm:single:goods:normalize}.

\begin{restatable}{lemma}{SingleGoodsValidReduce} \label{thm:single:goods:valid-reduce}
  If Line \ref{alg:single:goods:valid-reduce-assign} is reached, an $\alpha$-MMS allocation $(A_i)_{i\in N}$ for $\cI$ is returned at Line~\ref{alg:single:goods:allocation-valid-reduction}.
  Otherwise, it follows at Line~\ref{alg:single:goods:var-init} that \hspace{-2pt} $v_i(g) < \frac{1}{2}\,\alpha\,\hat\mu_i$ \hspace{-2pt} for every $g\in M\setminus\left\{g_1,g_2,\ldots,g_{|N|}\right\}$ and $i\in N$.
\end{restatable}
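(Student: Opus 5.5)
The lemma has two parts. The first concerns the reduction branch at Line~\ref{alg:single:goods:valid-reduce-assign}; the second is the contrapositive of the test at Line~\ref{alg:single:goods:check-two-from-n-th}.

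\textbf{First part.} The plan is to recognize the bundle $A_{i^*}$ built at Line~\ref{alg:single:goods:valid-reduce-assign} as the bundle $B$ of \cref{thm:common:valid-reduction}. Specialize that lemma to the single category $C^*=M$ with $d=1$. Its hypothesis $|M|\ge |N|+1$ holds, since Line~\ref{alg:single:goods:check-few-items} has ruled out $|M|\le|N|$. With $d=1$:
\begin{itemize}
\item the first set in the definition of $B$ is $\{g_j \mid |N|-1<j\le |N|+1\}=\{g_{|N|},g_{|N|+1}\}$;
\item the second set is $\{g_{|M|-j}\mid 0\le j<\max\{q^-,|M|-q^+(|N|-1)\}-2\}$;
\item the third union over other categories is empty.
\end{itemize}
This is exactly $A_{i^*}$, and $\cI'$ at Line~\ref{alg:single:goods:valid-reduced-instance} is exactly the reduced instance of \cref{thm:common:valid-reduction}. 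The lemma therefore gives that $\cI'$ is a well-defined single-category ordered instance of goods with $\mu_i(\cI')\ge\mu_i(\cI)$ for all $i\ne i^*$.

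\textbf{Applying the induction hypothesis.} Next, $\cI'$ has $|N|-1$ agents. Since $\frac{2}{3-(\max\{n,1\})^{-1}}$ is non-increasing in $n$, we have $\alpha\in\bigl[\frac23,\frac{2}{3-(\max\{|N|,1\})^{-1}}\bigr]\subseteq\bigl[\frac23,\frac{2}{3-(\max\{|N|-1,1\})^{-1}}\bigr]$. So the induction hypothesis applies, and the recursive call at Line~\ref{alg:single:goods:valid-reduce} returns an $\alpha$-MMS allocation for $\cI'$.

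\textbf{Bounding the value of $A_{i^*}$ for agent $i^*$.} Goods are non-negative, so
\begin{align*}
v_{i^*}(A_{i^*})\ge v_{i^*}(\{g_{|N|},g_{|N|+1}\})\ge\alpha\hat\mu_{i^*}\ge\alpha\mu_{i^*}(\cI),
\end{align*}
where the last inequality uses \cref{thm:single:goods:normalize} and $\alpha\ge 0$. \cref{thm:common:cor-valid-reduction} then shows that the returned allocation is $\alpha$-MMS for $\cI$.

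\textbf{Second part.} If Line~\ref{alg:single:goods:valid-reduce-assign} is not reached, then $v_i(\{g_{|N|},g_{|N|+1}\})<\alpha\hat\mu_i$ for every $i$. The instance is ordered, so $v_i(g_{|N|+1})\le v_i(g_{|N|})$. Hence
\begin{align*}
2v_i(g_{|N|+1})\le v_i(g_{|N|})+v_i(g_{|N|+1})<\alpha\hat\mu_i.
\end{align*}
Every $g\in M\setminus\{g_1,\dots,g_{|N|}\}$ is some $g_j$ with $j\ge |N|+1$, so $v_i(g)\le v_i(g_{|N|+1})<\frac12\alpha\hat\mu_i$.

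\textbf{Main obstacle.} The step needing the most care is the index bookkeeping that matches $A_{i^*}$ with $B$ from \cref{thm:common:valid-reduction}. In particular, one must check the tail count $\max\{q^-,|M|-q^+(|N|-1)\}-d-1$ with $d=1$, and that a negative upper bound simply gives an empty tail. One must also check that this tail is disjoint from $\{g_{|N|},g_{|N|+1}\}$. That disjointness is already guaranteed inside the lemma's well-definedness proof, via the bound $|B\cap C|\le |C|-q^-_C(|N|-1)$, so I would simply invoke it.
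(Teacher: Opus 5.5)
Your proposal is correct and follows the paper's proof essentially step for step: you instantiate \cref{thm:common:valid-reduction} with $C^* = M$ and $d = 1$, use the monotonicity of the ratio in $|N|$ to apply the induction hypothesis to $\cI'$, and chain $v_{i^*}(A_{i^*}) \ge \alpha\,\hat\mu_{i^*} \ge \alpha\,\mu_{i^*}(\cI)$ via \cref{thm:single:goods:normalize}. You then conclude with \cref{thm:common:cor-valid-reduction} and obtain the second part from the failed test at Line~\ref{alg:single:goods:check-two-from-n-th} plus orderedness; your explicit matching of $A_{i^*}$ with $B$ and the halving step $2\,v_i(g_{|N|+1}) \le v_i(\{g_{|N|},g_{|N|+1}\}) < \alpha\,\hat\mu_i$ only spell out what the paper leaves implicit.
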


\begin{proof}
  By applying \cref{thm:common:valid-reduction} for $C^* = M$ and $d = 1$, we see that the instance $\cI'$ is well-defined by Line~\ref{alg:single:goods:valid-reduced-instance} as a single-category ordered instance of goods.
  Given the induction hypothesis and that
  \begin{align*}
    \frac{2}{3} \le \alpha \le \frac{2}{3 - \max\left\{|N|,1\right\}^{-1}} \le \frac{2}{3 - \max\left\{|N\setminus\{i^*\}|,1\right\}^{-1}},
  \end{align*}
  an $\alpha$-MMS allocation $(A_i)_{i\in N\setminus\{i^*\}}$ for $\cI'$ is obtained at Line~\ref{alg:single:goods:valid-reduce}.
  Because Line~\ref{alg:single:goods:check-two-from-n-th}, \ref{alg:single:goods:valid-reduce-assign}, and \cref{thm:single:goods:normalize} together ensure that
  \begin{align*}
    v_{i^*}(A_{i^*}) \ge \alpha\,\hat\mu_{i^*} \ge \alpha\,\mu_{i^*}(\cI),
  \end{align*}
  \cref{thm:common:cor-valid-reduction} allows $(A_i)_{i\in N}$ returned at Line~\ref{alg:single:goods:allocation-valid-reduction} to be an $\alpha$-MMS allocation for $\cI$.
  As the instance $\cI$ is ordered, Line~\ref{alg:single:goods:check-two-from-n-th} immediately gives the second statement of the \cref{thm:single:goods:valid-reduce}.
\end{proof}

We formally define the invariant conditions maintained throughout the iterations over Lines \ref{alg:single:goods:outer-for-start} to \ref{alg:single:goods:outer-for-end}.

\begin{definition} \label{def:single:goods:invariants}
  Along with the outer for-loop over Lines \ref{alg:single:goods:outer-for-start} to \ref{alg:single:goods:outer-for-end}, we consider the following conditions for each $t\in\{|N|, |N| - 1, \ldots, 1, 0\}$:
  \leqnomode
  \begin{align}
    \tag{C1} \mspace{0mu}& \mathrlap{B^{(t)}_1,B^{(t)}_2,\ldots,B^{(t)}_t, A_{i^{(t + 1)}}, A_{i^{(t + 2)}}, \ldots, A_{i^{(|N|)}} \mbox{ are disjoint.}} \label{cond:single:goods:disjoint} \\[0pt]
    \tag{C2} \mspace{0mu}& \mathrlap{B^{(t)}_1\cup B^{(t)}_2\cup\cdots\cup B^{(t)}_t\cup A_{i^{(t + 1)}}\cup A_{i^{(t + 2)}}\cup\cdots\cup A_{i^{(|N|)}} = M.} \label{cond:single:goods:union} \\[0pt]
    \tag{C3} \mspace{0mu}& \mathrlap{q^- \le |B^{(t)}_1| \le |B^{(t)}_2| \le \cdots \le |B^{(t)}_t| \le q^+.} \label{cond:single:goods:size} \\[0pt]
    \tag{C4}
    \mspace{0mu}&
    \begin{aligned}
      & B^{(t)}_k \cap \left\{g_1,g_2,\ldots,g_{|N|}\right\} = \{g_k\}
      & \forall k\in\{1,2,\ldots,t\}.
    \end{aligned}
    \label{cond:single:goods:top-n} \\[1pt]
    \tag{C5}
    \mspace{0mu}&
    \mathrlap{
      \begin{aligned}
        & v_i(h_1) \le v_i(h_2) \le \cdots \le v_i(h_t) & \forall i\in N^{(t)},\\[-2pt]
        & \mathrlap{\forall h_1\in B^{(t)}_1\setminus\{g_1\},\forall h_2\in B^{(t)}_2\setminus\{g_2\}, \ldots, \forall h_t\in B^{(t)}_t\setminus\{g_t\}.}
    \end{aligned}} \label{cond:single:goods:value-order} \\[1pt]
    \tag{C6}
    \mspace{0mu} &
    \begin{aligned}
    & v_i\left(\bigcup_{k=r}^{t}B^{(t)}_k\right) \ge \left(t - r + 1 - (|N| - t)\left(\frac{3}{2}\,\alpha - 1\right)\right) \hat\mu_i \\[-2pt]
    & \forall r\in\{1,2,\ldots,t\},\forall i\in N^{(t)}.
    \end{aligned}
    \label{cond:single:goods:value-sum}
  \end{align}
  \reqnomode
\end{definition}

\cref{thm:single:goods:value-sum-MMS,thm:single:goods:normalize} together guarantee that these conditions initially hold for $t = |N|$.

\begin{restatable}{lemma}{SingleGoodsInvariantInit} \label{thm:single:goods:invariant-init}
  At Line~\ref{alg:single:goods:var-init}, Conditions \eqref{cond:single:goods:disjoint} to \eqref{cond:single:goods:value-sum} hold for $t = |N|$.
\end{restatable}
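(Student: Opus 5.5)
We verify each of (C1) through (C6) for $t=|N|$ directly. At Line~\ref{alg:single:goods:var-init} no bundle has yet been assigned, so every $A_{i^{(\cdot)}}$ term in (C1) and (C2) is absent, and $N^{(|N|)}=N$. We may also assume that Line~\ref{alg:single:goods:var-init} is actually reached. Then $|M|>|N|$ holds, because the trivial case was skipped. Moreover, the bags $B^{(|N|)}_k$ are exactly those built by Lines~\ref{alg:single:goods:B_k-init-for-start} to \ref{alg:single:goods:B_k-init-for-end}, because the valid-reduction branch returns before Line~\ref{alg:single:goods:var-init}.

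Conditions (C1), (C2) and (C3) then follow at once from \cref{thm:single:goods:value-sum-MMS}. Its disjointness statement together with \cref{eq:single:goods:union-init} gives (C1) and (C2). \cref{eq:single:goods:size-init} gives (C3).

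Conditions (C4) and (C5) come from inspecting Line~\ref{alg:single:goods:B_k-init-n}. Each $B^{(|N|)}_k$ consists of $g_k$ together with a set of items $g_{|N|+j}$ whose indices $j$ form a contiguous block. The blocks are disjoint and lie beyond index $|N|$, which gives (C4). The blocks are also ordered so that $B^{(|N|)}_{|N|}$ receives the smallest indices $j$, then $B^{(|N|)}_{|N|-1}$, and so on. Since the instance is ordered, every item of $B^{(|N|)}_{k}\setminus\{g_k\}$ has index larger than, and hence value at most that of, every item of $B^{(|N|)}_{k'}\setminus\{g_{k'}\}$ whenever $k<k'$. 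This holds for every agent, which is (C5). One degenerate case needs a remark: when $b_k=1$ the block is empty, and the inequality is then vacuous.

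The only condition with content is (C6), which is the step where the definition of $\hat\mu_i$ enters. For $t=|N|$ the correction term $(|N|-t)(\frac32\alpha-1)$ vanishes, so (C6) reads
\[
v_i\Bigl(\bigcup_{k=r}^{|N|}B^{(|N|)}_k\Bigr)\ge (|N|-r+1)\,\hat\mu_i
\]
for all $r$ and all $i\in N$. This is immediate from Line~\ref{alg:single:goods:mu-hat}: $\hat\mu_i$ is the minimum over $r$ of exactly $\frac{1}{|N|-r+1}v_i\bigl(\bigcup_{k=r}^{|N|}B^{(|N|)}_k\bigr)$, so each such quantity is at least $\hat\mu_i$. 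Note that \cref{thm:single:goods:normalize} is not needed in this direction; it only shows that $\hat\mu_i$ is a legitimate upper bound on $\mu_i(\cI)$, which is used later. There is no real obstacle in this lemma. The most delicate point is checking (C5) carefully against the index arithmetic of Line~\ref{alg:single:goods:B_k-init-n}, including the degenerate case $b_k=1$.
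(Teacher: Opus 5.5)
Your proposal is correct and follows the paper's proof. (C1)--(C5) come from \cref{thm:single:goods:value-sum-MMS} together with the construction at Line~\ref{alg:single:goods:B_k-init-n}, and (C6) at $t=|N|$ is immediate from the definition of $\hat\mu_i$ at Line~\ref{alg:single:goods:mu-hat}; your explicit index check of (C4) and (C5) spells out what the paper attributes to that lemma, whose statement literally covers only (C1)--(C3).
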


\begin{proof}
  Conditions \eqref{cond:single:goods:disjoint} to \eqref{cond:single:goods:value-order} for $t = |N|$ follow from \cref{thm:single:goods:value-sum-MMS}.
  Condition \eqref{cond:single:goods:value-sum} for $t = |N|$ follows from Line~\ref{alg:single:goods:mu-hat}.
\end{proof}

Key to the overall correctness is \cref{thm:single:goods:invariant}: each iteration assigns an agent a sufficiently valuable and feasible bundle while ensuring that all the invariants are properly inherited by the next iteration.

\begin{lemma} \label{thm:single:goods:invariant}
  Let $s\in\{|N|, |N| - 1 \ldots, 1\}$ be arbitrary.
  Suppose that the for-loop over Lines \ref{alg:single:goods:outer-for-start} to \ref{alg:single:goods:outer-for-end} has successfully iterated for $t \in \{|N|,|N|-1,\ldots,s+1\}$, and that Conditions \eqref{cond:single:goods:disjoint} to \eqref{cond:single:goods:value-sum} now hold for $t = s$.
  Then the next iteration with $t = s$ succeeds, where the following hold at Line~\ref{alg:single:goods:before-assign}:
  \begin{align}
    & q^- \le |B| \le q^+. \label{eq:single:goods:bag-size} \\
    & \exists i^{(s)}\in N^{(s)} \ \ \mathrm{s.t.} \ \ v_{i^{(s)}}(B) \ge \alpha\,\hat\mu_{i^{(s)}}. \label{eq:single:goods:bag-value}
  \end{align}
  Furthermore, Conditions \eqref{cond:single:goods:disjoint} to \eqref{cond:single:goods:value-sum} hold for $t = s - 1$ at Line~\ref{alg:single:goods:assign} of the same iteration.
\end{lemma}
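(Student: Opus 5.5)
I will prove the statement by analysing the inner for-loop of iteration $t=s$ step by step, maintaining a family of auxiliary invariants on the pair $(B, B^{(s-1)}_k)$ during each while-loop. Concretely, at the start of iteration $k$ of the inner loop, I will maintain three properties. First, $B\setminus\{g_s\}$ consists of items whose values lie, for every $i\in N^{(s)}$, between the values of items of $B^{(s)}_{k}\setminus\{g_k\}$ and those of $B^{(s)}_{k+1}\setminus\{g_{k+1}\}$. Second, $|B^{(s)}_k|\le|B|\le|B^{(s)}_{k+1}|$. Third, the already-processed bags satisfy $B^{(s-1)}_{k'}=(B^{(s)}_{k'+1}\setminus\{g_{k'+1}\})\cup\{g_{k'}\}$ for $k'>k$ (the bag has been ``slid'' down by one index), while the bags $B^{(s-1)}_{k'}=B^{(s)}_{k'}$ for $k'<k$ are untouched. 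I will show that moves and swaps preserve these properties. Moves occur only while $|B|>|B^{(s)}_k|$. Swaps exchange the least valuable item of $B$ with the most valuable item of $B^{(s-1)}_k\setminus\{g_k\}$, and by (C5) that item is at most as valuable as anything remaining in $B$. Disjointness (C1), the union (C2), and the top-$n$ condition (C4) are immediate because $g_k$ and $g_s$ are never touched. Sizes stay monotone and within $[q^-,q^+]$ by (C3), which gives \eqref{eq:single:goods:bag-size}, and also (C3) for $t=s-1$. The ordering (C5) for $t=s-1$ follows because each exchange keeps the items of $B^{(s-1)}_k$ below those of $B$, and those of $B$ below the already-slid bag $B^{(s-1)}_{k+1}$.

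For \eqref{eq:single:goods:bag-value} I will split on how the last while-loop terminated. \textbf{Case 1:} it stopped because $B\setminus\{g_s\}=B^{(s)}_k\setminus\{g_k\}$ with $k=1$ exhausted. Then $B=(B^{(s)}_1\setminus\{g_1\})\cup\{g_s\}$, which dominates $B^{(s)}_1$ item-wise except for $g_1$ versus $g_s$. Here I use (C6) with $r=1$ minus the contribution of the others. Alternatively, and more simply, I argue that the condition of the while-loop was satisfied right before, so the value stays at least $\frac32\alpha\hat\mu_i$ minus one small item. \textbf{Case 2:} the threshold condition failed at some point. The last change removed an item $g$ with $v_i(g)<\frac12\alpha\hat\mu_i$ (by \cref{thm:single:goods:valid-reduce}, since $g\notin\{g_1,\ldots,g_{|N|}\}$), possibly adding $h$. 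So for the agent $i$ who had $v_i(B)\ge\frac32\alpha\hat\mu_i$ just before, we get $v_i(B)>\alpha\hat\mu_i$. If no update occurred at all, the initial bag $B=B^{(s)}_s$ fails the threshold for everyone. In that situation I will use (C6) with $r=s$: $v_i(B^{(s)}_s)\ge (1-(|N|-s)(\frac32\alpha-1))\hat\mu_i$, and the choice $\alpha\le \frac{2}{3-1/|N|}$ gives $(|N|-1)(\frac32\alpha-1)\le 1-\alpha$, so this is at least $\alpha\hat\mu_i$.

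The main obstacle is (C6) for $t=s-1$: I need, for every remaining agent $i\in N^{(s-1)}$ and every $r$, $v_i(\bigcup_{k=r}^{s-1}B^{(s-1)}_k)\ge (s-r-(|N|-s+1)(\frac32\alpha-1))\hat\mu_i$. Every remaining agent $i$ values the final $B$ at less than $\frac32\alpha\hat\mu_i$ whenever the threshold stopped the loop, so subtracting $v_i(B)$ from (C6) at $t=s$ for the same $r$ loses at most $\frac32\alpha\hat\mu_i=(1+(\frac32\alpha-1))\hat\mu_i$. The difficulty is that this only holds if the union $\bigcup_{k\ge r}B^{(s-1)}_k\cup B$ equals $\bigcup_{k\ge r}B^{(s)}_k$. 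That equality is true when the stopping index $k^*$ satisfies $k^*\ge r$, but fails for $r>k^*$. For $r>k^*$ I will instead use the slid structure: $\bigcup_{k=r}^{s-1}B^{(s-1)}_k$ dominates $\bigcup_{k=r+1}^{s}B^{(s)}_k$ minus $g_s$ plus $g_r$, which is item-wise at least as valuable. Then (C6) at $t=s$ with index $r+1$ gives the bound with an even better constant. Finally, in Case 1 of the threshold analysis (no early stop), the remaining bags equal exactly the slid versions, and the same domination argument applies. I will expect care in handling the partially processed bag $B^{(s-1)}_{k^*}$ when $r=k^*$, where I will combine both viewpoints.
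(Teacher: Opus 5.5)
Your proposal is essentially the paper's proof. It uses the same stopping-index description of the final bags: untouched below $k^*$, slid down by one above $k^*$, and $B$ together with $B^{(s-1)}_{k^*}$ holding the non-special items of $B^{(s)}_{k^*}\cup B^{(s)}_{k^*+1}$. It uses the same ``every update happens while some agent has $v_i(B)\ge\frac32\alpha\hat\mu_i$ and removes an item worth less than $\frac12\alpha\hat\mu_i$'' argument, plus (C6) at $r=s$, for \eqref{eq:single:goods:bag-value}. And it uses the same split for (C6) at $t=s-1$: $r>k^*$ (sliding, $g_r$ in place of $g_s$, then (C6) at index $r+1$) versus $r\le k^*$ (union identity, then subtract $v_i(B)<\frac32\alpha\hat\mu_i$).

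A few points to tidy.
\begin{itemize}
\item Drop your first attempt in Case~1. $(B^{(s)}_1\setminus\{g_1\})\cup\{g_s\}$ is dominated by $B^{(s)}_1$, not the other way round, and (C6) with $r=1$ gives no upper bound on the other bags. Your fallback ``last update'' argument already covers every way the loop can end, which is exactly how the paper proceeds.
\item The case $r=k^*$ needs no combination of viewpoints, since the union identity holds for every $r\le k^*$.
\item Your reason for (C5) at $t=s-1$ is wrong as stated. After a swap, $B^{(s-1)}_k$ holds an item at least as valuable as the one just placed into $B$. However, (C5) at $t=s-1$ does not involve $B$ at all. It follows simply because the non-special items of the only mixed bag $B^{(s-1)}_{k^*}$ come from $B^{(s)}_{k^*}\cup B^{(s)}_{k^*+1}$.
\item Your per-$k$ invariants (e.g.\ $|B|\le|B^{(s)}_{k+1}|$ and the sliding of all bags above $k$) should be asserted only until the threshold first fails; after that nothing moves.
\item You still owe the existence of $g$ and $h$ and the termination of the while-loop, which are part of ``the iteration succeeds''. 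The paper gets these from $|B|\ge|B^{(s)}_k|$, the second loop condition, and $|B^{(s-1)}_k|=|B^{(s)}_{k+1}|\ge|B|\ge 2$ at a swap.
\item Your claim that the item swapped into $B$ is at most as valuable as everything remaining there only holds if a swap exchanges items not yet moved in the current while-loop, as the comment on the while-condition suggests. The paper's remark that $v_i(B)$ never increases relies on the same reading. Taken literally, with the index-based notion of ``more valuable'', a second swap would pick the item just swapped in and the one just swapped out, and undo the first swap whenever the threshold is still met.
\end{itemize}
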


\begin{proof}
  First, we verify that Line~\ref{alg:single:goods:before-assign} is indeed reached, with all preceding operations done successfully; namely we show that the desired item exists at Line~\ref{alg:single:goods:get-from-B} and \ref{alg:single:goods:get-from-B_k}, as well as that the while-loop terminates.
  Given Condition \eqref{cond:single:chores:value-order} for $t = s$, neither $|B|$ nor $v_i(B)$ for any $i\in N$ increases after Line~\ref{alg:single:goods:bag-init} until \ref{alg:single:goods:before-assign}.
  At Line~\ref{alg:single:goods:bag-init}, $B\ni g_s$ and $B^{(s-1)}_k\ni g_k$ for each $k\in\{1,2,\ldots,s-1\}$ hold due to Condition \eqref{cond:single:goods:top-n} for $t = s$, remaining true until Line~\ref{alg:single:goods:assign} by the definition of Lines~\ref{alg:single:goods:get-from-B} and \ref{alg:single:goods:get-from-B_k}.
  Let us go through a single iteration of the inner for-loop (Lines \ref{alg:single:goods:inner-for-start} to \ref{alg:single:goods:inner-for-end}) with an arbitrary $k\in\{1,2,\ldots,s-1\}$.
  Clearly, nothing occurs if $k < s - 1$ and the previous iteration has stopped with $v_i(B) < \frac{3}{2}\alpha\hat\mu_i$ for every $i\in N^{(s)}$.
  Otherwise, at the beginning of this iteration, we have that $B\setminus\{g_s\} = B^{(s)}_{k+1}\setminus\{g_{k+1}\}$, that $B^{(s-1)}_{k} = B^{(s)}_k$, and hence that $|B| = |B^{(s)}_{k+1}| \ge |B^{(s)}_k| = |B^{(s-1)}_k|$ due to Condition \eqref{cond:single:goods:size} for $t = s$.
  Any single iteration of the while-loop (Lines \ref{alg:single:goods:inner-while-start} to \ref{alg:single:goods:inner-while-end}) keeps $B \cup B^{(s-1)}_{k}$ unchanged and decreases $|B|$ by at most $1$, while maintaining that $g_s \in B$, $g_k \in B^{(s)}_k$, and $|B| \ge |B^{(s)}_k|$.
  Therefore, $B\setminus\{g_{s}\}$ is never be empty at Line~\ref{alg:single:goods:get-from-B}, until which it has been maintained that $B\setminus\{g_s\}\neq B^{(s)}_k\setminus\{g_k\}$.
  We then obtain that $B^{(s-1)}_k\setminus\{g_{k}\} \neq \emptyset$ at Line~\ref{alg:single:goods:get-from-B_k}, where $|B^{(s-1)}_{k}| = |B^{(s)}_{k+1}| \ge |B^{(s)}_{k}| = |B|\ge 2$.
  Thanks to the choice of the items $g$ and $h$ at Lines \ref{alg:single:goods:get-from-B} and \ref{alg:single:goods:get-from-B_k}, respectively, the while-loop successfully terminates.

  Next, we prove that \cref{eq:single:goods:bag-size,eq:single:goods:bag-value} hold at Line~\ref{alg:single:goods:before-assign}, where we define the following:
  \begin{align}
    k^* \coloneqq
    \begin{cases}
      0 & \text{\hspace{16pt} if }\ \exists i\in N^{(s)} \mbox{ \ s.t. \ } v_i(B) \ge \frac{3}{2}\,\alpha\,\hat\mu_i; \\[4pt]
      \mathrlap{\min \left(\left\{k\in\{1,2,\ldots,s-1\} \mid B^{(s-1)}_k \neq B^{(s)}_k\right\} \cup \{s\}\right)} & \\
       & \text{\hspace{16pt} otherwise.\hspace{100pt}}
    \end{cases}
    \label{eq:single:goods:k*}
  \end{align}
  The value of $k^*$ represents at which point the updates on $B$ have stopped:
  $k^* = 0$ when all possible updates on $B$ run out, i.e., when it holds at Line~\ref{alg:single:goods:before-assign} that $B = (B^{(s)}_1\setminus\{g_1\})\cup\{g_s\}$, $B^{(s-1)}_1 = (B^{(s)}_2\setminus\{g_2\})\cup\{g_1\}$, \ldots, and $B^{(s-1)}_{s-1} = (B^{(s)}_{s}\setminus\{g_s\})\cup\{g_{s-1}\}$;
  $k^* = s$ if $B$ has not been updated at Line~\ref{alg:single:goods:before-assign} since Line~\ref{alg:single:goods:bag-init};
  otherwise (if $0 < k < s$), $B^{(s-1)}_{k^*}$ denotes the bag that has been last updated at either Line~\ref{alg:single:goods:move} or \ref{alg:single:goods:swap} (as $B^{(t-1)}_k$).
  Based on the observations in the previous paragraph and Conditions \eqref{cond:single:goods:disjoint} to \eqref{cond:single:goods:top-n} for $t = s$, we see that the following hold at Line~\ref{alg:single:goods:before-assign}:
  \begin{align}
    & B^{(s-1)}_{k} = B^{(s)}_{k} & & \forall k\in\{1, 2, \ldots, k^*-1\}. \label{eq:single:goods:B_k-same} \\
    & B^{(s-1)}_{k} = (B^{(s)}_{k+1}\setminus\{g_{k+1}\})\cup\{g_k\} & & \forall k\in\{k^* + 1, \ldots, s-1\}. \label{eq:single:goods:B_k-shift} \\
    &\mathrlap{
      \begin{cases}
        B = \left(B^{(s)}_{\max\left\{k^*,1\right\}} \setminus \left\{g_{\max\left\{k^*,1\right\}}\right\}\right) \cup \{g_s\} & \text{\hspace{-12pt}if }\ k^* \in \{0, s\}; \\
        \left(B\setminus\left\{g_{s}\right\}\right) \cup B^{(s-1)}_{k^*} = B^{(s)}_{k^*} \cup \left(B^{(s)}_{k^*+1}\setminus\left\{g_{k^*+1}\right\}\right) & \text{\hspace{-1pt}otherwise.}
    \end{cases}} \label{eq:single:goods:union} \\
    &B \cap \left\{g_1,g_2,\ldots,g_{|N|}\right\} = \{g_s\}. \label{eq:single:goods:special-item} \\
    & \mathrlap{q^- \le \left|B^{(s)}_{\max\left\{k^*,1\right\}}\right| \le |B| \le \left|B^{(s)}_{\min\left\{k^*+1,s\right\}}\right| \le q^+.} \label{eq:single:goods:size-ineq} \\
    &\mathrlap{B^{(s - 1)}_1,B^{(s - 1)}_2,\ldots,B^{(s - 1)}_{s - 1}, \mbox{ and $B$ are disjoint.}} \label{eq:single:goods:disjoint}
  \end{align}
  In particular, \cref{eq:single:goods:size-ineq} immediately gives \cref{eq:single:goods:bag-size}.
  Moreover, applying the latter statement of \cref{thm:single:goods:valid-reduce} to the item $g\in B\setminus\{g_s\}$ at Line~\ref{alg:single:goods:get-from-B}, along with the first condition of the while-loop, guarantees that
  \begin{align*}
    &\exists i\in N^{(s)} \quad \mathrm{s.t.} \quad v_i(B) > \frac{3}{2}\,\alpha\,\hat\mu_i - \frac{1}{2}\,\alpha\,\hat\mu_i = \alpha\,\hat\mu_i,
  \end{align*}
  whenever $B$ gets updated at either Line~\ref{alg:single:goods:move} or \ref{alg:single:goods:swap}.
  Condition \eqref{cond:single:goods:value-sum} for $t = s$ also gives that at Line~\ref{alg:single:goods:bag-init},
  \begin{align*}
    v_i(B) = v_i\!\left(B^{(s)}_{s}\right)
    &\ge \left(1 - (|N| - s)\left(\frac{3}{2}\,\alpha - 1\right)\right)\hat\mu_i \\
    &\ge \left(1 - (|N| - 1)\left(\frac{3}{2}\,\alpha - 1\right)\right)\hat\mu_i \\
    &\ge \frac{2}{3 - |N|^{-1}}\,\hat\mu_i \\
    &\ge \alpha\,\hat\mu_i
    &\forall i\in N^{(s)},
  \end{align*}
  because $\alpha \in \left[\frac{2}{3},\frac{2}{3 - |N|^{-1}}\right]$.
  Therefore, we obtain that \cref{eq:single:goods:bag-value} always holds at Line~\ref{alg:single:goods:before-assign}.

  Finally, we show that Conditions \eqref{cond:single:goods:disjoint} to \eqref{cond:single:goods:value-sum} hold for $t = s - 1$ at Line~\ref{alg:single:goods:assign}.
  Conditions \eqref{cond:single:goods:disjoint} to \eqref{cond:single:goods:value-order} for $t = s - 1$ follow from those for $t = s$ and \crefrange{eq:single:goods:B_k-same}{eq:single:goods:disjoint}.
  We fix any pair $r\in\{1,2,\ldots,s-1\}$ and $i\in N^{(s-1)}\subset N^{(s)}$, for which the inequality in Condition \eqref{cond:single:goods:value-sum} for $t = s - 1$ is obtained in either of the following cases.

  \begin{case} \label{case:single:goods:1}
    Suppose $r > k^*$.
    Then given \cref{eq:single:goods:B_k-shift}, Conditions \eqref{cond:single:goods:top-n} and \eqref{cond:single:goods:value-sum} for $t = s$, and that $\alpha\ge\frac{2}{3}$, we obtain that
    \begin{align*}
      \begin{split}
        v_i\left(\bigcup_{k=r}^{s-1}B^{(s-1)}_k\right)
        &= v_i\left(\bigcup_{k=r}^{s-1}\left(\left(B^{(s)}_{k+1}\setminus\{g_{k+1}\}\right)\cup\{g_k\}\right)\right) \\
        &\ge v_i\left(\bigcup_{k=r+1}^{s}B^{(s)}_{k}\right) \\
        &\ge \left(s - r - (|N| - s)\left(\frac{3}{2}\,\alpha - 1\right)\right) \hat\mu_i \\
        &\ge \left(s - r - (|N| - s + 1)\left(\frac{3}{2}\,\alpha - 1\right)\right) \hat\mu_i.
      \end{split}
    \end{align*}
  \end{case}

  \begin{case} \label{case:single:goods:2}
    Suppose $1 \le r \le k^*$.
    Then given the definition of $k^*$ in \cref{eq:single:goods:k*}, we must have that $v_i(B) < \frac{3}{2}\alpha\hat\mu_i$
    at Line~\ref{alg:single:goods:before-assign}.
    It also follows from that $r \le k^*$ and \crefrange{eq:single:goods:B_k-same}{eq:single:goods:special-item} that
    \begin{align*}
      B \cup B^{(s-1)}_r \cup B^{(s-1)}_{r + 1} \cup\cdots\cup B^{(s-1)}_{s-1} =  B^{(s)}_r \cup B^{(s)}_{r + 1} \cup\cdots\cup B^{(s)}_{s}.
    \end{align*}
    These observations and Condition \eqref{cond:single:goods:value-sum} for $t = s$ together establish that
    \begin{align*}
      v_i\left(\bigcup_{k=r}^{s-1}B^{(s-1)}_k\right)
      &\ge v_i\left(\bigcup_{k=r}^{s}B^{(s)}_{k}\right) - v_i(B) \\
      &> \left(s - r + 1 - (|N| - s)\left(\frac{3}{2}\,\alpha - 1\right)\right) \hat\mu_i - \frac{3}{2}\,\alpha\,\hat\mu_i \\
      &= \left(s - r - (|N| - s + 1)\left(\frac{3}{2}\,\alpha - 1\right)\right) \hat\mu_i.
    \end{align*}
  \end{case}

  The proof of \cref{thm:single:goods:invariant} is now completed.
\end{proof}

\cref{thm:single:goods:invariant-init,thm:single:goods:invariant,thm:single:goods:normalize} together lead to \cref{thm:single:goods:invariant-all}.

\begin{restatable}{corollary}{SingleGoodsInvariantAll} \label{thm:single:goods:invariant-all}
  If Line~\ref{alg:single:goods:var-init} is reached, an $\alpha$-MMS allocation $(A_i)_{i\in N}$ for $\cI$ is returned at Line~\ref{alg:single:goods:allocation-final}.
\end{restatable}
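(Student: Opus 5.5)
The plan is a straightforward induction over the iterations of the outer for-loop (Lines~\ref{alg:single:goods:outer-for-start} to \ref{alg:single:goods:outer-for-end}). The base case is \cref{thm:single:goods:invariant-init}: at Line~\ref{alg:single:goods:var-init}, Conditions \eqref{cond:single:goods:disjoint} to \eqref{cond:single:goods:value-sum} hold for $t = |N|$. For the inductive step, assume the iterations for $t = |N|, \ldots, s+1$ have succeeded and the conditions hold for $t = s$. Then \cref{thm:single:goods:invariant} guarantees three things: the iteration for $t = s$ reaches Line~\ref{alg:single:goods:before-assign}, an agent $i^{(s)}\in N^{(s)}$ satisfying \cref{eq:single:goods:bag-value} exists (so Line~\ref{alg:single:goods:before-assign} succeeds), and the conditions hold again for $t = s-1$. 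Iterating down to $s = 1$ completes the loop, and we obtain the conditions for $t = 0$.

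Next I read off the output from the conditions at $t=0$.

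\emph{It is an allocation.} Since $N^{(|N|)} = N$ and each round removes exactly the assigned agent $i^{(t)}$, the agents $i^{(|N|)}, \ldots, i^{(1)}$ are pairwise distinct. They therefore exhaust $N$, so every agent receives exactly one bundle $A_i$. Conditions \eqref{cond:single:goods:disjoint} and \eqref{cond:single:goods:union} for $t = 0$ say that these bundles are disjoint and cover $M$. Hence $(A_i)_{i\in N}$ is an ordered partition of $M$.

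\emph{It is feasible.} Each $A_{i^{(s)}}$ equals the bag $B$ at Line~\ref{alg:single:goods:before-assign} of iteration $s$. By \cref{eq:single:goods:bag-size}, $q^-\le |A_{i^{(s)}}|\le q^+$, so each bundle is feasible.

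\emph{It is $\alpha$-MMS.} By the choice at Line~\ref{alg:single:goods:before-assign}, $v_{i^{(s)}}(A_{i^{(s)}}) \ge \alpha\,\hat\mu_{i^{(s)}}$. \cref{thm:single:goods:normalize} gives $\hat\mu_i \ge \mu_i(\cI)$, and $\alpha \ge 0$ in the considered range, so $v_i(A_i)\ge \alpha\,\mu_i(\cI)$ for every $i \in N$.

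The argument is essentially bookkeeping, since all the substantive work sits in \cref{thm:single:goods:invariant}. The only step needing care is checking that the hypothesis of \cref{thm:single:goods:invariant} ("successfully iterated for $t > s$ and conditions hold for $t=s$") is exactly what the previous application delivers, so the induction chains without gaps. The other point to state explicitly is that $\hat\mu_i$ is computed once, at Line~\ref{alg:single:goods:mu-hat}, with respect to the original $\cI$, so that \cref{thm:single:goods:normalize} applies to it.
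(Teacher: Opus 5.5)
Your proposal is correct and takes the same route as the paper. You chain \cref{thm:single:goods:invariant-init} and \cref{thm:single:goods:invariant} down to $t=0$, read off a feasible allocation from Conditions \eqref{cond:single:goods:disjoint}--\eqref{cond:single:goods:union} and \cref{eq:single:goods:bag-size}, and conclude the $\alpha$-MMS guarantee from \cref{eq:single:goods:bag-value} together with \cref{thm:single:goods:normalize}; your remarks on the distinctness of the agents $i^{(t)}$ and on $\alpha \ge 0$ simply make explicit details the paper leaves implicit.
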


\begin{proof}
  \cref{thm:single:goods:invariant-init,thm:single:goods:invariant} together ensure that Line~\ref{alg:single:goods:allocation-final} is successfully reached after Line~\ref{alg:single:goods:var-init}, with Conditions \eqref{cond:single:goods:disjoint} to \eqref{cond:single:goods:value-sum} fulfilled for every $t\in\left\{|N|,|N|-1,\ldots,1,0\right\}$.
  Also given Conditions \eqref{cond:single:goods:disjoint} to \eqref{cond:single:goods:union} for $t = 0$ and \cref{eq:single:goods:bag-size} of \cref{thm:single:goods:invariant}, it is implied that $(A_i)_{i\in N} = \left(A_{i^{(t)}}\right)_{t\in\{|N|,|N|-1,\ldots,1\}}$ is a feasible allocation for $\cI$.
  \cref{thm:single:goods:normalize} and \cref{eq:single:goods:bag-value} of \cref{thm:single:goods:invariant} also yield that
  \begin{align*}
    &v_{i^{(t)}}\!\left(A_{i^{(t)}}\right) \ge \alpha\,\hat\mu_{i^{(t)}} \ge \alpha\,\mu_{i^{(t)}}(\cI) &\forall t\in\left\{|N|,|N|-1,\ldots,1\right\},
  \end{align*}
  establishing that $(A_i)_{i\in N}$ is an $\alpha$-MMS allocation for $\cI$.
\end{proof}



\begin{lemma} \label{thm:single:goods:runtime}
  \hyperref[alg:single:goods:main-start]{\textproc{ApproxGoods}} runs in time $O(|N||M|)$.
\end{lemma}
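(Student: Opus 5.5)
Since the runtime is stated as $O(|N||M|)$ on the ordered instance, I would bound the work by structure. First the cost of one call without recursion, then the recursion at Line~\ref{alg:single:goods:valid-reduce}. The recursion is handled by induction on $|N|$, matching the induction already used for correctness. If a call on $(N,M)$ excluding recursion costs at most $c(|N|+|M|)$ (or $c|M|$ after the trivial case), then a valid reduction removes one agent and recurses on $|M\setminus A_{i^*}|\le|M|$ items. The recursion depth is at most $|N|$, so the total is $O(|N|\cdot|M|)$.

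Within one call, I take the parts in order. The trivial case (Lines~\ref{alg:single:goods:check-few-items}--\ref{alg:single:goods:end-check-few-items}) is $O(|N|+|M|)$. The initialization loop computes each $b_k$ in $O(1)$ by maintaining the suffix sums $\sum_{k'>k}b_{k'}$, and it writes each item once, so it costs $O(|N|+|M|)$. Computing $\hat\mu_i$ at Line~\ref{alg:single:goods:mu-hat} for one agent takes $O(|M|)$ via suffix sums over $r$, hence $O(|N||M|)$ for all agents. The check at Line~\ref{alg:single:goods:check-two-from-n-th} is $O(|N|)$, and building $A_{i^*}$ and $\cI'$ is $O(|N||M|)$ at worst, since restricting the valuations copies $|N|$ arrays. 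This is already $O(|N||M|)$ per level, though, which would give $O(|N|^2|M|)$ over the recursion. To avoid that, I would pass $\cI'$ implicitly as index ranges over the same sorted item array, with $\hat\mu$-style suffix sums. Alternatively, I would observe that the reduced instance's data can be updated in $O(|M|)$ plus $O(|N|)$ per agent. Then each level costs $O(|M|+|N|)$, except the one-time $O(|N||M|)$ preprocessing of per-agent prefix sums of $v_i(g_j)$. With those prefix sums, every $v_i(\text{contiguous range})$ query is $O(1)$, and $\hat\mu_i$ reduces to $O(|N|)$ per agent.

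For the main loop I would argue by amortization over item movements. In round $t$, every item of $B^{(t)}_1\cup\dots\cup B^{(t)}_t$ other than $g_1,\dots,g_t$ is moved or swapped at most a constant number of times. Inner iteration $k$ only transfers items between $B$ and $B^{(t-1)}_k$, and each while-iteration moves one item out of $B\setminus\{g_t\}$ that never returns during that $k$. The loop stops once $B\setminus\{g_t\}=B^{(t)}_k\setminus\{g_k\}$, so the number of while-iterations for a fixed $k$ is at most $|B^{(t)}_{k+1}|+|B^{(t)}_k|$. Summing over $k$ gives $O(|M|)$ per round. Each while-test needs to know whether some $i\in N^{(t)}$ has $v_i(B)\ge\frac32\alpha\hat\mu_i$. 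I would maintain $v_i(B)$ for all $i$ incrementally at $O(|N|)$ per update, keeping items of each bag in sorted deques so that the least and most valuable items at Lines~\ref{alg:single:goods:get-from-B} and \ref{alg:single:goods:get-from-B_k} are $O(1)$. Since the instance is ordered, the bags stay contiguous-ish in index order by (C5). This naive approach costs $O(|N||M|)$ per round and $O(|N|^2|M|)$ total.

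The main obstacle is therefore the factor $|N|$ from checking all agents per update. My plan is to exploit the fact that $v_i(B)$ is non-increasing during round $t$, as shown in the proof of \cref{thm:single:goods:invariant}. I would keep a pointer to a single "witness" agent $i$ with $v_i(B)\ge\frac32\alpha\hat\mu_i$. When the witness fails, I scan forward to the next agent, and a failed agent never becomes valid again in this round. Each agent's value is recomputed lazily in $O(1)$ using the prefix sums, because $B$ consists of $g_t$ plus one or two contiguous index ranges, by \cref{eq:single:goods:union} and the ordered structure. A round then costs $O(|M|+|N|)$ amortized, and the $|N|$ rounds give $O(|N||M|)$. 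This matches the preprocessing cost and completes the bound. Making the contiguity claim precise, so that $v_i(B)$ is indeed an $O(1)$ prefix-sum query, is the step I expect to require the most care.
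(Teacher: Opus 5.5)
\textbf{Verdict: the proposal has a genuine gap in each of its two repairs, and as written it does not establish $O(|N||M|)$.} Your skeleton (recursion depth at most $|N|$, plus an amortized bound of $O(|M|)$ while-iterations per outer round) is all there is to the paper's proof. The paper asserts that Lines~\ref{alg:single:goods:B_k-init-for-start}--\ref{alg:single:goods:valid-reduced-instance} take $O(|M|)$ per level. It then counts at most $\sum_{k=2}^{t}(|B^{(t)}_k|-1)\le|M|$ while-iterations per round, each charged unit cost. You rightly notice that neither charge is free, but your two repairs do not close.

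\textbf{First gap: the recursion.} You claim each level costs $O(|M|+|N|)$ after preprocessing. Yet you also, correctly, say $\hat\mu_i$ costs $O(|N|)$ per agent, and Line~\ref{alg:single:goods:mu-hat} needs it for every agent. So a level with $n$ agents costs $\Theta(n^2)$: a minimum of $n$ suffix averages for each of $n$ agents. The depth really can be $|N|$ while $|M|=2|N|$:
\begin{itemize}
\item Take identical agents, all items of value $1$, and $(q^-,q^+)=(0,|M|)$.
\item At every level the initial bags are $n-1$ singletons plus one bag of size $n+1$, so $\hat\mu_i=2$.
\item The pair $\{g_n,g_{n+1}\}$ passes the test at Line~\ref{alg:single:goods:check-two-from-n-th}, and exactly those two items are removed, reproducing the same structure with one fewer agent.
\end{itemize}
Summing over levels gives $\Theta(|N|^3)$, which is not $O(|N||M|)=O(|N|^2)$. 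Closing this needs an idea the proposal does not contain, for example lazy evaluation of $\hat\mu_i$ with a charging argument for agents that fail the test. The paper's one-line accounting supplies no such trick either; it just asserts $O(|M|)$ for these lines.

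\textbf{Second gap: the main loop.} Your witness-pointer idea is sound. Within a round every $v_i(B)$ is non-increasing, so an agent that falls below $\frac{3}{2}\alpha\hat\mu_i$ never recovers. This improves on the paper's count, which treats the existential test over $N^{(t)}$ as a single step. What does not follow is that a fresh $v_{i'}(B)$ is an $O(1)$ prefix-sum query. The union identity in the proof of \cref{thm:single:goods:invariant}, together with (C5), shows only that $B\setminus\{g_t\}$ is at most two runs in the order of the \emph{currently unassigned} items. Your prefix sums, however, are built once on the original array.
\begin{itemize}
\item Whenever a round stops mid-swap, $A_{i^{(s)}}$ takes the most valuable items of $B^{(s)}_{k^*}\setminus\{g_{k^*}\}$, so $B^{(s-1)}_{k^*}$ straddles a hole of items already given away.
\item Later moves and swaps can assemble $B$ from pieces on both sides of several such holes.
\item A difference of original prefix sums then counts items already given away, and nothing in your argument bounds the number of original-index fragments of $B$.
\end{itemize}
To repair this you would need either range sums over the current item set or a proof that fragmentation stays bounded. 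Rebuilding per-agent prefix sums costs $O(|N||M|)$ per round, and dynamic range-sum structures add a logarithmic factor. If you read the footnote to \cref{thm:single:goods:two-third} as granting constant-time access to $v_i(S)$ for whole bundles, this step goes through without any contiguity claim. The first gap remains even then, because each $\hat\mu_i$ still needs one oracle call per $r$.
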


\begin{proof}
  The recursion in Line~\ref{alg:single:goods:valid-reduce} reduces $|N|$ by one, while Lines \ref{alg:single:goods:B_k-init-for-start} to \ref{alg:single:goods:valid-reduced-instance} run in time $O(|M|)$.
  In each of the $|N|$ iterations of the outer for-loop (Lines \ref{alg:single:goods:outer-for-start} to \ref{alg:single:goods:outer-for-end}), the inner for-loop (Lines \ref{alg:single:goods:inner-for-start} to \ref{alg:single:goods:inner-for-end}) iterates $t - 1 < |N| < |M|$ times, during which the while-loop (Lines \ref{alg:single:goods:inner-while-start} to \ref{alg:single:goods:inner-while-end}) iterates at most $\sum_{k=2}^{t}(|B^{(t)}_k|-1) \le |M|$ times in total.
\end{proof}



\subsection{Tight Instances for the Algorithm} \label{sec:single:goods:tight}

Our analysis above is tight for any number of agents, even when they have identical valuations.

\begin{restatable}{theorem}{SingleGoodsTight} \label{thm:single:goods:tight}
  For any $n\in\{1,2,\ldots\}$, there is a single-category ordered instance $\cI_n$ with $n$ agents, $3n$ goods, and identical valuations, which satisfies the following: for any $\beta \in \left(\frac{2n}{3n - 1},\infty\right)$, no $\beta$-MMS allocation for $\cI_n$ is obtained by \hyperref[alg:single:goods:main-start]{\textproc{ApproxGoods}} given $\cI_{n}$ and any $\alpha\in\mathbb{R}$ as input.
\end{restatable}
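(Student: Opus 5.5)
The plan is to build $\cI_n$ so that the worst case of the analysis in \cref{thm:single:goods:invariant} actually happens. In that analysis the last agent's guarantee is
\[
1-(|N|-1)\left(\tfrac{3}{2}\alpha-1\right),
\]
and this equals $\alpha$ exactly when $\alpha=\frac{2n}{3n-1}$. So the extremal run is one in which each of the first $n-1$ assigned bags is worth (essentially) $\frac{3}{2}\alpha\hat\mu=\frac{3n}{3n-1}\mu$. The last bag then receives the remainder:
\[
n-(n-1)\tfrac{3n}{3n-1}=\tfrac{2n}{3n-1}.
\]
Both numbers are normalized so that $\mu=\hat\mu=1$.

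\textbf{Construction.} I would take $3n$ goods with identical values and quotas $(q^-,q^+)=(0,3)$ or $(1,3)$, so that every initial bag has size $3$ by Lines~\ref{alg:single:goods:B_k-init-for-start}--\ref{alg:single:goods:B_k-init-for-end}. Write $B_k=\{g_k\}\cup\{\text{two small items}\}$. The values are chosen so that three properties hold.
\begin{itemize}
\item \emph{MMS value.} There is a feasible partition into $n$ triples of value exactly $1$ each, so $\mu=1$.
\item \emph{Estimate $\hat\mu$.} The suffix averages at Line~\ref{alg:single:goods:mu-hat} are all at least $1$, with the full average equal to $1$, so $\hat\mu=1$.
\item \emph{Small items.} Every item $g_j$ with $j>n$ is worth slightly less than $\frac{1}{2}\cdot\frac{2n}{3n-1}=\frac{n}{3n-1}$, so that Line~\ref{alg:single:goods:check-two-from-n-th} does not fire for any $\alpha\le\frac{2n}{3n-1}$.
\end{itemize}
A natural template is top items $g_1,\dots,g_n$ of decreasing value and small items arranged so that the greedy packing of $g_{n+1},g_{n+2},\dots$ into $B_n,B_{n-1},\dots$ gives $B_n$ the two largest small items. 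I would then tune the values so that in round $t$ the move/swap updates stop precisely when $v(B)$ first drops below $\frac{3}{2}\alpha$. At that moment $v(B)$ should be as close as possible to $\frac{3}{2}\alpha$, which is achievable because the last removed item has value close to $\frac{1}{2}\alpha$. Each of the first $n-1$ agents then takes about $\frac{3n}{3n-1}$, and the final bag is worth $\frac{2n}{3n-1}$.

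\textbf{Handling all $\alpha$.} The statement quantifies over every real $\alpha$, so I would split into cases.
\begin{itemize}
\item For $\alpha\le\frac{2n}{3n-1}$, no valid reduction occurs. I would trace the run using \crefrange{eq:single:goods:B_k-same}{eq:single:goods:disjoint} and show that the updates halt at the threshold $\frac{3}{2}\alpha$. The remaining bag is then worth at most $n-(n-1)\cdot\frac{3}{2}\alpha$ once the first $n-1$ bags exceed $\frac{3}{2}\alpha-\frac{1}{2}\alpha$ by the designed margins. I expect this to be at most $\frac{2n}{3n-1}$; if the slack from strict inequalities is a problem, I would make the instance depend on $\alpha$ through an $\varepsilon$-perturbation.
\item For larger $\alpha$, either the valid-reduction branch fires, or Line~\ref{alg:single:goods:before-assign} finds no agent and the algorithm fails to output an allocation. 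In the first sub-case the assigned pair plus its complement yields an agent below $\beta$. In the second sub-case no allocation is obtained at all.
\end{itemize}

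\textbf{Main obstacle.} The hardest step is choosing concrete values that keep the valid reduction off for every relevant $\alpha$ while forcing the exact stopping values in every round. The small items must stay below $\frac{1}{2}\alpha$ yet make each taken bag nearly $\frac{3}{2}\alpha$, so I would first try small items all equal to $\frac{n}{3n-1}-\varepsilon$ with top items absorbing the rest. If that fails, I would instead fix $\alpha$-independent values for which the algorithm's output is the same for all $\alpha$ in a range and has minimum value exactly $\frac{2n}{3n-1}$.
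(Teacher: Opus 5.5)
There is a genuine gap, and it lies in how you treat the valid reduction. The test at Line~\ref{alg:single:goods:check-two-from-n-th} compares $v(\{g_n,g_{n+1}\})$ with $\alpha\hat\mu$. This quantity contains the top item $g_n$, not only small items, and it is a fixed non-negative number independent of $\alpha$. So the test fires for all sufficiently small $\alpha$, in particular for every $\alpha\le 0$.

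Your first bullet (``for $\alpha\le\frac{2n}{3n-1}$, no valid reduction occurs'') therefore cannot be arranged by any choice of values. The move/swap tracing you plan for that regime concerns a branch the algorithm never enters. What must be analyzed there is the bundle handed out by the reduction, and the recursion if that bundle is good enough. The tracing would not suffice anyway: for $n\ge 2$ the bound $n-(n-1)\frac{3}{2}\alpha$ on the last bag exceeds $\frac{2n}{3n-1}$ whenever $\alpha<\frac{2n}{3n-1}$. An $\alpha$-dependent perturbation is also not allowed, since $\cI_n$ must be fixed before $\alpha$ is chosen.

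Your second bullet is unsupported as well. If the reduction fired for some $\alpha>\frac{2n}{3n-1}$, agent $i^*$ would receive at least $\alpha$, possibly more than $\beta$. Finally, no concrete instance is given. Your first template, with all small items equal to $\frac{n}{3n-1}-\varepsilon$, cannot produce the intended run: it makes $B^{(n)}_1$ the \emph{most} valuable initial bag and makes every swap value-neutral.

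The missing idea is to make the valid reduction itself the failure mode. The paper takes $(q^-,q^+)=(3,3)$; your quotas would behave identically. It chooses values with $v(g_n)+v(g_{n+1})=\frac{2n}{3n-1}$ exactly and $v(g_{3n})=0$. Then for every $\alpha\le\frac{2n}{3n-1}$ the reduction fires and gives $i^*$ the bundle $\{g_n,g_{n+1},g_{3n}\}$ worth $\frac{2n}{3n-1}<\beta$, so the recursion never needs to be examined.

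The same values make $B^{(n)}_k$ worth exactly $\frac{3n}{3n-1}$ for $k\ge 2$, and $B^{(n)}_1=\{g_1,g_{3n-1},g_{3n}\}$ worth $\frac{2n}{3n-1}$; the two smallest items are $\frac{1}{3n-1}$ and $0$. This gives $\mu=\hat\mu=1$. The remaining regimes then go as follows:
\begin{itemize}
\item For $\alpha\in\left(\frac{2n}{3n-1},1\right]$ the reduction does not fire and $\frac{3}{2}\alpha>\frac{3n}{3n-1}$, so the while-loop never runs. The bags $B^{(n)}_n,\dots,B^{(n)}_2$ are assigned unchanged, and Line~\ref{alg:single:goods:before-assign} fails at $t=1$ because $\frac{2n}{3n-1}<\alpha$.
\item For $\alpha>1$, successful termination would force $v(M)\ge n\alpha>n=v(M)$.
\end{itemize}
So tightness never relies on updates stopping just below $\frac{3}{2}\alpha$, which is the delicate, $\alpha$-sensitive behaviour your plan tries to engineer.
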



\begin{proof}
  Let $n$ be an arbitrary positive integer.
  We prove the claimed property of the single-category ordered instance of goods $\cI_n = \left(N, M, (v_i)_{i\in N}, (q^-, q^+)\right)$ defined as follows:
  \begin{align*}
    N & \coloneqq \{1,2,\ldots,n\}, \\
    M & \coloneqq  \{g_1,g_2,\ldots,g_{3n}\},\\
    \left(q^-,q^+\right) & \coloneqq (3,3); \\
    v_i(g_j) & \coloneqq
    \begin{cases}
      \frac{2n - j}{3n - 1} & \text{if } j \le n, \\[3pt]
      \frac{1}{6n - 2}\left\lceil\frac{5n + 1 - j}{2}\right\rceil & \text{if } n < j \le 3n - 2, \\[3pt]
      \frac{3n - j}{3n - 1} & \text{otherwise}
    \end{cases} & \forall g_j\in M,\forall i\in N.
  \end{align*}
  Given the partition $(P_k)_{k=1}^{n}\in\cF(\cI_n)$ of $M$ defined as
  \begin{align*}
    P_k &\coloneqq
    \begin{cases}
      \{g_k, g_{n + k}, g_{3n + 1 - k}\} & \text{ if } k \le 2, \\
      \{g_{k}, g_{3n - 2k + 3}, g_{3n - 2k + 4}\} & \text{ otherwise}
    \end{cases}
    &\forall k\in \{1,2,\ldots,n\},
  \end{align*}
  we have that $\mu_i(\cI_n) = v_i(P_k) = 1$ for any $k\in\{1,2,\ldots,n\}$ and $i\in N$.
  Let us fix any $\beta \in \left(\frac{2n}{3n - 1},\infty\right)$, and suppose that \hyperref[alg:single:goods:main-start]{\textproc{ApproxGoods}} runs given $\cI_{n}$ and some $\alpha\in\bR$ as input.
  Because $|M| > |N|$ at Line~\ref{alg:single:goods:check-few-items}, and it also follows at Line~\ref{alg:single:goods:mu-hat} that
  \begin{align*}
    v_i\left(B^{(|N|)}_k\right) &= \mathrlap{v_i(\{g_k, g_{3n - 2k + 1}, g_{3n - 2k + 2}\})} \\
    & =
    \begin{cases}
      \frac{2n}{3n - 1} &\text{if }\, k = 1, \\[3pt]
      \frac{3n}{3n - 1} &\text{otherwise}
    \end{cases}
    &\forall k\in\{1,2,\ldots,n\},\forall i\in N,
  \end{align*}
  Line~\ref{alg:single:goods:check-two-from-n-th} is reached with $\hat\mu_i = 1$ for every $i\in N$.
  The rest of the analysis depends on the regime of $\alpha\in\bR$ and confirms that the algorithm never obtains a $\beta$-MMS allocation in any case.



  \begin{case}
    Suppose $\alpha \le \frac{2n}{3n - 1}$.
    Because Line~\ref{alg:single:goods:allocation-valid-reduction} is reached with
    \begin{align*}
      &v_{i^*}(A_{i^*}) = v_{i^*}(\{g_n, g_{n + 1}, g_{3n}\}) = \frac{2n}{3n - 1} < \beta = \beta\, \mu_{i^*}(\cI_n),
    \end{align*}
    $(A_i)_{i\in N}$ is then not a $\beta$-MMS allocation for $\cI$.
  \end{case}

  \begin{case}
    Suppose $\frac{2n}{3n - 1} < \alpha \le 1$.
    Then $B = B^{(|N|)}_t$ for each $t\in\{n,n-1\ldots,2\}$ at Line~\ref{alg:single:goods:assign}.
    In the final iteration with $t = 1$, Line~\ref{alg:single:goods:bag-init} is thus reached with
    \begin{align*}
      &v_i(B) = v_i\!\left(B^{(1)}_1\right) = v_i\!\left(B^{(|N|)}_1\right) = \frac{2n}{3n - 1} < \alpha = \alpha\,\hat\mu_i &\forall i\in N^{(1)},
    \end{align*}
    which immediately makes Line~\ref{alg:single:goods:before-assign} unsuccessful.
  \end{case}

  \begin{case}
    Suppose $\alpha > 1$.
    Then \hyperref[alg:single:goods:main-start]{\textproc{ApproxGoods}} never successfully terminates; otherwise, Lines~\ref{alg:single:goods:before-assign} and \ref{alg:single:goods:assign} would imply that
    \begin{align*}
      v_i(M) = \sum_{t=1}^{n} v_{i}\!\left(A_{i^{(t)}}\right) & = \sum_{t=1}^{n} v_{i^{(t)}}\!\left(A_{i^{(t)}}\right) \\
      & \ge \sum_{t=1}^n \alpha\,\hat\mu_{i^{(t)}} = n\alpha > n & \forall i\in N,
    \end{align*}
    which contradicts the definition of $\cI_n$.
  \end{case}

  The proof of \cref{thm:single:goods:tight} is now completed.
\end{proof}

\section{Discussion} \label{sec:discuss}

A promising future work is to improve our MMS approximations, given their better counterparts in the unconstrained setting~\citep{huang2025fptas,huang2023reduction}.
A tantalizing question is whether imposing lower quotas strictly decreases the best MMS approximation achievable.
Allowing for non-additive valuations would also be of interest.

Beyond MMS fairness, it may be even more desirable to obtain simultaneous guarantees across different fairness criteria \citep{garg2026exploring}, such as a pair of envy-based and share-based notions \citep{amanatidis2020multiple,chaudhury2021little,akrami2025achieving,ashuri2025simultaneously}.
Moreover, extending our results to randomized allocations \citep{aziz2019probabilistic,aziz2024best,babaioff2022best,akrami2023randomized} and/or online arrivals \citep{aleksandrov2020online,zhou2023multi,procaccia2024honor,schiffer2025improved,kulkarni2025online} would be of both theoretical and practical significance.


In addition to fairness, allocation efficiency is another central issue.
A large body of work has sought to simultaneously achieve fairness and efficiency, both in the unconstrained setting \citep{caragiannis2019unreasonable,barman2018finding,bei2021price,mahara2026existence,barman2025introspectively} and under constraints \citep{shoshan2023efficient,wu2025approximate,cookson2025constrained,igarashi2025fair}, motivating us to explore the trade-off between MMS approximation and various efficiency measures both under quota constraints and beyond.
It would also be relevant to quantify the \emph{price} of (i.e., the welfare loss due to) lower quotas, as suggested by \citet{lam2025exact}.



\begin{acks}
  This work was partially supported by JST FOREST Grant Number JPMJFR226O.
\end{acks}



\bibliographystyle{ACM-Reference-Format}
\bibliography{bib/fair-division, bib/matching, bib/complexity, bib/job-scheduling, bib/knapsack, bib/optimization, bib/online}

@inproceedings{akrami2023randomized,
  title     = {Randomized and deterministic maximin-share approximations for fractionally subadditive valuations},
  author    = {Akrami, Hannaneh and Mehlhorn, Kurt and Seddighin, Masoud and Shahkarami, Golnoosh},
  booktitle = {Proceedings of the 37th International Conference on Neural Information Processing Systems},
  pages     = {58821--58832},
  paper     = {https://arxiv.org/pdf/2308.14545},
  review    = {https://openreview.net/forum?id=I3k2NHt1zu},
  year      = {2023}
}

@inproceedings{akrami2023simplification,
  title     = {Simplification and improvement of {MMS} approximation},
  author    = {Akrami, Hannaneh and Garg, Jugal and Sharma, Eklavya and Taki, Setareh},
  booktitle = {Proceedings of the 32nd International Joint Conference on Artificial Intelligence},
  pages     = {2485--2493},
  year      = {2023},
  memo      = {3/4-MMS poly simpler analysis, O(1/n) improvement},
  paper     = {https://arxiv.org/pdf/2303.16788}
}

@inproceedings{akrami2024breaking,
  title     = {Breaking the $3/4$ barrier for approximate maximin share},
  author    = {Akrami, Hannaneh and Garg, Jugal},
  booktitle = {Proceedings of the 35th Annual ACM-SIAM Symposium on Discrete Algorithms},
  pages     = {74--91},
  year      = {2024},
  memo      = {(3/4 + 3/3836)-MMS},
  paper     = {https://epubs.siam.org/doi/pdf/10.1137/1.9781611977912.4}
}

@inproceedings{akrami2025achieving,
  title     = {Achieving maximin share and {EFX/EF1} guarantees simultaneously},
  author    = {Akrami, Hannaneh and Rathi, Nidhi},
  booktitle = {Proceedings of the 39th AAAI Conference on Artificial Intelligence},
  pages     = {13529--13537},
  year      = {2025}
}

@inproceedings{akrami2026matroids,
  title     = {Matroids are equitable},
  author    = {Akrami, Hannaneh and Raj, Roshan and V{\'e}gh, L{\'a}szl{\'o} A.},
  booktitle = {Proceedings of the 37th Annual ACM-SIAM Symposium on Discrete Algorithms},
  pages     = {5843--5860},
  year      = {2026},
  paper     = {https://arxiv.org/pdf/2507.12100}
}

@article{amanatidis2017approximation,
  title     = {Approximation algorithms for computing maximin share allocations},
  author    = {Amanatidis, Georgios and Markakis, Evangelos and Nikzad, Afshin and Saberi, Amin},
  journal   = {ACM Transactions on Algorithms},
  volume    = {13},
  number    = {4},
  pages     = {1--28},
  year      = {2017},
  publisher = {ACM New York, NY, USA},
  paper     = {https://dl.acm.org/doi/pdf/10.1145/3147173},
  memo      = {Journal ver. of `amanatidis2015approximation`}
}

@article{amanatidis2020multiple,
  title     = {Multiple birds with one stone: Beating $1/2$ for {EFX and GMMS} via envy cycle elimination},
  author    = {Amanatidis, Georgios and Markakis, Evangelos and Ntokos, Apostolos},
  journal   = {Theoretical Computer Science},
  volume    = {841},
  pages     = {94--109},
  year      = {2020},
  publisher = {Elsevier},
  paper     = {https://arxiv.org/pdf/1909.07650}
}

@article{amanatidis2023fair,
  title     = {Fair division of indivisible goods: Recent progress and open questions},
  author    = {Amanatidis, Georgios and Aziz, Haris and Birmpas, Georgios and Filos-Ratsikas, Aris and Li, Bo and Moulin, Herv{\'e} and Voudouris, Alexandros A. and Wu, Xiaowei},
  journal   = {Artificial Intelligence},
  volume    = {322},
  pages     = {103965},
  year      = {2023},
  publisher = {Elsevier},
  paper     = {https://www.sciencedirect.com/science/article/pii/S000437022300111X/pdfft?md5=4754f97775b975745cbace8fb15bc74d&pid=1-s2.0-S000437022300111X-main.pdf}
}

@inproceedings{ashuri2025simultaneously,
  title     = {Simultaneously satisfying {MXS and EFL}},
  author    = {Ashuri, Arash and Gkatzelis, Vasilis},
  booktitle = {Proceedings of the 26th ACM Conference on Economics and Computation},
  pages     = {689--718},
  year      = {2025},
  paper     = {https://arxiv.org/pdf/2412.00358}
}

@inproceedings{aziz2017algorithms,
  title     = {Algorithms for max-min share fair allocation of indivisible chores},
  author    = {Aziz, Haris and Rauchecker, Gerhard and Schryen, Guido and Walsh, Toby},
  booktitle = {Proceedings of the 31st AAAI Conference on Artificial Intelligence},
  pages     = {335--341},
  year      = {2017},
  paper     = {https://cdn.aaai.org/ojs/10582/10582-13-14110-1-2-20201228.pdf}
}

@article{aziz2019probabilistic,
  title     = {A probabilistic approach to voting, allocation, matching, and coalition formation},
  author    = {Aziz, Haris},
  journal   = {The Future of Economic Design: The Continuing Development of a Field as Envisioned by Its Researchers},
  pages     = {45--50},
  year      = {2019},
  publisher = {Springer},
  paper     = {https://link.springer.com/content/pdf/10.1007/978-3-030-18050-8.pdf}
}

@article{aziz2024best,
  title     = {Best of both worlds: Ex ante and ex post fairness in resource allocation},
  author    = {Aziz, Haris and Freeman, Rupert and Shah, Nisarg and Vaish, Rohit},
  journal   = {Operations Research},
  volume    = {72},
  number    = {4},
  pages     = {1674--1688},
  year      = {2024},
  publisher = {INFORMS},
  paper     = {https://pubsonline.informs.org/doi/epdf/10.1287/opre.2022.2432}
}

@inproceedings{babaioff2022best,
  title     = {On best-of-both-worlds fair-share allocations},
  author    = {Babaioff, Moshe and Ezra, Tomer and Feige, Uriel},
  booktitle = {Proceedings of the 18th International Conference on Web and Internet Economics},
  pages     = {237--255},
  year      = {2022},
  paper     = {https://arxiv.org/pdf/2102.04909}
}

@inproceedings{barman2018finding,
  title     = {Finding fair and efficient allocations},
  author    = {Barman, Siddharth and Krishnamurthy, Sanath Kumar and Vaish, Rohit},
  booktitle = {Proceedings of the 19th ACM Conference on Economics and Computation},
  pages     = {557--574},
  year      = {2018},
  paper     = {https://dl.acm.org/doi/pdf/10.1145/3219166.3219176}
}

@article{barman2020approximation,
  title     = {Approximation algorithms for maximin fair division},
  author    = {Barman, Siddharth and Krishnamurthy, Sanath Kumar},
  journal   = {ACM Transactions on Economics and Computation},
  volume    = {8},
  number    = {1},
  pages     = {1--28},
  year      = {2020},
  publisher = {ACM New York, NY, USA}
}

@inproceedings{barman2023finding,
  title     = {Finding fair allocations under budget constraints},
  author    = {Barman, Siddharth and Khan, Arindam and Shyam, Sudarshan and Sreenivas, K. V. N.},
  booktitle = {Proceedings of the 37th AAAI Conference on Artificial Intelligence},
  pages     = {5481--5489},
  year      = {2023},
  paper     = {https://arxiv.org/pdf/2208.08168}
}

@inproceedings{barman2023guaranteeing,
  title     = {Guaranteeing envy-freeness under generalized assignment constraints},
  author    = {Barman, Siddharth and Khan, Arindam and Shyam, Sudarshan and Sreenivas, K. V. N.},
  booktitle = {Proceedings of the 24th ACM Conference on Economics and Computation},
  pages     = {242--269},
  year      = {2023},
  paper     = {https://dl.acm.org/doi/pdf/10.1145/3580507.3597698}
}

@misc{barman2025introspectively,
  title        = {Introspectively envy-free and efficient allocation of indivisible mixed manna},
  author       = {Barman, Siddharth and Verma, Paritosh},
  howpublished = {arXiv:2509.18673},
  year         = {2025},
  paper        = {https://arxiv.org/pdf/2509.18673}
}

@article{bei2021price,
  title     = {The price of fairness for indivisible goods},
  author    = {Bei, Xiaohui and Lu, Xinhang and Manurangsi, Pasin and Suksompong, Warut},
  journal   = {Theory of Computing Systems},
  volume    = {65},
  pages     = {1069--1093},
  year      = {2021},
  publisher = {Springer},
  paper     = {https://arxiv.org/pdf/1905.04910}
}

@inproceedings{biswas2018fair,
  title     = {Fair division under cardinality constraints},
  author    = {Arpita Biswas and Siddharth Barman},
  booktitle = {Proceedings of the 27th International Joint Conference on Artificial Intelligence},
  pages     = {91--97},
  year      = {2018},
  paper     = {https://arxiv.org/abs/1804.09521},
  memo      = {The proof of Lemma 5 in the IJCAI version (https://www.ijcai.org/proceedings/2018/0013.pdf) was wrong; the corrected lemma in the arXiv version is limited to laminar matroids.}
}

@inproceedings{biswas2019matroid,
  title     = {Matroid constrained fair allocation problem},
  author    = {Biswas, Arpita and Barman, Siddharth},
  booktitle = {Proceedings of the 33rd AAAI Conference on Artificial Intelligence},
  pages     = {9921--9922},
  year      = {2019},
  paper     = {https://cdn.aaai.org/ojs/5097/5097-13-8160-1-10-20190710.pdf}
}

@article{bouveret2016characterizing,
  title     = {Characterizing conflicts in fair division of indivisible goods using a scale of criteria},
  author    = {Bouveret, Sylvain and Lema{\^{i}}tre, Michel},
  journal   = {Autonomous Agents and Multi-Agent Systems},
  volume    = {30},
  number    = {2},
  pages     = {259--290},
  year      = {2016},
  publisher = {Springer},
  paper     = {https://link.springer.com/content/pdf/10.1007/s10458-015-9287-3.pdf}
}

@book{brams1996fair,
  title     = {Fair Division: From Cake-Cutting to Dispute Resolution},
  author    = {Brams, Steven J. and Taylor, Alan D.},
  year      = {1996},
  publisher = {Cambridge University Press}
}

@article{budish2011combinatorial,
  title     = {The combinatorial assignment problem: Approximate competitive equilibrium from equal incomes},
  author    = {Budish, Eric},
  journal   = {Journal of Political Economy},
  volume    = {119},
  number    = {6},
  pages     = {1061--1103},
  year      = {2011},
  publisher = {University of Chicago Press Chicago, IL},
  paper     = {https://www.jstor.org/stable/pdf/10.1086/664613.pdf}
}

@article{budish2017course,
  title     = {Course match: A large-scale implementation of approximate competitive equilibrium from equal incomes for combinatorial allocation},
  author    = {Budish, Eric and Cachon, G{\'e}rard P. and Kessler, Judd B. and Othman, Abraham},
  journal   = {Operations Research},
  volume    = {65},
  number    = {2},
  pages     = {314--336},
  year      = {2017},
  publisher = {INFORMS}
}

@article{caragiannis2019unreasonable,
  title     = {The unreasonable fairness of maximum {Nash} welfare},
  author    = {Caragiannis, Ioannis and Kurokawa, David and Moulin, Herv{\'e} and Procaccia, Ariel D. and Shah, Nisarg and Wang, Junxing},
  journal   = {ACM Transactions on Economics and Computation},
  volume    = {7},
  number    = {3},
  pages     = {1--32},
  year      = {2019},
  publisher = {ACM New York, NY, USA},
  paper     = {https://www.cs.toronto.edu/~nisarg/papers/mnw.ec16.pdf}
}

@article{chaudhury2021little,
  title     = {A little charity guarantees almost envy-freeness},
  author    = {Chaudhury, Bhaskar R. and Kavitha, Telikepalli and Mehlhorn, Kurt and Sgouritsa, Alkmini},
  journal   = {SIAM Journal on Computing},
  volume    = {50},
  number    = {4},
  pages     = {1336--1358},
  year      = {2021},
  publisher = {SIAM}
}

@inproceedings{cookson2025constrained,
  title     = {Constrained fair and efficient allocations},
  author    = {Cookson, Benjamin and Ebadian, Soroush and Shah, Nisarg},
  booktitle = {Proceedings of the 39th AAAI Conference on Artificial Intelligence},
  year      = {2025},
  pages     = {13718--13726},
  paper     = {https://arxiv.org/pdf/2411.00133}
}

@article{dai2023maximum,
  title     = {Maximum {Nash} social welfare under budget-feasible {EFX}},
  author    = {Dai, Sijia and Gao, Guichen and Liu, Shengxin and Lim, Boon Han and Ning, Li and Xu, Yicheng and Zhang, Yong},
  journal   = {IEEE Transactions on Network Science and Engineering},
  volume    = {11},
  number    = {2},
  pages     = {1810--1820},
  year      = {2023},
  publisher = {IEEE},
  paper     = {https://ieeexplore.ieee.org/stamp/stamp.jsp?arnumber=10315179}
}

@article{deng2024budgeted,
  title     = {The budgeted maximin share allocation problem},
  author    = {Deng, Bin and Li, Weidong},
  journal   = {Optimization Letters},
  volume    = {19},
  number    = {5},
  pages     = {955--968},
  year      = {2024},
  publisher = {Springer},
  paper     = {https://link.springer.com/content/pdf/10.1007/s11590-024-02145-6.pdf}
}

@article{dror2023fair,
  title     = {On fair division under heterogeneous matroid constraints},
  author    = {Dror, Amitay and Feldman, Michal and Segal-Halevi, Erel},
  journal   = {Journal of Artificial Intelligence Research},
  volume    = {76},
  pages     = {567--611},
  year      = {2023},
  publisher = {AI Access Foundation},
  paper     = {https://dl.acm.org/doi/pdf/10.1613/jair.1.13779}
}

@inproceedings{elkind2024fair,
  title     = {Fair division of chores with budget constraints},
  author    = {Elkind, Edith and Igarashi, Ayumi and Teh, Nicholas},
  booktitle = {Proceedings of the 17th International Symposium on Algorithmic Game Theory},
  pages     = {55--71},
  year      = {2024},
  paper     = {https://arxiv.org/pdf/2410.23979}
}

@inproceedings{feige2021tight,
  title     = {A tight negative example for {MMS} fair allocations},
  author    = {Feige, Uriel and Sapir, Ariel and Tauber, Laliv},
  booktitle = {Proceedings of the 17th International Conference on Web and Internet Economics},
  pages     = {355--372},
  year      = {2021},
  paper     = {https://arxiv.org/pdf/2104.04977}
}

@misc{feige2022improved,
  title        = {Improved maximin fair allocation of indivisible items to three agents},
  author       = {Feige, Uriel and Norkin, Alexey},
  howpublished = {arXiv:2205.05363},
  year         = {2022},
  paper        = {https://arxiv.org/pdf/2205.05363}
}

@misc{feige2022maximin,
  author       = {Uriel Feige},
  title        = {Maximin fair allocations with two item values},
  year         = {2022},
  howpublished = {\url{https://www.wisdom.weizmann.ac.il/~feige/mypapers/MMSab.pdf}},
  note         = {Last Accessed: April 8th, 2025}
}

@inproceedings{ferraioli2014regular,
  title     = {On regular and approximately fair allocations of indivisible goods},
  author    = {Ferraioli, Diodato and Gourv{\`e}s, Laurent and Monnot, J{\'e}r{\^o}me},
  booktitle = {Proceedings of the 13th International Conference on Autonomous Agents and Multi-agent Systems},
  pages     = {997--1004},
  year      = {2014},
  paper     = {https://ifaamas.org/Proceedings/aamas2014/aamas/p997.pdf}
}

@article{foley1967resource,
  title   = {Resource allocation and the public sector},
  author  = {Foley, Duncan K.},
  journal = {Yale Economic Essays},
  volume  = {7},
  pages   = {45--98},
  year    = {1967},
  memo    = {Envy-freeness (EF)}
}

@inproceedings{garbea2023efx,
  title     = {{EFx} budget-feasible allocations with high {Nash} welfare},
  author    = {Garbea, Marius and Gkatzelis, Vasilis and Tan, Xizhi},
  booktitle = {Proceedings of the 26th European Conference on Artificial Intelligence},
  year      = {2023},
  pages     = {795--802},
  paper     = {https://arxiv.org/pdf/2305.02280}
}

@article{garg2010assigning,
  title     = {Assigning papers to referees},
  author    = {Garg, Naveen and Kavitha, Telikepalli and Kumar, Amit and Mehlhorn, Kurt and Mestre, Juli{\'a}n},
  journal   = {Algorithmica},
  volume    = {58},
  number    = {1},
  pages     = {119--136},
  year      = {2010},
  publisher = {Springer-Verlag Berlin, Heidelberg},
  paper     = {https://link.springer.com/content/pdf/10.1007/s00453-009-9386-0.pdf}
}

@inproceedings{garg2019approximating,
  title     = {Approximating maximin share allocations},
  author    = {Garg, Jugal and McGlaughlin, Peter and Taki, Setareh},
  booktitle = {Proceedings of the 2nd Symposium on Simplicity in Algorithms},
  pages     = {20:1--20:11},
  year      = {2019},
  paper     = {https://drops.dagstuhl.de/storage/01oasics/oasics-vol069-sosa2019/OASIcs.SOSA.2019.20/OASIcs.SOSA.2019.20.pdf}
}

@article{garg2021improved,
  title   = {An improved approximation algorithm for maximin shares},
  author  = {Garg, Jugal and Taki, Setareh},
  journal = {Artificial Intelligence},
  volume  = {300},
  pages   = {103547},
  number  = {C},
  year    = {2021},
  paper   = {https://arxiv.org/pdf/1903.00029}
}

@inproceedings{garg2026exploring,
  title     = {Exploring relations among fairness notions in discrete fair division},
  author    = {Garg, Jugal and Sharma, Eklavya},
  booktitle = {Proceedings of the 25th International Conference on Autonomous Agents and Multi-Agent Systems},
  year      = {2026},
  paper     = {https://arxiv.org/pdf/2502.02815}
}

@article{ghodsi2021fair,
  title     = {Fair allocation of indivisible goods: Improvement},
  author    = {Ghodsi, Mohammad and Hajiaghayi, Mohammad T. and Seddighin, Masoud and Seddighin, Saeed and Yami, Hadi},
  journal   = {Mathematics of Operations Research},
  volume    = {46},
  number    = {3},
  pages     = {1038--1053},
  year      = {2021},
  publisher = {INFORMS},
  paper     = {https://pubsonline.informs.org/doi/epdf/10.1287/moor.2020.1096},
  memo      = {Supersedes ghodsi2018fair, except 4/5-MMS for four agents}
}

@article{gourves2019maximin,
  title     = {On maximin share allocations in matroids},
  author    = {Gourv{\`e}s, Laurent and Monnot, J{\'e}r{\^o}me},
  journal   = {Theoretical Computer Science},
  volume    = {754},
  pages     = {50--64},
  year      = {2019},
  publisher = {Elsevier},
  paper     = {https://www.sciencedirect.com/science/article/pii/S0304397518303384/pdfft?md5=e5ffec82cc73b749959d3d60c31f2b94&pid=1-s2.0-S0304397518303384-main.pdf},
  memo      = {Restricting the union of bundles}
}

@inproceedings{heidari2026improved,
  title     = {Improved maximin share guarantee for additive valuations},
  author    = {Heidari, Ehsan and Kaviani, Alireza and Seddighin, Masoud and Shahrezaei, AmirMohammad},
  booktitle = {Proceedings of the 37th Annual ACM-SIAM Symposium on Discrete Algorithms},
  pages     = {2239--2290},
  year      = {2026},
  paper     = {https://arxiv.org/pdf/2510.10423}
}

@inproceedings{huang2021algorithmic,
  title     = {An algorithmic framework for approximating maximin share allocation of chores},
  author    = {Huang, Xin and Lu, Pinyan},
  booktitle = {Proceedings of the 22nd ACM Conference on Economics and Computation},
  pages     = {630--631},
  year      = {2021},
  memo      = {11/9-MMS},
  paper     = {https://arxiv.org/pdf/1907.04505}
}

@inproceedings{huang2023reduction,
  title     = {A reduction from chores allocation to job scheduling},
  author    = {Huang, Xin and Segal-Halevi, Erel},
  booktitle = {Proceedings of the 24th ACM Conference on Economics and Computation},
  pages     = {908--908},
  year      = {2023},
  memo      = {13/11-MMS},
  paper     = {https://arxiv.org/pdf/2302.04581}
}

@misc{huang2025fptas,
  title        = {An {FPTAS} for $7/9$-approximation to maximin share allocations},
  author       = {Huang, Xin and Zhou, Shengwei},
  howpublished = {arXiv:2511.13056},
  year         = {2025},
  paper        = {https://arxiv.org/pdf/2511.13056}
}

@inproceedings{hummel2022maximin,
  title     = {Maximin shares under cardinality constraints},
  author    = {Hummel, Halvard and Hetland, Magnus L.},
  booktitle = {Proceedings of the 19th European Conference on Multi-Agent Systems},
  pages     = {188--206},
  year      = {2022},
  note      = {A full version is available at \url{https://arxiv.org/pdf/2106.07300}.},
  memo      = {2/3-cardinality-MMS},
  cited_by  = {https://scholar.google.com/scholar?cites=12421044056997004364&as_sdt=2005&sciodt=0,5&hl=en},
  paper     = {https://arxiv.org/pdf/2106.07300}
}

@article{hummel2025maximin,
  title   = {Maximin shares in hereditary set systems},
  author  = {Hummel, Halvard},
  journal = {ACM Transactions on Economics and Computation},
  volume  = {13},
  number  = {3},
  pages   = {1-33},
  year    = {2025},
  paper   = {https://arxiv.org/pdf/2404.11582}
}

@misc{igarashi2025fair,
  title        = {Fair and efficient allocation of indivisible items under category constraints},
  author       = {Igarashi, Ayumi and Meunier, Fr{\'e}d{\'e}ric},
  howpublished = {arXiv:2503.20260},
  paper        = {https://arxiv.org/pdf/2503.20260},
  year         = {2025}
}

@article{kurokawa2018fair,
  title     = {Fair enough: Guaranteeing approximate maximin shares},
  author    = {Kurokawa, David and Procaccia, Ariel D. and Wang, Junxing},
  journal   = {Journal of the ACM},
  volume    = {65},
  number    = {2},
  pages     = {1--27},
  year      = {2018},
  publisher = {ACM New York, NY, USA},
  paper     = {https://dl.acm.org/doi/pdf/10.1145/3140756}
}

@inproceedings{lam2025exact,
  title     = {The (exact) price of cardinality for indivisible goods: A parametric perspective},
  author    = {Lam, Alexander and Li, Bo and Sun, Ankang},
  booktitle = {Proceedings of the 39th AAAI Conference on Artificial Intelligence},
  year      = {2025},
  pages     = {13985--13992},
  paper     = {https://arxiv.org/pdf/2501.01660}
}

@article{li2021fair-division,
  title     = {The fair division of hereditary set systems},
  author    = {Li, Zhentao and Vetta, Adrian},
  journal   = {ACM Transactions on Economics and Computation},
  volume    = {9},
  number    = {2},
  pages     = {1--19},
  year      = {2021},
  publisher = {ACM New York, NY, USA},
  paper     = {https://dl.acm.org/doi/pdf/10.1145/3434410}
}

@inproceedings{li2021fair-scheduling,
  title     = {Fair scheduling for time-dependent resources},
  author    = {Li, Bo and Li, Minming and Zhang, Ruilong},
  booktitle = {Proceedings of the 35th International Conference on Neural Information Processing Systems},
  pages     = {21744--21756},
  year      = {2021},
  paper     = {https://proceedings.neurips.cc/paper_files/paper/2021/file/b5b1d9ada94bb80609d21eecf7a2ce7a-Paper.pdf},
  review    = {https://openreview.net/forum?id=QMJb9BvLqXU}
}

@inproceedings{li2023fair,
  title     = {Fair allocation of indivisible chores: Beyond additive costs},
  author    = {Li, Bo and Wang, Fangxiao and Zhou, Yu},
  booktitle = {Proceedings of the 37th International Conference on Neural Information Processing Systems},
  pages     = {54366--54385},
  year      = {2023},
  paper     = {https://arxiv.org/pdf/2205.10520},
  review    = {https://openreview.net/forum?id=uJmsYZiu3E}
}

@inproceedings{mackin2016allocating,
  title     = {Allocating indivisible items in categorized domains},
  author    = {Mackin, Erika and Xia, Lirong},
  booktitle = {Proceedings of the 25th International Joint Conference on Artificial Intelligence},
  pages     = {359--365},
  year      = {2016},
  paper     = {https://www.ijcai.org/Proceedings/16/Papers/058.pdf}
}

@inproceedings{mahara2026existence,
  title     = {Existence of fair and efficient allocation of indivisible chores},
  author    = {Mahara, Ryoga},
  booktitle = {Proceedings of the 37th Annual ACM-SIAM Symposium on Discrete Algorithms},
  pages     = {6742--6766},
  year      = {2026},
  paper     = {https://arxiv.org/pdf/2507.09544}
}

@book{moulin2004fair,
  title     = {Fair Division and Collective Welfare},
  author    = {Moulin, Herv{\'e}},
  year      = {2004},
  publisher = {MIT Press}
}

@article{moulin2019fair,
  title     = {Fair division in the internet age},
  author    = {Moulin, Herv{\'e}},
  journal   = {Annual Review of Economics},
  volume    = {11},
  number    = {1},
  pages     = {407--441},
  year      = {2019},
  publisher = {Annual Reviews},
  paper     = {https://www.annualreviews.org/docserver/fulltext/economics/11/1/annurev-economics-080218-025559.pdf}
}

@inproceedings{procaccia2014fair,
  title     = {Fair enough: Guaranteeing approximate maximin shares},
  author    = {Procaccia, Ariel D. and Wang, Junxing},
  booktitle = {Proceedings of the 15th ACM Conference on Economics and Computation},
  pages     = {675--692},
  year      = {2014},
  memo      = {2/3-MMS poly (only when the number of agents is constant.)},
  paper     = {https://dl.acm.org/doi/pdf/10.1145/2600057.2602835}
}

@inproceedings{shoshan2023efficient,
  title     = {Efficient nearly-fair division with capacity constraints},
  author    = {Shoshan, Hila and Hazon, Noam and Segal-Halevi, Erel},
  booktitle = {Proceedings of the 22nd International Conference on Autonomous Agents and Multiagent Systems},
  pages     = {206--214},
  year      = {2023},
  paper     = {https://arxiv.org/pdf/2205.07779}
}

@article{steinhaus1948problem,
  title   = {The problem of fair division},
  author  = {Steinhaus, Hugo},
  journal = {Econometrica},
  volume  = {16},
  pages   = {101--104},
  year    = {1948},
  memo    = {Proportionality (PROP)}
}

@article{suksompong2021constraints,
  title     = {Constraints in fair division},
  author    = {Suksompong, Warut},
  journal   = {ACM SIGecom Exchanges},
  volume    = {19},
  number    = {2},
  pages     = {46--61},
  year      = {2021},
  publisher = {ACM New York, NY, USA},
  paper     = {https://dl.acm.org/doi/pdf/10.1145/3505156.3505162}
}

@inproceedings{wang2024fairness,
  title     = {The fairness of maximum {Nash} social welfare under matroid constraints and beyond},
  author    = {Wang, Yuanyuan and Chen, Xin and Nong, Qingqin},
  booktitle = {Proceedings of the 20th International Conference on Web and Internet Economics},
  pages     = {172--189},
  year      = {2024},
  paper     = {https://arxiv.org/pdf/2411.01462}
}

@article{wang2025guaranteeing,
  title     = {Guaranteeing fairness and efficiency under budget constraints},
  author    = {Wang, Yuanyuan and Chen, Xin and Fang, Qizhi and Nong, Qingqin and Liu, Wenjing},
  journal   = {Journal of Combinatorial Optimization},
  volume    = {49},
  number    = {3},
  pages     = {1--21},
  year      = {2025},
  publisher = {Springer US New York},
  paper     = {https://link.springer.com/content/pdf/10.1007/s10878-025-01275-6.pdf}
}

@article{wu2025approximate,
  title     = {Approximate envy-freeness in indivisible resource allocation with budget constraints},
  author    = {Wu, Xiaowei and Li, Bo and Gan, Jiarui},
  journal   = {Information and Computation},
  volume    = {303},
  pages     = {105264},
  year      = {2025},
  publisher = {Elsevier},
  paper     = {https://www.sciencedirect.com/science/article/pii/S0890540124001299/pdfft?md5=670546b6b80f5e559b79dfe4c9074274&pid=1-s2.0-S0890540124001299-main.pdf},
  memo      = {Supersedes wu2021budget and gan2023approximation}
}

@article{hochbaum1987using,
  title     = {Using dual approximation algorithms for scheduling problems: Theoretical and practical results},
  author    = {Hochbaum, Dorit S. and Shmoys, David B.},
  journal   = {Journal of the ACM},
  volume    = {34},
  number    = {1},
  pages     = {144--162},
  year      = {1987},
  publisher = {ACM New York, NY, USA},
  cited_by  = {https://scholar.google.com/scholar?cites=14812103578878569073&as_sdt=2005&sciodt=0,5&hl=en&oi=gsb},
  paper     = {https://dl.acm.org/doi/pdf/10.1145/7531.7535}
}

@article{woeginger1997polynomial,
  title     = {A polynomial-time approximation scheme for maximizing the minimum machine completion time},
  author    = {Woeginger, Gerhard J.},
  journal   = {Operations Research Letters},
  volume    = {20},
  number    = {4},
  pages     = {149--154},
  year      = {1997},
  publisher = {Elsevier},
  paper     = {https://www-sciencedirect-com.utokyo.idm.oclc.org/science/article/pii/S0167637796000557/pdf?md5=ba4b0cf26eab024c2e8ac35f3e845017&pid=1-s2.0-S0167637796000557-main.pdf}
}

@article{woeginger2000does,
  title     = {When does a dynamic programming formulation guarantee the existence of a fully polynomial time approximation scheme ({FPTAS})?},
  author    = {Woeginger, Gerhard J.},
  journal   = {INFORMS Journal on Computing},
  volume    = {12},
  number    = {1},
  pages     = {57--74},
  year      = {2000},
  publisher = {INFORMS},
  paper     = {https://citeseerx.ist.psu.edu/document?repid=rep1&type=pdf&doi=5858a4646ba18372be4b3c63c31fdd55f7807e5d}
}

@article{arulselvan2018matchings,
  title     = {Matchings with lower quotas: Algorithms and complexity},
  author    = {Arulselvan, Ashwin and Cseh, {\'A}gnes and Gro$\beta$, Martin and Manlove, David F. and Matuschke, Jannik},
  journal   = {Algorithmica},
  volume    = {80},
  number    = {1},
  pages     = {185--208},
  year      = {2018},
  publisher = {Springer-Verlag Berlin, Heidelberg},
  paper     = {https://link.springer.com/content/pdf/10.1007/s00453-016-0252-6.pdf}
}

@inproceedings{aziz2022matching,
  title     = {Matching market design with constraints},
  author    = {Aziz, Haris and Bir{\'o}, P{\'e}ter and Yokoo, Makoto},
  booktitle = {Proceedings of the 36th AAAI Conference on Artificial Intelligence},
  pages     = {12308--12316},
  year      = {2022},
  paper     = {https://cdn.aaai.org/ojs/21495/21495-13-25508-1-2-20220628.pdf}
}

@article{biro2010college,
  title     = {The college admissions problem with lower and common quotas},
  author    = {Bir{\'o}, P{\'e}ter and Fleiner, Tam{\'a}s and Irving, Robert W. and Manlove, David F.},
  journal   = {Theoretical Computer Science},
  volume    = {411},
  number    = {34-36},
  pages     = {3136--3153},
  year      = {2010},
  publisher = {Elsevier Science Publishers Ltd. Essex, UK},
  paper     = {https://www.sciencedirect.com/science/article/pii/S0304397510002860/pdf?md5=e5debce03bea7e71cae64a53eba9443d&pid=1-s2.0-S0304397510002860-main.pdf}
}

@article{gale1962college,
  title     = {College admissions and the stability of marriage},
  author    = {Gale, David and Shapley, Lloyd S.},
  journal   = {The American Mathematical Monthly},
  volume    = {69},
  number    = {1},
  pages     = {9--15},
  year      = {1962},
  publisher = {Taylor \& Francis}
}

@book{gusfield1989stable,
  title     = {The Stable Marriage Problem: Structure and Algorithms},
  author    = {Gusfield, Dan and Irving, Robert W.},
  year      = {1989},
  publisher = {MIT Press}
}

@article{hamada2016hospitals,
  title     = {The hospitals/residents problem with lower quotas},
  author    = {Hamada, Koki and Iwama, Kazuo and Miyazaki, Shuichi},
  journal   = {Algorithmica},
  volume    = {74},
  pages     = {440--465},
  year      = {2016},
  publisher = {Springer},
  paper     = {https://link.springer.com/content/pdf/10.1007/s00453-014-9951-z.pdf}
}

@inproceedings{huang2010classified,
  title     = {Classified stable matching},
  author    = {Huang, Chien-Chung},
  booktitle = {Proceedings of the 21st Annual ACM-SIAM Symposium on Discrete Algorithms},
  pages     = {1235--1253},
  year      = {2010},
  paper     = {https://epubs.siam.org/doi/epdf/10.1137/1.9781611973075.99}
}

@article{kamada2017stability,
  title     = {Stability concepts in matching under distributional constraints},
  author    = {Kamada, Yuichiro and Kojima, Fuhito},
  journal   = {Journal of Economic Theory},
  volume    = {168},
  pages     = {107--142},
  year      = {2017},
  publisher = {Elsevier},
  paper     = {https://www.sciencedirect.com/science/article/pii/S0022053116301156/pdfft?casa_token=t9oHN6CnjzMAAAAA:pHreV_QHdaObX1hc7ETTrxPj_LubQiSy7iPkj1sibvMWAZz-AHr_cpigLGIMD52mfJP10RCEEbhB&md5=a4ae13ff49a8e71b695dd7154ff1d3d5&pid=1-s2.0-S0022053116301156-main.pdf}
}

@inproceedings{nasre2017popular,
  title     = {Popular matchings with lower quotas},
  author    = {Nasre, Meghana and Nimbhorkar, Prajakta},
  booktitle = {Proceedings of the 37th IARCS Annual Conference on Foundations of Software Technology and Theoretical Computer Science},
  year      = {2017},
  pages     = {44:1--44:15},
  paper     = {https://d-nb.info/1153079542/34#page=611}
}

@article{yokoi2020envy,
  title     = {Envy-free matchings with lower quotas},
  author    = {Yokoi, Yu},
  journal   = {Algorithmica},
  volume    = {82},
  number    = {2},
  pages     = {188--211},
  year      = {2020},
  publisher = {Springer US New York},
  paper     = {https://arxiv.org/pdf/1704.04888}
}

@inproceedings{aleksandrov2015online,
  title     = {Online fair division: Analysing a food bank problem},
  author    = {Aleksandrov, Martin and Aziz, Haris and Gaspers, Serge and Walsh, Toby},
  booktitle = {Proceedings of the 24th International Joint Conference on Artificial Intelligence},
  pages     = {2540--2546},
  year      = {2015},
  paper     = {https://arxiv.org/pdf/1502.07571}
}

@inproceedings{aleksandrov2020online,
  title     = {Online fair division: A survey},
  author    = {Aleksandrov, Martin and Walsh, Toby},
  booktitle = {Proceedings of the 34th AAAI Conference on Artificial Intelligence},
  pages     = {13557--13562},
  year      = {2020},
  paper     = {https://cdn.aaai.org/ojs/7081/7081-13-10310-1-10-20200526.pdf}
}

@inproceedings{kulkarni2025online,
  title     = {Online fair division: Towards ex-post constant {MMS} guarantees},
  author    = {Kulkarni, Pooja and Mehta, Ruta and Shahkar, Parnian},
  booktitle = {Proceedings of the 26th ACM Conference on Economics and Computation},
  pages     = {638--638},
  year      = {2025},
  paper     = {https://arxiv.org/pdf/2503.02088}
}

@inproceedings{mertzanidis2024automating,
  title     = {Automating food drop: The power of two choices for dynamic and fair food allocation},
  author    = {Mertzanidis, Marios and Psomas, Alexandros and Verma, Paritosh},
  booktitle = {Proceedings of the 25th ACM Conference on Economics and Computation},
  pages     = {243--243},
  year      = {2024},
  paper     = {https://arxiv.org/pdf/2406.06363}
}

@inproceedings{procaccia2024honor,
  title     = {Honor among bandits: No-regret learning for online fair division},
  author    = {Procaccia, Ariel D. and Schiffer, Ben and Zhang, Shirley},
  year      = {2024},
  booktitle = {Proceedings of the 38th International Conference on Neural Information Processing Systems},
  pages     = {13183-13227},
  paper     = {https://arxiv.org/pdf/2407.01795},
  review    = {https://openreview.net/forum?id=OCQbC0eDJJ}
}

@inproceedings{schiffer2025improved,
  title     = {Improved regret bounds for online fair division with bandit learning},
  author    = {Schiffer, Benjamin and Zhang, Shirley},
  booktitle = {Proceedings of the 39th AAAI Conference on Artificial Intelligence},
  pages     = {14079-14086},
  year      = {2025},
  paper     = {https://arxiv.org/pdf/2501.07022}
}

@inproceedings{zhou2023multi,
  title     = {Multi-agent online scheduling: {MMS} allocations for indivisible items},
  author    = {Zhou, Shengwei and Bai, Rufan and Wu, Xiaowei},
  booktitle = {Proceedings of the 40th International Conference on Machine Learning},
  pages     = {42506--42516},
  year      = {2023},
  paper     = {https://proceedings.mlr.press/v202/zhou23a/zhou23a.pdf}
}

@book{korte2018combinatorial,
  title     = {Combinatorial Optimization: Theory and Algorithms},
  author    = {Korte, Bernhard and Vygen, Jens},
  year      = {2018},
  publisher = {Springer},
  edition   = {6},
  paper     = {https://link.springer.com/content/pdf/10.1007/978-3-662-56039-6.pdf}
}


\clearpage
\appendix

\section{A $\left(\frac{3n-1}{2n}\right)$-MMS allocation algorithm for single-category chores} \label{sec:single:chores}

\begin{algorithm}[tb]
  \centering

  \begin{algorithmic}[1]
    \Function{ApproxChores}{$\cI = \left(N,M,(v_i)_{i\in N},(q^-,q^+)\right),\alpha$} \label{alg:single:chores:main-start}

    \If{$|M| \le |N|$} \label{alg:single:chores:check-few-items}
    \State $A_i\gets
    \begin{cases}
      \{g_{i}\} & \text{if } i \le |M|, \\
      \emptyset & \text{otherwise}
    \end{cases}$ for each $i \in N = \{1,\ldots,|N|\}$. \label{alg:single:chores:bundle-few-items}
    \State \Return $(A_i)_{i\in N}$ \label{alg:single:chores:allocation-few-items}
    \EndIf \label{alg:single:chores:end-check-few-items}


    \For{$k \gets |N|,|N|-1,\ldots,1$} \label{alg:single:chores:B_k-init-for-start}
    \State $b_k \gets \max\left\{1, q^-, |M| - \sum_{k'=k+1}^{|N|}b_{k'} - q^+(k - 1)\right\}$. \label{alg:single:chores:b_k-init-n}
    \State $\begin{aligned}
      B^{(|N|)}_k \gets & \left\{g_{|M| - |N| + k}\right\} \cup \\
      & \left\{g_{j} \mid \sum_{k'=k+1}^{|N|}(b_{k'} - 1) < j \le \sum_{k'=k}^{|N|}(b_{k'} - 1) \right\}.
    \end{aligned}$ \label{alg:single:chores:B_k-init-n}
    \EndFor \label{alg:single:chores:B_k-init-for-end}

    \State $\begin{aligned}
      \hat\mu_i \gets &\min \left(\left\{2\, v_i\left(g_{|M| - |N|}\right)\right\} \right. \\
      & \left. \cup \left\{ \frac{1}{|N|-r+1}\, v_i\left(\bigcup_{k = r}^{|N|} B^{(|N|)}_k\right) \mid 1\le r\le |N| \right\} \right)
    \end{aligned}$ \\ for each $i\in N$. \label{alg:single:chores:mu-hat}

    \State $N^{(|N|)}\gets N$. \label{alg:single:chores:var-init}

    \For{$t \gets |N|,|N|-1,\ldots,1$} \label{alg:single:chores:outer-for-start}
    \State $(B^{(t-1)}_1, B^{(t-1)}_2, \ldots, B^{(t-1)}_{t-1}, B) \gets (B^{(t)}_1, B^{(t)}_2, \ldots, B^{(t)}_{t})$. \label{alg:single:chores:bag-init}

    \For{$k \gets t - 1,t - 2,\ldots,1$} \label{alg:single:chores:inner-for-start}

    \While{$\left(\exists i\in N^{(t)}~\mbox{s.t.}~v_i(B) \ge \left(\alpha - \frac{1}{2}\right)\hat\mu_i \right)$ and $B\setminus\left\{g_{|M| - |N| + t}\right\} \neq B^{(t)}_k\setminus\left\{g_{|M| - |N| + k}\right\}$} \label{alg:single:chores:inner-while-start}
    \State $g\gets$ the least valuable item in $B^{(t-1)}_k\setminus\left\{g_{|M|-|N| + k}\right\}$. \label{alg:single:chores:get-from-B_k}

    \If{$|B| < |B^{(t)}_k|$}
    \State $B\gets B\cup\{g\}$, $B^{(t-1)}_k\gets B^{(t-1)}_k\setminus\{g\}$. \label{alg:single:chores:move}
    \Else
    \State $h\gets$ the most valuable item in $B\setminus\left\{g_{|M|-|N|+t}\right\}$. \label{alg:single:chores:get-from-B}
    \State $B\gets (B\setminus\{h\})\cup\{g\}$, \\
    $B^{(t-1)}_k\gets \left(B^{(t-1)}_k\setminus\{g\}\right)\cup\{h\}$. \label{alg:single:chores:swap}
    \EndIf
    \EndWhile \label{alg:single:chores:inner-while-end}
    \EndFor \label{alg:single:chores:inner-for-end}

    \State Find $i^{(t)}\in N^{(t)}$ s.t. $v_{i^{(t)}}(B) \ge \alpha\,\hat\mu_{i^{(t)}}$.\label{alg:single:chores:before-assign}
    \State $A_{i^{(t)}}\gets B$, $N^{(t - 1)}\gets N^{(t)}\setminus \left\{i^{(t)}\right\}$. \label{alg:single:chores:assign}

    \EndFor \label{alg:single:chores:outer-for-end}

    \State \Return $(A_i)_{i\in N}$ \label{alg:single:chores:allocation-final}
    \EndFunction

  \end{algorithmic}

  \caption{Compute a $\left(\frac{2}{3-(\max\{1,n\})^{-1}}\right)$-MMS allocation for a single-category ordered instance of chores with $n$ agents.}
  \label{alg:single:chores}

\end{algorithm}

In \cref{alg:single:chores}, we present a polynomial-time algorithm, \hyperref[alg:single:chores:main-start]{\textproc{ApproxChores}}, which computes a $\left(\frac{3n - 1}{2n}\right)$-MMS allocation for a single-category ordered instance of chores with $n$ agents and thus establishes \cref{thm:single:chores:three-halves} together with the reduction to ordered instances.
In contrast to the case of goods, the unassigned bags are maintained in decreasing order of size, with special treatment for the least valuable $n$ items.
Unlike the algorithm for goods discussed in \cref{sec:single:goods}, \hyperref[alg:single:chores:main-start]{\textproc{ApproxChores}} is not recursive due to the absence of valid reduction.
Here, \cref{thm:common:bound-chores} implicitly serves as a useful proxy of valid reduction, similarly exploiting the pigeonhole principle.

\begin{restatable}{lemma}{LemValidReductionChore}
  \label{thm:common:bound-chores}
  Let $\cI = \left(N,M,(v_i)_{i\in N},\cC,(q_{C}^-,q_{C}^+)_{C\in\cC}\right)$ be an ordered instance of chores, $C^*\in\cC$ be non-empty, and $d\in\bZ_{\ge 0}$ satisfy $|C^*| \ge d|N| + 1$.
  Then
  \begin{align}
    B &\coloneqq \left\{g^{C^*}_{|C^*| - j} \mid d(|N| - 1) \le j \le d|N| \right\} \label{eq:common:chores:valid-bag}
  \end{align}
  satisfies $v_i(B) \ge \mu_i(\cI)$ for every $i\in N$.
\end{restatable}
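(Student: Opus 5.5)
The plan is to mirror the pigeonhole argument in the proof of \cref{thm:common:valid-reduction}, with the inequalities reversed because every item now has non-positive value. Fix an agent $i\in N$ and let $(P_k)_{k\in N}$ be agent $i$'s MMS partition, so that $v_i(P_k)\ge\mu_i(\cI)$ for every $k\in N$. It then suffices to exhibit one bundle $P_k$ with $v_i(B)\ge v_i(P_k)$.

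First, I would introduce the set of the $d|N|+1$ least valuable items of $C^*$,
\begin{align*}
  T \coloneqq \left\{g^{C^*}_{|C^*|-j} \mid 0\le j\le d|N|\right\},
\end{align*}
which is well-defined because $|C^*|\ge d|N|+1$. By \cref{eq:common:chores:valid-bag}, $B\subseteq T$ and $|B|=d+1$. Since the instance is ordered, $B$ consists exactly of the $d+1$ \emph{most} valuable items of $T$, namely $g^{C^*}_{|C^*|-d|N|},\ldots,g^{C^*}_{|C^*|-d(|N|-1)}$. Because the bundles $P_k$ partition $M\supseteq T$ into $|N|$ parts and $|T|=d|N|+1$, the pigeonhole principle yields some $k$ with $|P_k\cap T|\ge d+1$.

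Next, I would compare item-wise. Let $h_1,\ldots,h_{d+1}$ be the $d+1$ most valuable items of $P_k\cap T$, listed in decreasing order of value, and let $b_1,\ldots,b_{d+1}$ be the items of $B$ in the same order. The $\ell$-th most valuable element of any subset of $T$ is at most as valuable as the $\ell$-th most valuable element of $T$ itself, and the latter is $b_\ell$. Hence $v_i(h_\ell)\le v_i(b_\ell)$ for each $\ell$. Summing gives $v_i(B)\ge v_i(\{h_1,\ldots,h_{d+1}\})$. The remaining items of $P_k$ are chores, so adding them only lowers the value, and therefore $v_i(\{h_1,\ldots,h_{d+1}\})\ge v_i(P_k)\ge\mu_i(\cI)$.

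There is no serious obstacle here. The only point needing care is the index bookkeeping: one must confirm that the index range $d(|N|-1)\le j\le d|N|$ selects exactly the top $d+1$ items of $T$, which is immediate from the ordering. Feasibility of $B$ plays no role in the claim itself, since the statement only asserts a value bound.
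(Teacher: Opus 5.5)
Your proposal is correct and follows essentially the same route as the paper. Both apply the pigeonhole principle to the $d|N|+1$ least valuable items of $C^*$ to find an MMS bundle containing at least $d+1$ of them, then discard that bundle's remaining chores and compare with $B$; your item-wise matching just spells out the step $v_i(P_1\cap T)\le v_i(B)$ that the paper asserts in one line.
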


\begin{proof}
  Fix an arbitrary $i\in N$.
  Let $(P_k)_{k=1}^{|N|}$ be agent $i$'s MMS partition, which we assume satisfies
  \begin{align}
    \left|P_1 \cap \left\{g^{C^*}_{|C^*|-j} \mid 0\le j\le d|N| \right\}\right| \ge d + 1 \label{eq:common:chores:pigeon}
  \end{align}
  without loss of generality, by the pigeonhole principle and that $|C^*| \ge d|N| + 1$.
  Now \cref{eq:common:chores:valid-bag,eq:common:chores:pigeon} imply
  \begin{align*}
    \mu_i(\cI) &\le v_i(P_1) \le v_i\left(P_1 \cap \left\{g^{C^*}_{|C^*|-j} \mid 0\le j\le d|N|\right\}\right) \le v_i(B). \qedhere
  \end{align*}
\end{proof}

\cref{sec:single:chores:proof} is dedicated to the proof of the following theorem, restated below from \cref{sec:main_results}.
Then \cref{sec:single:chores:tight} demonstrates that our analysis is tight.

\SingleChoresMain*

\subsection{Proof of \cref{thm:single:chores:three-halves}} \label{sec:single:chores:proof}

Here, we prove that \hyperref[alg:single:chores:main-start]{\textproc{ApproxChores}} returns an $\alpha$-MMS allocation for an arbitrary single-category ordered instance of chores $\cI = \left(N, M, (v_i)_{i\in N}, (q^-,q^+)\right)$ and an arbitrary real constant $\alpha \in \left[\frac{3 - (\max\left\{|N|, 1\right\})^{-1}}{2}, \frac{3}{2}\right]$, in time $O(|N||M|)$.
Notice that all the valuations $(v_i)_{i\in N}$ are non-positive here.
The proof structure is similar to the case of goods, where we eventually derive \cref{thm:single:chores:three-halves} by combining \cref{thm:common:ordered,thm:single:chores:few-items,thm:single:chores:invariant-all,thm:single:chores:runtime}; refer to \cref{thm:common:ordered} for the time complexity of the reduction to an ordered instance, which turns out to be dominant.
All line numbers in this section refer to \cref{alg:single:chores}.

\begin{lemma} \label{thm:single:chores:few-items}
  If Line~\ref{alg:single:chores:bundle-few-items} is reached, an MMS allocation $(A_i)_{i\in N}$ for $\cI$ is returned at Line~\ref{alg:single:chores:allocation-few-items}.
\end{lemma}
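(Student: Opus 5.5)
We verify feasibility first, then compare each agent's value to her maximin share, following the proof of \cref{thm:single:goods:few-items} with the sign conventions of chores reversed.

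\emph{Feasibility.} Line~\ref{alg:single:chores:bundle-few-items} gives each agent $i\le |M|$ the singleton $\{g_i\}$ and every other agent the empty set. The allocation is therefore an ordered partition of $M$, and each bundle has size $0$ or $1$. Since $|M|\le |N|$, \cref{eq:def:quotas} gives $q^-|N|\le |M|\le |N|$, so $q^-\le 1$.
\begin{itemize}
\item If $|M|<|N|$, then $q^-|N|\le |M|<|N|$ forces $q^-=0$. Hence empty bundles are feasible, and singletons are feasible because $q^+|N|\ge |M|\ge 1$ gives $q^+\ge 1$ (when $M\neq\emptyset$).
\item If $|M|=|N|$, every bundle is a singleton. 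We need $q^-\le 1\le q^+$, and $q^+\ge 1$ follows from $q^+|N|\ge |N|$.
\end{itemize}

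\emph{Maximin shares.} Fix an agent $i\in N$ and compare $v_i(A_i)$ with $\mu_i(\cI)$.
\begin{itemize}
\item Agents receiving $\emptyset$ get value $0\ge \mu_i(\cI)$, because all valuations are non-positive and so $\mu_i(\cI)\le 0$.
\item For an agent receiving $\{g_j\}$, take any feasible partition $(P_k)_{k\in N}$. It contains some bundle $P_k\ni g_j$, and since every other item is a chore, $v_i(P_k)\le v_i(g_j)$. Therefore $\min_k v_i(P_k)\le v_i(g_j)$ for every feasible partition, which gives $\mu_i(\cI)\le v_i(g_j)=v_i(A_i)$.
\end{itemize}
This argument uses only the fact that every bundle of any partition is non-positive, so it does not need ordering.

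Hence every agent satisfies $v_i(A_i)\ge \mu_i(\cI)$, and $(A_i)_{i\in N}$ is an MMS allocation. The only step needing care is the quota bookkeeping, namely checking that $q^-=0$ whenever empty bundles actually occur.
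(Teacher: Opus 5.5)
Your proof is correct and takes essentially the paper's route. Feasibility comes from \cref{eq:def:quotas}, which you spell out in more detail, including the check that $q^-=0$ whenever empty bundles occur. The value comparison rests on the fact that any bundle containing a chore is worth at most that chore. The only minor difference is that the paper identifies $\mu_i(\cI)=v_i(g_{|M|})$ exactly and then uses the ordering, whereas you need only the upper bound $\mu_i(\cI)\le v_i(g)$ for every item $g$; this avoids the ordering assumption and the degenerate case $M=\emptyset$.
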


\begin{proof}
  The allocation $(A_i)_{i\in N}$ at Line~\ref{alg:single:chores:allocation-few-items} is feasible for $\cI$ due to \cref{eq:def:quotas}.
  Because $\mu_i(\cI) = v_i(g_{|M|})$ for each $i\in N$, it also holds for each $i\in N$ that $v_i(A_i) \ge \mu_i(\cI)$.
  Therefore, $(A_i)_{i\in N}$ is an MMS allocation for $\cI$.
\end{proof}

\begin{lemma} \label{thm:single:chores:value-sum-MMS}
  The bundles $B^{(|N|)}_1,B^{(|N|)}_2,\ldots,B^{(|N|)}_{|N|}$ defined by Lines \ref{alg:single:chores:B_k-init-for-start} to \ref{alg:single:chores:B_k-init-for-end} are mutually disjoint and satisfy the following
  \begin{align}
    &B^{(|N|)}_1 \cup B^{(|N|)}_2 \cup \cdots \cup B^{(|N|)}_{|N|} = M, \label{eq:single:chores:union-init} \\
    &q^+ \ge |B^{(|N|)}_1| \ge |B^{(|N|)}_2| \ge \cdots \ge |B^{(|N|)}_{|N|}| \ge q^-, \label{eq:single:chores:size-init} \\
    \begin{split}
    &v_i\left(B^{(|N|)}_r \cup B^{(|N|)}_{r+1} \cup \cdots \cup B^{(|N|)}_{|N|}\right) \ge \left(|N| - r + 1\right)\mu_i(\cI) \\
    &\quad\forall r\in\{1,2,\ldots,|N|\},\forall i\in N.
    \end{split} \label{eq:single:chores:value-sum-MMS}
  \end{align}
\end{lemma}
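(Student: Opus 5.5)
The proof will mirror that of \cref{thm:single:goods:value-sum-MMS}, with goods replaced by chores, so every inequality direction and the roles of ``most'' and ``least'' valuable are swapped. It has three steps: well-definedness and the size chain, the union property, and the value bound \cref{eq:single:chores:value-sum-MMS}.

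\textbf{Sizes.} First I would check that the sizes $b_{|N|},b_{|N|-1},\ldots,b_1$ produced by Line~\ref{alg:single:chores:b_k-init-n} are well defined. Since $|M|>|N|$ and $q^-|N|\le|M|\le q^+|N|$, each $b_k$ lies in $\{\max\{1,q^-\},\ldots,q^+\}$, and the sizes sum to $|M|$. By induction on $k$ from $|N|$ downward, the remaining mass $|M|-\sum_{k'>k}b_{k'}$ always lies between $k\max\{1,q^-\}$ and $kq^+$, so the max in the definition never exceeds $q^+$. For $k=1$ the third term equals the remainder exactly, so the sizes sum to $|M|$. The formula also makes $b_k$ non-increasing as $k$ increases, which gives \cref{eq:single:chores:size-init}. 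This is the lexicographically-maximal counterpart of the goods case.

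\textbf{Union and disjointness.} The index ranges in Line~\ref{alg:single:chores:B_k-init-n} tile $\{1,\ldots,|M|-|N|\}$ consecutively, because $\sum_k(b_k-1)=|M|-|N|$. The special items $g_{|M|-|N|+k}$ cover the last $|N|$ indices. Together these give disjointness and \cref{eq:single:chores:union-init}.

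\textbf{Value bound.} Fix $i\in N$ and $r$. The union $U_r:=\bigcup_{k\ge r}B^{(|N|)}_k$ contains the $|N|-r+1$ least valuable items $g_{|M|-|N|+r},\ldots,g_{|M|}$. It also contains the $\sum_{k\ge r}(b_k-1)$ \emph{most} valuable items $g_1,\ldots$, i.e. the least costly ones. Take agent $i$'s MMS partition $(P_k)$. By relabeling, I may assume $P_r,\ldots,P_{|N|}$ are the bundles of smallest total size, so that $|P_r\cup\cdots\cup P_{|N|}|$ is minimal among unions of $|N|-r+1$ bundles. Also, by swapping the worst items among bundles, I may assume $P_k\ni g_{|M|-|N|+k}$ is the unique item of $\{g_{|M|-|N|+1},\ldots,g_{|M|}\}$ in each $P_k$. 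Swapping single items between bundles preserves sizes, so feasibility is kept.

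Two facts then follow. The sizes $b_k$ are the lexicographically largest ones from the top, so $\sum_{k\ge r}b_k\le\sum_{k\ge r}|P_k|$ for the $|N|-r+1$ smallest bundles of any feasible partition. Moreover, $U_r$ consists of the $|N|-r+1$ worst items plus the best $\sum_{k\ge r}(b_k-1)$ remaining items, while $\bigcup_{k\ge r}P_k$ contains the same worst items plus at least as many other items. Since all values are non-positive, fewer items chosen among the best ones give a value at least as high. Hence $v_i(U_r)\ge v_i(\bigcup_{k\ge r}P_k)\ge(|N|-r+1)\mu_i(\cI)$.

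\textbf{Expected obstacle.} The hardest step is the size comparison $\sum_{k\ge r}b_k\le\min_{|S|=|N|-r+1}\sum_{k\in S}|P_k|$. I would prove it by noting that any feasible size vector sorted decreasingly satisfies the suffix sum bound $\sum_{k\ge r}|P_{(k)}|\ge\max\{(|N|-r+1)\max\{1,q^-\},\,|M|-(r-1)q^+\}$. Here $P_{(k)}$ denotes the $k$-th largest bundle; the lower bound of $1$ is valid because otherwise swapping a worst item into an empty bundle only helps the min. Then I would check from the recursion that $\sum_{k\ge r}b_k$ equals exactly this bound.

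A second point to handle is that the rearrangement placing one worst item per bundle requires every $P_k$ to be non-empty. Emptiness could only help the min if $\mu_i=0$, and in that case the bound is trivial.
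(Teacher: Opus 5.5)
Your arguments for the sizes and the tiling are fine, but the value bound rests on a false normalization. You assume that, by swapping items between bundles, agent $i$'s MMS partition can be taken to contain exactly one of the $|N|$ worst chores in each bundle. Your only justification is that swaps preserve sizes, hence feasibility. That is true, but swaps do not preserve the minimum: the bundle that receives a worst chore in exchange for a better item gets worse, and it can fall below $\mu_i(\cI)$.

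In fact the normalization itself can fail. Take $|N|=2$, $q^-=q^+=3$, and six chores with common values $v(g_1)=0$, $v(g_2)=v(g_3)=v(g_4)=-7$, $v(g_5)=v(g_6)=-10$. The partition $\{g_1,g_5,g_6\},\{g_2,g_3,g_4\}$ has minimum $-21$. Every partition into triples that separates $g_5$ and $g_6$ has a bundle made of one of them plus two chores of value $-7$, worth $-24$. So every MMS partition puts both worst chores in the same bundle; swapping $g_5$ with $g_2$, for instance, gives bundles worth $-17$ and $-24$. The step ``$P_k\ni g_{|M|-|N|+k}$ is the unique worst item in $P_k$'' therefore cannot be justified. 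With it goes your plan of imposing both normalizations (smallest bundles in positions $r,\ldots,|N|$ \emph{and} prescribed worst items), since relabeling alone cannot achieve both in general.

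The repair is to drop both the uniqueness requirement and the size-based relabeling. First make all bundles non-empty by moving an item from a bundle of size at least $2$ into an empty one; for chores this never lowers the minimum, and quotas are respected because $q^-=0$ then and $q^+\ge 1$. Your own suffix bound then shows that \emph{every} choice of $|N|-r+1$ bundles has total size at least $\max\{(|N|-r+1)\max\{1,q^-\},\,|M|-(r-1)q^+\}=\sum_{k\ge r}b_k$, not just the smallest ones.

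So it suffices to relabel, with no swapping, so that $P_r\cup\cdots\cup P_{|N|}\supseteq\{g_{|M|-|N|+r},\ldots,g_{|M|}\}$. Assign labels $|N|,|N|-1,\ldots$ in turn to the bundle containing the worst chore not yet covered, or to any unlabeled bundle if that chore is already covered. Then $\bigcup_{k\ge r}P_k$ contains these $|N|-r+1$ worst chores plus at least $\sum_{k\ge r}(b_k-1)$ further items. Non-positivity gives $v_i(U_r)\ge v_i(\bigcup_{k\ge r}P_k)\ge(|N|-r+1)\mu_i(\cI)$; containment, not uniqueness, is all that is needed. This is exactly the paper's argument.
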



\begin{proof}
  Given \cref{eq:def:quotas} and $|M| > |N|$, these bundles are well-defined by Lines~\ref{alg:single:chores:B_k-init-for-start} to \ref{alg:single:chores:B_k-init-for-end}, mutually disjoint, and satisfying \cref{eq:single:chores:union-init,eq:single:chores:size-init}.
  We fix any pair $i\in N$ and $r\in\{1,2,\ldots,|N|\}$, for which the inequality in \cref{eq:single:chores:value-sum-MMS} is shown below.
  Let $(P_k)_{k=1}^{|N|}$ be agent $i$'s MMS partition, which can be assumed, without loss of generality, to satisfy that
  \begin{align*}
    P_r\cup P_{r+1}\cup\cdots\cup P_{|N|} &\supseteq \left\{g_{|M|},g_{|M| - 1},\ldots,g_{|M| - |N| + r}\right\} \\
    &\forall r\in\{1,2,\ldots,|N|\}.
  \end{align*}
  Combining this with $(P_k)_{k=1}^{|N|}\in\cF(\cI)$ and the construction of $B^{(|N|)}_1,B^{(|N|)}_2,\ldots,B^{(|N|)}_{|N|}$ yields that
  \begin{align*}
    v_i\left(B^{(|N|)}_r \cup B^{(|N|)}_{r+1} \cup \cdots \cup B^{(|N|)}_{|N|}\right)
    &\ge v_i\left(P_r \cup P_{r+1} \cup \cdots \cup P_{|N|}\right) \\
    &\ge \left(|N| - r + 1\right) \mu_i(\cI),
  \end{align*}
  which concludes the proof.
\end{proof}

\begin{corollary} \label{thm:single:chores:normalize}
  At Line~\ref{alg:single:chores:var-init}, the following hold:
  \begin{align}
    2\,v_i(g) &\ge \hat\mu_i & & \mspace{-12mu} \forall g\in M\setminus\left\{g_{|M|},g_{|M|-1},\ldots,g_{|M|-|N|+1}\right\},\forall i\in N. \label{eq:single:chores:value-half} \\
    \mu_i(\cI) &\le \hat\mu_i & & \mspace{-12mu} \forall i\in N. \label{eq:single:chores:value-mu-hat}
  \end{align}
\end{corollary}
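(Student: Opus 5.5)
The statement to prove is \cref{thm:single:chores:normalize}, and I would read both inequalities off the definition of $\hat\mu_i$ at Line~\ref{alg:single:chores:mu-hat}. That definition is a minimum over a finite set. The set contains $2\,v_i(g_{|M|-|N|})$ together with the averaged suffix values $\frac{1}{|N|-r+1}\,v_i\bigl(\bigcup_{k=r}^{|N|}B^{(|N|)}_k\bigr)$ for $r\in\{1,\ldots,|N|\}$. The condition $|M|>|N|$ holds, because Line~\ref{alg:single:chores:var-init} is reached only after the check at Line~\ref{alg:single:chores:check-few-items} fails. Hence the index $|M|-|N|\ge 1$ is well-defined.

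For \cref{eq:single:chores:value-half}, fix $i\in N$ and $g\in M\setminus\{g_{|M|},\ldots,g_{|M|-|N|+1}\}$. Then $g=g_j$ for some $j\le |M|-|N|$. The instance is ordered, so $v_i(g_j)\ge v_i(g_{|M|-|N|})$. Multiplying by $2$ and using that $\hat\mu_i$ is at most the first element of the set gives
\begin{align*}
2\,v_i(g) \ge 2\,v_i(g_{|M|-|N|}) \ge \hat\mu_i.
\end{align*}

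For \cref{eq:single:chores:value-mu-hat}, I would show that every element of the set is at least $\mu_i(\cI)$.

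\textbf{Suffix terms.} \cref{eq:single:chores:value-sum-MMS} of \cref{thm:single:chores:value-sum-MMS} says that the union of the bags $B^{(|N|)}_r,\ldots,B^{(|N|)}_{|N|}$ has value at least $(|N|-r+1)\mu_i(\cI)$. Dividing by $|N|-r+1>0$ shows each averaged term is at least $\mu_i(\cI)$.

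\textbf{The term $2\,v_i(g_{|M|-|N|})$.} I would apply \cref{thm:common:bound-chores} with $C^*=M$ and $d=1$, which is allowed since $|M|\ge |N|+1$. The resulting bundle is
\begin{align*}
B=\left\{g_{|M|-j}\mid |N|-1\le j\le |N|\right\}=\left\{g_{|M|-|N|},\,g_{|M|-|N|+1}\right\},
\end{align*}
and the lemma gives $v_i(B)\ge\mu_i(\cI)$. Since $v_i(g_{|M|-|N|+1})\le v_i(g_{|M|-|N|})$ by ordering, we get
\begin{align*}
2\,v_i(g_{|M|-|N|}) \ge v_i(g_{|M|-|N|})+v_i(g_{|M|-|N|+1}) = v_i(B) \ge \mu_i(\cI).
\end{align*}

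Taking the minimum over all elements of the set yields $\hat\mu_i\ge\mu_i(\cI)$. There is no real obstacle here. The only point needing care is matching the index range in \cref{thm:common:bound-chores} with $d=1$ to the two items $g_{|M|-|N|}$ and $g_{|M|-|N|+1}$, and using the ordering in the right direction for non-positive values.
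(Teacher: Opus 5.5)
Your proposal is correct and follows essentially the same route as the paper: you read the first inequality off Line~\ref{alg:single:chores:mu-hat} together with the ordering, and you obtain the second by bounding each term in the minimum. For the term $2\,v_i(g_{|M|-|N|})$ you use \cref{thm:common:bound-chores} with $C^*=M$ and $d=1$, and for the averaged suffix terms you use \cref{eq:single:chores:value-sum-MMS}; you also correctly note that $|M|>|N|$ at Line~\ref{alg:single:chores:var-init}, which makes the index $|M|-|N|$ and the lemma's hypothesis valid.
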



\begin{proof}
  As the instance $\cI$ is ordered, \cref{eq:single:chores:value-half} follows from Line~\ref{alg:single:chores:mu-hat}.
  Due to \cref{thm:common:bound-chores} for $C^* = M$ and $d = 1$, it also holds that
  \begin{align}
    \begin{split}
      2\,v_i\left(g_{|M|-|N|}\right) &\ge v_i\left(\left\{g_{|M|-|N|},g_{|M|-|N|+1}\right\}\right) \\
      & \ge \mu_i(\cI)
    \end{split} &\forall i\in N. \label{eq:single:chores:normalize-(n+1)-MMS}
  \end{align}
  Line~\ref{alg:single:chores:mu-hat}, \cref{eq:single:chores:normalize-(n+1)-MMS}, and \cref{eq:single:chores:value-sum-MMS} of \cref{thm:single:chores:value-sum-MMS} together imply \cref{eq:single:chores:value-mu-hat}.
\end{proof}


\begin{definition} \label{def:single:chores:invariants}
  Along with the outer for-loop over Lines \ref{alg:single:chores:outer-for-start} to \ref{alg:single:chores:outer-for-end}, we consider the following conditions for each $t\in\{|N|, |N| - 1, \ldots, 1, 0\}$:
  \leqnomode
  \begin{align}
    \tag{C1} \mspace{0mu}& \mathrlap{B^{(t)}_1,B^{(t)}_2,\ldots,B^{(t)}_t, A_{i^{(t + 1)}}, A_{i^{(t + 2)}}, \ldots, A_{i^{(|N|)}} \mbox{ are disjoint.}} \label{cond:single:chores:disjoint} \\[0pt]
    \tag{C2} \mspace{0mu}& \mathrlap{B^{(t)}_1\cup B^{(t)}_2\cup\cdots\cup B^{(t)}_t\cup A_{i^{(t + 1)}}\cup A_{i^{(t + 2)}}\cup\cdots\cup A_{i^{(|N|)}} = M.} \label{cond:single:chores:union} \\[0pt]
    \tag{C3} \mspace{0mu}& \mathrlap{q^+ \ge |B^{(t)}_1| \ge |B^{(t)}_2| \ge \cdots \ge |B^{(t)}_t| \ge q^-.} \label{cond:single:chores:size} \\[0pt]
    \tag{C4}
    \mspace{0mu}&
    \begin{aligned}
      & B^{(t)}_k \cap \left\{g_{|M|},g_{|M|-1},\ldots,g_{|M|-|N|+1}\right\} = \left\{g_{|M|-|N|+k}\right\} \\
      & \forall k\in\{1,2,\ldots,t\}.
    \end{aligned}
    \label{cond:single:chores:top-n} \\[0pt]
    \tag{C5}
    \mspace{0mu}&
    \mathrlap{
      \begin{aligned}
        & v_i(h_1) \le v_i(h_2) \le \cdots \le v_i(h_t) & \forall i\in N^{(t)},\\[-2pt]
        & \mathrlap{\forall h_1\in B^{(t)}_1\setminus\{g_{|M|-|N|+1}\}, \ldots, \forall h_t\in B^{(t)}_t\setminus\{g_{|M|-|N|+t}\}.}
    \end{aligned}}
    \label{cond:single:chores:value-order} \\[0pt]
    \tag{C6}
    \mspace{0mu} &
    \begin{aligned}
    & v_i\left(\bigcup_{k=r}^{t}B^{(t)}_k\right) \ge \left(t - r + 1 + (|N| - t)\left(\frac{3}{2} - \alpha\right)\right) \hat\mu_i \\[-2pt]
    & \forall r\in\{1,2,\ldots,t\},\forall i\in N^{(t)}.
    \end{aligned}
    \label{cond:single:chores:value-sum}
  \end{align}
  \reqnomode
\end{definition}

\begin{lemma} \label{thm:single:chores:invariant-init}
  At Line~\ref{alg:single:chores:var-init}, Conditions \eqref{cond:single:chores:disjoint} to \eqref{cond:single:chores:value-sum} hold for $t = |N|$.
\end{lemma}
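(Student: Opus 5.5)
The plan mirrors the proof of \cref{thm:single:goods:invariant-init}: Conditions \eqref{cond:single:chores:disjoint} to \eqref{cond:single:chores:value-order} for $t = |N|$ come from the construction of the initial bags via \cref{thm:single:chores:value-sum-MMS}, and Condition \eqref{cond:single:chores:value-sum} comes from the definition of $\hat\mu_i$ at Line~\ref{alg:single:chores:mu-hat}. At Line~\ref{alg:single:chores:var-init} no agent has yet been assigned a bundle, so the sets $A_{i^{(t+1)}},\ldots,A_{i^{(|N|)}}$ in the conditions are absent, and $N^{(|N|)} = N$.

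Conditions \eqref{cond:single:chores:disjoint} and \eqref{cond:single:chores:union} are then exactly the disjointness assertion and \cref{eq:single:chores:union-init} of \cref{thm:single:chores:value-sum-MMS}. Condition \eqref{cond:single:chores:size} is \cref{eq:single:chores:size-init}. For Condition \eqref{cond:single:chores:top-n}, I will read Line~\ref{alg:single:chores:B_k-init-n} directly. Bag $B^{(|N|)}_k$ consists of the special item $g_{|M|-|N|+k}$ together with $b_k - 1$ items whose indices lie in the block $\bigl(\sum_{k'>k}(b_{k'}-1), \sum_{k'\ge k}(b_{k'}-1)\bigr]$. These blocks are consecutive and cover indices $1$ through $\sum_{k'}(b_{k'}-1) = |M| - |N|$. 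Hence no non-special item belongs to $\{g_{|M|-|N|+1},\ldots,g_{|M|}\}$, which gives \eqref{cond:single:chores:top-n}.

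For Condition \eqref{cond:single:chores:value-order}, I use the same block structure. The non-special items of $B^{(|N|)}_{|N|}$ occupy the smallest indices, followed by those of $B^{(|N|)}_{|N|-1}$, and so on down to $B^{(|N|)}_1$, which gets the largest indices. Since the instance is ordered, a smaller index means a larger value for every agent, so for any choice of $h_k \in B^{(|N|)}_k \setminus \{g_{|M|-|N|+k}\}$ we get $v_i(h_1) \le v_i(h_2) \le \cdots \le v_i(h_{|N|})$.

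Condition \eqref{cond:single:chores:value-sum} at $t = |N|$ has the term $(|N| - t)(\frac{3}{2} - \alpha)$ vanish, so it reduces to
\[
v_i\Bigl(\bigcup_{k=r}^{|N|} B^{(|N|)}_k\Bigr) \ge (|N| - r + 1)\,\hat\mu_i \quad \text{for all } r \text{ and } i.
\]
This follows because $\hat\mu_i$ is a minimum taken over a set containing $\frac{1}{|N|-r+1}\, v_i\bigl(\bigcup_{k=r}^{|N|} B^{(|N|)}_k\bigr)$. So $\hat\mu_i$ is at most that quantity, and multiplying by the positive factor $|N| - r + 1$ gives the inequality.

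The only potential obstacle is the index bookkeeping in \eqref{cond:single:chores:top-n} and \eqref{cond:single:chores:value-order}. Specifically, one must confirm $b_k \ge 1$, which holds by the $\max$ with $1$ in Line~\ref{alg:single:chores:b_k-init-n}, and $\sum_k b_k = |M|$, which is part of the well-definedness already established in \cref{thm:single:chores:value-sum-MMS}.
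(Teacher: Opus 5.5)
Your proposal is correct and matches the paper's proof: Conditions \eqref{cond:single:chores:disjoint} to \eqref{cond:single:chores:value-order} come from the initial bag construction in \cref{thm:single:chores:value-sum-MMS}, and Condition \eqref{cond:single:chores:value-sum} at $t=|N|$ follows because $\hat\mu_i$ is a minimum over a set containing $\frac{1}{|N|-r+1}\,v_i\bigl(\bigcup_{k=r}^{|N|}B^{(|N|)}_k\bigr)$. Your explicit index bookkeeping for \eqref{cond:single:chores:top-n} and \eqref{cond:single:chores:value-order} from Line~\ref{alg:single:chores:B_k-init-n} is a useful addition, since the paper attributes these two conditions to that lemma even though the lemma does not state them.
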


\begin{proof}
  We obtain Conditions \eqref{cond:single:chores:disjoint} to \eqref{cond:single:chores:value-order} for $t = |N|$ from \cref{thm:single:chores:value-sum-MMS},
  and Condition \eqref{cond:single:chores:value-sum} for $t = |N|$ from Line~\ref{alg:single:chores:mu-hat}.
\end{proof}

\begin{lemma} \label{thm:single:chores:invariant}
  Let $s\in\{|N|, |N| - 1 \ldots, 1\}$ be arbitrary.
  Suppose that the for-loop over Lines \ref{alg:single:chores:outer-for-start} to \ref{alg:single:chores:outer-for-end} has successfully iterated for $t \in \{|N|,|N|-1,\ldots,s+1\}$, and that Conditions \eqref{cond:single:chores:disjoint} to \eqref{cond:single:chores:value-sum} now hold for $t = s$.
  Then the next iteration with $t = s$ succeeds, where the following hold at Line~\ref{alg:single:chores:before-assign}:
  \begin{align}
    & q^- \le |B| \le q^+. \label{eq:single:chores:bag-size} \\
    & \exists i^{(s)}\in N^{(s)} \ \ \mathrm{s.t.} \ \ v_{i^{(s)}}(B) \ge \alpha\,\hat\mu_{i^{(s)}}. \label{eq:single:chores:bag-value}
  \end{align}
  Furthermore, Conditions \eqref{cond:single:chores:disjoint} to \eqref{cond:single:chores:value-sum} hold for $t = s - 1$ at Line~\ref{alg:single:chores:assign} of the same iteration.
\end{lemma}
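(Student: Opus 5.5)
We mirror the proof of \cref{thm:single:goods:invariant}, with inequalities reversed for non-positive values. The roles of $g_1,\dots,g_{|N|}$ are now played by the least valuable items $g_{|M|-|N|+1},\dots,g_{|M|}$, written $g^\dagger_k \coloneqq g_{|M|-|N|+k}$. Bag sizes are now non-increasing in $k$, so $B$ grows rather than shrinks. The plan has three parts: well-definedness and termination of the loops; \cref{eq:single:chores:bag-size,eq:single:chores:bag-value}; and inheritance of Conditions \eqref{cond:single:chores:disjoint} to \eqref{cond:single:chores:value-sum} at $t=s-1$.

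\textbf{Termination.} Conditions \eqref{cond:single:chores:top-n} and \eqref{cond:single:chores:value-order} for $t=s$ give $g^\dagger_s\in B$ and $g^\dagger_k\in B^{(s-1)}_k$ throughout. They also show that each move or swap replaces items of $B$ by weakly less valuable ones, or adds an item. Hence $v_i(B)$ is non-increasing and $|B|$ is non-decreasing.

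At the start of the inner iteration for $k$, we have $B\setminus\{g^\dagger_s\}=B^{(s)}_{k+1}\setminus\{g^\dagger_{k+1}\}$ and $|B|=|B^{(s)}_{k+1}|\le|B^{(s)}_k|$. Each while-step preserves $B\cup B^{(s-1)}_k$ and raises $|B|$ by at most one, keeping $|B|\le|B^{(s)}_k|$. Consequently, while the second loop condition holds, $B^{(s-1)}_k\setminus\{g^\dagger_k\}\ne\emptyset$ at Line~\ref{alg:single:chores:get-from-B_k}. In the swap branch, $|B|=|B^{(s)}_k|\ge 2$, so $h$ exists. The extremal choices of $g$ and $h$ ensure that each item crosses at most once, so the loop terminates.

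\textbf{Bag size and value.} Define $k^*$ exactly as in \cref{eq:single:goods:k*}, with threshold $(\alpha-\frac12)\hat\mu_i$. We then obtain the analogues of \crefrange{eq:single:goods:B_k-same}{eq:single:goods:disjoint}, with the size chain reversed: $q^-\le|B^{(s)}_{\min\{k^*+1,s\}}|\le|B|\le|B^{(s)}_{\max\{k^*,1\}}|\le q^+$. This gives \cref{eq:single:chores:bag-size}.

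For \cref{eq:single:chores:bag-value} there are two cases.
\begin{itemize}
\item If $B$ was updated, the last update happened while some $i$ had $v_i(B)\ge(\alpha-\frac12)\hat\mu_i$. The update added or exchanged in an item $g\notin\{g^\dagger_1,\dots,g^\dagger_s\}$ and removed nothing more valuable than itself. By \cref{eq:single:chores:value-half}, $v_i(g)\ge\frac12\hat\mu_i$, so afterwards $v_i(B)\ge\alpha\hat\mu_i$.
\item If $B$ was never updated, then $B=B^{(s)}_s$. Condition \eqref{cond:single:chores:value-sum} with $r=t=s$ gives
\[
v_i(B)\ge\bigl(1+(|N|-s)(\tfrac32-\alpha)\bigr)\hat\mu_i\ge\bigl(1+(|N|-1)(\tfrac32-\alpha)\bigr)\hat\mu_i ,
\]
using $\hat\mu_i\le0$ and $\alpha\le\frac32$. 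The right-hand side is at least $\alpha\hat\mu_i$ precisely when $(\alpha-1)\le(|N|-1)(\frac32-\alpha)$, i.e.\ $\alpha\le\frac{3|N|-1}{2|N|}$. This matches the assumed range of $\alpha$ (one has to check the sign conventions carefully here).
\end{itemize}

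\textbf{Inheritance of \eqref{cond:single:chores:value-sum}.} Conditions \eqref{cond:single:chores:disjoint} to \eqref{cond:single:chores:value-order} at $t=s-1$ follow from the structural identities. For \eqref{cond:single:chores:value-sum}, fix $r$ and $i\in N^{(s-1)}$ and split as in \cref{case:single:goods:1,case:single:goods:2}.

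\emph{Case $r>k^*$.} We have $\bigcup_{k=r}^{s-1}B^{(s-1)}_k=\bigcup_{k=r+1}^{s}B^{(s)}_k$ with $g^\dagger_s$ replaced by $g^\dagger_r$. Since $v_i(g^\dagger_r)\ge v_i(g^\dagger_s)$ in the ordered instance, this union is valued at least $\bigl(s-r+(|N|-s)(\frac32-\alpha)\bigr)\hat\mu_i$. That bound dominates the target, because adding the non-positive term $(\frac32-\alpha)\hat\mu_i$ only lowers it.

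\emph{Case $r\le k^*$.} The stopping rule gives $v_i(B)<(\alpha-\frac12)\hat\mu_i$. Subtracting $B$ from $\bigcup_{k=r}^sB^{(s)}_k$ yields
\[
\bigl(s-r+1+(|N|-s)(\tfrac32-\alpha)\bigr)\hat\mu_i-(\alpha-\tfrac12)\hat\mu_i=\bigl(s-r+(|N|-s+1)(\tfrac32-\alpha)\bigr)\hat\mu_i ,
\]
which is exactly the target.

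The main obstacle is the value bound in the never-updated case: pinning down the sign handling so that the stated range of $\alpha$ suffices.
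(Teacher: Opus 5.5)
Your plan follows the paper's proof step for step: termination from monotonicity and size bookkeeping, the same $k^*$, the reversed size chain, the updated/never-updated split for the bag value, and the two cases $r>k^*$ and $r\le k^*$ for Condition (C6). You compute both (C6) cases correctly. However, the never-updated case, the one you flagged, is wrong as written: the inequality is reversed. Put $c = 1+(|N|-1)(\frac32-\alpha)$. Since $\hat\mu_i\le 0$, the claim $c\,\hat\mu_i\ge\alpha\,\hat\mu_i$ is equivalent (for $\hat\mu_i<0$; the case $\hat\mu_i=0$ is trivial) to $c\le\alpha$. That is $(|N|-1)(\frac32-\alpha)\le\alpha-1$, i.e.\ $\alpha\ge\frac{3|N|-1}{2|N|}$. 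It is not $\alpha-1\le(|N|-1)(\frac32-\alpha)$ and $\alpha\le\frac{3|N|-1}{2|N|}$, as you wrote. The hypothesis is $\alpha\in\bigl[\frac{3|N|-1}{2|N|},\frac32\bigr]$, so your criterion holds only at the left endpoint, and ``this matches the assumed range'' does not follow from your computation. This is not cosmetic: it is the only place in the lemma where the lower bound on $\alpha$ enters; everything else uses only $\alpha\le\frac32$. The correct argument, as in the paper, is $\frac32-\alpha\le\frac1{2|N|}$, hence $c\le 1+\frac{|N|-1}{2|N|}=\frac{3|N|-1}{2|N|}\le\alpha$. Multiplying by $\hat\mu_i\le 0$ then gives $v_i(B)\ge c\,\hat\mu_i\ge\alpha\,\hat\mu_i$.

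A smaller point in the updated case: ``removed nothing more valuable than itself'' is not the right justification. In a swap the removed $h$ is weakly \emph{more} valuable than the inserted $g$. What you need is only that $h$ is a chore, $v_i(h)\le 0$. Then the new bag satisfies $v_i(B_{\mathrm{new}})\ge v_i(B_{\mathrm{old}})+v_i(g)\ge(\alpha-\frac12)\hat\mu_i+\frac12\hat\mu_i=\alpha\,\hat\mu_i$, using $2v_i(g)\ge\hat\mu_i$ for the non-special item $g$ drawn from $B^{(s-1)}_k\setminus\{g^\dagger_k\}$.
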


\begin{proof}
  First, we verify that Line~\ref{alg:single:chores:before-assign} is indeed reached, with all preceding operations done successfully; namely we show that the desired item exists at Line~\ref{alg:single:chores:get-from-B_k} and \ref{alg:single:chores:get-from-B}, as well as that the while-loop terminates.
  Given Condition \eqref{cond:single:chores:value-order} for $t = s$, $|B|$ never decreases, and $v_i(B)$ never increases for each $i\in N$, after Line~\ref{alg:single:chores:bag-init} until \ref{alg:single:chores:before-assign}.
  At Line~\ref{alg:single:chores:bag-init}, $B\ni g_{|M|-|N|+s}$ and $B^{(s-1)}_k\ni g_{|M|-|N|+k}$ for each $k\in\{1,2,\ldots,s-1\}$ hold due to Condition \eqref{cond:single:chores:top-n} for $t = s$, remaining true until Line~\ref{alg:single:chores:assign} by the definition of Lines~\ref{alg:single:chores:get-from-B} and \ref{alg:single:chores:get-from-B_k}.
  Let us go through a single iteration of the inner for-loop (Lines \ref{alg:single:chores:inner-for-start} to \ref{alg:single:chores:inner-for-end}) with an arbitrary $k\in\{1,2,\ldots,s-1\}$.
  Clearly, nothing occurs if $k < s - 1$ and the previous iteration has stopped with $v_i(B) < \left(\frac{1}{2}-\alpha\right)\hat\mu_i$ for every $i\in N^{(s)}$.
  Otherwise, at the beginning of this iteration, we have that $B\setminus\{g_{|M|-|N|+s}\} = B^{(s)}_{k+1}\setminus\{g_{|M|-|N|+k+1}\}$, that $B^{(s-1)}_{k} = B^{(s)}_k$, and hence $|B| = |B^{(s)}_{k+1}| \le |B^{(s)}_k| = |B^{(s-1)}_k|$ due to Condition \eqref{cond:single:chores:size} for $t = s$.
  Any single iteration of the while-loop (Lines \ref{alg:single:chores:inner-while-start} to \ref{alg:single:chores:inner-while-end}) keeps $B \cup B^{(s-1)}_{k}$ unchanged and increases $|B|$ by at most $1$, while maintaining that $g_{|M|-|N|+s} \in B$, $g_{|M|-|N|+k} \in B^{(s)}_k$, and $|B| \le |B^{(s)}_k|$.
  Therefore, that $B^{(s-1)}_k\setminus\{g_{|M|-|N|+k}\} = \emptyset$, which would imply that $B\setminus\{g_{|M|-|N|+s}\} = B^{(s)}_k\setminus\{g_{|M|-|N|+k}\}$, never occurs at Line~\ref{alg:single:chores:get-from-B_k} due to the second condition of the while-loop.
  We then obtain that $B\setminus\{g_{|M|-|N|+s}\} \neq \emptyset$ at Line~\ref{alg:single:chores:get-from-B}, where $|B| = |B^{(s)}_{k}| \ge |B^{(s)}_{k+1}| = |B^{(s-1)}_{k}| \ge 2$.
  Thanks to the choice of the items $g$ and $h$ at Lines~\ref{alg:single:chores:get-from-B} and \ref{alg:single:chores:get-from-B_k}, respectively, the while-loop successfully terminates.

  Next, we prove that \cref{eq:single:chores:bag-size,eq:single:chores:bag-value} hold at Line~\ref{alg:single:chores:before-assign}, where we define the following:
  \begin{align}
    k^* \coloneqq
    \begin{cases}
      0 & \text{\hspace{16pt} if }\ \exists i\in N^{(s)} \mbox{ \ s.t. \ } v_i(B) \ge \left(\alpha - \frac{1}{2}\right) \hat\mu_i; \\[3pt]
      \mathrlap{\min \left(\left\{k\in\{1,2,\ldots,s-1\} \mid B^{(s-1)}_k \neq B^{(s)}_k\right\} \cup \{s\}\right)} & \\
       & \text{\hspace{16pt} otherwise.\hspace{100pt}}
    \end{cases}
    \label{eq:single:chores:k*}
  \end{align}
  The value of $k^*$ represents at which point the updates on $B$ have stopped:
  $k^* = 0$ when all possible updates on $B$ run out, i.e., when it holds at Line~\ref{alg:single:chores:before-assign} that $B = (B^{(s)}_1\setminus\{g_{|M|-|N|+1}\})\cup\{g_{|M|-|N|+s}\}$, $B^{(s-1)}_1 = (B^{(s)}_2\setminus\{g_{|M|-|N|+2}\})\cup\{g_{|M|-|N|+1}\}$, \ldots, and $B^{(s-1)}_{s-1} = (B^{(s)}_{s}\setminus\{g_{|M|-|N|+s}\})\cup\{g_{|M|-|N|+s-1}\}$;
  $k^* = s$ if $B$ has not been updated at Line~\ref{alg:single:chores:before-assign} since Line~\ref{alg:single:chores:bag-init};
  otherwise (if $0 < k < s$), $B^{(s-1)}_{k^*}$ denotes the bag that has been last updated at either Line~\ref{alg:single:chores:move} or \ref{alg:single:chores:swap} (as $B^{(t-1)}_k$).
  Based on the observations in the previous paragraph and Conditions \eqref{cond:single:chores:disjoint} to \eqref{cond:single:chores:top-n} for $t = s$, we see that the following hold at Line~\ref{alg:single:chores:before-assign}:
  {\allowdisplaybreaks
  \begin{align}
    &\begin{aligned}
      B^{(s-1)}_{k} = B^{(s)}_{k} & & \forall k\in\{1, 2, \ldots, k^*-1\}.
    \end{aligned} \label{eq:single:chores:B_k-same} \\
    &\begin{aligned}
      B^{(s-1)}_{k} & = \left(B^{(s)}_{k+1}\setminus\left\{g_{|M|-|N|+k+1}\right\}\right)\cup\left\{g_{|M|-|N|+k}\right\} \\
      & \forall k\in\{k^*+1,k^*+2, \ldots, s-1\}.
    \end{aligned} \label{eq:single:chores:B_k-shift} \\
    &\mspace{-40mu}\mathrlap{
      \begin{cases}
        \mathrlap{B = \left(B^{(s)}_{\max\left\{k^*,1\right\}} \setminus \left\{g_{\max\left\{k^*,1\right\}}\right\}\right) \cup \{g_{|M|-|N|+s}\}} & \\
        & \text{if }\ k^* \in \{0, s\}; \\[3pt]
        \mathrlap{\left(B\setminus\left\{g_{|M|-|N|+s}\right\}\right) \cup B^{(s-1)}_{k^*} = B^{(s)}_{k^*} \cup \left(B^{(s)}_{k^*+1}\setminus\left\{g_{|M|-|N|+k^*+1}\right\}\right)} & \\
        & \text{otherwise.}
    \end{cases}} & & \label{eq:single:chores:union} \\
    &\mathrlap{B \cap \left\{g_{|M|},g_{|M|-1},\ldots,g_{|M|-|N|+1}\right\} = \left\{g_{|M|-|N|+s}\right\}.} \label{eq:single:chores:special-item} \\
    &q^+ \ge \left|B^{(s)}_{\max\left\{k^*,1\right\}}\right| \ge |B| \ge \left|B^{(s)}_{\min\left\{k^*+1,s\right\}}\right| \ge q^-. \label{eq:single:chores:size-ineq} \\
    &\mathrlap{B^{(s-1)}_1,B^{(s-1)}_2,\ldots,B^{(s-1)}_{s-1}, \mbox{ and $B$ are disjoint.}} \label{eq:single:chores:disjoint}
  \end{align}
  }%
  In particular, \cref{eq:single:chores:size-ineq} immediately gives \cref{eq:single:chores:bag-size}.
  Furthermore, applying \cref{eq:single:chores:value-half} of \cref{thm:single:chores:normalize} to the item $g\in B^{(s-1)}_k\setminus\left\{g_{|M|-|N|+k}\right\}$ at Line~\ref{alg:single:chores:get-from-B_k}, combined with the first condition of the while-loop, guarantees that
  \begin{align*}
    &\exists i\in N^{(s)} \quad \mathrm{s.t.} \quad v_i(B) \ge \left(\alpha - \frac{1}{2}\right)\hat\mu_i + \frac{1}{2}\,\hat\mu_i = \alpha\,\hat\mu_i,
  \end{align*}
  whenever $B$ gets updated at either Line~\ref{alg:single:chores:move} or \ref{alg:single:chores:swap}.
  At Line~\ref{alg:single:chores:bag-init}, Condition \eqref{cond:single:chores:value-sum} for $t = s$ also gives
  \begin{align*}
      v_i(B) & = v_i\left(B^{(s)}_{s}\right) \\
      & \ge \left(1 + (|N| - s)\left(\frac{3}{2} - \alpha\right)\right) \hat\mu_i \\
      & \ge \left(1 + (|N| - 1)\left(\frac{3}{2} - \alpha\right)\right) \hat\mu_i \\
      & \ge \frac{3 - |N|^{-1}}{2}\,\hat\mu_i \\
      & \ge \alpha\,\hat\mu_i
    &\forall i\in N^{(s)}.
  \end{align*}
  Therefore, we ensure that \cref{eq:single:chores:bag-value} holds at Line~\ref{alg:single:chores:before-assign}.

  Finally, we show that Conditions \eqref{cond:single:chores:disjoint} to \eqref{cond:single:chores:value-sum} hold for $t = s - 1$ at Line~\ref{alg:single:chores:assign}.
  Conditions \eqref{cond:single:chores:disjoint} to \eqref{cond:single:chores:value-order} for $t = s - 1$ follow from those for $t = s$ and \crefrange{eq:single:chores:B_k-same}{eq:single:chores:disjoint}.
  We fix any pair $r\in\{1,2,\ldots,s-1\}$ and $i\in N^{(s-1)}\subset N^{(s)}$, for which the inequality in Condition \eqref{cond:single:chores:value-sum} for $t = s - 1$ is obtained in either of the following cases.

  \begin{case} \label{case:single:chores:1}
    Suppose $r > k^*$.
    Then given \cref{eq:single:chores:B_k-shift}, Conditions \eqref{cond:single:chores:disjoint}, \eqref{cond:single:chores:top-n}, and \eqref{cond:single:chores:value-sum} for $t = s$, and that $\alpha\le\frac{3}{2}$, we establish that
    \begin{align*}
      v_i\left(\bigcup_{k=r}^{s-1}B^{(s-1)}_k\right) &= v_i\left(\bigcup_{k=r}^{s-1}\left(\left(B^{(s)}_{k+1}\setminus\{g_{|M|-|N|+k+1}\}\right)\cup\{g_{|M|-|N|+k}\}\right)\right) \\
      &\ge v_i\left(\bigcup_{k=r+1}^{s}B^{(s)}_{k}\right) \\
      &\ge \left(s - r + (|N| - s)\left(\frac{3}{2} - \alpha\right)\right) \hat\mu_i \\
      &\ge \left(s - r + (|N| - s + 1)\left(\frac{3}{2} - \alpha\right)\right) \hat\mu_i.
    \end{align*}
  \end{case}

  \begin{case} \label{case:single:chores:2}
    Suppose $1\le r \le k^*$.
    Then by the definition of $k^*$ in \cref{eq:single:chores:k*}, we must have that
    \begin{align*}
      v_i(B) < \left(\alpha - \frac{1}{2}\right)\hat\mu_i 
    \end{align*}
    at Line~\ref{alg:single:chores:before-assign}.
    It also follows from $r \le k^*$ and \crefrange{eq:single:chores:B_k-same}{eq:single:chores:special-item} that
    \begin{align*}
      B \cup B^{(s-1)}_r \cup B^{(s-1)}_{r + 1} \cup\cdots\cup B^{(s-1)}_{s-1} =  B^{(s)}_r \cup B^{(s)}_{r + 1} \cup\cdots\cup B^{(s)}_{s}.
    \end{align*}
    These observations and Condition \eqref{cond:single:chores:value-sum} for $t = s$ together establish that
    \begin{align*}
      v_i\left(\bigcup_{k=r}^{s-1}B^{(s-1)}_k\right)
      &\ge v_i\left(\bigcup_{k=r}^{s}B^{(s)}_{k}\right) - v_i(B) \\
      &> \left(s - r + 1 + (|N| - s)\left(\frac{3}{2} - \alpha\right)\right) \hat\mu_i - \left(\alpha - \frac{1}{2}\right)\hat\mu_i \\
      &= \left(s - r + (|N| - s + 1)\left(\frac{3}{2} - \alpha\right)\right) \hat\mu_i.
    \end{align*}
  \end{case}

  The proof of \cref{thm:single:chores:invariant} is now completed.
\end{proof}

\begin{corollary} \label{thm:single:chores:invariant-all}
  After Line~\ref{alg:single:chores:B_k-init-for-start} is reached, an $\alpha$-MMS allocation $(A_i)_{i\in N}$ for $\cI$ is returned at Line~\ref{alg:single:chores:allocation-final}.
\end{corollary}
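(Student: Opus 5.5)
We mirror the proof of \cref{thm:single:goods:invariant-all} for goods and combine the invariant lemmas already established for \hyperref[alg:single:chores:main-start]{\textproc{ApproxChores}}. First, once Line~\ref{alg:single:chores:B_k-init-for-start} is reached, $|M| > |N|$ holds, so \cref{thm:single:chores:value-sum-MMS} applies and the algorithm proceeds without interruption to Line~\ref{alg:single:chores:var-init}; there is no recursion or valid reduction to handle in the chores case. By \cref{thm:single:chores:invariant-init}, Conditions \eqref{cond:single:chores:disjoint} to \eqref{cond:single:chores:value-sum} hold for $t = |N|$ at Line~\ref{alg:single:chores:var-init}.

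Next, we induct downward on $s = |N|, |N|-1, \ldots, 1$ using \cref{thm:single:chores:invariant}. If the outer loop has succeeded for all $t > s$ and the conditions hold for $t = s$, then iteration $s$ succeeds. At Line~\ref{alg:single:chores:before-assign} the bag $B$ satisfies \cref{eq:single:chores:bag-size,eq:single:chores:bag-value}, so the agent $i^{(s)}$ exists and $A_{i^{(s)}}$ is well defined. Moreover, the conditions hold for $t = s - 1$. Hence Line~\ref{alg:single:chores:allocation-final} is reached, and the conditions hold for every $t \in \{|N|, \ldots, 0\}$.

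Finally, we read off the properties of the output. Conditions \eqref{cond:single:chores:disjoint} and \eqref{cond:single:chores:union} for $t = 0$ say that $A_{i^{(1)}}, \ldots, A_{i^{(|N|)}}$ are disjoint and cover $M$. Since $i^{(t)}$ is drawn from $N^{(t)}$ and then removed, the agents $i^{(t)}$ are distinct and exhaust $N$, so $(A_i)_{i\in N}$ is an allocation. By \cref{eq:single:chores:bag-size}, each bundle has size in $[q^-, q^+]$, so the allocation is feasible. For the value guarantee, \cref{eq:single:chores:bag-value} gives $v_{i^{(t)}}(A_{i^{(t)}}) \ge \alpha\,\hat\mu_{i^{(t)}}$.

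The only step needing care is the sign in the chores setting. By \cref{eq:single:chores:value-mu-hat} of \cref{thm:single:chores:normalize}, $\hat\mu_i \ge \mu_i(\cI)$. Since $\alpha > 0$, this yields $\alpha\,\hat\mu_i \ge \alpha\,\mu_i(\cI)$ even though all values are non-positive. Therefore $v_i(A_i) \ge \alpha\,\mu_i(\cI)$ for every $i$, which is exactly the $\alpha$-MMS condition of \cref{def:MMS}. No step is a genuine obstacle; the substantive work is already contained in \cref{thm:single:chores:invariant}.
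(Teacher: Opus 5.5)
Your proposal is correct and follows essentially the same route as the paper. Like the paper, you chain \cref{thm:single:chores:invariant-init,thm:single:chores:invariant} to obtain the invariants for every $t$, read off feasibility from Conditions \eqref{cond:single:chores:disjoint}--\eqref{cond:single:chores:union} at $t=0$ together with \cref{eq:single:chores:bag-size}, and derive the guarantee from \cref{eq:single:chores:bag-value} and $\hat\mu_i \ge \mu_i(\cI)$; your explicit remark that $\alpha > 0$ preserves this last inequality is exactly the step the paper uses implicitly.
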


\begin{proof}
  \cref{thm:single:chores:invariant-init,thm:single:chores:invariant} together ensure that Line~\ref{alg:single:chores:allocation-final} is successfully reached after Line~\ref{alg:single:chores:var-init}, with Conditions \eqref{cond:single:chores:disjoint} to \eqref{cond:single:chores:value-sum} fulfilled for every $t\in\left\{|N|,|N|-1,\ldots,1,0\right\}$.
  Then in particular, Conditions \eqref{cond:single:chores:disjoint} to \eqref{cond:single:chores:union} for $t = 0$ and \cref{eq:single:chores:bag-size} of \cref{thm:single:chores:invariant} imply that $(A_i)_{i\in N} = \left(A_{i^{(t)}}\right)_{t\in\{|N|,|N|-1,\ldots,1\}}$ is a feasible allocation for $\cI$.
  \cref{thm:single:chores:normalize} and \cref{eq:single:chores:bag-value} of \cref{thm:single:chores:invariant} also yield that
  \begin{align*}
    &v_{i^{(t)}}\!\left(A_{i^{(t)}}\right) \ge \alpha\,\hat\mu_{i^{(t)}} \ge \alpha\,\mu_{i^{(t)}}(\cI) &\forall t\in\left\{|N|,|N|-1,\ldots,1\right\}.
  \end{align*}
  establishing that $(A_i)_{i\in N}$ is an $\alpha$-MMS allocation for $\cI$.
\end{proof}



\begin{lemma} \label{thm:single:chores:runtime}
  \hyperref[alg:single:chores:main-start]{\textproc{ApproxChores}} runs in time $O(|N||M|)$.
\end{lemma}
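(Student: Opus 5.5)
We bound the running time of \hyperref[alg:single:chores:main-start]{\textproc{ApproxChores}} the same way as in \cref{thm:single:goods:runtime}; the chores algorithm has no recursion, so the bound is even more direct. The algorithm consists of three phases, and we bound each separately.

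\emph{Preprocessing.} Lines~\ref{alg:single:chores:check-few-items} to \ref{alg:single:chores:allocation-few-items} take $O(|N|)$ time. For Lines~\ref{alg:single:chores:B_k-init-for-start} to \ref{alg:single:chores:B_k-init-for-end}, we maintain the prefix sums $\sum_{k'=k+1}^{|N|}b_{k'}$ incrementally. Each $b_k$ then costs $O(1)$, and each bag $B^{(|N|)}_k$ is written out in $O(|B^{(|N|)}_k|)$ time. By \cref{eq:single:chores:union-init}, these phases take $O(|N|+|M|)$ in total. Line~\ref{alg:single:chores:mu-hat} computes, for each agent $i$, the suffix sums $v_i\bigl(\bigcup_{k=r}^{|N|}B^{(|N|)}_k\bigr)$ for all $r$ in a single backward pass. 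This costs $O(|M|)$ per agent and $O(|N||M|)$ overall.

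\emph{Main loop.} There are $|N|$ outer iterations. In iteration $t$, the inner for-loop runs $t-1$ times, and we charge $O(1)$ per inner iteration for its bookkeeping, giving $O(|N|^2)\subseteq O(|N||M|)$ since $|M|>|N|$. For the while-loop with index $k$, the key observation comes from the proof of \cref{thm:single:chores:invariant}: each move or swap transfers a new item of $B^{(t)}_k\setminus\{g_{|M|-|N|+k}\}$ into $B$ and never moves it back. This holds because $B$ and $B^{(t-1)}_k$ are value-separated by Condition~\eqref{cond:single:chores:value-order}, and $g$ is always the least valuable item of the other bag. Hence the while-loop for index $k$ iterates at most $|B^{(t)}_k|-1$ times. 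Summed over $k$, this gives at most $\sum_k |B^{(t)}_k| \le |M|$ iterations per outer iteration.

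\emph{Cost per while iteration.} This is the main point to check. Testing the while condition naively costs $O(|N|)$, because it needs $\exists i\in N^{(t)}$ with $v_i(B)\ge(\alpha-\frac12)\hat\mu_i$. To handle this, we keep $v_i(B)$ updated for every $i\in N^{(t)}$ after each move or swap, which costs $O(|N|)$ per update. This yields $O(|N||M|)$ per outer iteration and $O(|N|^2|M|)$ overall, which is too slow.

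We fix this by charging more carefully. Since $v_i(B)$ is non-increasing for all $i$ within an outer iteration, once an agent fails the threshold it fails forever. So we keep a pointer over the agents of $N^{(t)}$ and only update $v_i(B)$ for the current candidate agent. When that agent fails the threshold, we advance the pointer and compute the next agent's $v_i(B)$ from scratch in $O(|M|)$ time. The total cost per outer iteration is then $O(|M|+|N||M|)$, which is still $O(|N||M|)$ per iteration and $O(|N|^2|M|)$ overall.

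To reach the claimed $O(|N||M|)$, we instead use the structure of the bags. Within an outer iteration $t$, $B$ always equals a contiguous window of the ordered items plus the fixed item $g_{|M|-|N|+t}$: the items of $B^{(t)}_1,\dots,B^{(t)}_t$ excluding the bottom $n$ items are contiguous by Conditions~\eqref{cond:single:chores:value-order} and \eqref{cond:single:chores:union}, and each move or swap shifts the window by one position. So $v_i(B)$ is a difference of two entries of the prefix-sum array of agent $i$'s ordered values plus one item value, and any single agent's value is available in $O(1)$. Searching for a qualifying agent still costs $O(|N|)$ per test. Combining this with the monotonicity pointer, each agent is dismissed at most once per outer iteration, so the cost per outer iteration is $O(|M|+|N|)$.

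Computing the prefix sums once, in $O(|N||M|)$, makes the whole main loop $O(|N|(|M|+|N|))=O(|N||M|)$. The final search at Line~\ref{alg:single:chores:before-assign} costs $O(|N|)$ per iteration. The step I expect to be the most delicate is verifying the contiguous-window claim against the exact move and swap rules.
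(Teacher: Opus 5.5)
Your ``Main loop'' paragraph is exactly the paper's proof. The paper bounds the while-iterations by $\sum_{k=1}^{t-1}(|B^{(t)}_k|-1)\le|M|$ per outer iteration and stops there, implicitly charging unit cost per while-iteration. You are right that the test ``$\exists i\in N^{(t)}$'' needs more care. However, the step you use to get $v_i(B)$ in $O(1)$ is false: $B\setminus\{g_{|M|-|N|+t}\}$ is not a contiguous window, and a move does not shift a window. Line~\ref{alg:single:chores:move} adds the \emph{least} valuable remaining item of $B^{(t-1)}_k$ while $B$ keeps all of its own items, so the unmoved items of $B^{(t)}_k$ sit in between. For instance, take $|N|=2$, $|M|=8$, $(q^-,q^+)=(2,5)$, $v_i(g_j)=-(1+\varepsilon j)$ with small $\varepsilon>0$, and $\alpha=\frac{3}{2}$. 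Then $B^{(2)}_2=\{g_1,g_2,g_8\}$, $B^{(2)}_1=\{g_3,\dots,g_7\}$, and $\hat\mu_i=-4-18\varepsilon$. Both loop conditions hold, and the first move gives $B=\{g_1,g_2,g_6,g_8\}$. In general, the non-special part of $B$ is the least valuable part of $B^{(t)}_{k+1}$ together with the least valuable part of $B^{(t)}_k$, i.e., two runs. Moreover, once bundles from earlier outer iterations are removed, even a run of \emph{remaining} items is not an index interval. So a single global prefix-sum array per agent does not give $v_i(B)$ in $O(1)$. Your $O(|N|)$ bound for Line~\ref{alg:single:chores:before-assign} rests on the same claim.

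The repair is your own pointer idea with sharper accounting. Recomputing a new candidate's $v_i(B)$ from scratch costs $O(|B|)$, not $O(|M|)$. Since moves increase $|B|$ and swaps preserve it, $|B|$ never decreases within an outer iteration, so every such recomputation in iteration $t$ costs $O(|A_{i^{(t)}}|)$. By monotonicity of $v_i(B)$ there are at most $t\le|N|$ candidates per outer iteration. The current candidate's value is updated in $O(1)$ per move or swap, and the final search at Line~\ref{alg:single:chores:before-assign} costs $O(|N|\,|A_{i^{(t)}}|)$. Hence iteration $t$ costs $O(|M|+|N|\,|A_{i^{(t)}}|)$. The bundles are disjoint by Condition~\eqref{cond:single:chores:disjoint}, so $\sum_t|A_{i^{(t)}}|\le|M|$, and the total is $O(|N||M|)$. (If the paper's ``constant-time valuation oracles'' are read as $O(1)$ queries of $v_i(S)$ for arbitrary sets $S$, the pointer alone already suffices.)
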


\begin{proof}
  Lines \ref{alg:single:chores:B_k-init-for-start} to \ref{alg:single:chores:var-init} run in time $O(|M||N|)$.
  In each of the $|N|$ iterations of the outer for-loop (Lines \ref{alg:single:chores:outer-for-start} to \ref{alg:single:chores:outer-for-end}), the inner for-loop (Lines \ref{alg:single:chores:inner-for-start} to \ref{alg:single:chores:inner-for-end}) iterates $t - 1 < |N| < |M|$ times, during which the while-loop (Lines \ref{alg:single:chores:inner-while-start} to \ref{alg:single:chores:inner-while-end}) iterates at most $\sum_{k=1}^{t-1}(|B^{(t)}_k|-1) \le |M|$ times in total.
\end{proof}

Finally, \cref{thm:single:chores:few-items,thm:single:chores:invariant-all} together establish the correctness.
We conclude the proof of \cref{thm:single:chores:three-halves} by showing that \hyperref[alg:single:chores:main-start]{\textproc{ApproxChores}} runs in time $O(|N||M|)$; refer to \cref{thm:common:ordered} for the additional time complexity due to the reduction to an ordered instance.
In each of the $|N|$ iterations over Lines \ref{alg:single:chores:outer-for-start} to \ref{alg:single:chores:outer-for-end}, the inner for-loop (Lines \ref{alg:single:chores:inner-for-start} to \ref{alg:single:chores:inner-for-end}) iterates $t - 1 < |N| < |M|$ times, during which the while-loop (Lines \ref{alg:single:chores:inner-while-start} to \ref{alg:single:chores:inner-while-end}) iterates at most $\sum_{k=1}^{t-1}(|B^{(t)}_k|-1) \le |M|$ times in total.

\subsection{Tight Instances for the Algorithm} \label{sec:single:chores:tight}

Our analysis above is tight for any number of agents, even when they have identical valuations.

\begin{theorem} \label{thm:single:chores:tight}
  For any $n \in \{1,2,\ldots\}$, there is a single-category ordered instance $\cI_n$ with $n$ agents, $2n$ chores, and identical valuations, which satisfies the following: for any $\beta \in \left(-\infty, \frac{3n - 1}{2n}\right)$, no $\beta$-MMS allocation for $\cI_n$ is obtained by \hyperref[alg:single:chores:main-start]{\textproc{ApproxChores}} given $\cI_{n}$ and any $\alpha\in\bR$ as input.
\end{theorem}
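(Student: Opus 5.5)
The plan is to mirror the construction in \cref{thm:single:goods:tight}. Fix $n$, take $N=\{1,\ldots,n\}$, $M=\{g_1,\ldots,g_{2n}\}$, $(q^-,q^+)\coloneqq(2,2)$ and identical valuations. With these quotas every feasible allocation is a perfect matching of the $2n$ chores into pairs, so every bag produced by \hyperref[alg:single:chores:main-start]{\textproc{ApproxChores}} has exactly two items.

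First I trace the initialization. Line~\ref{alg:single:chores:b_k-init-n} gives $b_k=2$ for all $k$, and Line~\ref{alg:single:chores:B_k-init-n} then gives
\[
B^{(n)}_k=\{g_{n+k},\,g_{n-k+1}\}.
\]
This pairs the $k$-th cheapest of the last $n$ chores with the $k$-th cheapest of the first $n$. That is the optimal matching, so for identical valuations $\mu_i(\cI_n)=\min_k v_i(B^{(n)}_k)$.

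I will choose the values so that three things hold:
\begin{itemize}
\item $\mu_i(\cI_n)=-1$, witnessed by an explicit pairing;
\item Line~\ref{alg:single:chores:mu-hat} returns exactly $\hat\mu_i=-1$, through the $2v_i(g_n)$ term or through a suffix average, with the other candidates $\ge -1$ checked directly;
\item $v_i(M)=-n$.
\end{itemize}
A natural first attempt, analogous to the goods case, is $v_i(g_j)=-\tfrac12$ for $j\le n$ (so $2v_i(g_n)=-1$), with the last $n$ chores given values near $-\tfrac12$ arranged by a ceiling pattern so that the slid bags become worse. Concretely, I want one pair $\{g_{n+s},g_j\}$ with $j\neq n-s+1$ to have value exactly $-\frac{3n-1}{2n}$, while each initial pair $B^{(n)}_k$ still has value $\ge -1$. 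If the flat $-\tfrac12$ on the first half is too rigid for this, I will instead use an affine profile such as $-\frac{j}{2n}$-type values and retune.

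Next I split into cases on $\alpha$, exactly as in the goods proof.

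Case $\alpha\ge\frac{3n-1}{2n}$. The while-loop keeps swapping as long as $v_i(B)\ge(\alpha-\tfrac12)\hat\mu_i$. I will show that in the first round ($t=n$) the swaps push $B$ down until it equals the designated bad pair of value $-\frac{3n-1}{2n}$, and that this pair is then assigned at Line~\ref{alg:single:chores:assign}. Since $-\frac{3n-1}{2n}<\beta\,\mu_i(\cI_n)=-\beta$ for every $\beta<\frac{3n-1}{2n}$, the output is not $\beta$-MMS.

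Case $\alpha<1$. Success would give $-n=v_i(M)=\sum_t v(A_{i^{(t)}})\ge n\alpha\hat\mu=-n\alpha>-n$, a contradiction, as in the third goods case.

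Case $1\le\alpha<\frac{3n-1}{2n}$. I will show that no swaps or moves occur, or that they leave the final bag $B^{(1)}_1$ with value below $\alpha\hat\mu=-\alpha$. Then Line~\ref{alg:single:chores:before-assign} fails in the last round, and in particular no $\beta$-MMS allocation is output.

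The main obstacle is the numerical design of the last $n$ values. They must simultaneously keep every initial pair at value $\ge-1$ (so that $\mu=-1$ and $\hat\mu=-1$), make the swap dynamics land on a pair of value exactly $-\frac{3n-1}{2n}$ when $\alpha$ is at least the threshold, and make the last bag fall short when $\alpha$ is below it. I will first try small $n$ (say $n=2,3$) by hand to fix the pattern, then verify it in general.
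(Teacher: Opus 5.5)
\textbf{The quota choice $(q^-,q^+)=(2,2)$ makes the statement false for your instance, whatever values you pick.} Your three-way split on $\alpha$ and the sum argument for $\alpha<1$ match the paper. The obstacle, though, is structural rather than numerical.

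With these quotas, (C3) and (C4) force every bag to consist of its special chore $g_{n+t}$ plus exactly one chore from $\{g_1,\ldots,g_n\}$. In round $t=1$ the inner for-loop is empty, so the bag is $B^{(1)}_1=\{g_{n+1},x\}$ with $x\in\{g_1,\ldots,g_n\}$. Since $\{g_n,g_{n+1}\}=B^{(n)}_1$ is one of your optimal pairs,
\[
v_i\bigl(B^{(1)}_1\bigr)\ \ge\ v_i(g_{n+1})+v_i(g_n)\ \ge\ \mu_i(\cI_n)=-1\ \ge\ -\alpha=\alpha\,\hat\mu_i \qquad\text{whenever }\alpha\ge 1\text{ and }\hat\mu_i=-1 .
\]
So your third case, failure at Line~\ref{alg:single:chores:before-assign} in the last round, cannot happen.

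Earlier rounds cannot fail near the threshold either:
\begin{itemize}
\item A bag that was updated satisfies $v_i(B)\ge(\alpha-\frac12)\hat\mu_i+\frac12\hat\mu_i=\alpha\hat\mu_i$.
\item A bag that was not updated in round $t\ge2$ satisfies $v_i(B^{(t)}_t)\ge\bigl(1+(n-t)(\frac32-\alpha)\bigr)\hat\mu_i\ge\alpha\hat\mu_i$ as soon as $\alpha\ge\frac{3n-4}{2n-2}$. This uses invariant (C6), whose maintenance only needs $\alpha\le\frac32$.
\end{itemize}
Hence for every $\alpha\in\bigl[\frac{3n-4}{2n-2},\frac{3n-1}{2n}\bigr)$, a nonempty subinterval of $[1,\frac{3n-1}{2n})$ when $n\ge2$, your instance makes \textproc{ApproxChores} return an $\alpha$-MMS allocation with $\alpha<\frac{3n-1}{2n}$. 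That is exactly what the statement forbids.

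Taking $\hat\mu_i>-1$ does not help. At $\alpha=\frac{3n-1}{2n}$ the algorithm provably succeeds with every bag worth at least $\alpha\hat\mu_i$, so tightness forces $\hat\mu_i=\mu_i$. The root cause is what you yourself observed: with $(2,2)$ quotas the initial bags already form an MMS partition, so there is no bad bag left for the last round to get stuck on.

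\textbf{The missing idea is quotas that make the initial bag sizes unequal.} The size-lexicographic initialization then creates one heavy bag worse than $\mu$, while the MMS partition still uses balanced bags. The paper takes $(q^-,q^+)=(1,n+1)$ with values
\[
v(g_j)=-\tfrac1{2n}\ (j<n),\qquad v(g_j)=-\tfrac12\ (j\in\{n,n+1\}),\qquad v(g_j)=\tfrac1{2n}-1\ (j\ge n+2).
\]
The bag sizes are then $(n+1,1,\ldots,1)$, so $B^{(n)}_1=\{g_1,\ldots,g_{n+1}\}$ has value $-\frac{3n-1}{2n}$ and $B^{(n)}_k=\{g_{n+k}\}$ for $k\ge2$. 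The MMS partition is $\{g_k,g_{2n+1-k}\}$, and $\hat\mu_i=-1=\mu_i$, attained both by $2v_i(g_n)$ and by $v_i(M)/n$. The two remaining cases then go as follows.
\begin{itemize}
\item For $1\le\alpha<\frac{3n-1}{2n}$, each singleton, worth $\frac1{2n}-1$, lies below the while-threshold $\frac12-\alpha$ but above $-\alpha$. The singletons are therefore assigned unchanged, and the last round faces $B^{(n)}_1$ and fails.
\item For $\alpha\ge\frac{3n-1}{2n}$, the first round moves $g_n$ into $\{g_{2n}\}$, giving a bag of value $-\frac{3n-1}{2n}$. Its value can only decrease afterwards, so any returned allocation contains a bag of value at most $-\frac{3n-1}{2n}$.
\end{itemize}
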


\begin{proof}
  Let $n$ be an arbitrary positive integer.
  We prove the claimed property of the single-category ordered instance of chores $\cI_n = \left(N, M, (v_i)_{i\in N}, (q^-, q^+)\right)$ defined as follows:
  \begin{align*}
    N & \coloneqq \{1,2,\ldots,n\}, \\
    M & \coloneqq  \{g_1,g_2,\ldots,g_{2n}\},\\
    \left(q^-,q^+\right) & \coloneqq (1,n+1); \\
    v_i(g_j) & \coloneqq
    \begin{cases}
      -\frac{1}{2n} & \text{ if }\, j < n, \\[2pt]
      -\frac{1}{2} & \text{ if }\, n \le j \le n + 1, \\[2pt]
      \frac{1}{2n} - 1  & \text{ otherwise}
    \end{cases} &\forall g_j\in M,\forall i\in N.
  \end{align*}
  Given the partition $P \coloneqq \left(P_k \coloneqq \{g_k, g_{2n+1-k}\}\right)_{k\in\{1,2,\ldots,n\}}\in\cF(\cI_n)$,
  we have that
  \begin{align}
    \mu_i(\cI_n) &= v_i(P_k) = -1 = \frac{v_i(M)}{n} &\forall k\in\{1,2,\ldots,n\},\forall i\in N. \label{eq:single:chores:tight-MMS}
  \end{align}

  Let us fix any $\beta \in \left(-\infty, \frac{3n - 1}{2n}\right)$, and suppose that \hyperref[alg:single:chores:main-start]{\textproc{ApproxChores}} runs given $\cI_{n}$ and some $\alpha$ as input.
  At Line~\ref{alg:single:chores:check-few-items}, $\cI_n$ satisfies $|M| > |N|$.
  At Line~\ref{alg:single:chores:mu-hat}, it follows that
  \begin{align*}
    & \begin{aligned}
      & 2\,v_i\left(g_{|M|-|N|}\right) = -1 & \forall i\in N,
    \end{aligned} \\
    & v_i\left(B^{(|N|)}_k\right) =
    \begin{cases}
      v_i\left(\left\{g_1,g_2,\ldots,g_{n+1}\right\}\right) = \frac{1}{2n} - \frac{3}{2} & \text{ if }\, k = 1, \\[3pt]
      v_i(\{g_{n + k}\}) = \frac{1}{2n} - 1 &\text{ otherwise}
    \end{cases} \\
    & \quad \forall k\in\{1,2,\ldots,n\},\forall i\in N.
  \end{align*}
  Therefore, Line~\ref{alg:single:chores:var-init} is reached with $\hat\mu_i = -1$ for every $i\in N$.
  The rest of the analysis varies by the regime of $\alpha$.

  \begin{case}
    Suppose $\alpha < 1$.
    Then \hyperref[alg:single:chores:main-start]{\textproc{ApproxChores}} never successfully terminates; otherwise, Lines~\ref{alg:single:chores:before-assign} and \ref{alg:single:chores:assign} would imply that
    \begin{align*}
      v_i(M) = \sum_{t=1}^{n} v_{i}(A_{i^{(t)}}) & = \sum_{t=1}^{n} v_{i^{(t)}}(A_{i^{(t)}}) \\
      & \ge \sum_{t=1}^n \alpha\,\hat\mu_{i^{(t)}} = -n\alpha > -n & \forall i\in N,
    \end{align*}
    which contradicts the definition of $\cI_n$.
  \end{case}

  \begin{case}
    Suppose $1 \le \alpha < \frac{3n - 1}{2n}$.
    Then for each $t \in \{n, n - 1, \ldots, 2\}$, \hyperref[alg:single:chores:main-start]{\textproc{ApproxChores}} gets
    \begin{align*}
      A_{i^{(t)}} = B^{(t)}_t = B^{(|N|)}_t = \{g_{n + t}\}
    \end{align*}
    at Line~\ref{alg:single:chores:assign}.
    In the final iteration with $t = 1$, Line~\ref{alg:single:chores:bag-init} is thus reached with
    \begin{align*}
      v_i(B) &= v_i\!\left(B^{(1)}_1\right) = v_i\!\left(B^{(|N|)}_1\right) = -\frac{3n - 1}{2n} < -\alpha = \alpha\,\hat\mu_i &\forall i\in N^{(1)},
    \end{align*}
    which immediately makes Line~\ref{alg:single:chores:before-assign} unsuccessful.
  \end{case}

  \begin{case}
    Suppose $\alpha \ge \frac{3n - 1}{2n}$.
    Then in the first iteration with $t = n$, Line~\ref{alg:single:chores:assign} is reached with
    \begin{align*}
      v_{i^{(n)}}\!\left(A_{i^{(n)}}\right) &= v_{i^{(n)}}(\{g_n, g_{2n}\}) = -\frac{3n - 1}{2n} < -\beta = \beta\,\mu_{i^{(n)}}(\cI).
    \end{align*}
    Thus, \hyperref[alg:single:chores:main-start]{\textproc{ApproxChores}} never returns a $\beta$-MMS allocation for $\cI_n$.
  \end{case}

  The proof of \cref{thm:single:chores:tight} is now completed.
\end{proof}

\section{Multiple categories} \label{sec:multi}

Here, we turn to general (multi-category) instances defined in \cref{sec:prelim} and establish the following theorems, restated from \cref{sec:main_results}.
Our algorithms build on and extend their counterparts by \citet{hummel2022maximin}.

\MultiGoodsMain*
\MultiChoresMain*

\subsection{Goods: An $\left(\frac{n}{2n-1}\right)$-MMS Allocation Algorithm} \label{sec:multi:goods}

We present a polynomial-time algorithm to compute a $\left(\frac{n}{2n - 1}\right)$-MMS allocation for an ordered instance of goods with $n$ agents.
Proving \cref{thm:multi:goods:half:appendix}, combined with \cref{thm:common:ordered}, establishes \cref{thm:multi:goods:half}.

\begin{algorithm}[tb]
  \centering

  \begin{algorithmic}[1]
    \Function{ApproxCategorizedGoods}{$\cI = \left(N,M,(v_i)_{i\in N},\cC,(q_{C}^-,q_{C}^+)_{C\in\cC}\right),\alpha$} \label{alg:multi:goods:main-start}

    \State $\hat\mu_i \gets \frac{1}{|N|}\, v_i(M)$ \ for each $i\in N$. \label{alg:multi:goods:mu-hat}

    \If{$\exists i^*\in N,\exists C^*\in\cC$ s.t. $C^*\neq\emptyset$ and $v_{i^*}(g^{C^*}_1) \ge \alpha\,\hat\mu_i$} \label{alg:multi:goods:check-best-item}
    \State{$
      \begin{aligned}
        A_{i^*} & \gets \left\{g^{C^*}_{1}\right\} \\
        & \cup \left\{g^{C^*}_{|C^*| - j} \mid 0\le j < \max\left\{q_{C^*}^-,|C^*|-q_{C^*}^+(|N|-1)\right\} - 1 \right\} \\
        & \cup \bigcup_{C\in\cC\setminus\{C^*\}}\left\{g^{C}_{|C| - j} \mid 0\le j < \max\left\{q_{C}^-,|C|-q_{C}^+(|N|-1)\right\}\right\}.
    \end{aligned}$} \label{alg:multi:goods:valid-reduce-assign}
    \State $\begin{aligned}
    \cI' \gets \left(N\setminus\{i^*\}, M\setminus A_{i^*}, \left(v_{i}|_{2^{M\setminus A_{i^*}}}\right)_{i\in N\setminus \{i^*\}},\right. \\ \left.\{C\setminus A_{i^*}\mid C\in\cC\}, \left(q_{C}^-,q_{C}^+\right)_{C\in\cC}\right).
    \end{aligned}$ \label{alg:multi:goods:valid-reduced-instance}
    \State $(A_{i})_{i\in N\setminus \{i^*\}}\gets\Call{\hyperref[alg:multi:goods:main-start]{ApproxCategorizedGoods}}{\cI',\alpha}$. \label{alg:multi:goods:valid-reduce}
    \State \Return $(A_i)_{i\in N}$ \label{alg:multi:goods:allocation-valid-reduction}
    \EndIf

    \State $N^{(|N|)}\gets N$, $M^{(|N|)}\gets M$. \label{alg:multi:goods:var-init}

    \For{$t \gets |N|,|N|-1,\ldots,1$} \label{alg:multi:goods:outer-for-start}
    \State $\displaystyle B \gets \bigcup_{C\in\cC} \left\{\mbox{the least valuable $\left\lfloor\frac{|C\cap M^{(t)}|}{t}\right\rfloor$ items in $C\cap M^{(t)}$}\right\}.$ \label{alg:multi:goods:bag-init}

    \For{\textbf{each} $C \in \cC$} \label{alg:multi:goods:inner-for-start}
    \While{$\left(\forall i\in N^{(t)}, \ v_i(B) < \alpha\,\hat\mu_i\right)$ and $B \cap C \neq \left\{\mbox{the most valuable $\left\lceil\frac{|C\cap M^{(t)}|}{t}\right\rceil$ items in $C\cap M^{(t)}$}\right\}$} \label{alg:multi:goods:inner-while-start}
    \If{$|B\cap C| = \left\lceil\frac{|C\cap M^{(t)}|}{t}\right\rceil$}
    \State Remove from $B$ the least valuable item in $B\cap C$. \label{alg:multi:goods:remove-from-B}
    \EndIf
    \State Add to $B$ the most valuable item in $(C \cap M^{(t)}) \setminus B$.
    \EndWhile \label{alg:multi:goods:inner-while-end}
    \EndFor \label{alg:multi:goods:inner-for-end}

    \State Find $i^{(t)}\in N^{(t)}$ s.t. $v_{i^{(t)}}(B) \ge \alpha\,\hat\mu_{i^{(t)}}$.\label{alg:multi:goods:before-assign}
    \State $A_{i^{(t)}}\gets B$, $N^{(t - 1)}\gets N^{(t)}\setminus \left\{i^{(t)}\right\}$, $M^{(t-1)} \gets M^{(t)}\setminus B$. \hspace{-5pt} \label{alg:multi:goods:assign}

    \EndFor \label{alg:multi:goods:outer-for-end}

    \State \Return $(A_i)_{i\in N}$. \label{alg:multi:goods:allocation-final}

    \EndFunction
  \end{algorithmic}

  \caption{Compute a $\left(\frac{1}{2 - (\max\{1,n\})^{-1}}\right)$-MMS allocation for an ordered instance of goods with $n$ agents (cf.~\citep[Algorithm 3]{hummel2022maximin}).}
  \label{alg:multi:goods}
\end{algorithm}

\begin{theorem} \label{thm:multi:goods:half:appendix}
  Given as input an arbitrary ordered instance of goods $\cI = \left(N,M,(v_i)_{i\in N},\cC,(q_{C}^-,q_{C}^+)_{C\in\cC}\right)$ and any real constant $\alpha \in \left[\frac{1}{2}, \frac{1}{2 - \left(\max\left\{|N|,1\right\}\right)^{-1}}\right]$, \hyperref[alg:multi:goods:main-start]{\textproc{ApproxCategorizedGoods}} defined in \cref{alg:multi:goods} returns an $\alpha$-MMS allocation for $\cI$ in time $O(|N||M|\log |M|)$.
\end{theorem}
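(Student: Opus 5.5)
The proof goes by induction on $|N|$ and follows the structure used for \cref{thm:single:goods:two-third}. Throughout, $\hat\mu_i=\frac{1}{|N|}v_i(M)$ is an upper bound on $\mu_i(\cI)$, since the bundles of any feasible partition have values summing to $v_i(M)$.

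\emph{Valid reduction branch.} If Line~\ref{alg:multi:goods:check-best-item} fires, the bundle $A_{i^*}$ of Line~\ref{alg:multi:goods:valid-reduce-assign} is exactly the bundle $B$ of \cref{thm:common:valid-reduction} with $d=0$. For $d=0$ the first set is $\{g^{C^*}_1\}$, and the second set has $\max\{\cdot\}-1$ elements. Hence $\cI'$ is a well-defined ordered instance of goods, and the MMS of every remaining agent does not decrease. Since $v_{i^*}(A_{i^*})\ge v_{i^*}(g^{C^*}_1)\ge\alpha\hat\mu_{i^*}\ge\alpha\mu_{i^*}(\cI)$, and $\frac{1}{2-n^{-1}}$ is non-increasing in $n$ (so $\alpha$ remains admissible for $|N|-1$ agents), the induction hypothesis together with \cref{thm:common:cor-valid-reduction} gives an $\alpha$-MMS allocation. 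Otherwise, for every $i$ and every item $g$ we have $v_i(g)<\alpha\hat\mu_i$; this is the key consequence of irreducibility.

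\emph{Main loop: invariants.} For $t=|N|,\dots,0$ I maintain two conditions.
\begin{itemize}
\item[(a)] Quota compatibility: $q^-_C t\le |C\cap M^{(t)}|\le q^+_C t$ for all $C$.
\item[(b)] Value: $v_i(M^{(t)})\ge \bigl(t-(|N|-t)(2\alpha-1)\bigr)\hat\mu_i$ for all $i\in N^{(t)}$.
\end{itemize}
Both hold at $t=|N|$.

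Within round $t$, every bag has, in each category, between $\lfloor |C\cap M^{(t)}|/t\rfloor$ and $\lceil |C\cap M^{(t)}|/t\rceil$ items. With (a), this gives feasibility of $B$ and preserves (a) for $t-1$, by a routine floor/ceiling computation. Each while-step adds the most valuable outside item and possibly removes the least valuable one, so $v_i(B)$ never decreases.

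\emph{Success of Line~\ref{alg:multi:goods:before-assign}.} If every while-loop exhausts its second condition, then $B$ consists of the top $\lceil\cdot\rceil$ items of each category. Hence $v_i(B)\ge v_i(M^{(t)})/t$, and by (b) with $t\ge1$,
\[
v_i(B)\ge\bigl(1-(|N|-1)(2\alpha-1)\bigr)\hat\mu_i\ge\alpha\hat\mu_i .
\]
The last inequality is equivalent to $\alpha\le\frac{|N|}{2|N|-1}$. Otherwise some loop stopped on the value condition, which directly gives an agent with $v_i(B)\ge\alpha\hat\mu_i$.

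\emph{Preservation of (b).} This is the main obstacle: I need an upper bound on $v_i(B)$ for agents $i$ who remain. I distinguish two cases.
\begin{itemize}
\item If $B$ was modified, then before its last modification every agent in $N^{(t)}$ valued it below $\alpha\hat\mu_i$. The last step added an item worth less than $\alpha\hat\mu_i$ (by irreducibility), so $v_i(B)<2\alpha\hat\mu_i$.
\item If $B$ was never modified, it is the initial bag of the $\lfloor\cdot\rfloor$ least valuable items per category. This gives $v_i(B)\le v_i(M^{(t)})/t$, because the cheapest $k\le m/t$ of $m$ items are worth at most a $1/t$ fraction of their category.
\end{itemize}

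In the first case, subtracting $2\alpha\hat\mu_i$ from the bound in (b) yields exactly $\bigl(t-1-(|N|-t+1)(2\alpha-1)\bigr)\hat\mu_i$. In the second case, $v_i(M^{(t-1)})\ge\frac{t-1}{t}v_i(M^{(t)})$. If $v_i(M^{(t)})\le 2\alpha t\hat\mu_i$, this is at least $v_i(M^{(t)})-2\alpha\hat\mu_i$ and the first-case computation applies. If $v_i(M^{(t)})> 2\alpha t\hat\mu_i$, it is at least $(t-1)2\alpha\hat\mu_i$, which dominates the target since $2\alpha\ge1$.

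Finally, at $t=0$ the bags partition $M$, they are feasible, and each agent receives at least $\alpha\hat\mu_i\ge\alpha\mu_i(\cI)$. For the running time, each while-loop performs $O(|C|)$ steps per category, and the items can be presorted, giving $O(|N||M|\log|M|)$ including the recursion.
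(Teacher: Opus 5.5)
Your proof is correct and follows the paper's argument: the valid-reduction branch via the $d=0$ case of the reduction lemma, then invariants on per-category counts and on $v_i(M^{(t)})$, the averaging bound on the top-$\lceil\cdot\rceil$ items for success of the assignment step, and a $2\alpha\hat\mu_i$ upper bound on the assigned bag. Your separate case for a bag that is never modified is needed and is more careful than the paper. The paper derives $v_i(B)<2\alpha\hat\mu_i$ for every remaining agent from the while-loop condition, but that reasoning does not apply, and the bound can actually fail, when the initial bag of least valuable items already satisfies some agent. Your estimate $v_i(B)\le v_i(M^{(t)})/t$ is what closes that case.
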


We prove \cref{thm:multi:goods:half:appendix} by induction on the number of agents, $|N|$.
In what follows, let us fix an arbitrary ordered instance of goods $\cI = \left(N,M,(v_i)_{i\in N},\cC,(q_{C}^-,q_{C}^+)_{C\in\cC}\right)$ and any real constant $\alpha \in \left[\frac{1}{2}, \frac{1}{2 - \left(\max\left\{|N|,1\right\}\right)^{-1}}\right]$ as input to \hyperref[alg:multi:goods:main-start]{\textproc{ApproxCategorizedGoods}}.
It suffices to show \cref{thm:multi:goods:valid-reduction,thm:multi:goods:alpha-MMS-final,thm:multi:goods:runtime} below.
All line numbers in this section refer to \cref{alg:multi:goods}.

\begin{lemma} \label{thm:multi:goods:normalize}
  At Line~\ref{alg:multi:goods:mu-hat}, it holds for each $i\in N$ that $\hat\mu_i \ge \mu_i(\cI)$.
\end{lemma}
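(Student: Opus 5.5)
The claim $\hat\mu_i = \frac{1}{|N|}\,v_i(M) \ge \mu_i(\cI)$ is the standard proportionality upper bound on the maximin share. It only uses additivity of $v_i$ and the fact that any feasible allocation is a partition of $M$; the quotas play no role beyond guaranteeing that the maximum in \cref{def:MMS} is over a non-empty set. I would argue directly from \cref{def:MMS}.

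Fix $i\in N$. Since $\cF(\cI)$ is non-empty by \cref{eq:def:quotas}, agent $i$ has an MMS partition $(P_k)_{k\in N}\in\cF(\cI)$ with $\mu_i(\cI) = \min_{k\in N} v_i(P_k)$. The bundles $P_k$ are disjoint and their union is $M$, so additivity gives
\begin{align*}
  v_i(M) = \sum_{k\in N} v_i(P_k) \ge |N|\,\min_{k\in N} v_i(P_k) = |N|\,\mu_i(\cI).
\end{align*}
Dividing by $|N|$ yields $\hat\mu_i \ge \mu_i(\cI)$.

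The only point needing care is the degenerate case $|N| = 0$, where $\hat\mu_i$ is undefined. That case is vacuous because there is no agent $i$. Line~\ref{alg:multi:goods:mu-hat} is executed only for $i\in N$, so $|N|\ge 1$ whenever the statement is non-vacuous. I expect no real obstacle here; the argument is a one-line averaging bound.
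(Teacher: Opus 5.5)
Your proof is correct and takes essentially the paper's route: the paper just says the claim is immediate from \cref{def:MMS}, and your averaging argument is the standard way to fill that in. Any MMS partition is a feasible partition of $M$, so additivity gives $v_i(M) \ge |N|\,\mu_i(\cI)$.
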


\begin{proof}
  The claim is immediate from \cref{def:MMS}.
\end{proof}

\begin{lemma} \label{thm:multi:goods:valid-reduction}
  If Line~\ref{alg:multi:goods:valid-reduce-assign} is reached, an $\alpha$-MMS allocation $(A_i)_{i\in N}$ for $\cI$ is returned at Line~\ref{alg:multi:goods:allocation-valid-reduction}.
  Otherwise, it follows at Line~\ref{alg:multi:goods:var-init} that $v_i(g) < \alpha\,\hat\mu_i$ for every $g\in M$ and $i\in N$.
\end{lemma}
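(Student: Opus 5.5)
The plan is to mirror the proof of \cref{thm:single:goods:valid-reduce}. The first statement follows from \cref{thm:common:valid-reduction} and \cref{thm:common:cor-valid-reduction}; the second is immediate from the guard at Line~\ref{alg:multi:goods:check-best-item}.

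For the first statement, apply \cref{thm:common:valid-reduction} with the reached pair $(i^*, C^*)$ and $d = 0$. The condition $|C^*| \ge d|N| + 1 = 1$ holds because $C^*$ is non-empty. With $d = 0$, the first part of the bundle $B$ in \cref{thm:common:valid-reduction} is $\{g^{C^*}_j \mid 0 < j \le 1\} = \{g^{C^*}_1\}$. The second part, for $C^*$, consists of the least valuable $\max\{q_{C^*}^-, |C^*| - q_{C^*}^+(|N|-1)\} - 1$ items. The third part collects the least valuable $\max\{q_C^-, |C| - q_C^+(|N|-1)\}$ items of every other category. This is exactly $A_{i^*}$ at Line~\ref{alg:multi:goods:valid-reduce-assign}. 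Hence $\cI'$ at Line~\ref{alg:multi:goods:valid-reduced-instance} is a well-defined ordered instance of goods with $|N|-1$ agents, and it satisfies $\mu_i(\cI') \ge \mu_i(\cI)$ for all $i \ne i^*$.

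Next, invoke the induction hypothesis on $|N|$ for the recursive call at Line~\ref{alg:multi:goods:valid-reduce}. This requires $\alpha$ to lie in the admissible interval for $\cI'$. It does, because $\frac{1}{2 - (\max\{m,1\})^{-1}}$ is non-increasing in $m$, so
\[
\frac12 \le \alpha \le \frac{1}{2 - \max\{|N|,1\}^{-1}} \le \frac{1}{2 - \max\{|N|-1,1\}^{-1}} .
\]
The hypothesis then yields an $\alpha$-MMS allocation for $\cI'$. Since the instance is of goods, the guard and \cref{thm:multi:goods:normalize} give
\[
v_{i^*}(A_{i^*}) \ge v_{i^*}\bigl(g^{C^*}_1\bigr) \ge \alpha\,\hat\mu_{i^*} \ge \alpha\,\mu_{i^*}(\cI).
\]
\cref{thm:common:cor-valid-reduction} then shows that the combined allocation returned at Line~\ref{alg:multi:goods:allocation-valid-reduction} is $\alpha$-MMS for $\cI$.

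One subtlety needs care: the pseudocode at Line~\ref{alg:multi:goods:check-best-item} writes $\hat\mu_i$ rather than $\hat\mu_{i^*}$, and I will read it as $\hat\mu_{i^*}$. The base case $|N| = 1$ needs no separate treatment, since the recursion produces an empty instance in which the returned allocation is trivially fine. With that reading, the second statement follows directly. If Line~\ref{alg:multi:goods:var-init} is reached, the guard failed, so $v_i(g^C_1) < \alpha\,\hat\mu_i$ for every $i$ and every non-empty $C$. Orderedness gives $v_i(g) \le v_i(g^C_1)$ for every $g \in C$, which yields the claim for all $g \in M$.

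The main obstacle is only bookkeeping: verifying that the index ranges in Line~\ref{alg:multi:goods:valid-reduce-assign} match the bundle $B$ of \cref{thm:common:valid-reduction} at $d=0$ exactly.
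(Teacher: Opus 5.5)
Your proposal is correct and follows the paper's proof essentially step for step. The steps are: valid reduction with $d=0$ to make $\cI'$ well-defined, monotonicity of $\frac{1}{2-\max\{m,1\}^{-1}}$ in $m$ to invoke the induction hypothesis, the guard together with $\hat\mu_{i^*}\ge\mu_{i^*}(\cI)$ to bound $v_{i^*}(A_{i^*})$, and then \cref{thm:common:cor-valid-reduction}. You also spell out the second statement (guard failure plus orderedness), which the paper leaves implicit, and you cite the normalization lemma where the paper's proof mistakenly cites the lemma itself.
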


\begin{proof}
  Due to \cref{thm:common:valid-reduction} for $d = 0$, the instance $\cI'$ at Line~\ref{alg:multi:goods:valid-reduced-instance} is well-defined as an ordered instance of goods.
  Since we have
  \begin{align}
    \alpha \le \frac{1}{2 - \max\left\{|N|,1\right\}^{-1}} \le \frac{1}{2 - \max\left\{|N\setminus\{i^*\}|,1\right\}^{-1}},
  \end{align}
  the induction hypothesis ensures that $(A_i)_{i\in N\setminus\{i^*\}}$ at Line~\ref{alg:multi:goods:valid-reduce} is an $\alpha$-MMS allocation for $\cI'$.
  Because Line~\ref{alg:multi:goods:check-best-item} and \cref{thm:multi:goods:valid-reduction} together guarantee that
  \begin{align}
    v_{i^*}(A_{i^*}) \ge v_{i^*}\!\left(g^{C^*}_1\right) \ge \alpha\,\hat\mu_{i^*} \ge \alpha\,\mu_{i^*}(\cI),
  \end{align}
  \cref{thm:common:cor-valid-reduction} completes the proof.
\end{proof}

\begin{definition} \label{def:multi:goods:invariants}
  Along with the outer for-loop over Lines \ref{alg:multi:goods:outer-for-start} to \ref{alg:multi:goods:outer-for-end}, we consider the following conditions for each $t\in\{|N|, |N| - 1, \ldots, 1, 0\}$:
  \begin{align}
    \tag{C1} & \mathrlap{A_{i^{(t + 1)}}, A_{i^{(t + 2)}}, \ldots, A_{i^{(|N|)}}, \mbox{and $M^{(t)}$ are disjoint.}} & \label{cond:multi:goods:disjoint} \\[0pt]
    \tag{C2} & \mathrlap{A_{i^{(t + 1)}}\cup A_{i^{(t + 2)}}\cup\cdots\cup A_{i^{(|N|)}} \cup M^{(t)} = M.} & \label{cond:multi:goods:union} \\[0pt]
    \tag{C3} & q_C^-\, t \le |C \cap M^{(t)}| \le q_C^+\, t & & \forall C\in \cC. \label{cond:multi:goods:size} \\[0pt]
    \tag{C4} & v_i\left(M^{(t)}\right) \ge \left(t - (|N| - t)\left(2\,\alpha - 1\right)\right) \hat\mu_i & & \forall i\in N^{(t)}.
    \label{cond:multi:goods:value-sum}
  \end{align}
  \reqnomode
\end{definition}

\begin{lemma} \label{thm:multi:goods:invariant-init}
  At Line~\ref{alg:multi:goods:var-init}, Conditions \eqref{cond:multi:goods:disjoint} to \eqref{cond:multi:goods:value-sum} hold for $t = |N|$.
\end{lemma}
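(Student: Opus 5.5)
Conditions \eqref{cond:multi:goods:disjoint} and \eqref{cond:multi:goods:union} for $t = |N|$ are immediate. At Line~\ref{alg:multi:goods:var-init} no agent has yet been assigned a bundle, so the families $A_{i^{(|N|+1)}},\ldots$ in these conditions are empty. Moreover, $M^{(|N|)} = M$ by Line~\ref{alg:multi:goods:var-init}. Hence the disjointness claim is vacuous, and the union is exactly $M$.

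For Condition \eqref{cond:multi:goods:size} with $t = |N|$, I use that $C \cap M^{(|N|)} = C$ for every $C\in\cC$. The required inequalities $q_C^-|N| \le |C| \le q_C^+|N|$ are then precisely the standing assumption \cref{eq:def:quotas} in the definition of an instance. This continues to hold when the algorithm is applied recursively to $\cI'$ at Line~\ref{alg:multi:goods:valid-reduce}, because \cref{thm:common:valid-reduction} (with $d=0$) guarantees that $\cI'$ is a well-defined instance. So \cref{eq:def:quotas} holds for whichever instance the current call receives.

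For Condition \eqref{cond:multi:goods:value-sum} with $t = |N|$, the correction term $(|N|-t)(2\alpha-1)$ vanishes. The condition therefore reduces to $v_i(M) \ge |N|\,\hat\mu_i$ for all $i\in N^{(|N|)} = N$. By Line~\ref{alg:multi:goods:mu-hat}, $\hat\mu_i = v_i(M)/|N|$, so this holds with equality.

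There is no real obstacle here. The only point needing care is that \eqref{cond:multi:goods:size} relies on the current input satisfying \cref{eq:def:quotas}, including in recursive calls, and this is exactly what the well-definedness part of \cref{thm:common:valid-reduction} provides.
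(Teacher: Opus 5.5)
Your proposal is correct and takes the same approach as the paper, whose proof simply says the claim is immediate by definition. Your condition-by-condition check ((C1)--(C2) vacuous with $M^{(|N|)} = M$, (C3) from \cref{eq:def:quotas}, and (C4) with equality since $\hat\mu_i = v_i(M)/|N|$) spells out that remark, and your aside about recursive calls is harmless but unnecessary, since every instance satisfies \cref{eq:def:quotas} by definition.
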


\begin{proof}
  The claim is immediate by definition.
\end{proof}

\begin{lemma}[{\citep[Lemmas 2 and 3]{hummel2022maximin}}] \label{thm:multi:goods:invariant}
  Let $s\in\{|N|, |N| - 1 \ldots, 1\}$ be arbitrary.
  Suppose that the for-loop over Lines \ref{alg:multi:goods:outer-for-start} to \ref{alg:multi:goods:outer-for-end} has successfully iterated for $t \in \{|N|,|N|-1,\ldots,s+1\}$, and that Conditions \eqref{cond:multi:goods:disjoint} to \eqref{cond:multi:goods:value-sum} now hold for $t = s$.
  Then the next iteration with $t = s$ succeeds, where the following hold at Line~\ref{alg:multi:goods:before-assign}:
  \begin{align}
    & \forall C \in \cC, \ q^-_C \le \left\lfloor\frac{|C \cap M^{(s)}|}{s}\right\rfloor \le |B \cap C| \le \left\lceil\frac{|C \cap M^{(s)}|}{s}\right\rceil \le q^+_C. \label{eq:multi:goods:bag-size} \\
    & \exists i^{(s)}\in N^{(s)} \ \ \mathrm{s.t.}\ \ v_{i^{(s)}}(B) \ge \alpha\,\hat\mu_{i^{(s)}}. \label{eq:multi:goods:bag-value}
  \end{align}
  Furthermore, Conditions \eqref{cond:multi:goods:disjoint} to \eqref{cond:multi:goods:value-sum} hold for $t = s - 1$ at Line~\ref{alg:multi:goods:assign} of the same iteration.
\end{lemma}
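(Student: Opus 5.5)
We follow the structure of the single-category proof (\cref{thm:single:goods:invariant}), checking three things in turn: the iteration terminates with a well-formed bag satisfying \cref{eq:multi:goods:bag-size}; some remaining agent values $B$ at least $\alpha\hat\mu_i$; and Conditions \eqref{cond:multi:goods:disjoint} to \eqref{cond:multi:goods:value-sum} pass to $t=s-1$.

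\textbf{Bag sizes and termination.} Write $m_C \coloneqq |C\cap M^{(s)}|$. By Condition \eqref{cond:multi:goods:size} for $t=s$ we have $q_C^- \le m_C/s \le q_C^+$. Since $q_C^\pm$ are integers, this gives $q_C^- \le \lfloor m_C/s\rfloor$ and $\lceil m_C/s\rceil \le q_C^+$.
\begin{itemize}
\item Line~\ref{alg:multi:goods:bag-init} puts $\lfloor m_C/s\rfloor$ items of each $C$ into $B$.
\item Each while-iteration either adds an item (while $|B\cap C|<\lceil m_C/s\rceil$) or swaps the worst item of $B\cap C$ for the best item outside $B$. So $|B\cap C|$ stays in $[\lfloor m_C/s\rfloor,\lceil m_C/s\rceil]$.
\item Each step replaces an item by a more valuable one in the common order, so $B\cap C$ moves monotonically toward the top $\lceil m_C/s\rceil$ items. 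The loop therefore terminates in at most $m_C$ steps.
\end{itemize}
This yields \cref{eq:multi:goods:bag-size}.

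\textbf{Value of $B$.}
\begin{itemize}
\item If the loop stops because some agent reaches $\alpha\hat\mu_i$, \cref{eq:multi:goods:bag-value} is immediate.
\item Otherwise every category ends with $B\cap C$ equal to its top $\lceil m_C/s\rceil$ items. By an averaging argument in each category, these items are worth at least $\frac{1}{s}v_i(C\cap M^{(s)})$ to every $i$. Summing over categories gives $v_i(B)\ge \frac1s v_i(M^{(s)})$.
\item By \eqref{cond:multi:goods:value-sum}, $\frac1s v_i(M^{(s)}) \ge \bigl(1-\frac{|N|-s}{s}(2\alpha-1)\bigr)\hat\mu_i$. This falls short of $\alpha\hat\mu_i$ when $s$ is small, so the averaging bound alone is not enough.
\end{itemize}
I will therefore follow Hummel--Hetland and argue instead that the loop must stop earlier. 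Each single step adds at most one item, and by \cref{thm:multi:goods:valid-reduction} every item satisfies $v_i(g)<\alpha\hat\mu_i$. Combined with the accounting below, this keeps $\hat\mu_i$-mass available at every stage. The target bound for any $i$ who is still below threshold when the bag is taken is $v_i(B)<2\alpha\hat\mu_i$: the last step raised $v_i(B)$ by less than $\alpha\hat\mu_i$ from a value below $\alpha\hat\mu_i$.

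\textbf{Inheritance of the conditions.} Conditions \eqref{cond:multi:goods:disjoint} and \eqref{cond:multi:goods:union} are trivial. For \eqref{cond:multi:goods:size}, removing between $\lfloor m_C/s\rfloor$ and $\lceil m_C/s\rceil$ items leaves $m_C'$ with $q_C^-(s-1)\le m_C'\le q_C^+(s-1)$; a short floor/ceiling computation confirms this. For \eqref{cond:multi:goods:value-sum}, take $i\in N^{(s-1)}$ and split into two cases.
\begin{itemize}
\item If $i$ never reached the threshold, then $v_i(B)<2\alpha\hat\mu_i$ (initially below $\alpha\hat\mu_i$, plus one item below $\alpha\hat\mu_i$). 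Subtracting $2\alpha\hat\mu_i = \hat\mu_i+(2\alpha-1)\hat\mu_i$ from the bound for $t=s$ gives exactly the bound for $t=s-1$.
\item If the loop ran to the top sets, $B$ consists of the top items. The swap structure then shows that the leftover $M^{(s)}\setminus B$ contains, in each category, items dominating the removed bottom items. I will bound its value via averaging against $B$ with the initial bottom set. I expect this case to be the main obstacle, and I would handle it by mirroring Lemma~3 of Hummel--Hetland.
\end{itemize}

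Finally, the rough endpoint requirement $\alpha\le 1/(2-1/|N|)$ is exactly what makes the last bag ($s=1$, when $B=M^{(1)}$) meet threshold: $|N|-(|N|-1)(2\alpha-1)\ge$ hmm, at $s=1$ the bound is $v_i(M^{(1)})\ge(1-(|N|-1)(2\alpha-1))\hat\mu_i\ge\alpha\hat\mu_i$, which is equivalent to $\alpha(2|N|-1)\le|N|$.
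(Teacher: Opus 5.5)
\paragraph{Gap in your proof of the bag-value bound.} Your argument for \eqref{eq:multi:goods:bag-value} rests on a false claim. You correctly derive that if the while-loop reaches the top sets in every category, then $v_i(B)\ge\frac1s v_i(M^{(s)})\ge\bigl(1-\frac{|N|-s}{s}(2\alpha-1)\bigr)\hat\mu_i$. You then assert that this falls short of $\alpha\hat\mu_i$ for small $s$, but it never does. Since $2\alpha-1\ge 0$ and $\frac{|N|-s}{s}\le |N|-1$ for every $s\ge 1$, the bound is at least $\bigl(1-(|N|-1)(2\alpha-1)\bigr)\hat\mu_i\ge\alpha\hat\mu_i$. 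The last inequality is exactly $\alpha(2|N|-1)\le|N|$, which you verify yourself in your closing paragraph; $s=1$ is the worst case, not an exception.

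This averaging bound is the paper's whole argument. The union of the top $\lceil|C\cap M^{(s)}|/s\rceil$ goods of each category is worth at least $\alpha\hat\mu_i$ to every $i\in N^{(s)}$. So the loop cannot exhaust all categories while everyone stays below the threshold, and the while-condition then yields \eqref{eq:multi:goods:bag-value}. The route you substitute (``the loop must stop earlier'', using $v_i(g)<\alpha\hat\mu_i$) is not an argument: a per-good upper bound can only cap $v_i(B)$. It can never force some agent's value up to $\alpha\hat\mu_i$.

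\paragraph{The invariant for $t=s-1$.} For \eqref{cond:multi:goods:value-sum} at $t=s-1$, the case (b) you leave open as the ``main obstacle'' is spurious. If the while-loop performs at least one step, its last step began with $v_j(B)<\alpha\hat\mu_j$ for all $j\in N^{(s)}$. That step raised $v_j(B)$ by less than the value of one added good, hence by less than $\alpha\hat\mu_j$. So $v_j(B)<2\alpha\hat\mu_j$ for every $j\in N^{(s)}$, whether or not the top sets were reached. The paper uses only this bound together with the subtraction you perform in case (a).

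The case that genuinely needs separate care is the opposite one: the initial bottom bag already satisfies someone, so the loop never runs. Then your ``initially below $\alpha\hat\mu_i$'' is unjustified, and the paper's one-line bound glosses over this as well. For example, take $|N|=3$, $\alpha=\frac35$, nine goods with $q^-=q^+=3$, two agents valuing every good at $1$, and a third valuing only $g_1,\dots,g_6$, each at $1$. At $s=2$ the untouched bag $\{g_4,g_5,g_6\}$ is worth $3>2\alpha\hat\mu_i=\frac{12}{5}$ to the third agent.

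In that case, use instead that the bottom $\lfloor|C\cap M^{(s)}|/s\rfloor$ goods of each category are worth at most $\frac1s v_i(C\cap M^{(s)})$. This gives $v_i(M^{(s-1)})\ge\frac{s-1}{s}v_i(M^{(s)})\ge\bigl(s-1-(|N|-s)(2\alpha-1)\bigr)\hat\mu_i$. That already exceeds the required $\bigl(s-1-(|N|-s+1)(2\alpha-1)\bigr)\hat\mu_i$.
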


\begin{proof}
  First, we confirm that Line~\ref{alg:multi:goods:before-assign} is indeed reached, with all preceding operations done successfully.
  It suffices to see at Line~\ref{alg:multi:goods:remove-from-B} that $C\cap M^{(t)} \neq \emptyset$, which must hold due to the second condition of the while-loop (Lines \ref{alg:multi:goods:inner-while-start} to \ref{alg:multi:goods:inner-while-end}).
  We also have that the while-loop eventually terminates.

  Next, we prove that \cref{eq:multi:goods:bag-size,eq:multi:goods:bag-value} hold at Line~\ref{alg:multi:goods:before-assign}.
  From Line~\ref{alg:multi:goods:bag-init} to \ref{alg:multi:goods:before-assign}, it always holds for each $C\in \cC$ that
  \begin{align}
    \left\lfloor\frac{|C \cap M^{(s)}|}{s}\right\rfloor \le |B \cap C| \le \left\lceil\frac{|C \cap M^{(s)}|}{s}\right\rceil,
  \end{align}
  which, together with Condition \eqref{cond:multi:goods:size} for $t = s$, gives \cref{eq:multi:goods:bag-size} at Line~\ref{alg:multi:goods:before-assign}.
  Condition \eqref{cond:multi:goods:value-sum} for $t = s$ also guarantees for each $i\in N^{(s)}$ that
  \begin{align}
    & v_i\left(\bigcup_{C\in\cC} \left\{\mbox{the most valuable $\left\lceil\frac{|C\cap M^{(s)}|}{s}\right\rceil$ items in $C\cap M^{(s)}$}\right\}\right) \\
    & \ge \frac{s - (|N| - s)\left(2\,\alpha - 1\right)}{s}\,\hat\mu_i \\
    & \ge \left(1 - (|N| - 1)\left(2\,\alpha - 1\right)\right)\,\hat\mu_i \\
    & \ge \alpha\,\hat\mu_i,
  \end{align}
  because $\alpha \in \left[\frac{1}{2},\frac{1}{2 - |N|^{-1}}\right]$.
  Along with the conditions of the while-loop, this implies \cref{eq:multi:goods:bag-value} at Line~\ref{alg:multi:goods:before-assign}.

  Finally, we show that Conditions \eqref{cond:multi:goods:disjoint} to \eqref{cond:multi:goods:value-sum} for $t = s-1$ hold at Line~\ref{alg:multi:goods:assign}.
  Conditions \eqref{cond:multi:goods:disjoint} to \eqref{cond:multi:goods:size} for $t = s-1$ follow from those for $t = s$, \cref{eq:multi:goods:bag-size} at Line~\ref{alg:multi:goods:before-assign}, as well as Line~\ref{alg:multi:goods:assign}.
  Combining the latter statement of \cref{thm:multi:goods:valid-reduction} with the first condition of the while-loop, it must hold at Line~\ref{alg:multi:goods:before-assign} that
  \begin{align}
    & v_i(B) < \alpha\,\hat\mu_i + \alpha\,\hat\mu_i = 2\,\alpha\,\hat\mu_i & \forall i\in N^{(s)}.
  \end{align}
  Together with Condition \eqref{cond:multi:goods:value-sum} for $t = s$, we obtain at Line \ref{alg:multi:goods:assign} that
  \begin{align}
    & v_i\left(M^{(s-1)}\right) \\
    & = v_i\left(M^{(s)}\right) - v_i(B) \\
    & \ge \left(s - (|N| - s)\left(2\,\alpha - 1\right)\right) \hat\mu_i - 2\,\alpha\,\hat\mu_i \\
    & \ge \left(s - 1 - (|N| - s + 1)\left(2\,\alpha - 1\right)\right) \hat\mu_i
    & \forall i\in N^{(s)},
  \end{align}
  where Condition \eqref{cond:multi:goods:value-sum} for $t = s-1$ is established.
\end{proof}

\begin{corollary} \label{thm:multi:goods:alpha-MMS-final}
  If Line~\ref{alg:multi:goods:var-init} is reached, an $\alpha$-MMS allocation $(A_i)_{i\in N}$ for $\cI$ is returned at Line~\ref{alg:multi:goods:allocation-final}.
\end{corollary}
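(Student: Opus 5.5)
The plan mirrors \cref{thm:single:goods:invariant-all}. By \cref{thm:multi:goods:invariant-init}, Conditions \eqref{cond:multi:goods:disjoint} to \eqref{cond:multi:goods:value-sum} hold for $t = |N|$ at Line~\ref{alg:multi:goods:var-init}. I then induct downward on $t$ using \cref{thm:multi:goods:invariant}. Each iteration $t = s$ of the outer for-loop succeeds, and afterwards the conditions hold for $t = s-1$ at Line~\ref{alg:multi:goods:assign}. Because the assignment at Line~\ref{alg:multi:goods:assign} sets $N^{(s-1)}$ and $M^{(s-1)}$ exactly as the next iteration expects, the hypothesis of \cref{thm:multi:goods:invariant} is satisfied again for $s-1$. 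Hence Line~\ref{alg:multi:goods:allocation-final} is reached, and the conditions hold for every $t\in\{|N|,\ldots,1,0\}$.

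For feasibility, I use Conditions \eqref{cond:multi:goods:disjoint} and \eqref{cond:multi:goods:union} at $t=0$. They show that $A_{i^{(1)}},\ldots,A_{i^{(|N|)}}$ are disjoint and that their union together with $M^{(0)}$ is $M$. Condition \eqref{cond:multi:goods:size} at $t=0$ gives $0\le |C\cap M^{(0)}|\le 0$ for every $C\in\cC$, so $M^{(0)}=\emptyset$ and the bundles partition $M$. Since one agent $i^{(t)}$ is removed from $N^{(t)}$ in each iteration, the $i^{(t)}$ are distinct and exhaust $N$, so $(A_i)_{i\in N}$ is indeed an allocation. By \cref{eq:multi:goods:bag-size} applied in each iteration, every bundle satisfies $q_C^-\le|A_{i^{(t)}}\cap C|\le q_C^+$ for all $C$. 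The allocation is therefore feasible.

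For the approximation guarantee, \cref{eq:multi:goods:bag-value} gives $v_{i^{(t)}}(A_{i^{(t)}})\ge\alpha\,\hat\mu_{i^{(t)}}$, and \cref{thm:multi:goods:normalize} gives $\hat\mu_{i^{(t)}}\ge\mu_{i^{(t)}}(\cI)$. Combining the two yields the $\alpha$-MMS property. Here I need the case $\alpha\ge 0$ and a non-negative $\mu$, which holds for goods, so multiplying the inequality by $\alpha$ is safe.

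The only mildly delicate point is the bookkeeping that the final leftover set $M^{(0)}$ is empty. The argument above handles it through \eqref{cond:multi:goods:size} at $t=0$. Apart from that, every step is a direct invocation of the earlier lemmas.
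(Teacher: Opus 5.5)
Your proposal is correct and follows essentially the same route as the paper. You establish the invariants at $t=|N|$ and propagate them downward with the per-iteration lemma, then get feasibility from (C1), (C2) and the bag-size bounds, and the guarantee from $v_{i^{(t)}}(A_{i^{(t)}})\ge\alpha\hat\mu_{i^{(t)}}\ge\alpha\mu_{i^{(t)}}(\cI)$. Your use of (C3) at $t=0$ to get $M^{(0)}=\emptyset$ makes explicit a step that the paper's proof, which cites only (C1)--(C2), leaves implicit; also, only $\alpha\ge 0$ (not $\mu\ge 0$) is needed for the final multiplication.
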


\begin{proof}
  \cref{thm:multi:goods:invariant-init,thm:multi:goods:invariant} together ensure that Line~\ref{alg:multi:goods:allocation-final} is successfully reached after Line~\ref{alg:multi:goods:var-init}, with Conditions \eqref{cond:multi:goods:disjoint} to \eqref{cond:multi:goods:value-sum} fulfilled for every $t\in\left\{|N|,|N|-1,\ldots,1,0\right\}$.
  Also given Conditions \eqref{cond:multi:goods:disjoint} to \eqref{cond:multi:goods:union} for $t = 0$ and \cref{eq:multi:goods:bag-size} of \cref{thm:multi:goods:invariant}, it is implied that $(A_i)_{i\in N} = \left(A_{i^{(t)}}\right)_{t\in\{|N|,|N|-1,\ldots,1\}}$ is a feasible allocation for $\cI$.
  \cref{thm:multi:goods:normalize} and \cref{eq:multi:goods:bag-value} of \cref{thm:multi:goods:invariant} also yield that
  \begin{align*}
    &v_{i^{(t)}}\!\left(A_{i^{(t)}}\right) \ge \alpha\,\hat\mu_{i^{(t)}} \ge \alpha\,\mu_{i^{(t)}}(\cI) &\forall t\in\left\{|N|,|N|-1,\ldots,1\right\},
  \end{align*}
  establishing that $(A_i)_{i\in N}$ is an $\alpha$-MMS allocation for $\cI$.
\end{proof}

\begin{lemma} \label{thm:multi:goods:runtime}
  \hyperref[alg:multi:goods:main-start]{\textproc{ApproxCategorizedGoods}} runs in time $O(|N||M|)$.
\end{lemma}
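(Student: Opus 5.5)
The proof bounds three pieces of work separately: the preprocessing before the main loop, the recursion through valid reductions, and the outer for-loop. The recursion at Line~\ref{alg:multi:goods:valid-reduce} removes one agent per call, so there are at most $|N|$ levels. I will show that each level, apart from its recursive call, costs $O(|M|)$, and that the main loop (Lines~\ref{alg:multi:goods:var-init} to \ref{alg:multi:goods:outer-for-end}) is entered at most once overall and costs $O(|N||M|)$. Together these give $O(|N||M|)$.

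\textbf{Work per recursion level.} Because the instance is ordered, each category $C$ comes with its items already listed as $g^C_1,\ldots,g^C_{|C|}$. Consequently "most valuable" and "least valuable" reduce to index lookups at the two ends of this list, and no sorting is needed.
\begin{itemize}
\item Computing $\hat\mu_i=v_i(M)/|N|$ for all $i$ at Line~\ref{alg:multi:goods:mu-hat} is a single pass over the items per agent, costing $O(|N||M|)$. Since this happens at every level, I will instead compute $v_i(M)$ once at the top and subtract $v_i(A_{i^*})$ when descending. That makes the amortized cost $O(|M|)$ per level plus $O(|N||M|)$ once.
\item Whether a reduction happens at Line~\ref{alg:multi:goods:check-best-item} only depends on the items $g^C_1$. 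It can be checked in $O(|N||\cC|)\subseteq O(|N||M|)$ time overall, or incrementally.
\item Building $A_{i^*}$ and $\cI'$ costs $O(|M|)$.
\end{itemize}

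\textbf{Main loop.} Fix a round $t$ and maintain $M^{(t)}$ so that, within each category, the remaining items stay in their original order.
\begin{itemize}
\item Line~\ref{alg:multi:goods:bag-init} takes $\lfloor|C\cap M^{(t)}|/t\rfloor$ items from the bottom of each category, costing $O(|M^{(t)}|)$.
\item In the while-loop for category $C$, each iteration adds the next most valuable item not yet in $B$ and may remove the current least valuable one. This pointer moves monotonically, so there are at most $|C\cap M^{(t)}|$ iterations.
\item Checking the stopping condition naively costs $O(|N|)$ per iteration. To avoid this, I will keep the running values $v_i(B)$ for all $i\in N^{(t)}$ and only track whether they have crossed their thresholds $\alpha\hat\mu_i$. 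Each update then changes $|N^{(t)}|$ values, so round $t$ costs $O(t\,|M^{(t)}|)$. Summed over all rounds this is $O(|N|^2|M|)$, which is too much.
\end{itemize}

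\textbf{Main obstacle: the per-iteration agent check.} Bringing the total down to $O(|N||M|)$ requires charging each while-loop iteration $O(1)$ amortized. I would charge it differently: the while-loop iterations of round $t$ number at most $\sum_C\lceil |C\cap M^{(t)}|/t\rceil$, because the bag only advances through the top part of each category. This is roughly $|M^{(t)}|/t+|\cC|$. Multiplying by $t$ agents gives $O(|M^{(t)}|+t|\cC|)$ per round, which sums to $O(|N||M|)$.

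To make this bound valid, I need to verify from the update rule that $B\cap C$ slides by at most $\lceil|C\cap M^{(t)}|/t\rceil$ positions before reaching the top block. In that case the number of replacements is at most the size of the window, not $|C\cap M^{(t)}|$. Finally, removing $B$ from $M^{(t)}$ at Line~\ref{alg:multi:goods:assign} can be done in $O(|M|)$ per round with list splicing.
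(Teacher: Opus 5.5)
\textbf{Gap in the main-loop accounting.} You are right to worry about the $O(t)$ cost of evaluating the first while-condition. The paper's own proof only counts at most $\sum_{C\in\cC}|C\cap M^{(t)}|\le|M|$ while-iterations in each of the $|N|$ rounds, and implicitly prices each iteration at $O(1)$. However, your resolution does not yield $O(|N||M|)$. Write $m_C=|C\cap M^{(t)}|$. From $\sum_{C}\lceil m_C/t\rceil\le |M^{(t)}|/t+|\cC|$ you get $O(|M^{(t)}|+t|\cC|)$ per round, and $\sum_{t=1}^{|N|}t|\cC|=\Theta(|N|^2|\cC|)$, not $O(|N||M|)$. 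Nothing forces $|N||\cC|=O(|M|)$, because a category may have fewer than $|N|$ items whenever $q_C^-=0$. For example, take all categories to be singletons ($q_C^-=0$, $q_C^+=1$). Then $|\cC|=|M|$, each nonempty category contributes a full $t$ to your bound in round $t$, and the bound is $\Theta(|N|^2|M|)$ whenever a constant fraction of the items survives the first $|N|/2$ rounds. So the step you flagged as the main obstacle is still open.

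\textbf{The missing idea.} Charge iterations to the bag that is finally assigned, not to the window size. For $t\ge 2$, the least valuable $\lfloor m_C/t\rfloor$ items and the most valuable $\lceil m_C/t\rceil$ items of $C\cap M^{(t)}$ are disjoint. Each iteration adds the next item of the top block, and, with index tie-breaking, removes only items of the initial bottom block. After the first iteration, $|B\cap C|=\lceil m_C/t\rceil$. Hence the number of iterations spent on $C$ is at most the final $|B\cap C|$. So round $t$ performs at most $|A_{i^{(t)}}|$ iterations, and the entire main loop performs at most $|M|$; round $t=1$ performs none. Each iteration costs $O(|N|)$ to update the stored values $v_i(B)$. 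In an ordered instance of goods every step weakly increases $v_i(B)$ for all $i$, so the first while-condition can be kept as a flag and the round ended once it flips. The remaining per-round overhead (building the initial bag and its $t$ values, visiting only nonempty categories, removing $B$) is $O(|M^{(t)}|)$.

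\textbf{A smaller point.} Your bound for Line~\ref{alg:multi:goods:check-best-item} is per recursion level, not ``overall''. The scan over $N\times\cC$ recurs at up to $\min\{|N|,|M|\}$ levels with changing thresholds $\alpha\hat\mu_i$, so ``or incrementally'' needs an actual argument. The paper's proof is equally terse here, simply asserting $O(|N|+|M|)$ per level.
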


\begin{proof}
  The recursion in Line~\ref{alg:multi:goods:valid-reduce} reduces $\min\{|N|,|M|\}$ by one, while Lines \ref{alg:multi:goods:mu-hat} to \ref{alg:multi:goods:valid-reduced-instance} run in time $O(|N|+|M|)$.
  In each of the $|N|$ iterations of the outer for-loop (Lines \ref{alg:multi:goods:outer-for-start} to \ref{alg:multi:goods:outer-for-end}), the inner for-loop (Lines \ref{alg:multi:goods:inner-for-start} to \ref{alg:multi:goods:inner-for-end}) iterates over all categories in $\cC$, during which the while-loop (Lines \ref{alg:multi:goods:inner-while-start} to \ref{alg:multi:goods:inner-while-end}) iterates at most $\sum_{C\in\cC} |C\cap M^{(t)}| \le |M|$ times in total.
\end{proof}

\subsection{Chores: A $\left(2 - \frac{1}{n}\right)$-MMS Allocation Algorithm} \label{sec:multi:chores}

We present a polynomial-time algorithm to compute a $\left(2 - \frac{1}{n}\right)$-MMS allocation for an ordered instance of chores with $n$ agents.
Proving \cref{thm:multi:chores:two:appendix}, combined with \cref{thm:common:ordered}, establishes \cref{thm:multi:chores:two}.

\begin{algorithm}[tb]
  \centering

  \begin{algorithmic}[1]
    \Function{ApproxCategorizedChores}{$\cI = \left(N,M,(v_i)_{i\in N},\cC,(q_{C}^-,q_{C}^+)_{C\in\cC}\right),\alpha$} \label{alg:multi:chores:main-start}

    \State $\displaystyle \hat\mu_i \gets \min\left\{\frac{1}{|N|}\, v_i(M), \min_{g\in M} v_i(g)\right\}$ \ for each $i\in N$. \label{alg:multi:chores:mu-hat}

    \State $N^{(|N|)}\gets N$, $M^{(|N|)}\gets M$. \label{alg:multi:chores:var-init}

    \For{$t \gets |N|,|N|-1,\ldots,1$} \label{alg:multi:chores:outer-for-start}
    \State $\displaystyle B \gets \bigcup_{C\in\cC} \left\{\mbox{the least valuable $\left\lceil\frac{|C\cap M^{(t)}|}{t}\right\rceil$ items in $C\cap M^{(t)}$}\right\}.$ \label{alg:multi:chores:bag-init}

    \For{\textbf{each} $C \in \cC$} \label{alg:multi:chores:inner-for-start}
    \While{$\left(\forall i\in N^{(t)}, \ v_i(B) < \alpha\,\hat\mu_i\right)$ and $B \cap C \neq \left\{\mbox{the most valuable $\left\lfloor\frac{|C\cap M^{(t)}|}{t}\right\rfloor$ items in $C\cap M^{(t)}$}\right\}$} \label{alg:multi:chores:inner-while-start}
    \State Remove from $B$ the least valuable item in $B\cap C$. \hspace{-5pt} \label{alg:multi:chores:remove-from-B}
    \If{$|B\cap C| = \left\lfloor\frac{|C\cap M^{(t)}|}{t}\right\rfloor$}
    \State Add to $B$ the most valuable item in $(C \cap M^{(t)}) \setminus B$.
    \EndIf
    \EndWhile \label{alg:multi:chores:inner-while-end}
    \EndFor \label{alg:multi:chores:inner-for-end}

    \State Find $i^{(t)}\in N^{(t)}$ s.t. $v_{i^{(t)}}(B) \ge \alpha\,\hat\mu_{i^{(t)}}$.\label{alg:multi:chores:before-assign}
    \State $A_{i^{(t)}}\gets B$, $N^{(t - 1)}\gets N^{(t)}\setminus \left\{i^{(t)}\right\}$, $M^{(t-1)} \gets M^{(t)}\setminus B$. \hspace{-5pt} \label{alg:multi:chores:assign}

    \EndFor \label{alg:multi:chores:outer-for-end}

    \State \Return $(A_i)_{i\in N}$. \label{alg:multi:chores:allocation-final}

    \EndFunction
  \end{algorithmic}

  \caption{Compute a $\left(2 - \frac{1}{\max\{1,n\}}\right)$-MMS allocation for an ordered instance of chores with $n$ agents (cf.~\citep[Algorithm 5]{hummel2022maximin}).}
  \label{alg:multi:chores}
\end{algorithm}

\begin{theorem} \label{thm:multi:chores:two:appendix}
  Given as input an arbitrary ordered instance of chores $\cI = \left(N,M,(v_i)_{i\in N},\cC,(q_{C}^-,q_{C}^+)_{C\in\cC}\right)$ and any real constant $\alpha \in \left[2 - \frac{1}{\left(\max\left\{|N|,1\right\}\right)}, 2\right]$, \hyperref[alg:multi:chores:main-start]{\textproc{ApproxCategorizedChores}} defined in \cref{alg:multi:chores} returns an $\alpha$-MMS allocation for $\cI$ in time $O(|N||M|\log |M|)$.
\end{theorem}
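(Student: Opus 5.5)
We will mirror the proof of \cref{thm:multi:goods:half:appendix}, with all inequalities reversed for non-positive valuations and no recursion, since \hyperref[alg:multi:chores:main-start]{\textproc{ApproxCategorizedChores}} has no valid reduction step. The first step is to show the normalization $\hat\mu_i \le \mu_i(\cI)$ for every $i\in N$. Here $\frac{1}{|N|}v_i(M)\le\mu_i(\cI)$ holds because the least valued bundle of any partition is at most the average. Also $\min_{g} v_i(g)\le\mu_i(\cI)$, since some bundle of any MMS partition contains the worst chore and all other chores are non-positive. We will also record the per-item bound $v_i(g)\ge\hat\mu_i$ for all $g\in M$, which is the chores analogue of the ``no large item'' conclusion of \cref{thm:multi:goods:valid-reduction}.

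Next we introduce invariants (C1)--(C4) for each $t$, as in \cref{def:multi:goods:invariants}. We replace (C4) by
\begin{align*}
  v_i\left(M^{(t)}\right) \ge \left(t + (|N| - t)(2 - \alpha)\right)\hat\mu_i \qquad \forall i\in N^{(t)}.
\end{align*}
These hold at $t=|N|$ by the definition of $\hat\mu_i$.

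For the inductive step at $t=s$ we argue in three parts.
\begin{enumerate}
  \item \textbf{Feasibility of the bag.} Throughout the loops, $\lfloor |C\cap M^{(s)}|/s\rfloor\le|B\cap C|\le\lceil |C\cap M^{(s)}|/s\rceil$ for every $C$. Together with (C3) this gives feasibility of $B$, and it preserves (C3) for $t=s-1$, because removing a bundle of floor/ceiling size keeps $q^-_C(s-1)\le|C\cap M^{(s-1)}|\le q^+_C(s-1)$.
  \item \textbf{The bag reaches the threshold.} The while-loops stop either when some agent values $B$ at least $\alpha\hat\mu_i$, or when $B$ consists of the most valuable (least costly) $\lfloor\cdot\rfloor$ items of each category. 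In the latter case, (C4) gives $v_i(B)\ge\frac1s v_i(M^{(s)})\ge(1+(|N|-1)(2-\alpha))\hat\mu_i\ge\alpha\hat\mu_i$. The last inequality uses $\alpha\ge 2-1/|N|$ and $\hat\mu_i\le0$, since $1+(|N|-1)(2-\alpha)\le\alpha$ is equivalent to $\alpha\ge 2-1/|N|$. So Line~\ref{alg:multi:chores:before-assign} succeeds.
  \item \textbf{Preserving (C4).} Each while-iteration replaces one chore in $B$ by a weakly better one, or just removes a chore, so $v_i(B)$ increases monotonically. Just before the last update, every agent valued $B$ below $\alpha\hat\mu_i$, and a single update changes the value by at most $-v_i(g)\le-\hat\mu_i$. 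Hence every $i\in N^{(s)}$ has $v_i(B)<\alpha\hat\mu_i-\hat\mu_i+\hat\mu_i$, or more carefully $v_i(B)\le(\alpha-1)\hat\mu_i-\hat\mu_i\cdot0$. We will derive precisely $v_i(B)\le(\alpha-1)\hat\mu_i$ up to the removed item, giving the bound $v_i(B)\ge\ldots$ in the right direction. What we need is $v_i(M^{(s-1)})=v_i(M^{(s)})-v_i(B)\ge(s-1+(|N|-s+1)(2-\alpha))\hat\mu_i$, which is equivalent to $v_i(B)\le(\alpha-1)\hat\mu_i$ for $i\in N^{(s)}$ that did not meet the threshold after the last change. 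If $B$ was never changed, $B$ is the least valuable $\lceil\cdot\rceil$ items and must be handled separately.
\end{enumerate}

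We expect part 3 to be the main obstacle. Unlike the goods case, where the bag value is capped above by $2\alpha\hat\mu_i$, here we need an upper bound on $v_i(B)$, i.e.\ $B$ must not be too cheap for the remaining agents. Our first attempt is the following. If the loop stops right after an update, then just before that update $v_i(B_{\text{prev}})<\alpha\hat\mu_i$ for all $i$. The update adds back at most one item of value $\ge\hat\mu_i$ while removing one, so $v_i(B)\le v_i(B_{\text{prev}})-\hat\mu_i<(\alpha-1)\hat\mu_i$. If no update happened, the initial $B$ already satisfies the chosen agent's threshold. For the other agents, we will argue, following \citep[Lemmas 4--5]{hummel2022maximin}, that the initial bag is cheapest in each category, so $v_i(B)\le\frac1s v_i(M^{(s)})$. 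This yields (C4) directly, using $\alpha\le2$.

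Runtime is as in \cref{thm:multi:goods:runtime}: $O(|N||M|)$ after sorting. Feasibility and the $\alpha$-MMS guarantee then follow exactly as in \cref{thm:multi:goods:alpha-MMS-final}, using $\hat\mu_i\le\mu_i(\cI)$.
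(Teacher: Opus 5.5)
Your argument is the paper's own proof. It uses the same invariant (C4) with the factor $(2-\alpha)$, the same floor/ceiling size bounds for feasibility and for (C3), and the same averaging bound on the most valuable $\lfloor\cdot\rfloor$ items to show the threshold is reached. The key step is also the same: the last update raises $v_i(B)$ by at most $-v_i(g)\le-\hat\mu_i$, which gives $v_i(B)<(\alpha-1)\hat\mu_i$ and hence (C4) at $s-1$.

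Two things need fixing.
\begin{itemize}
  \item \textbf{The normalization has the wrong direction.} For chores you need, and the paper proves, $\hat\mu_i\ge\mu_i(\cI)$, because the final step is $v_i(A_i)\ge\alpha\hat\mu_i\ge\alpha\mu_i(\cI)$. With $\hat\mu_i\le\mu_i(\cI)$ as you wrote it, that step fails. Your own justifications (``the least valued bundle is at most the average'', ``some bundle contains the worst chore'') actually prove $\mu_i(\cI)\le\frac{1}{|N|}v_i(M)$ and $\mu_i(\cI)\le\min_g v_i(g)$. That is the correct direction, so only the displayed inequalities and the closing ``using $\hat\mu_i\le\mu_i(\cI)$'' need to be flipped.
  \item \textbf{The first half of Part 3 is garbled.} The clean statement is the one in your last paragraph. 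If $g$ is removed and $h$ is possibly added, the change in $v_i(B)$ is $-v_i(g)+v_i(h)\le-v_i(g)\le-\hat\mu_i$, since $v_i(h)\le 0$ and $v_i(g)\ge\hat\mu_i$. This gives $v_i(B)<(\alpha-1)\hat\mu_i$ for every $i\in N^{(s)}$, not only for agents below some threshold. Also, ``the initial bag is cheapest'' should read ``costliest'' (least valuable); the inequality $v_i(B)\le\frac1s v_i(M^{(s)})$ you draw is the right one for that bag.
\end{itemize}

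One genuine difference is worth keeping. You treat separately the case in which $B$ is never modified in round $s$. There you bound $v_i(B)\le\frac1s v_i(M^{(s)})$ by averaging, and then get
\[
v_i(M^{(s-1)})\ge\tfrac{s-1}{s}\,v_i(M^{(s)})\ge\bigl(s-1+(|N|-s+1)(2-\alpha)\bigr)\hat\mu_i ,
\]
using $\alpha\le 2$ and $\hat\mu_i\le 0$. The paper instead asserts $v_i(B)<(\alpha-1)\hat\mu_i$ for all $i\in N^{(s)}$ directly from the while-condition. That assertion is unjustified when no update occurred, and it can fail in later rounds. For example, the remaining items may be nearly costless for some remaining agent while another agent accepts the initial bag. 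So your case split closes a small hole in the paper's argument rather than being redundant.
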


In what follows, let us fix an arbitrary ordered instance of chores $\cI = \left(N,M,(v_i)_{i\in N},\cC,(q_{C}^-,q_{C}^+)_{C\in\cC}\right)$ and any real constant $\alpha \in \left[2 - \frac{1}{\left(\max\left\{|N|,1\right\}\right)}, 2\right]$ as input to \hyperref[alg:multi:chores:main-start]{\textproc{ApproxCategorizedChores}}.
It suffices to show \cref{thm:multi:chores:alpha-MMS-final,thm:multi:chores:runtime} below.
All line numbers in this section refer to \cref{alg:multi:chores}.

\begin{lemma} \label{thm:multi:chores:normalize}
  At Line~\ref{alg:multi:chores:mu-hat}, it holds for each $i\in N$ that $\hat\mu_i \ge \mu_i(\cI)$.
\end{lemma}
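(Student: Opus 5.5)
The claim is that $\hat\mu_i \ge \mu_i(\cI)$, where $\hat\mu_i$ is the minimum of two quantities and all values are non-positive. So it suffices to show that $\mu_i(\cI)$ is at most each of the two terms defining $\hat\mu_i$ at Line~\ref{alg:multi:chores:mu-hat}. Fix $i\in N$ and let $(P_k)_{k\in N}\in\cF(\cI)$ be an MMS partition of agent $i$, so that $\mu_i(\cI)=\min_{k} v_i(P_k)$.

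For the first term, I would use the fact that a minimum is at most an average. By additivity, and because $(P_k)_{k\in N}$ partitions $M$,
\begin{align*}
  \mu_i(\cI) = \min_{k\in N} v_i(P_k) \le \frac{1}{|N|}\sum_{k\in N} v_i(P_k) = \frac{1}{|N|}\, v_i(M).
\end{align*}

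For the second term, I would fix any $g\in M$ and let $k$ be the index with $g\in P_k$. Every item is a chore, so $v_i(h)\le 0$ for all $h\in P_k\setminus\{g\}$. Hence
\begin{align*}
  \mu_i(\cI) \le v_i(P_k) = v_i(g) + v_i(P_k\setminus\{g\}) \le v_i(g).
\end{align*}
Taking the minimum over $g\in M$ gives $\mu_i(\cI)\le \min_{g\in M} v_i(g)$. If $M=\emptyset$, the inner minimum is vacuous ($+\infty$), and the first bound alone suffices, giving $\mu_i(\cI)=0=\hat\mu_i$.

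Combining the two bounds yields $\mu_i(\cI)\le\hat\mu_i$, which is the statement. There is no real obstacle here. The only points to be careful about are the sign conventions for chores (a larger bundle is worse) and the degenerate case of an empty item set. Feasibility of the MMS partition, which is guaranteed by \cref{eq:def:quotas}, is used only to know that such a partition exists.
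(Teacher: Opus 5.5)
Your proof is correct and matches the paper, which simply calls the claim immediate from \cref{def:MMS}: your two bounds, $\mu_i(\cI)\le v_i(M)/|N|$ (minimum at most average) and $\mu_i(\cI)\le v_i(g)$ for every chore $g$, are exactly what that remark leaves implicit. Your handling of the empty case $M=\emptyset$ is a harmless extra.
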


\begin{proof}
  The claim is immediate from \cref{def:MMS}.
\end{proof}

\begin{definition} \label{def:multi:chores:invariants}
  Along with the outer for-loop over Lines \ref{alg:multi:chores:outer-for-start} to \ref{alg:multi:chores:outer-for-end}, we consider the following conditions for each $t\in\{|N|, |N| - 1, \ldots, 1, 0\}$:
  \begin{align}
    \tag{C1} & \mathrlap{A_{i^{(t + 1)}}, A_{i^{(t + 2)}}, \ldots, A_{i^{(|N|)}}, \mbox{and $M^{(t)}$ are disjoint.}} & \label{cond:multi:chores:disjoint} \\[0pt]
    \tag{C2} & \mathrlap{A_{i^{(t + 1)}}\cup A_{i^{(t + 2)}}\cup\cdots\cup A_{i^{(|N|)}} \cup M^{(t)} = M.} & \label{cond:multi:chores:union} \\[0pt]
    \tag{C3} & q_C^-\, t \le |C \cap M^{(t)}| \le q_C^+\, t & & \forall C\in \cC. \label{cond:multi:chores:size} \\[0pt]
    \tag{C4} & v_i\left(M^{(t)}\right) \ge \left(t + (|N| - t)\left(2 - \alpha\right)\right) \hat\mu_i & & \forall i\in N^{(t)}.
    \label{cond:multi:chores:value-sum}
  \end{align}
  \reqnomode
\end{definition}

\begin{lemma} \label{thm:multi:chores:invariant-init}
  At Line~\ref{alg:multi:chores:var-init}, Conditions \eqref{cond:multi:chores:disjoint} to \eqref{cond:multi:chores:value-sum} hold for $t = |N|$.
\end{lemma}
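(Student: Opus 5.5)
The claim is that Conditions \eqref{cond:multi:chores:disjoint} to \eqref{cond:multi:chores:value-sum} hold for $t = |N|$ at Line~\ref{alg:multi:chores:var-init}. I would verify the four conditions one at a time, using only the initialization $N^{(|N|)} = N$, $M^{(|N|)} = M$ and the definition of $\hat\mu_i$ at Line~\ref{alg:multi:chores:mu-hat}.

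For $t = |N|$, no bundle $A_{i^{(t+1)}},\ldots,A_{i^{(|N|)}}$ has been defined yet, since the index range $\{|N|+1,\ldots,|N|\}$ is empty. So Condition \eqref{cond:multi:chores:disjoint} concerns the single set $M^{(|N|)}$ and holds vacuously. Condition \eqref{cond:multi:chores:union} reduces to $M^{(|N|)} = M$, which is exactly the assignment at Line~\ref{alg:multi:chores:var-init}.

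Condition \eqref{cond:multi:chores:size} for $t = |N|$ reads $q_C^-|N| \le |C\cap M| = |C| \le q_C^+|N|$ for every $C \in \cC$. This is precisely the standing assumption \cref{eq:def:quotas} on instances.

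For Condition \eqref{cond:multi:chores:value-sum}, the term $(|N|-t)(2-\alpha)$ vanishes at $t = |N|$, so the requirement becomes $v_i(M) \ge |N|\,\hat\mu_i$ for every $i \in N$. By Line~\ref{alg:multi:chores:mu-hat}, $\hat\mu_i \le \frac{1}{|N|}v_i(M)$. Multiplying by $|N| > 0$ gives the inequality; the case $|N| = 0$ is vacuous since there is no $i$. There is no real obstacle in this lemma; the only points needing care are that the empty family of assigned bundles makes \eqref{cond:multi:chores:disjoint} and \eqref{cond:multi:chores:union} trivial, and that $\hat\mu_i$ is a minimum bounded above by the average $\frac{1}{|N|}v_i(M)$.
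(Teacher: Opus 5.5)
Your proof is correct and takes the same approach as the paper, which simply says the claim is immediate by definition. You have spelled out the four checks: the disjointness condition is vacuous, $M^{(|N|)} = M$, the size condition is the standing quota assumption \cref{eq:def:quotas}, and the value condition follows from $\hat\mu_i \le \frac{1}{|N|}v_i(M)$ at Line~\ref{alg:multi:chores:mu-hat}.
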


\begin{proof}
  The claim is immediate by definition.
\end{proof}

\begin{lemma}[{\citep[Lemmas 8 and 9]{hummel2022maximin}}] \label{thm:multi:chores:invariant}
  Let $s\in\{|N|, |N| - 1 \ldots, 1\}$ be arbitrary.
  Suppose that the for-loop over Lines \ref{alg:multi:chores:outer-for-start} to \ref{alg:multi:chores:outer-for-end} has successfully iterated for $t \in \{|N|,|N|-1,\ldots,s+1\}$, and that Conditions \eqref{cond:multi:chores:disjoint} to \eqref{cond:multi:chores:value-sum} now hold for $t = s$.
  Then the next iteration with $t = s$ succeeds, where the following hold at Line~\ref{alg:multi:chores:before-assign}:
  \begin{align}
    & \forall C \in \cC, \ q^-_C \le \left\lfloor\frac{|C \cap M^{(s)}|}{s}\right\rfloor \le |B \cap C| \le \left\lceil\frac{|C \cap M^{(s)}|}{s}\right\rceil \le q^+_C. \label{eq:multi:chores:bag-size} \\
    & \exists i^{(s)}\in N^{(s)} \ \ \mathrm{s.t.}\ \ v_{i^{(s)}}(B) \ge \alpha\,\hat\mu_{i^{(s)}}. \label{eq:multi:chores:bag-value}
  \end{align}
  Furthermore, Conditions \eqref{cond:multi:chores:disjoint} to \eqref{cond:multi:chores:value-sum} hold for $t = s - 1$ at Line~\ref{alg:multi:chores:assign} of the same iteration.
\end{lemma}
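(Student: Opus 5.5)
The plan mirrors the goods counterpart (\cref{thm:multi:goods:invariant}), with signs reversed. I would proceed in three steps: termination, the two properties at Line~\ref{alg:multi:chores:before-assign}, and propagation of the invariants.

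\textbf{Termination.} Fix a category $C$ during the while-loop at Lines~\ref{alg:multi:chores:inner-while-start} to \ref{alg:multi:chores:inner-while-end}. I would maintain the invariant
\[
\left\lfloor |C\cap M^{(s)}|/s\right\rfloor \le |B\cap C| \le \left\lceil |C\cap M^{(s)}|/s\right\rceil .
\]
Each iteration does one of two things. Either it drops the least valuable chore of $B\cap C$ when the count exceeds the floor, or it swaps that chore for the most valuable chore outside $B$ when the count equals the floor. The second condition of the while-loop guarantees that the item removed at Line~\ref{alg:multi:chores:remove-from-B} exists. It also guarantees that the most valuable chore of $(C\cap M^{(s)})\setminus B$ exists and is more valuable than the removed one, since otherwise $B\cap C$ would already be the top $\lfloor |C\cap M^{(s)}|/s\rfloor$ items. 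Because the instance is ordered, $B\cap C$ moves monotonically toward that target set, so the loop terminates within $|C\cap M^{(s)}|$ steps. Combined with Condition \eqref{cond:multi:chores:size} for $t=s$, which gives $q_C^-\le\lfloor\cdot\rfloor$ and $\lceil\cdot\rceil\le q_C^+$, this yields \cref{eq:multi:chores:bag-size}.

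\textbf{Existence of $i^{(s)}$ (\cref{eq:multi:chores:bag-value}).} Suppose no loop stopped because some agent is satisfied. Then at the end $B$ equals, in every category, the most valuable $\lfloor |C\cap M^{(s)}|/s\rfloor$ chores. Fix an agent $i\in N^{(s)}$. An averaging argument shows this bag has value at least $v_i(M^{(s)})/s$. Indeed, $M^{(s)}$ splits into $s$ groups, each containing exactly $\lfloor\cdot\rfloor$ or $\lceil\cdot\rceil$ items per category, and the top-floor bundle dominates item-wise at least the average of such a split. Equivalently, the top-floor bundle's average item value per category is at least the category average, and chores are nonpositive. By Condition \eqref{cond:multi:chores:value-sum} for $t=s$ and $\hat\mu_i\le 0$,
\[
v_i(B) \ge \frac{s+(|N|-s)(2-\alpha)}{s}\,\hat\mu_i \ge \bigl(1+(|N|-1)(2-\alpha)\bigr)\hat\mu_i \ge \alpha\,\hat\mu_i .
\]
The middle step holds because $(|N|-s)/s\le |N|-1$ and $2-\alpha\ge 0$. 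The last step holds because $\alpha\ge 2-1/|N|$ gives $(|N|-1)(2-\alpha)\le (|N|-1)/|N|$ and hence $1+(|N|-1)(2-\alpha)\le \alpha$. This contradicts the assumption that no agent is satisfied, so some agent stopped a loop. Once some agent is satisfied, the while-conditions for all later categories fail, so $B$ is not modified further and that agent remains satisfied at Line~\ref{alg:multi:chores:before-assign}.

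\textbf{Propagation of the invariants.} Conditions \eqref{cond:multi:chores:disjoint} and \eqref{cond:multi:chores:union} for $t=s-1$ are immediate from Line~\ref{alg:multi:chores:assign}. For Condition \eqref{cond:multi:chores:size}, removing between $\lfloor m/s\rfloor$ and $\lceil m/s\rceil$ items from a category with $q_C^- s\le m\le q_C^+ s$ leaves between $q_C^-(s-1)$ and $q_C^+(s-1)$ items. This follows from $m-\lceil m/s\rceil\ge q_C^-(s-1)$ and $m-\lfloor m/s\rfloor\le q_C^+(s-1)$, both routine integer checks.

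The value invariant \eqref{cond:multi:chores:value-sum} is the main step. Since $v_i(M^{(s-1)})=v_i(M^{(s)})-v_i(B)$, I need $-v_i(B)\ge(\alpha-1)\hat\mu_i$, i.e.\ $v_i(B)\le (1-\alpha)\hat\mu_i$, for every remaining $i$. There are two cases.

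\emph{Case 1: $B$ was never modified.} Then $B$ is the initial bag of the least valuable $\lceil\cdot\rceil$ chores per category, and I would argue it is item-wise worse than average, giving $v_i(B)\le v_i(M^{(s)})/s$. I am unsure this yields $(1-\alpha)\hat\mu_i$, and I expect this case to be the main obstacle. The plan is to exploit that each chore has $v_i(g)\ge\hat\mu_i$ by Line~\ref{alg:multi:chores:mu-hat}, and to strengthen the invariant in the spirit of Hummel--Hetland's Lemma~9 if needed.

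\emph{Case 2: the last modification occurred.} Just before the last step every agent had $v_i(B)<\alpha\hat\mu_i$. The last step either removed one chore, which raises $v_i(B)$ by at most $-\hat\mu_i$, or replaced it with a more valuable one. So the resulting bag satisfies $v_i(B)<\alpha\hat\mu_i-\hat\mu_i=(\alpha-1)\hat\mu_i$.

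I also need to recheck the algebra in Case 2: with $\hat\mu_i\le 0$ and $\alpha\in[1,2]$, the required inequality may need a different arrangement of signs.
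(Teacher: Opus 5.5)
Your termination argument, the bag-size bound, the existence of $i^{(s)}$ via the top-$\lfloor\cdot\rfloor$ averaging chain, and Conditions (C1)--(C3) are fine and match the paper. The gap is in Condition (C4), which you flag yourself.

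First, your target has the wrong sign. Comparing (C4) for $t=s$ with (C4) for $t=s-1$, what is needed is $-v_i(B)\ge(1-\alpha)\hat\mu_i$, i.e.\ $v_i(B)\le(\alpha-1)\hat\mu_i$: the assigned bag must be costly enough for every remaining agent. Your target $v_i(B)\le(1-\alpha)\hat\mu_i$ has a nonnegative right-hand side, so every bag of chores satisfies it, and it cannot be what the invariant requires. With the correct target, your Case~2 computation $v_i(B)<\alpha\hat\mu_i-\hat\mu_i=(\alpha-1)\hat\mu_i$ closes (C4) directly. This is exactly the paper's argument.

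Case~1 is left open, and the route you sketch cannot close it, because when the bag is never modified the per-bag bound $v_i(B)\le(\alpha-1)\hat\mu_i$ can fail. Take three identical agents, $\alpha=5/3$, one category with $(q^-,q^+)=(0,2)$, and chores valued $-\varepsilon,-\varepsilon,-\varepsilon,-1$ for small $\varepsilon>0$. Then $\hat\mu_i=-1$, and the first bag $\{g_3,g_4\}$ is assigned unmodified. The second bag $\{g_2\}$ is also assigned unmodified, and its value $-\varepsilon$ exceeds $(\alpha-1)\hat\mu_i=-2/3$ for the last agent. So neither $v_i(g)\ge\hat\mu_i$ nor a strengthened invariant aimed at that per-bag bound will help.

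Instead, bypass the per-bag bound and use the averaging fact you already have, $v_i(B)\le v_i(M^{(s)})/s$. Together with $\hat\mu_i\le 0$ and $2-\alpha\ge 0$ it gives
\[
v_i\bigl(M^{(s-1)}\bigr)\ge\frac{s-1}{s}\,v_i\bigl(M^{(s)}\bigr)\ge\Bigl(s-1+\frac{(s-1)(|N|-s)}{s}(2-\alpha)\Bigr)\hat\mu_i\ge\bigl(s-1+(|N|-s+1)(2-\alpha)\bigr)\hat\mu_i .
\]

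The paper does not separate this case either. It asserts $v_i(B)<(\alpha-1)\hat\mu_i$ for all $i\in N^{(s)}$ from the first while-condition, which presupposes at least one modification. So your instinct that the unmodified bag needs its own treatment is correct, and the display above is what fills it.
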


\begin{proof}
   First, we confirm that Line~\ref{alg:multi:chores:before-assign} is indeed reached, with all preceding operations done successfully.
  It suffices to see at Line~\ref{alg:multi:chores:remove-from-B} that $B\cap C \neq \emptyset$, which must hold due to the second condition of the while-loop (Lines \ref{alg:multi:chores:inner-while-start} to \ref{alg:multi:chores:inner-while-end}).
  We also observe that the while-loop eventually terminates.

  Next, we prove that \cref{eq:multi:chores:bag-size,eq:multi:chores:bag-value} hold at Line~\ref{alg:multi:chores:before-assign}.
  From Line~\ref{alg:multi:chores:bag-init} to \ref{alg:multi:chores:before-assign}, it always holds for each $C\in \cC$ that
  \begin{align}
    \left\lfloor\frac{|C \cap M^{(s)}|}{s}\right\rfloor \le |B \cap C| \le \left\lceil\frac{|C \cap M^{(s)}|}{s}\right\rceil,
  \end{align}
  which, together with Condition \eqref{cond:multi:chores:size} for $t = s$, gives \cref{eq:multi:chores:bag-size} at Line~\ref{alg:multi:chores:before-assign}.
  Condition \eqref{cond:multi:chores:value-sum} for $t = s$ also guarantees for each $i\in N^{(s)}$ that
  \begin{align}
    & v_i\left(\bigcup_{C\in\cC} \left\{\mbox{the most valuable $\left\lfloor\frac{|C\cap M^{(s)}|}{s}\right\rfloor$ items in $C\cap M^{(s)}$}\right\}\right) \\
    & \ge \frac{s + (|N| - s)\left(2 - \alpha\right)}{s}\,\hat\mu_i \\
    & \ge \left(1 + (|N| - 1)\left(2 - \alpha\right)\right)\,\hat\mu_i \\
    & \ge \alpha\,\hat\mu_i,
  \end{align}
  because $\alpha \in \left[2 - \frac{1}{|N|}, 2\right]$.
  Along with the conditions of the while-loop, this implies \cref{eq:multi:chores:bag-value} at Line~\ref{alg:multi:chores:before-assign}.

  Finally, we show that Conditions \eqref{cond:multi:chores:disjoint} to \eqref{cond:multi:chores:value-sum} for $t = s-1$ hold at Line~\ref{alg:multi:chores:assign}.
  Conditions \eqref{cond:multi:chores:disjoint} to \eqref{cond:multi:chores:size} for $t = s-1$ follow from those for $t = s$, \cref{eq:multi:chores:bag-size} at Line~\ref{alg:multi:chores:before-assign}, as well as Line~\ref{alg:multi:chores:assign}.
  Given Line~\ref{alg:multi:chores:mu-hat} and the first condition of the while-loop, it must hold at Line~\ref{alg:multi:chores:before-assign} that
  \begin{align}
    & v_i(B) < \alpha\,\hat\mu_i - \hat\mu_i = \left(\alpha - 1\right) \hat\mu_i & \forall i\in N^{(s)}.
  \end{align}
  Together with Condition \eqref{cond:multi:chores:value-sum} for $t = s$, we obtain at Line \ref{alg:multi:chores:assign} that
  \begin{align}
    & v_i\left(M^{(s-1)}\right) \\
    & = v_i\left(M^{(s)}\right) - v_i(B) \\
    & \ge \left(s + (|N| - s)\left(2 - \alpha\right)\right) \hat\mu_i - \left(\alpha - 1\right) \hat\mu_i \\
    & \ge \left(s - 1 + (|N| - s + 1)\left(2 - \alpha\right)\right) \hat\mu_i
    & \forall i\in N^{(s)},
  \end{align}
  where Condition \eqref{cond:multi:chores:value-sum} for $t = s-1$ is established.
\end{proof}

\begin{corollary} \label{thm:multi:chores:alpha-MMS-final}
  An $\alpha$-MMS allocation $(A_i)_{i\in N}$ for $\cI$ is returned at Line~\ref{alg:multi:chores:allocation-final}.
\end{corollary}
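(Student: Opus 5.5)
The argument is the chores counterpart of \cref{thm:multi:goods:alpha-MMS-final}: chain the invariant lemmas through the outer for-loop. Unlike the goods algorithm, \textproc{ApproxCategorizedChores} has no recursive valid-reduction branch, so Line~\ref{alg:multi:chores:var-init} is always reached and no induction on $|N|$ is needed.

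First, \cref{thm:multi:chores:invariant-init} gives Conditions \eqref{cond:multi:chores:disjoint} to \eqref{cond:multi:chores:value-sum} for $t=|N|$ at Line~\ref{alg:multi:chores:var-init}. Then I apply \cref{thm:multi:chores:invariant} successively for $s=|N|,|N|-1,\ldots,1$. Each application certifies that iteration $t=s$ succeeds, meaning Line~\ref{alg:multi:chores:before-assign} finds an agent $i^{(s)}$. It also re-establishes the conditions for $t=s-1$, which is exactly the hypothesis needed for the next application. Hence Line~\ref{alg:multi:chores:allocation-final} is reached, and the conditions hold for every $t\in\{|N|,\ldots,0\}$.

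Next I check feasibility. Condition \eqref{cond:multi:chores:size} for $t=0$ forces $|C\cap M^{(0)}|=0$ for every category, so $M^{(0)}=\emptyset$. Conditions \eqref{cond:multi:chores:disjoint} and \eqref{cond:multi:chores:union} for $t=0$ then show that $(A_{i^{(t)}})_{t=1}^{|N|}$ is a partition of $M$. Since each $i^{(t)}$ is removed from $N^{(t)}$, these bundles go to distinct agents and cover all of $N$. By \cref{eq:multi:chores:bag-size}, each bundle satisfies $q^-_C\le|A_{i^{(t)}}\cap C|\le q^+_C$ for every $C$, so the allocation is feasible.

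Finally, the approximation guarantee follows from \cref{eq:multi:chores:bag-value} combined with \cref{thm:multi:chores:normalize}:
\begin{align*}
v_{i^{(t)}}(A_{i^{(t)}})\ge\alpha\,\hat\mu_{i^{(t)}}\ge\alpha\,\mu_{i^{(t)}}(\cI).
\end{align*}
The second inequality uses that $\alpha\ge 0$ and that \cref{thm:multi:chores:normalize} is read in the chores sense, with $\hat\mu_i$ bounding $\mu_i(\cI)$ in the direction needed for non-positive values. This is the only delicate point. If the inequality as stated points the wrong way for chores, I would argue directly from Line~\ref{alg:multi:chores:mu-hat}. Every MMS partition has some bundle of value at most $v_i(M)/|N|$, and some bundle containing the worst single item. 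Both facts give $\mu_i(\cI)\ge\hat\mu_i$ as used here. With that orientation, multiplying by $\alpha>0$ gives the desired bound.
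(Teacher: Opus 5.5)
Your main line is the paper's proof. You chain \cref{thm:multi:chores:invariant-init} with \cref{thm:multi:chores:invariant} for $s=|N|,\ldots,1$. Feasibility comes from the conditions at $t=0$ plus \cref{eq:multi:chores:bag-size}, and the guarantee from \cref{eq:multi:chores:bag-value} and \cref{thm:multi:chores:normalize}. Your remark that Condition \eqref{cond:multi:chores:size} at $t=0$ forces $M^{(0)}=\emptyset$ usefully makes explicit a step the paper skips when it cites only Conditions \eqref{cond:multi:chores:disjoint} and \eqref{cond:multi:chores:union}.

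The hedge in your last paragraph, however, gets the inequality backwards and should be dropped. Since $\alpha\ge 1>0$, the step $\alpha\hat\mu_i\ge\alpha\mu_i(\cI)$ needs exactly $\hat\mu_i\ge\mu_i(\cI)$. That is precisely what \cref{thm:multi:chores:normalize} states, so there is nothing to reorient. The two facts you list prove exactly $\mu_i(\cI)\le\hat\mu_i$: in any feasible partition some bundle has value at most $v_i(M)/|N|$, and the bundle containing any item $g$ has value at most $v_i(g)$ because all values are non-positive.

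The inequality you wrote, $\mu_i(\cI)\ge\hat\mu_i$, is false in general. Take two agents, three chores each of value $-1$, and $(q^-,q^+)=(0,3)$: then $\hat\mu_i=-\frac{3}{2}$ but $\mu_i(\cI)=-2$. Even where it held, combining it with $v_i(A_i)\ge\alpha\hat\mu_i$ gives no lower bound of the form $v_i(A_i)\ge\alpha\mu_i(\cI)$. So your claim ``as used here'' misidentifies the needed direction. Simply cite the lemma as stated.
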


\begin{proof}
  \cref{thm:multi:chores:invariant-init,thm:multi:chores:invariant} together ensure that Line~\ref{alg:multi:chores:allocation-final} is successfully reached, with Conditions \eqref{cond:multi:chores:disjoint} to \eqref{cond:multi:chores:value-sum} fulfilled for every $t\in\left\{|N|,|N|-1,\ldots,1,0\right\}$.
  Also given Conditions \eqref{cond:multi:chores:disjoint} to \eqref{cond:multi:chores:union} for $t = 0$ and \cref{eq:multi:chores:bag-size} of \cref{thm:multi:chores:invariant}, it is implied that $(A_i)_{i\in N} = \left(A_{i^{(t)}}\right)_{t\in\{|N|,|N|-1,\ldots,1\}}$ is a feasible allocation for $\cI$.
  \cref{thm:multi:chores:normalize} and \cref{eq:multi:chores:bag-value} of \cref{thm:multi:chores:invariant} also yield that
  \begin{align*}
    &v_{i^{(t)}}\!\left(A_{i^{(t)}}\right) \ge \alpha\,\hat\mu_{i^{(t)}} \ge \alpha\,\mu_{i^{(t)}}(\cI) &\forall t\in\left\{|N|,|N|-1,\ldots,1\right\},
  \end{align*}
  establishing that $(A_i)_{i\in N}$ is an $\alpha$-MMS allocation for $\cI$.
\end{proof}

\begin{lemma} \label{thm:multi:chores:runtime}
  \hyperref[alg:multi:chores:main-start]{\textproc{ApproxCategorizedChores}} runs in time $O(|N||M|)$.
\end{lemma}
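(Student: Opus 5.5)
We bound the running time by going through \cref{alg:multi:chores} line by line, in the same way as for \cref{thm:multi:goods:runtime}. The chores algorithm has no recursion, so there is no valid-reduction step to handle. Line~\ref{alg:multi:chores:mu-hat} computes, for each agent $i\in N$, the quantity $v_i(M)/|N|$ and the minimum item value $\min_{g\in M}v_i(g)$. This costs $O(|M|)$ per agent and $O(|N||M|)$ in total. Line~\ref{alg:multi:chores:var-init} is trivial.

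The main cost is the outer for-loop (Lines~\ref{alg:multi:chores:outer-for-start} to \ref{alg:multi:chores:outer-for-end}), which runs $|N|$ times. We show that each iteration takes $O(|M|)$ time, split into three parts.
\begin{itemize}
\item \textbf{Forming $B$ (Line~\ref{alg:multi:chores:bag-init}).} Because the instance is ordered, each category $C$ comes with a fixed order $g^C_1,\ldots,g^C_{|C|}$ common to all agents. We keep $C\cap M^{(t)}$ as a sublist of this order. Then the least valuable $\lceil |C\cap M^{(t)}|/t\rceil$ items form a suffix of the list, and the most valuable $\lfloor |C\cap M^{(t)}|/t\rfloor$ items form a prefix. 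Both can be read off in $O(|C\cap M^{(t)}|)$ time, so this step costs $O(|M|)$ overall.
\item \textbf{The inner while-loops (Lines~\ref{alg:multi:chores:inner-while-start} to \ref{alg:multi:chores:inner-while-end}).} Within a category $C$, every pass removes the least valuable item of $B\cap C$. When it adds an item, that item is the most valuable one in $(C\cap M^{(t)})\setminus B$. So in the ordered list the window $B\cap C$ only moves upward, and each item of $C\cap M^{(t)}$ is removed at most once and added at most once. The number of passes for $C$ is therefore $O(|C\cap M^{(t)}|)$, which sums to $O(|M|)$ over all categories.
\item \textbf{Checking the loop conditions.} Each pass must test $v_i(B)<\alpha\hat\mu_i$ for all $i\in N^{(t)}$. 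This needs the values $v_i(B)$ to be maintained incrementally, which costs $O(|N^{(t)}|)$ per pass. The test whether $B\cap C$ equals the prefix of length $\lfloor |C\cap M^{(t)}|/t\rfloor$ takes $O(1)$ time, using the window's endpoints and its size. Line~\ref{alg:multi:chores:before-assign} scans the agents, and Line~\ref{alg:multi:chores:assign} updates the lists in $O(|M|)$ time.
\end{itemize}

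The main obstacle is the agent factor in the condition checks. Done naively, they give $O(|N||M|)$ time per outer iteration and $O(|N|^2|M|)$ in total. I would first try to charge this cost only to the items that actually move. If that does not work, the fix is to change the order of loops. For each agent separately, the stopping point in each category can be computed by one monotone pass, since the monotone window structure makes $v_i(B)$ monotone as the passes proceed. The loop then stops at the earliest stopping point over all agents, and finding this minimum costs $O(|N|+|M|)$ per outer iteration.

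Multiplying by the $|N|$ outer iterations gives the stated $O(|N||M|)$ bound.
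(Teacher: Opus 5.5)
Your first two bullets reproduce the paper's proof. The paper bounds the number of while-loop passes in round $t$ by $\sum_{C\in\cC}|C\cap M^{(t)}|\le|M|$, multiplies by the $|N|$ rounds, and implicitly charges each pass $O(1)$. Your third bullet goes further and correctly notes that the test $\forall i\in N^{(t)},\ v_i(B)<\alpha\hat\mu_i$ is not constant-time. Every pass changes every remaining agent's value by an agent-specific amount, so honest accounting with this pass count gives $O(|N^{(t)}|)$ per pass and $O(|N|^2|M|)$ in total. The paper's proof does not address this cost.

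The gap is in your repair. Computing each agent's stopping point \emph{by one monotone pass} means simulating the agent-independent sequence of passes of round $t$ while tracking $v_i(B)$. That costs $\Theta(P_t)$ per agent, where $P_t\le|M^{(t)}|$ is the number of passes. Over all $i\in N^{(t)}$ this is $\Theta(|N^{(t)}|\,P_t)$ per round, which is exactly the naive cost with the two loops interchanged. Your figure of $O(|N|+|M|)$ per round counts only taking the minimum of the $|N^{(t)}|$ stopping points, not computing them.

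Your other fallback, charging the checks only to items that move, fails for the same reason: every pass moves an item, and at every pass all $|N^{(t)}|$ values change. So your closing step (multiplying by the $|N|$ outer iterations) rests on an unproved $O(|M|)$-per-round bound. What your argument actually establishes is $O(|N|^2|M|)$.

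To reach $O(|N||M|)$ you would need a genuinely different implementation idea that avoids updating every remaining agent at every pass, with its own cost analysis; the proposal does not supply one. Alternatively, you can adopt the paper's unit-cost-per-pass accounting, but then you should state explicitly that this is an assumption, not something your argument proves.
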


\begin{proof}
  In each of the $|N|$ iterations of the outer for-loop (Lines \ref{alg:multi:chores:outer-for-start} to \ref{alg:multi:chores:outer-for-end}), the inner for-loop (Lines \ref{alg:multi:chores:inner-for-start} to \ref{alg:multi:chores:inner-for-end}) iterates over all categories in $\cC$, during which the while-loop (Lines \ref{alg:multi:chores:inner-while-start} to \ref{alg:multi:chores:inner-while-end}) iterates at most $\sum_{C\in\cC} |C\cap M^{(t)}| \le |M|$ times in total.
\end{proof}

\section{Special Cases} \label{sec:special}

We discuss specific classes of instances for which we can achieve (almost) exact MMS guarantees.

\subsection{A Constant Number of (Almost) Identical Agents and a Constant Number of Categories}

\citet{woeginger2000does}'s characterization implies the existence of a fully polynomial-time approximation scheme (FPTAS) for the following special case, where agents are said to be \emph{identical} if their valuations are equal as functions.

\begin{theorem} \label{thm:fptas-identical}
  For any real constant $\varepsilon > 0$, there is an algorithm that computes a $(1 - \varepsilon)$-MMS allocation for a single-category instance of goods with a constant number of identical agents and a constant number of categories, or a $(1 +\varepsilon)$-MMS allocation for that of chores, in time polynomial in the number of items $|M|$ and $\varepsilon^{-1}$.
\end{theorem}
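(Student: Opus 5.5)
We plan to derive the claim from \citet{woeginger2000does}'s framework of ex-benevolent dynamic programs. Since the agents are identical, every agent has the same MMS $\mu$, and an $(1-\varepsilon)$-MMS allocation (resp.\ $(1+\varepsilon)$-MMS for chores) is exactly a feasible partition whose worst bundle value is within a factor $1-\varepsilon$ (resp.\ $1+\varepsilon$) of the optimum of the max-min (resp.\ min-max) partition problem. We can therefore reduce the task to approximating one optimization problem: maximize $\min_k v(P_k)$ over $(P_k)_{k\in N}\in\cF(\cI)$. The constrained problem is close to $P_m\|C_{\min}$ (resp.\ $P_m\|C_{\max}$), for which Woeginger exhibits an FPTAS when the number of machines $m=|N|$ is constant. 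What remains is to encode the quota constraints.

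We will write a dynamic program that processes the items one by one, say category by category in a fixed order. The state after processing $j$ items is a vector
\[
  \bigl(x_1,\ldots,x_{|N|},\ c_{1},\ldots,c_{|N|}\bigr),
\]
where $x_k$ is the current value of bundle $k$, and $c_k$ is the number of items of the \emph{current} category assigned to bundle $k$. When a category is finished, we discard every state with some $c_k\notin[q^-_C,q^+_C]$ and reset the counters to zero. Because the categories are handled sequentially, we only ever need counters for the category currently being processed. As an alternative, we can keep one counter per category per agent, which is still constantly many coordinates because $|\cC|$ and $|N|$ are both constant.

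Each counter is bounded by $|M|$, so the counter coordinates take polynomially many values and never need trimming. The value coordinates $x_k$ are combined additively, and the objective $\min_k x_k$ (resp.\ $\max_k |x_k|$) has the monotonicity and polynomial-bound properties that Woeginger's conditions require. The transition maps add $v(g)$ to one coordinate and increment the corresponding counter. The filter discards states that violate a quota at a category boundary.

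The main obstacle is the formal verification of Woeginger's conditions (C.1)--(C.4) in the presence of the counter coordinates. In particular, we must check that the dominance/quasi-linear ordering is trivial on the counter coordinates, since the counters are compared with degree $0$ and require exact equality, and that the filter functions are compatible with the trimming of the value coordinates. We will handle this by giving the counter coordinates degree $0$. Woeginger's framework explicitly allows coordinates of degree $0$ provided their range is polynomially bounded, which holds here because each counter is at most $|M|$.

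We must also check that $\log$ of the item values is polynomial in the input size. For goods, the max-min objective is benevolent under Woeginger's maximization variant, as in $P_m\|C_{\min}$. For chores, we negate the values, which turns the problem into the min-max variant, as in $P_m\|C_{\max}$. With these checks in place, Woeginger's theorem yields the FPTAS in both cases, with running time polynomial in $|M|$ and $\varepsilon^{-1}$.
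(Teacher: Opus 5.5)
Your proposal is correct and is essentially the paper's proof. The paper also uses Woeginger's framework, with exactly your alternative state: for each agent, a degree-$1$ value coordinate plus a degree-$0$ counter per category, i.e.\ degree vector $((1,(0)_{j=1}^{\ell}))_{i=1}^{n}$. It uses no filter functions and folds the quota check into the objective $G$, so only the ex-benevolence conditions C.3(i), C.4 and C.5 need to be verified. Your primary variant (counters for the current category only, reset and filtered at category boundaries) goes through the general DP-benevolent conditions instead, where the filter condition C.2 is trivial because $[D,\Delta]$-close states have identical counters; as a small bonus, it keeps the state dimension at $2|N|$, so the number of categories need not be constant.
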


\begin{proof}
  We only consider the case of goods, which straightforwardly extends to that of chores.
  Let $n$ and $\ell$ be fixed positive integers.
  Throughout the proof, we use the language of \citet{woeginger2000does}, who shows that every \emph{DP-benevolent} optimization problem admits an FPTAS \citep[Theorem 3.5]{woeginger2000does}.
  We show that the following problem is, more strongly, \emph{ex-benevolent} \citep[Section 5]{woeginger2000does}:

  \begin{quote}
    \emph{Input.}
    A positive integer $m$.
    Non-negative integers $p_1,p_2,\ldots,p_m$.
    A partition of $\{1,2,\ldots,m\}$ into subsets $C_1,C_2,\ldots,C_\ell$.
    A pair of non-negative integers $(q^-_j, q^+_j)$ for each $j\in\{1,2,\ldots,\ell\}$ such that $q^-_j n \le |C_j| \le q^+_j n$.

    \emph{Task.}
    Find an ordered partition $(A_i)_{i=1}^n$ of $\{1,2,\ldots,m\}$ such that $q^-_j \le |A_i \cap C_j| \le q^+_j$ for every $(i, j)\in \{1,2,\ldots,n\}\times\{1,2,\ldots,\ell\}$, and $\sum_{g\in A_i} p_g$ is maximized subject to these constraints.
  \end{quote}

  Let us configure a \emph{dynamic program} \citep[Definition 3.2 to 3.4]{woeginger2000does} for this optimization problem as follows:
  Start with an initial \emph{state space} $\cS_0 \coloneqq \{(0,(0)_{j=1}^\ell)_{i=1}^n\} \subseteq \bZ_{\ge 0}^{n(1 + \ell)}$, and let $j_k$ denote the unique $j\in\{1,2,\ldots,\ell\}$ such that $k \in C_j$.
  Sequentially for $k = 1,2,\ldots,m$, process the pair $(p_k, j_k)$ to produce a new state space $\cS_k \subseteq \bZ_{\ge 0}^{n(1 + \ell)}$ by
  \begin{align}
    \begin{split}
      \cS_k \coloneqq & \left\{F((p_k, j_k), (v_i, (c_{i,j})_{j=1}^\ell)_{i=1}^n) \right. \\
      & \left. \mid F\in\cF, (v_i, (c_{i,j})_{j=1}^\ell)_{i=1}^n\in\cS_{k-1}\right\},
    \end{split}
  \end{align}
  where
  \begin{align}
    \cF \coloneqq \left\{ F_i:\bZ^2 \times \bZ_{\ge 0}^{n(1 + \ell)} \to \bZ_{\ge 0}^{n(1 + \ell)} \mid i\in\{1,2,\ldots,n\}\right\},
  \end{align}
  and for any $i^*\in \{1,2,\ldots,n\}$, $(p, j^*)\in\bZ_{\ge 0}^2$, and $(v_i, (c_{i,j})_{j=1}^\ell)_{i=1}^n \in \bZ_{\ge 0}^{n(1+\ell)}$,
  \begin{align}
    &\mathrlap{F_{i^*}((p, j^*), (v_i, (c_{i,j})_{j=1}^\ell)_{i=1}^n) \coloneqq (v'_i, (c'_{i,j})_{j=1}^\ell)_{i=1}^n} \\
    &\quad \text{s.t.}
    & v'_i & \begin{aligned}
     & \coloneqq
    \begin{cases}
      v_i + p & \text{if } i = i^*, \\
      v_i & \text{otherwise}
    \end{cases} &\forall i\in \{1,2,\ldots,n\},
    \end{aligned} \\
    & & \begin{split}
     c'_{i,j} &\coloneqq
    \begin{cases}
      c_{i,j} + 1 & \text{if } i = i^* \text{ and } j = j^*, \\
      c_{i,j} & \text{otherwise}
    \end{cases} \\
    & \forall i\in \{1,2,\ldots,n\},\forall j\in \{1,2,\ldots,\ell\}.
    \end{split}
  \end{align}
  After all these transitions, compute the optimal objective as $\max \{G(S) \mid S\in\cS_m\}$, where for any $(v_i, (c_{i,j})_{j=1}^\ell)_{i=1}^n \in \bZ_{\ge 0}^{n(1+\ell)}$,
  \begin{align*}
    \begin{split}
      &G((v_i, (c_{i,j})_{j=1}^\ell)_{i=1}^n) \\
      &\coloneqq
      \begin{cases}
        \mathrlap{\min \left\{v_1,v_2,\ldots,v_n\right\}} & \\
        & \text{if }\ q^-_j \le c_{i,j} \le q^+_j\ \ \forall i\in\{1,2,\ldots,n\},\forall j\in\{1,2,\ldots,\ell\}, \\[2pt]
        0 & \text{otherwise.}
      \end{cases}
    \end{split} \label{eq:fptas-identical:G}
  \end{align*}

  Now that this dynamic program solves the original problem correctly, it suffices to confirm that Conditions C.3(i), C.4, and C.5 \citep{woeginger2000does} are satisfied for the following \emph{degree-vector}:
  \begin{align}
    D \coloneqq ((1, (0)_{j=1}^\ell))_{i=1}^n.
  \end{align}
  By the above setup, checking Condition C.4 is immediate.
  Fix any real constant $\Delta > 1$, and let $S = (v_i, (c_{i,j})_{j=1}^\ell)_{i=1}^n$ and $S' = (v'_i, (c'_{i,j})_{j=1}^\ell)_{i=1}^n$ be any two vectors in $\bZ_{\ge 0}^{n(1+\ell)}$ that are \emph{$[D,\Delta]$-close} \citep{woeginger2000does} to each other.
  By definition, it holds that
  \begin{align}
    &\Delta^{-1} v_i \le v'_i \le \Delta\,v_i &\forall i\in\{1,2,\ldots,n\}, \label{eq:fptas-identical:close-v} \\
    &c'_{i,j} = c_{i,j} &\forall i\in\{1,2,\ldots,n\},\forall j\in\{1,2,\ldots,\ell\}. \label{eq:fptas-identical:close-c}
  \end{align}
  Given \cref{eq:fptas-identical:G,eq:fptas-identical:close-v}, we have that
  \begin{align*}
    &G(S') \\
    &=
    \begin{cases}
        \mathrlap{\min \left\{v'_1,v'_2,\ldots,v'_n\right\}} & \\
        & \text{if }\ q^-_j \le c'_{i,j} \le q^+_j\ \ \forall i\in\{1,2,\ldots,n\},\forall j\in\{1,2,\ldots,\ell\}, \\[2pt]
        0 & \text{otherwise.}
      \end{cases} \\
    &\le
    \begin{cases}
      \mathrlap{\Delta \min \left\{v_1,v_2,\ldots,v_n\right\}} & \\
      & \text{if }\ q^-_j \le c_{i,j} \le q^+_j\ \ \forall i\in\{1,2,\ldots,n\},\forall j\in\{1,2,\ldots,\ell\}, \\
      0 & \text{otherwise}
    \end{cases} \\
    &= \Delta\,G(S)
  \end{align*}
  which implies that Condition C.3(i) is met for $g = 1$.
  Fix also any $i^*\in\{1,2,\ldots,n\}$ and $(p,j^*)\in\bZ_{\ge 0}^2$.
  Given \cref{eq:fptas-identical:close-c} and that $\Delta^{-1}(v_{i^*} + p) \le v'_{i^*} + p \le \Delta(v_{i^*} + p)$ due to \cref{eq:fptas-identical:close-v}, $F_{i^*}((p,j^*),S)$ and $F_{i^*}((p,j^*),S')$ are $[D,\Delta]$-close to each other.
  Therefore, Condition C.5 holds.
\end{proof}

\begin{remark}
  In contrast, a major difficulty in obtaining a PTAS for instances with a variable number of identical agents arises from the restricted feasibility by quotas.%
  \footnote{As this problem is strongly NP-hard, an FTPAS does not exist unless $P=NP$.}
  When the number of agents is no longer constant, the PTAS known in the unconstrained cases of goods \citep{woeginger1997polynomial} and chores \citep{hochbaum1987using}, or in other related problems, involve the separate treatment of small- and large-valued items, which appears hard to adapt in the presence of either lower or upper quotas, let alone when both apply.
  More specifically, effective approaches in the unconstrained scenarios are known to first partition the large-valued items optimally and then fill in the small-valued ones so that any bundle does not run out its $\epsilon$-slack.
  However, it is not immediate how to efficiently incorporate quota constraints throughout both these phases.
\end{remark}

Combining \cref{thm:fptas-identical} with the cut-and-choose protocol further gives \cref{thm:fptas-almost}.
Here, agents are said to be \emph{almost identical} if all but at most one of their valuations are equal as functions.
In particular, this applies when there are only $|N| = 2$ agents.

\begin{corollary} \label{thm:fptas-almost}
  For any real constant $\varepsilon > 0$, there is an algorithm that computes a $(1 - \varepsilon)$-MMS allocation for a single-category instance of goods with a constant number of almost identical agents and a constant number of categories, or a $(1 +\varepsilon)$-MMS allocation for that of chores, in time polynomial in the number of items $|M|$ and $\varepsilon^{-1}$.
\end{corollary}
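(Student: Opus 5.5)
We reduce to \cref{thm:fptas-identical} via a cut-and-choose argument. By definition, all agents except at most one, say agent $j$, share a common valuation $v$; write $N' \coloneqq N\setminus\{j\}$. If all agents are identical, \cref{thm:fptas-identical} applies directly. Otherwise we proceed in three steps, treating goods first; chores are symmetric with reversed inequalities and $(1+\varepsilon)$ in place of $(1-\varepsilon)$.

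\textbf{Step 1 (cut).} We run the FPTAS of \cref{thm:fptas-identical} on the identical-agents instance obtained by giving all $|N|$ agents the valuation $v_j$. This instance has a constant number of agents and categories and the same quotas, so its feasible allocations are exactly $\cF(\cI)$. The output is a feasible partition $(P_k)_{k\in N}$ with
\begin{align*}
  \min_k v_j(P_k) \ge (1-\varepsilon)\,\mu_j(\cI).
\end{align*}

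\textbf{Step 2 (choose).} The agents in $N'$ pick bundles from $(P_k)$, and agent $j$ receives a leftover bundle. Since every $P_k$ is worth at least $(1-\varepsilon)\mu_j(\cI)$ to $j$, agent $j$ is satisfied no matter which bundle remains. The agents of $N'$ all have valuation $v$, so we need an assignment of $|N|-1$ of the bundles giving each of them value at least $(1-\varepsilon)\mu(\cI)$, where $\mu$ denotes their common MMS.

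\textbf{Step 3 (the obstacle).} A single identical agent's guarantee is not automatic: $v$ may value several of the $P_k$ below $\mu(\cI)$. I would try one of two fixes.

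\emph{Fix (a): swap the roles.} Let the $|N|-1$ identical agents cut and $j$ choose. Concretely, compute via \cref{thm:fptas-identical} a $(1-\varepsilon)$-MMS partition for $v$, let $j$ take her favourite bundle, and give the rest to $N'$. Each bundle is good for $N'$. The catch is that $j$'s favourite bundle is worth at least $\frac{1}{|N|}v_j(M)$, and this is not in general at least $\mu_j$ under quotas. For $|N|=2$ it does suffice: $\frac{1}{2}v_j(M) \ge \mu_j$ because $\mu_j$ is at most the value of the worse bundle of a 2-partition, and the maximum over two bundles is at least their average.

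\emph{Fix (b): a joint dynamic program.} For general constant $|N|$, run Woeginger's DP from the proof of \cref{thm:fptas-identical} with state vectors that track both $v$-values and $v_j$-values of every bundle. This keeps the dimension constant and the degree-vector conditions C.3--C.5 hold coordinatewise. The objective becomes the largest $\lambda$ such that $v(A_k)\ge\lambda\,\tilde\mu$ for $k\in N'$ and $v_j(A_j)\ge\lambda\,\tilde\mu_j$, where $\tilde\mu,\tilde\mu_j$ are the $(1-\varepsilon)$-approximations of the MMS values computed first by \cref{thm:fptas-identical}. An exact MMS allocation need not exist, so existence of a good $\lambda$ is the real gap. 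I expect this existence step to be the main obstacle, and I would resolve it by the cut-and-choose argument: in $j$'s MMS partition, $j$ is happy with every bundle, which reduces the problem to the identical agents choosing well. I would attempt to show that the identical agents' MMS computed on the remaining structure is not decreased, in the spirit of \cref{thm:common:valid-reduction}. If this fails for $|N|>2$, I would still settle the $|N|=2$ case as stated via Fix (a).
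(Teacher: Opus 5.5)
\paragraph{The gap: a false claim discards the working argument.} Your Fix (a) is the right argument. It is the cut-and-choose proof the paper intends. The identical agents ``cut'' by running \cref{thm:fptas-identical} on the instance in which every agent has valuation $v$. The odd agent $j$ then takes her favourite bundle, and the remaining bundles go to $N\setminus\{j\}$ in any order. The problem is that you discard this for $|N|>2$ because of a false claim. As written, your proposal therefore proves only the case $|N|=2$. It leaves the general constant-$|N|$ case to Fix (b), whose existence step you explicitly leave open.

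The claim that $\frac{1}{|N|}v_j(M)\ge\mu_j(\cI)$ can fail under quotas is wrong. By definition, $\mu_j(\cI)=\min_{k\in N}v_j(P_k)$ for some feasible partition $(P_k)_{k\in N}$ of $M$. A minimum of $|N|$ numbers never exceeds their average, so $\mu_j(\cI)\le\frac{1}{|N|}\sum_{k\in N}v_j(P_k)=\frac{1}{|N|}v_j(M)$. Quotas only restrict which partitions are admissible and do not affect this inequality. This is exactly the argument you gave for $|N|=2$, and it goes through verbatim for every $|N|$.

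\paragraph{Completing the proof.} Let $(A'_k)_{k\in N}$ be the output of \cref{thm:fptas-identical} on the all-$v$ instance. Then $v(A'_k)\ge(1-\varepsilon)\mu(\cI)$ for \emph{every} $k$, because each identical agent's MMS depends only on $v$ and the quotas. Hence any assignment of the leftover bundles satisfies $N\setminus\{j\}$. Feasibility is preserved because all agents face the same quotas. Agent $j$ chooses $k^*\in\arg\max_{k} v_j(A'_k)$ and gets $v_j(A'_{k^*})\ge\frac{1}{|N|}v_j(M)\ge\mu_j(\cI)\ge(1-\varepsilon)\mu_j(\cI)$, using $\mu_j(\cI)\ge 0$ for goods.

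For chores, replace $(1-\varepsilon)$ by $(1+\varepsilon)$ throughout. The chain then ends with $\mu_j(\cI)\ge(1+\varepsilon)\mu_j(\cI)$, which holds since $\mu_j(\cI)\le 0$.

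So your Steps 1--2 (with $j$ cutting) and the joint dynamic program of Fix (b) are unnecessary. When you compare with the paper, note that its $\arg\max$ defining $i^{\max}$ must be taken with respect to the odd agent's valuation $v_{i^\star}$, not $v_*$, for the argument to work.
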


\begin{proof}
  Let $\cI = \left(N,M,(v_i)_{i\in N},\cC,(q_{C}^-,q_{C}^+)_{C\in\cC}\right)$ be an arbitrary instance where agents except $i^\star\in N$ have equal valuations $v_*$.
  Consider another instance $\cI' = \left(N,M,(v'_i)_{i\in N},\cC,(q_{C}^-,q_{C}^+)_{C\in\cC}\right)$ with $v'_i = v_*$ for every $i\in N$, and suppose that $(A'_i)_{i\in N}$ is an $\alpha$-MMS allocation for $\cI'$ and some $\alpha\in\bR$.
  Then $(A_i)_{i\in N}$ such that
  \begin{align*}
    A_{i^\star} &= A'_{i^{\max}}, \\
    A_{i^{\max}} &= A'_{i^\star}, \\
    A_i &= A'_i &\forall i\in N\setminus\{i^\star,i^{\max}\},
  \end{align*}
  is an $\alpha$-MMS allocation for $\cI$, where $i^{\max} \in \arg\max_{i\in N} v_*(A'_i)$.
  This observation, combined with \cref{thm:fptas-identical}, establishes \cref{thm:fptas-almost}.
\end{proof}

\subsection{Single-Category Bivalued Instances} \label{sec:single:bivalued}

An instance $\left(N,M,(v_i)_{i\in N},\cC,(q_{C}^-,q_{C}^+)_{C\in\cC}\right)$ is said to be \textit{bivalued} when the following holds for some constants $(a,b)\in\bR^2$:
\begin{align}
  v_i(g) &\in \{a, b\}
  &\forall g\in M,\forall i\in N.
\end{align}
In words, the value of each item for each agent is restricted to one of two common candidates.
\citet{feige2022maximin} shows that any single-category, unconstrained (i.e., $q^- = 0$ and $q^+ = |M|$), bivalued instance admits an (exact) MMS allocation, which can be found in polynomial time.
We note that their algorithm and correctness proof can be extended to our problem setting by reinterpreting the unconstrained MMS values, partitions, and allocations as their quota-constrained counterparts.

\begin{theorem}[{\citep[cf.][Theorem 1]{feige2022maximin}}]\label{thm:bivalued}
  For an arbitrary single-category bivalued instance, an MMS allocation exists and can be found in polynomial time.
\end{theorem}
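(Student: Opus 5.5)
The plan is to adapt \citet{feige2022maximin}'s argument by first reducing to an ordered instance and then exploiting the fact that, in a bivalued ordered instance, an agent's view of a bundle is captured by two integers. By \cref{thm:common:ordered}, which preserves bivaluedness since it only permutes each agent's values, we may assume $\cI$ is ordered. For goods, write $a>b\ge 0$; the case $a=b$ is trivial and chores are symmetric. Then for each agent $i$ there is $h_i\in\{0,\ldots,|M|\}$ such that $v_i(g_j)=a$ iff $j\le h_i$.

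Consequently $v_i(S)=a\,x+b\,(|S|-x)$, where $x=|S\cap\{g_1,\ldots,g_{h_i}\}|$. Feasibility depends only on $|S|\in[q^-,q^+]$. Hence $\mu_i(\cI)$ depends only on $(|N|,|M|,q^-,q^+,h_i)$; call it $f(h_i)$. It can be computed in polynomial time by a dynamic program over bundles, choosing for each bundle a size $s_k$ and a high-count $x_k\le s_k$ with $\sum_k s_k=|M|$ and $\sum_k x_k=h_i$, and maximizing the minimum value. Thresholding the candidate values (there are only $O(|M|^2)$ possible bundle values) turns this into a feasibility check. The function $f$ is non-decreasing in $h$, since replacing a low item by a high one in a partition never hurts.

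I would then proceed by induction on $|N|$, in the spirit of \cref{thm:common:cor-valid-reduction}. Let $i_1$ be an agent with minimal $h_{i_1}$, and fix an optimal MMS structure $(s_k,x_k)_k$ for $i_1$. Among its bundle types I select one of minimum value for $i_1$, and realize it concretely as a bundle $B$. This bundle takes the \emph{lowest-indexed} available items: the $x$ high items are $g_{h_{i_1}-x+1},\ldots,g_{h_{i_1}}$ and the remaining items are the least valuable ones at the end of the order. The size $|B|$ is chosen so that $|M\setminus B|$ still lies in $[q^-(|N|-1),q^+(|N|-1)]$. Giving $B$ to $i_1$ yields $v_{i_1}(B)\ge\mu_{i_1}(\cI)$.

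It remains to show that the reduced instance $\cI'$ satisfies $\mu_i(\cI')\ge\mu_i(\cI)$ for every other $i$. For such an agent $i$, we have $h_i\ge h_{i_1}$, so $B$ contains at most $x$ items that $i$ considers high. I would argue by an exchange argument, as in the proof of \cref{thm:common:valid-reduction}. Take $i$'s MMS partition and find a part $P$ that dominates $B$ in $i$'s valuation both in size and in high-count. Then swap items between $P$ and the other parts so that removing $B$ instead of $P$ leaves the other parts feasible and no worse. Because only counts matter, this is a pure counting statement about the integer vectors $(s_k,x_k)$.

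The main obstacle is precisely this last step under two-sided quotas. One must guarantee that some part of $i$'s structure simultaneously has size $\ge|B|$ (or can be resized within $[q^-,q^+]$) and high-count $\ge x$, while the leftover sizes remain re-packable. I would first try choosing $B$ with the \emph{lexicographically smallest} size among minimum-value bundles of $i_1$'s structure. I would then show that any $i$'s structure can be normalized, by shifting sizes between parts within the quotas, to contain a part of that size with at least as many high items. If this normalization fails for some quota configurations, the fallback is to reproduce Feige's original argument directly on the count vectors, checking each of its exchange steps against the constraints $q^-\le s_k\le q^+$.
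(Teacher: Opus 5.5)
There is a genuine gap, and it sits in the step you flag as the main obstacle. With your choice of agent that step is not merely unproven but false.

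\textbf{The key claim is backwards.} You write that $B$ contains at most $x$ items that $i$ considers high. Since $h_i\ge h_{i_1}$, the $x$ items $g_{h_{i_1}-x+1},\ldots,g_{h_{i_1}}$ are all high for $i$. So $B$ contains \emph{at least} $x$ of them, and more if some of the last $s-x$ items lie among $g_1,\ldots,g_{h_i}$. Serving the agent with the fewest high items first removes items that everyone else also values highly, and the reduction can fail.

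\textbf{A counterexample.} Take $n=3$, no quotas ($q^-=0$, $q^+=|M|=17$), $a=10$, $b=1$. Agent $1$ has $h_1=5$ and agents $2,3$ have $h=6$.
\begin{itemize}
\item $\mu_1=20$, attained by the optimal structure $(s,x)\in\{(3,2),(3,2),(11,1)\}$ with values $21,21,20$. Its minimum-value part $(11,1)$ is unique, so your smallest-size tie-break does not apply, and your rule gives $B=\{g_5,g_8,\ldots,g_{17}\}$.
\item $\mu_2=\mu_3=23$, via the structure $(6,2),(6,2),(5,2)$.
\item For agents $2$ and $3$, $M\setminus B$ is five items worth $10$ and one worth $1$. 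Its best split into two bundles has minimum $21<23$.
\end{itemize}
So $\mu_2(\cI')<\mu_2(\cI)$, and no MMS allocation extends this assignment; no exchange or normalization argument can rescue the rule as stated.

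\textbf{The missing idea is to go the other way.} The paper takes $i^*$ with the \emph{maximum} number $\ell_{i^*}$ of $a$-valued items. It then takes \emph{any} part $P^*_1$ of $i^*$'s MMS partition with $|P^*_1|\le |M|/|N|$ if $b\ge 0$, or $|P^*_1|\ge|M|/|N|$ if $b<0$; such a part exists by averaging. Every part is worth at least $\mu_{i^*}(\cI)$, so no minimum-value selection is needed. This part is realized exactly as you do: the $p_a$ last items of $i^*$'s high block plus the $p_b$ globally last items.

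This gives a clean dichotomy for every other agent $i$, since $\ell_i\le\ell_{i^*}$:
\begin{itemize}
\item If $\ell_i>\ell_{i^*}-p_a$, then $v_i=v_{i^*}$ on $M\setminus B$. Moreover, $M\setminus B$ has the same high and low counts for $i^*$ as $M\setminus P^*_1$. Hence $\mu_i(\cI')\ge\mu_{i^*}(\cI)\ge\mu_i(\cI)$.
\item If $\ell_i\le\ell_{i^*}-p_a$, then every item of $B$ has the minimum value $b$ for $i$. The size condition on $P^*_1$ lets you exchange $B$ against a part $P_1$ of $i$'s MMS partition with $|P_1|\ge|B|$ (resp.\ $|P_1|\le|B|$). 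You then pad or trim the remaining bundles with $b$-valued items within the quotas.
\end{itemize}

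Finally, the statement covers mixed instances with $a>0>b$. Your reduction to ``goods, and chores by symmetry'' omits them; the paper's case split on the sign of $b$ is what handles them.
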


We leverage ideas from the proof by \citet{feige2022maximin} for the special case of $(q^-,q^+) = (0,|M|)$.
The underlying strategy is also similar to that of the proof of \cref{thm:common:valid-reduction}.

\begin{proof}
  We prove \cref{thm:bivalued} by induction on the number of agents.
  Fix an arbitrary single-category bivalued instance $\cI = (N, M, (v_i)_{i\in N}, (q^-,q^+))$, which satisfies
  \begin{align}
    a&\ge b, \label{eq:bivalued:a>b} \\
    v_i(g)&\in\{a,b\} &\forall g\in M,i\in N, \label{eq:bivalued}
  \end{align}
  for some $(a,b)\in\bR^2$.
  Let also
  \begin{align}
    \ell_i &\coloneqq \left\{g\in M\mid v_i(g) = a\right\} &\forall i\in N.
  \end{align}
  Due to \cref{thm:common:ordered}, it suffices to consider the case where $\cI$ is ordered.

  Suppose that $N\neq\emptyset$.
  Fix an arbitrary
  \begin{align}
    i^*&\in\arg\max_{i\in N} \ell_i. \label{eq:bivalued:def-i*}
  \end{align}
  Let $(P^*_k)_{k=1}^{|N|}$ be agent $i^*$'s MMS partition, which we assume, without loss of generality, satisfies
  \begin{align}
    \left|P^*_1\right|
    \begin{cases}
      \le \frac{|M|}{|N|} & \text{ if }\, b\ge 0; \\[2pt]
      \ge \frac{|M|}{|N|} & \text{ otherwise.}
    \end{cases}
  \end{align}
  Notice that \cref{eq:bivalued} allows us to compute $(P^*_k)_{k=1}^{|N|}$ in polynomial time.\footnote{For example, dynamic programming can be efficiently used, as each subproblem is characterized solely by the number of agents, the number of items of value $a$, and the number of items of value $b$.}
  Now, we define
  \begin{align}
    B &\coloneqq \left\{g_j \mid \ell_{i^*} - p_a < j\le \ell_{i^*} \right\} \cup \left\{g_j \mid |M| - p_b < j\le |M| \right\}, \\
    \cI' &\coloneqq \left(N\setminus\{i^*\},M\setminus B,\left(v_i|_{2^{M\setminus B}}\right)_{i\in N\setminus\{i^*\}},\left(q^-,q^+\right)\right), \label{eq:bivalued:reduced-instance}
  \end{align}
  where
  \begin{align}
    p_a &\coloneqq \left\{g\in P^*_1\mid v_i(g) = a\right\}, \\
    p_b &\coloneqq \left\{g\in P^*_1\mid v_i(g) = b\right\}.
  \end{align}
  Since $|B| = |P^*_1|$ holds by definition, $\cI'$ is well-defined by \cref{eq:bivalued:reduced-instance} as a single-category bivalued instance.
  Due to the induction hypothesis, it suffices to show
  \begin{align}
    \mu_i(\cI')&\ge\mu_i(\cI) &\forall i\in N\setminus\{i^*\}, \label{eq:bivalued:valid}
  \end{align}

  Fix an arbitrary $i\in N\setminus\{i^*\}$, for which we show \cref{eq:bivalued:valid} in the following.
  Let us also define
  \begin{align}
    \nu(M') \coloneqq \mu_i\left(N\setminus\{i^*\}, M', \left(v_{i'}|_{2^{M'}}\right)_{i'\in N\setminus \{i^*\}}, \left(q^-,q^+\right)\right) \label{eq:bivalued:mu-reduced}
  \end{align}
  for any $M'\subseteq M$ that ensures the instance in the right-hand side is well-defined.

  \begin{case} \label{case:bivalued:1}
    Suppose that $\ell_i > \ell_{i^*} - p_a$.
    Then it follows that
    \begin{align}
      v_i(g) &= v_{i^*}(g) &\forall g\in M\setminus B,
    \end{align}
    which, together with the definition of $B$, gives
    \begin{align}
      \begin{split}
        \mu_i(\cI') &= \nu(M\setminus B) \\
        &\ge \nu(M\setminus P^*_1) \\
        &\ge \min_{2\le k\le |N|} v_i\!\left(P^*_k\right) \\
        &= \min_{2\le k\le |N|} v_{i^*}\!\left(P^*_k\right) \\
        &\ge \mu_{i^*}(\cI)
      \end{split}
    \end{align}
    Since \crefrange{eq:bivalued:a>b}{eq:bivalued:def-i*} also imply $\mu_{i^*}(\cI) \ge \mu_{i}(\cI)$, we obtain \cref{eq:bivalued:valid} for the arbitrarily fixed $i$.
  \end{case}

  \begin{case} \label{case:bivalued:2}
    Suppose that $\ell_i \le \ell_{i^*} - p_a$ and $b\ge 0$.
    Then we have
    \begin{align}
      v_i(g) &\ge 0 &\forall g\in M, \label{eq:bivalued:positive-1} \\
      v_i(g) &= b &\forall g\in B. \label{eq:bivalued:all-b-1}
    \end{align}
    Let $(P_k)_{k=1}^{|N|}$ be agent $i$'s MMS partition, which we assume satisfies
    \begin{align}
      |P_1|\ge \frac{|M|}{|N|}\ge \left|P_1^*\right| = |B| \label{eq:bivalued:size-1}
    \end{align}
    without loss of generality.
    Given \cref{eq:bivalued:all-b-1,eq:bivalued:size-1}, there exists an injection $f:B\setminus P_1\to P_1\setminus B$ such that
    \begin{align}
      v_i(g) &\le v_i(f(g)) &\forall g\in B\setminus P_1. \label{eq:bivalued:injection-1}
    \end{align}
    From \cref{eq:bivalued:positive-1,eq:bivalued:injection-1}, as well as the definition of $(P_k)_{k=1}^{|N|}$ and $B$, we obtain \cref{eq:bivalued:valid} for the arbitrarily fixed $i$ as follows:
    \begin{align}
      \begin{split}
        \mu_{i}(\cI') &= \nu(M\setminus B) \\
        &\ge \nu(M\setminus ((B \cap P_1) \cup f(B\setminus P_1))) \\
        &\ge \nu(M\setminus P_1) \\
        &\ge \min_{2\le k \le |N|} v_{i}(P_{k}) \\
        &\ge \mu_{i}(\cI).
      \end{split}
    \end{align}
  \end{case}

  \begin{case} \label{case:bivalued:3}
    Suppose that $\ell_i \le \ell_{i^*} - p_a$ and $b < 0$.
    Then we have
    \begin{align}
      v_i(g) &= b < 0 &\forall g\in B. \label{eq:bivalued:all-b-2}
    \end{align}
    Let $(P_k)_{k=1}^{|N|}$ be agent $i$'s MMS partition, which we assume satisfies
    \begin{align}
      |P_1| \le \frac{|M|}{|N|}\le \left|P_1^*\right| = |B| \label{eq:bivalued:size-2}
    \end{align}
    without loss of generality.
    Given \cref{eq:bivalued:all-b-2,eq:bivalued:size-2}, there exists an injection $f:P_1\setminus B\to B\setminus P_1$ such that
    \begin{align}
      v_i(g) &\ge v_i(f(g)) &\forall g\in P_1\setminus B. \label{eq:bivalued:injection-2}
    \end{align}
    From \cref{eq:bivalued:all-b-2,eq:bivalued:injection-2}, as well as the definition of $(P_k)_{k=1}^{|N|}$ and $B$, we obtain \cref{eq:bivalued:valid} for the arbitrarily fixed $i$ as follows:
    \begin{align}
      \begin{split}
        \mu_{i}(\cI') &= \nu(M\setminus B) \\
        &\ge \nu(M\setminus ((B \cap P_1) \cup f(P_1\setminus B))) \\
        &\ge \nu(M\setminus P_1) \\
        &\ge \min_{2\le k \le |N|} v_{i}(P_{k}) \\
        &\ge \mu_{i}(\cI).
      \end{split}
    \end{align}
  \end{case}

  As \crefrange{case:bivalued:1}{case:bivalued:3} are exhaustive, we establish \cref{eq:bivalued:valid} and complete the proof of \cref{thm:bivalued}.
\end{proof}

\section{Inapproximability} \label{sec:inapprox}

Here, we consider one of the most relevant questions to this study: what is the best approximation ratio that can be guaranteed for any instance?
\cref{thm:inapprox:milp} asserts that we can compute in finite time the exact approximation bound for all instances with a fixed dimension, i.e., the number of agents and items, the sizes of categories, and the pair of quotas for each category.
Our reduction to mixed binary linear programming (MBLP) is motivated by \citet{feige2022improved}, who employ a similar formulation in the unconstrained setting.

\begin{theorem} \label{thm:inapprox:milp}
  Let $N$ and $M$ be arbitrary finite sets with $\mathcal{C}$ being an arbitrary partition of $M$, for which $(q^-_C,q^+_C)_{C\in\mathcal{C}}$ are arbitrary pairs of non-negative integers that satisfies \cref{eq:def:quotas}.
  The following value exists and can be found by solving a mixed binary linear program (MBLP) of finite size with integer coefficients: the maximum (resp.~minimum) real constant $\alpha$ such that any instance $\cI = \left(N,M,(v_i)_{i\in N},\cC,(q_{C}^-,q_{C}^+)_{C\in\cC}\right)$ of goods (resp.~chores) admits an $\alpha$-MMS allocation.
\end{theorem}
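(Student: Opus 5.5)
We encode the quantity as the optimum of a finite MBLP whose variables are the valuations themselves. Throughout we treat goods; chores are symmetric, with max and min swapped and the inequalities reversed.

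The first step is a normalization. By scale invariance of the $\alpha$-MMS condition (agent by agent), we may assume every agent with $\mu_i(\cI)>0$ has $\mu_i(\cI)=1$ and $v_i(M)\le |N|$. Agents with $\mu_i(\cI)=0$ impose no constraint, so the valuation space becomes the compact polytope
\[
V \coloneqq \left\{(v_i(g))_{i\in N,g\in M} \;\middle|\; 0\le v_i(g)\le |N|\right\}.
\]
The set $\cF$ of feasible allocations is finite and depends only on the given dimension. Define
\[
\alpha^* \coloneqq \inf_{v}\ \max_{A\in\cF}\ \min_{i\in N}\ v_i(A_i),
\]
where the infimum ranges over $v\in V$ such that each $\mu_i=1$, i.e., for each $i$ some $P\in\cF$ has $\min_k v_i(P_k)\ge 1$, and no $P$ has all $v_i(P_k)>1$. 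Because $\min_k v_i(P_k)$ is continuous in $v$, the domain is closed: it is the set where $\max_P\min_k v_i(P_k)=1$. The objective is continuous as well. Hence the infimum is attained, which gives existence. Agents with zero MMS can be handled by taking one instance per subset of "positive" agents and minimizing over these finitely many cases.

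Next we linearize with binary variables.
\begin{itemize}
\item For each agent $i$, introduce binaries $x_{i,P}$ for $P\in\cF$, with $\sum_P x_{i,P}\ge 1$, and impose $v_i(P_k)\ge 1 - L(1-x_{i,P})$ for all $k$. This enforces $\mu_i\ge 1$.
\item For each $i$ and $P$, introduce binaries $y_{i,P,k}$ with $\sum_k y_{i,P,k}\ge 1$ and $v_i(P_k)\le 1+L(1-y_{i,P,k})$. This enforces $\mu_i\le 1$ (the non-strict version suffices by closure).
\item Introduce a continuous variable $\alpha$ to be minimized. For each $A\in\cF$, introduce binaries $z_{A,i}$ with $\sum_i z_{A,i}\ge 1$ and $v_i(A_i)\le \alpha + L(1-z_{A,i})$. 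This says every allocation leaves some agent below $\alpha$, so $\alpha\ge \max_A\min_i v_i(A_i)$.
\end{itemize}
Here $L$ is a big-M constant, e.g.\ $L=|N||M|+1$, which is valid by the bounds on $V$. All coefficients are integers. Minimizing $\alpha$ yields exactly $\alpha^*$, and every $\alpha\le\alpha^*$ is guaranteed on every instance of this dimension.

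The main obstacle is the correctness of the normalization, together with the strict-versus-non-strict issue. We must check three things: that $\alpha$-MMS existence is invariant under rescaling each $v_i$ separately; that bounding $v_i(g)\le |N|$ loses nothing; and that replacing "$\mu_i=1$" by the closed constraints is exact. The bound holds because $v_i(M)=\sum_k v_i(P_k)$ over the MMS partition. That sum is not bounded by $|N|\mu_i$ in general, so I would instead cap values via $v_i(g)\le$ the value that still matters. Concretely, clipping any item value above $1$ down to $1$ preserves $\mu_i=1$ only up to monotonicity, so I would argue via a sufficient bound $L$ derived from the MMS partition. If this is troublesome, I would simply avoid bounding $v$, leave the big-M constraints formulated with indicator logic, and argue attainment via finitely many active-constraint patterns, i.e., a finite union of polyhedra on which a linear objective attains its minimum.
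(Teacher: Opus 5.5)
Your overall plan is the paper's: search over normalized valuation profiles, encode ``agent $i$ has MMS at least $1$'' and ``every feasible allocation leaves some agent at value at most $\alpha$'' with binary disjunctions and big-M constants, and minimize $\alpha$. The gap is the a priori bound on the valuations. Both your compactness argument and the existence of a finite valid $L$ depend on it.

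The normalization $v_i(M)\le |N|$ you start from is false for goods, as you notice yourself. For example, take two agents, no binding quotas, and items worth $K>1$ and $1$: then $\mu_i=1$ but $v_i(M)=K+1$. The proposal never replaces this with a correct bound. Your fallback to ``indicator logic'' does not help either: the standard way to linearize indicator constraints is exactly big-M, which needs the bound you are missing. Without it you do not end up with a finite MBLP with integer coefficients.

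The missing step is short. After scaling so that $\mu_i=1$, clip item values: set $v'_i(g)=\min\{v_i(g),1\}$.
\begin{itemize}
\item Since $v'_i(S)\le v_i(S)$ for every $S$, we get $\mu'_i\le 1$.
\item In an MMS partition for $v_i$, a bundle containing a clipped item still has $v'_i$-value at least $1$, because values are non-negative. All other bundles are unchanged. Hence $\mu'_i=1$ exactly.
\item Consequently, an instance with no $\alpha$-MMS allocation still has none after clipping. Conversely, an $\alpha$-MMS allocation of the clipped instance is also one for the original.
\end{itemize}
So you may take $V=[0,1]^{N\times M}$ and $L=|M|$; note that $\alpha\ge 0$ is forced by the $z$-constraints. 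With this, the rest of your argument goes through. For chores no clipping is needed, since $v_i(g)\ge v_i(P_k)\ge\mu_i$ for the MMS bundle $P_k\ni g$.

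The paper gets the same effect by per-bundle rescaling: $u_{i,g}=v_i(g)/v_i(P^{(i)}_k)$ for $g\in P^{(i)}_k$, which gives $u\in[0,1]$ and $v_i(g)\ge\mu_i u_{i,g}$. It also imposes only $\mu_i\ge 1$, so your $y$-variables are unnecessary: scaling a profile with $\mu_i\ge 1$ down to $\mu_i=1$ only lowers $\max_{A}\min_{i} v_i(A_i)$. Your case split over subsets of positive-MMS agents is also unnecessary, and it would yield several MBLPs rather than one. A zero-MMS agent is never the violated agent, so you can give it a copy of a positive agent's normalized valuation and keep a single program.
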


\begin{proof}
  We only discuss the case of goods, which easily extends to that of chores.
  Let $\alpha^*$ denote the supremum of real constants $\alpha$ such that any instance $\cI = \left(N,M,(v_i)_{i\in N},\cC,(q_{C}^-,q_{C}^+)_{C\in\cC}\right)$ of goods admits an $\alpha$-MMS allocation.
  Common to all such instances,%
  \footnote{Their only degree of freedom lies in the choice of valuations $(v_i)_{i\in N}$.}
  we let $\cS \coloneqq \{S \subseteq M \mid q^-_C \le |S\cap C| \le q^+_C\ \ \forall C\in\cC \}$ denote the set of all feasible bundles, and let $\cF\subseteq \cS^N$ denote the set of all partitions of $M$ into sets in $\cS$ labelled by $N$, which is guaranteed to be non-empty by the assumption of \cref{thm:inapprox:milp}.
  The proof revolves around the following MBLP:
  \begin{align}
    & \text{Minimize} & & \alpha; \label{eq:inapprox:mblp:obj} \\
    & \text{Subject to} & & \alpha \in [0, \infty), \label{eq:inapprox:mblp:var:alpha} \\
    & & & u_{i,g} \in [0, 1] & & \forall i\in N, \forall g\in M, \label{eq:inapprox:mblp:var:util} \\
    & & & p_{i,k,S} \in \{0, 1\} & & \forall i,k\in N, \forall S\in\cS, \label{eq:inapprox:mblp:var:mms-partition} \\
    & & & \sum_{S\in\cS} p_{i,k,S} = 1 & & \forall i,k\in N, \label{eq:inapprox:mblp:cons:mms-partition-agent-bundle} \\
    & & & \sum_{k\in N}\sum_{g\in S\in\cS} p_{i,k,S} = 1 & & \forall i\in N,\forall g\in M, \label{eq:inapprox:mblp:cons:mms-partition-disjoint} \\
    & & & \sum_{g\in S} u_{i,g} \ge p_{i,k,S} & & \forall i,k\in N,\forall S\in\cS, \label{eq:inapprox:mblp:cons:mms-at-least-one} \\
    & & & \ell_{i, S} \in \{0, 1\} & & \forall i\in N, \forall S\in\cS, \label{eq:inapprox:mblp:var:at-most-alpha} \\
    & & & \sum_{g\in S} u_{i,g} \le \alpha + |S|(1 - \ell_{i,S}) & & \forall i\in N,\forall S\in\cS, \label{eq:inapprox:mblp:cons:at-most-alpha} \\
    & & & \sum_{i\in N} \ell_{i,A_i} \ge 1 & & \forall A\in\cF. \label{eq:inapprox:mblp:cons:someone-at-most-alpha}
  \end{align}
  In what follows, we show that this MBLP is feasible, that its optimal value $\alpha^\star$ is equal to $\alpha^*$, and that the supremum $\alpha^*$ can actually be attained.
  Note that the above MBLP could never be unbounded due to \cref{eq:inapprox:mblp:obj,eq:inapprox:mblp:var:alpha}.

  First, let us fix an arbitrary real constant $\alpha\in\left(\alpha^*,\infty\right]$.
  By definition, there exists an instance $\cI = \left(N,M,(v_i)_{i\in N},\cC,(q_{C}^-,q_{C}^+)_{C\in\cC}\right)$ that satisfies
  \begin{align}
    v_i(A_i) < \alpha\,\mu_i(\cI) & &\exists i\in N,\forall A\in\cF. \label{eq:inapprox:less-than-alpha-mms}
  \end{align}
  We consider setting the variables of the above MBLP except $\alpha$ as
  \begin{align}
    u_{i, g} & \coloneqq v_i(g) & & \forall i\in N, \forall g\in M, \\
    p_{i, k, S} & \coloneqq
    \begin{cases}
      1 & \text{ if }\, S = P^{(i)}_k \\
      0 & \text{ otherwise}
    \end{cases} & & \forall i,k\in N, \forall S\in \cS, \\
    \ell_{i,S} & \coloneqq
    \begin{cases}
      1 & \text{ if }\,v_i(S) < \alpha \\
      0 & \text{ otherwise}
    \end{cases} & & \forall i\in N, \forall S\in\cS,
  \end{align}
  where $(P^{(i)}_k)_{k\in N}$ denotes agent $i$'s MMS partition in $\cI$.
  \cref{eq:inapprox:less-than-alpha-mms} ensures that these, along with the arbitrarily fixed $\alpha$, form a feasible solution to the above MBLP.
  In particular, $\alpha^\star \le \alpha^*$ is implied.

  Next, we let $\left(\alpha, (u_{i,g})_{i\in N,g\in M}, (p_{i,k,S})_{i,k\in N,S\in\cS}, (\ell_{i,S})_{i\in N,S\in\cS}\right) = \left(\alpha^\star, (u^\star_{i,g})_{i\in N,g\in M}, (p^\star_{i,k,S})_{i,k\in N,S\in\cS}, (\ell^\star_{i,S})_{i\in N,S\in\cS}\right)$ denote any optimal solution to the above MBLP.
  Then we define an instance $\cI = \left(N,M,(v_i)_{i\in N},\cC,(q_{C}^-,q_{C}^+)_{C\in\cC}\right)$ by letting
  \begin{align}
    v_i(g) & \coloneqq u^\star_{i, g} & & \forall i\in N, \forall g\in M. \label{eq:inapprox:values}
  \end{align}
  It follows from \cref{eq:inapprox:mblp:var:util,eq:inapprox:mblp:var:mms-partition,eq:inapprox:mblp:cons:mms-partition-agent-bundle,eq:inapprox:mblp:cons:mms-partition-disjoint,eq:inapprox:mblp:cons:mms-at-least-one,eq:inapprox:values} that $\cI$ allows every agent to find a feasible partition where they value any bundle at no less than $1$, i.e.,
  \begin{align}
    \mu_i(\cI) &\ge 1 & & \forall i\in N. \label{eq:inapprox:mms-at-least-one}
  \end{align}
  We also observe from \cref{eq:inapprox:mblp:var:util,eq:inapprox:mblp:var:at-most-alpha,eq:inapprox:mblp:cons:someone-at-most-alpha,eq:inapprox:mblp:cons:at-most-alpha,eq:inapprox:values} that any feasible allocation for $\cI$ results in some agent gaining no more than $\alpha^\star$, i.e.,
  \begin{align}
    \min_{i\in N}\, v_i(A_i) &\le \alpha^\star & & \forall A\in \cF. \label{eq:inapprox:someone-at-most-alpha}
  \end{align}
  \cref{eq:inapprox:mms-at-least-one,eq:inapprox:someone-at-most-alpha} together necessitate $\alpha^* \le \alpha^\star$ by the definition of $\alpha^*$.

  Now that $\alpha^\star = \alpha^*$, it suffices to show that any instance $\cI = \left(N,M,(v_i)_{i\in N},\cC,(q_{C}^-,q_{C}^+)_{C\in\cC}\right)$ of goods admits an $\alpha^\star$-MMS allocation.
  Let us define
  \begin{align}
    u^\cI_{i, g} &\coloneqq
    \begin{dcases}
      \frac{v_i(g)}{v_i(P^{(i)}_{k})} & \text{ if }\,v_i(P^{(i)}_{k}) > 0 \\
      1 & \text{ otherwise}
    \end{dcases} & & \forall g\in P^{(i)}_k, \forall i,k\in N,\label{eq:inapprox:def-util} \\
    p^\cI_{i, k, S} & \coloneqq
    \begin{cases}
      1 & \text{ if }\, S = P^{(i)}_k \\
      0 & \text{ otherwise}
    \end{cases} & & \forall i,k\in N, \forall S\in \cS, \label{eq:inapprox:def-partition}
  \end{align}
  where $(P^{(i)}_k)_{k\in N}$ denotes agent $i$'s MMS partition in $\cI$.
  We then construct a new MBLP by combining \crefrange{eq:inapprox:mblp:obj}{eq:inapprox:mblp:cons:someone-at-most-alpha} with additional constraints that fix certain variables as follows:
  \begin{align}
    u_{i, g} &= u^{\cI}_{i, g} & & \forall i\in N,\forall g\in M, \label{eq:inapprox:new-mblp:cons:fix-util} \\
    p_{i, k, S} &= p^{\cI}_{i, k, S}  & & \forall i,k\in N, \forall S\in \cS. \label{eq:inapprox:new-mblp:cons:fix-partition}
  \end{align}
  Notice that the new MBLP has a feasible solution where the remaining variables are set to
  \begin{align}
    \alpha &\coloneqq |M|, \\
    \ell_{i,S} &\coloneqq 1 & & \forall i\in N, \forall S\in\cS.
  \end{align}
  Let $X^\cI = \left(\alpha^\cI, (u^\cI_{i,g})_{i\in N,g\in M}, (p^\cI_{i,k,S})_{i,k\in N,S\in\cS}, (\ell^\cI_{i,S})_{i\in N,S\in\cS}\right)$ denote the optimal solution to the new MBLP.
  We must have that
  \begin{align}
    \alpha^{\cI} &\ge \alpha^\star \label{eq:inapprox:alpha-no-less}
  \end{align}
  by the definition of the new MBLP, as well as that
  \begin{align}
    \sum_{g\in A_i} u^\cI_{i,g} &\ge \alpha^\cI & & \forall i\in N,\exists A\in\cF. \label{eq:inapprox:equality}
  \end{align}
  Indeed, violating \cref{eq:inapprox:equality} would imply that there exists $i^{(A)}\in N$ for every $A\in\cF$ such that
  \begin{align}
    \sum_{g\in A_{i^{(A)}}} u^\cI_{i^{(A)},g} &< \alpha^\cI,
  \end{align}
  and thus the new MBLP admits a feasible solution defined by \crefrange{eq:inapprox:def-util}{eq:inapprox:new-mblp:cons:fix-partition} and
  \begin{align}
    \alpha &\coloneqq \max_{A\in\cF} \sum_{g\in A_{i^{(A)}}} u^\cI_{i^{(A)},g} < \alpha^\cI, \\
    \ell_{i,S} &\coloneqq
    \begin{cases}
      1 & \text{ if }\,v_i(S) < \alpha \\
      0 & \text{ otherwise}
    \end{cases}  & & \forall i\in N,\forall S\in\cS,
  \end{align}
  which contradicts the optimality of the solution $X^\cI$.
  Finally, recalling that
  \begin{align}
    \mu_i(\cI) = \min_{k\in N}\, v_i(P^{(i)}_k) & & \forall i\in N, \label{eq:inapprox:MMS-def}
  \end{align}
  \cref{eq:inapprox:alpha-no-less,eq:inapprox:def-util,eq:inapprox:equality,eq:inapprox:MMS-def} together yield that
  \begin{align}
    v_i(A_i) & = \sum_{g\in A_i} v_i(g) \\
    & \ge \sum_{g\in A_i} u^\cI_{i,g}\,\mu_i(\cI) \\
    & \ge \alpha^\cI \mu_i(\cI) \\
    & \ge \alpha^\star \mu_i(\cI)
    & \forall i\in N,\exists A\in\cF,
  \end{align}
  which guarantees the existence of an $\alpha^\star$-MMS allocation for $\cI$ and thus concludes the proof.
\end{proof}

\begin{corollary} \label{thm:inapprox:rational}
  The maximum (resp.~minimum) real constant guaranteed by \cref{thm:inapprox:milp} is rational.
\end{corollary}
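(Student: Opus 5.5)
The optimal value $\alpha^\star$ of the MBLP in the proof of \cref{thm:inapprox:milp} is the constant in question, so it suffices to show that this MBLP attains its optimum at a rational value. The MBLP has finitely many binary variables: the $p_{i,k,S}$ and the $\ell_{i,S}$ over the finite sets $N$ and $\cS$. I would enumerate all $0/1$ assignments of these binary variables. Fixing one assignment leaves a linear program in the continuous variables $\alpha$ and $(u_{i,g})$. All its coefficients and right-hand sides are integers, namely $0$, $1$, and $|S|$, as can be read off from \crefrange{eq:inapprox:mblp:obj}{eq:inapprox:mblp:cons:someone-at-most-alpha}.

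For each such LP there are three possibilities: it is infeasible, it is unbounded, or it attains its optimum. Unboundedness is impossible because $\alpha\ge 0$ and we minimize $\alpha$. When the LP is feasible, its feasible region is a nonempty polyhedron contained in $\alpha\ge0$ and $u\in[0,1]^{N\times M}$, so it is pointed. Hence the optimum is attained at a vertex. A vertex is the unique solution of a nonsingular square subsystem of the integer constraints, so by Cramer's rule all its coordinates are rational, and in particular so is its $\alpha$-coordinate.

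The proof of \cref{thm:inapprox:milp} shows that the MBLP is feasible, so at least one assignment yields a feasible LP. Then $\alpha^\star$ is the minimum of finitely many rational LP optima, which makes it rational and attained. Since the proof of \cref{thm:inapprox:milp} established $\alpha^\star=\alpha^*$, the constant is rational. The chores case follows in the same way, with maximization in place of minimization and sign-reversed bounds. Here boundedness comes from the box constraints on $u$ together with the analogous bound on $\alpha$.

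The only point needing care is attainment at a vertex, which requires pointedness of the feasible region. This holds because every variable other than $\alpha$ is boxed, and $\alpha$ is bounded below.
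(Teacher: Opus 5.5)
Your proposal is correct and follows the same route as the paper: fix the binary variables, note that each resulting LP has only integer data and therefore a rational optimal value, and take the minimum over the finitely many feasible assignments. The only difference is that you spell out the boundedness, pointedness, and Cramer's-rule details, which the paper compresses into a one-line appeal to a standard LP fact.
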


\begin{proof}
  Every linear program induced by fixing all binary variables arbitrarily has integer coefficients only, thus admitting a rational optimal value \citep{korte2018combinatorial}.
\end{proof}

\begin{remark}
  By allowing for non-binary integer variables and a larger program size, one can also configure a mixed integer linear program (MILP) parametrized only by $|N|$ and $|M|$, where the category partition $\cC$ can be arbitrary.
  The above arguments can be applied for any type of constraints on feasible bundles, beyond the quota constraints in our particular case.
\end{remark}

The best known results for the unconstrained settings \citep{feige2021tight} establish that when $\mathcal{C} = \{M\}$ and $(q^-_M, q^+_M) = (0, |M|)$, the maximum constant for the case of goods is at most $\frac{39}{40}$, while the minimum constant for the case of chores is at least $\frac{44}{43}$.
Although we have let general solvers to solve the MBLP defined by \crefrange{eq:inapprox:mblp:obj}{eq:inapprox:someone-at-most-alpha},
no better bound has been able to found due to the exponentially large number of variables and constraints.\footnote{The MBLP consists of $O(|N|^2 2^{|M|})$ variables and $O(|N|^{|M|})$ constraints.}
It is thus left open whether category partitions and associated quotas would make differences in the approximation bounds from their unconstrained counterparts.

\end{document}